\newenvironment{ack}{\section*{Acknowledgements}}{}
\newenvironment{funding}{}{}
\newcommand{\GFF}{\mathrm{GFF}}
\newcommand{\SG}{\mathrm{SG}}
\newcommand{\FF}{\mathrm{FF}}
\newcommand{\Dirac}{\slashed{\partial}}
\newcommand{\wick}[1]{\mathopen{:}#1\mathclose{:}}
\newcommand{\PP}{\bm{P}}
\newcommand{\EE}{\bm{E}}
\newcommand{\nmu}{{\mu}}
\newcommand{\nz}{{z}}
\newcommand{\nzeta}{{\zeta}}
\title{The Coleman correspondence at the free fermion point}
\author{Roland Bauerschmidt\footnote{University of Cambridge, Statistical Laboratory, DPMMS. E-mail: {\tt rb812@cam.ac.uk}.}
  \and Christian Webb\footnote{University of Helsinki, Department of Mathematics and Statistics. E-mail: {\tt christian.webb@helsinki.fi}.}}
\date{\vspace*{-2em}}
\begin{document}
\maketitle
\begin{abstract}
We prove that the truncated correlation functions of the charge and gradient fields associated with the massless sine-Gordon model on $\R^2$
with $\beta=4\pi$ exist for all coupling constants and 
are equal to those of the chiral densities and vector current of free massive Dirac fermions.
This is an instance of Coleman's prediction that  the massless sine-Gordon model and the massive Thirring model
are equivalent (in the above sense of correlation functions).
Our main novelty is that we prove this correspondence
starting from the Euclidean path integral
in the non-perturbative regime of the infinite volume models.
We use this correspondence to show that the correlation functions of the massless sine-Gordon model with $\beta=4\pi$ decay exponentially
and that the corresponding probabilistic field is localized.
\end{abstract}



\section{Introduction and main results}\label{sec:intro}

Statistical and quantum field theory in two (Euclidean) dimensions is rich and special in various ways.
This manifests itself, for example, through the existence of the powerful
theory of conformal field theory (CFT), 
the possibility of quasiparticles which are neither bosons nor fermions but instead have anyonic particle statistics,
or the perhaps surprising possibility of equivalence of fermionic and bosonic field theories---known as bosonization.
The two-dimensional setting also provides one of the main test cases for the understanding of strongly interacting field theories.
The massless sine-Gordon model is a principal example of a
\emph{non-conformal} perturbation of a CFT in two dimensions.
Despite the absence of conformal symmetry, there is a detailed
but almost entirely conjectural description of many of its physical features, not accessible by perturbation theory,
including the prediction of a mass gap for all coupling constants.
These features pertain to the infinite volume theory. In this paper, we study the arguably most fundamental (and simplest)
instance of this---the Coleman correspondence at the free fermion point, which we prove
starting from the path integral formulation
in the \emph{non-perturbative} regime of the infinite volume models
and for all coupling constants.

\subsection{Coleman correspondence}\label{sec:intro-correspondence}

The Coleman correspondence is a prototype for bosonization \cite{PhysRevD.11.2088}.
It states that the massless sine-Gordon model with parameters $(\beta,z)$
and the massive Thirring model with parameters $(g,\mu)$ are equivalent  in the sense of correlation functions
when $(\beta,z)$ and $(g,\mu)$ are appropriately related.
This is an instance of bosonization because
the sine-Gordon model is a bosonic quantum field theory
while the massive Thirring model is a fermionic quantum field theory.
The equivalence is especially striking when $\beta=4\pi$, which corresponds 
to parameters of the massive Thirring model for which it becomes non-interacting
(free massive Dirac fermions), while the sine-Gordon model is interacting (non-Gaussian).

Previous mathematical results have established variants of the Coleman correspondence
in the \emph{perturbative} regime, i.e.,
for small coupling constants and with finite volume interaction term
(or with external mass term),
see \cite{MR446210,MR0443693,MR1672504,MR2461991} and Section~\ref{sec:literature}.
In this article, we prove that, for $\beta=4\pi$, the Coleman correspondence holds
in the \emph{non-perturbative} regime of the infinite volume models (without external mass term).
Unlike the previous results, our proof has thus strong implications for the
massless sine-Gordon model with $\beta=4\pi$: 
we show exponential decay of correlations for all $z\neq 0$ and that the field is probabilistically localized---results
that are non-perturbative in the coupling constant (and false for $z=0$).

Before stating our results,
we first introduce the sine-Gordon model and free Dirac fermions
(both in their Euclidean versions).
The \emph{massless sine-Gordon model}
with coupling constants $\beta\in (0,8\pi)$ and $z\in \R$ is
defined in terms of the limit $\epsilon\to 0$, $m\to 0$, $L\to\infty$ of
the probability measures 
\begin{equation} \label{e:SGdef}
  \nu^{\SG(\beta,z|\epsilon,m,L)}(d\varphi) \propto 
  \exp\qa{2\nz \int_{\Lambda_L}\epsilon^{-\beta/4\pi} \cos(\sqrt{\beta}\varphi(x)) \, dx} \nu^{\GFF(\epsilon,m)} (d\varphi),
\end{equation}
where
$\Lambda_L = \{ x\in \R^2: |x| \leq L \}$ is the Euclidean disk of radius $L$, and
$\nu^{\GFF(\epsilon,m)}$ is the Gaussian free field (GFF) on $\R^2$  with mass $m>0$  regularised at scale $\epsilon>0$.
Here the precise choice of the regularisation of the GFF is not important,
but to be concrete, we choose $\nu^{\GFF(\epsilon,m)}$ as the Gaussian measure supported on $C^\infty(\R^2)$ with covariance kernel
\begin{equation}
   \label{e:cov}
   \int_{\epsilon^2}^\infty ds\, e^{-s(-\Delta +m^2)}(x,y).
\end{equation}
We denote the expectation with respect to the measure $\nu^{\SG(\beta,z|\epsilon,m,L)}$ by $\avg{\cdot}_{\SG(\beta,z|\epsilon,m,L)}$.
The gradient correlation functions are 
the moments of $\partial\varphi$ and $\bar\partial\varphi$
in the limit $\epsilon \to 0$, $m\to 0$, $L\to \infty$.
The charge correlation functions are the limits (when they exist) of linear combinations of
expectations of products of
\begin{equation}
  \wick{e^{\pm i\sqrt{\beta} \varphi(x)}}_\epsilon := \epsilon^{-\beta/4\pi} e^{\pm i\sqrt{\beta} \varphi(x)}
  \label{e:wick}
\end{equation}
or its smeared version, defined for $f \in L^\infty(\R^2)$ with compact support by
\begin{equation}
  \wick{e^{\pm i\sqrt{\beta} \varphi}}_\epsilon(f) := \epsilon^{-\beta/4\pi} \int_{\R^2} dx\, f(x) e^{\pm i\sqrt{\beta} \varphi(x)}.
  \label{e:wick-f}
\end{equation}
The relevant linear combinations are the truncated correlation functions
(or, joint cumulants). 
For example, for $\beta \geq 4\pi$,
the charge one-point function $\avg{ \wick{e^{i\sqrt{\beta} \varphi}}_\epsilon(f) }_{\SG(\beta,z|\epsilon,m,L)}$
diverges when $z \neq 0$ and $\int_{\R^2} f\, dx \neq 0$ (see Proposition~\ref{prop:onepoint}),
but we will see that the truncated charge two-point function, defined by
\begin{align}
  & 
  \avga{ \wick{e^{i\sqrt{\beta} \varphi}}_\epsilon(f_1)\; \wick{e^{-i\sqrt{\beta} \varphi}}_\epsilon(f_2) }^T_{\SG(\beta,z|\epsilon,m,L)}
  \nnb & 
  =
  \avga{ \wick{e^{i\sqrt{\beta} \varphi}}_\epsilon(f_1) \wick{e^{-i\sqrt{\beta} \varphi}}_\epsilon(f_2) }_{\SG(\beta,z|\epsilon,m,L)}
  \nnb &\qquad 
    -
  \avga{ \wick{e^{i\sqrt{\beta} \varphi}}_\epsilon(f_1)}_{\SG(\beta,z|\epsilon,m,L)}
  \avga{ \wick{e^{-i\sqrt{\beta} \varphi}}_\epsilon(f_2)}_{\SG(\beta,z|\epsilon,m,L)},
\end{align}
has a non-trivial limit for test functions $f_1$ and $f_2$ with disjoint supports.
In general, the truncated correlation function of observables $O_1, \dots, O_n$ can be defined inductively by
\begin{equation} \label{e:truncated}
  \avg{O_1 \cdots O_n}^T
  =
  \avg{O_1 \cdots O_n} - \sum_{P \in \mathfrak{P}_n} \prod_j \avg{\prod_{i\in P_j} O_i}^T,
  \qquad \avg{O_i}^T = \avg{O_i},
\end{equation}
where the sum is over proper partitions $P=(P_j) \in \mathfrak{P}_n$ of $[n]= \{1,\dots, n\}$.
Note that $\avg{O_1 \cdots O_n}^T$ does not only depend on the product of the $O_i$,
and a more precise notation would be $\avg{O_1 ; \dots ; O_n}^T$.
However, the formal product notation without semicolons is standard and more convenient for our purposes.
Equivalent to \eqref{e:truncated}, the truncated correlations
are Taylor coefficients of the logarithm of the joint moment generating function of $O_1, \dots, O_n$
if it exists, see Appendix~\ref{app:ferm}. 

Free fermions are defined in terms of their correlation kernel.
The correlation kernel of \emph{free Dirac fermions} of mass $\mu\in \R$
is the fundamental solution of the massive Dirac operator on $\R^2$ for which we
use the representation
\begin{equation} \label{e:Dirac}
  \Dirac+\nmu
  = \begin{pmatrix} \nmu & 2 \bar\partial \\ 2 \partial & \nmu \end{pmatrix},
\end{equation}
where $\partial=\frac12(-i\partial_0+\partial_1)$ and $\bar\partial = \frac12(i\partial_0+\partial_1)$
and we identify $x =(x_0,x_1) \in \R^2$ with $ix_0+x_1 \in \C$.
In terms of the modified Bessel function $K_0$, this fundamental solution is explicitly given by
\begin{equation} \label{e:Smu}
  S(x,y)
  = -\frac{1}{2\pi}
  \begin{pmatrix}
    -\nmu K_0(|\mu||x-y|) & 2\bar\partial_x K_0(|\mu||x-y|)
    \\
    2\partial_x K_0(|\mu||x-y|) & -\nmu K_0(|\mu||x-y|)
  \end{pmatrix}
  \sim
  \frac{1}{2\pi} \begin{pmatrix} 0 & 1/(\bar x-\bar y) \\ 1/(x-y) & 0 \end{pmatrix},
\end{equation}
where $\sim$ holds as $\mu\to 0$;
see Section~\ref{sec:ferm}.
For distinct points
$x_1,\dots, x_n,y_1,\dots, y_n\in\R^2$,
and any
$\alpha_1,\dots,\alpha_n,\beta_1,\dots,\beta_n \in \{1,2\}$,
we then denote the correlation functions of free Dirac fermions by
\begin{equation} \label{e:fermdef}
  \avga{\prod_{i=1}^n \bar\psi_{\alpha_i}(x_i)\psi_{\beta_i}(y_i)}_{\FF(\mu)}
  = \det(S_{\alpha_i\beta_j}(x_i,y_j))_{i,j=1}^n.
\end{equation}
The right-hand side is regarded as the definition of the left-hand side.
In Appendix~\ref{app:ferm}, some standard operational tools for free fermions that we will use later are collected.
Because $S$ is singular, the correlations of $\bar\psi_{\alpha_i}(x_i)\psi_{\beta_i}(x_i)$ are not defined, in general,
but for distinct points $x_1,\dots, x_n$ with $n>1$, the truncated correlations of
$\bar\psi_{\alpha_i}\psi_{\beta_i}(x_i)$ formally make sense and are given by
\begin{equation} \label{e:fermtruncdef}
  \avga{\prod_{i=1}^n \bar\psi_{\alpha_i}\psi_{\beta_i}(x_i)}^T_{\FF(\mu)}
  = (-1)^{n+1} \sum_\pi \prod_{i=1}^{n} S_{\alpha_{\pi^i(1)}\beta_{\pi^{i+1}(1)}}(x_{\pi^i(1)},x_{\pi^{i+1}(1)})
\end{equation}
where the sum is over cyclic permutations $\pi$ on $[n]=\{1,\dots,n\}$.
For our purposes,
we again regard the 
right-hand side of \eqref{e:fermtruncdef} as the definition of the 
left-hand side of \eqref{e:fermtruncdef}.
(As explained in Appendix~\ref{app:ferm},
we note that if $S$ was not singular, then
\eqref{e:fermtruncdef} would be an identity that follows from \eqref{e:truncated}
and \eqref{e:fermdef} without restriction to distinct points. Alternatively one could thus
define the truncated correlations as limits of 
regularized correlations and arrive at the same result as our definition.)
Finally, for any $f_1, \dots, f_n: \R^2 \to \R$ such that the following integrand is (absolutely) integrable,
we will write
\begin{equation} \label{e:truncated-smeared}
  \avga{\prod_{i=1}^n \bar\psi_{\alpha_i}\psi_{\beta_i}(f_i)}^T_{\FF(\mu)}
  = \int dx_1\, \cdots dx_n \, f_1(x_1)\cdots f_n(x_n)\, \avga{\prod_{i=1}^n \bar\psi_{\alpha_i}\psi_{\beta_i}(x_i)}^T_{\FF(\mu)}.
\end{equation}
Since $S(x,y)$ has singularity $O(1/|x-y|)$, for $n \geq 3$, the above integrand is integrable
for all bounded $f_i$ with compact support. For $n =2$, this is true
for $f_1$ and $f_2$ with disjoint compact support.

\medskip

For $\beta=4\pi$, the Coleman correspondence is the following theorem, our first main result.

\begin{theorem} \label{thm:correspondence}
  Let $\beta=4\pi$ and $z \in \R$. Then
  the limit $\epsilon\to 0$, $m\to 0$, $L\to \infty$
  of the truncated correlation functions of
  $\partial\varphi, \bar\partial\varphi, \wick{e^{+i\sqrt{\beta}\varphi}},\wick{e^{-i\sqrt{\beta}\varphi}}$
  of the sine-Gordon model exist (under the restrictions below),
  and they are equal to the correlation functions of free massive Dirac fermions
  with mass $\mu=Az$ (the constant $A$ is defined below):
  for $n+n'+q+q' \geq 2$
  and all test functions $f_1^+, \dots, f_{n}^+, f^-_1, \dots, f^-_{n'}   \in L^\infty(\R^2)$ and $g_1^+, \dots, g_q^+, g^-_{1}, \dots, g^-_{q'} \in  C^\infty_c(\R^2)$,
  all with disjoint compact supports,
  \begin{multline} \label{e:thm-correspondence}
    \lim_{L\to\infty}\lim_{m\to 0}\lim_{\epsilon\to 0}
    \\ 
    \avga{\prod_{k=1}^{n} \wick{e^{+i \sqrt{4\pi} \varphi}}_\epsilon(f_k^+)
      \prod_{k'=1}^{n'}\wick{e^{-i \sqrt{4\pi} \varphi}}_\epsilon(f_{k'}^-)
      \prod_{j=1}^{q} (-i \partial \varphi(g_j^+))
      \prod_{j'=1}^{q'} (+i\bar\partial \varphi(g_{j'}^-))}_{\SG(4\pi,z|\epsilon,m,L)}^T
    \\
    = A^{n+n'} B^{q+q'}
    \avga{
      \prod_{k=1}^{n} {\bar\psi_1 \psi_1}(f_k^+)
      \prod_{k'=1}^{n'} {\bar\psi_2 \psi_2}(f_{k'}^-)
      \prod_{j=1}^{q} {\bar\psi_2\psi_1}(g_j^+)
      \prod_{j'=1}^{q'} {\bar\psi_1 \psi_2}(g_{j'}^-)
    }_{\FF(\mu)}^T,
  \end{multline}
  where $A=4\pi  e^{-\gamma/2}$ and $B=\sqrt{\pi}$ (and where $\gamma$ is the Euler--Mascheroni constant).

  Moreover, for $n+n'+q+q'\geq 3$ and $n+n'=q+q'=1$, the statement is true without the disjoint support assumption.
\end{theorem}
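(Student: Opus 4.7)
The plan is to establish the identity first at finite regularisation, via a Coulomb-gas expansion combined with the Cauchy determinant identity, and then to transfer it to the limits $\epsilon\to 0$, $m\to 0$, $L\to\infty$ using the fermionic side as a source of uniform estimates.

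At fixed $\epsilon,m>0$ and $L<\infty$, I would expand the sine-Gordon expectation as a power series in $z$. Each coefficient is a Gaussian integral of a product of Wick-renormalised vertex operators, i.e.\ a Coulomb gas of $\pm 1$ charges. At $\beta=4\pi$, the Gaussian expectation of such a product reduces, after stripping the $\epsilon^{-1}$ factors, to a neutral Coulomb gas weight $\prod_{i<j}|w_i-w_j|^{2\sigma_i\sigma_j}$ up to a finite $m$-dependent correction, and the Cauchy determinant identity
\[
  \frac{\prod_{i<i'}(u_i-u_{i'})\prod_{j<j'}(v_{j'}-v_j)}{\prod_{i,j}(u_i-v_j)}
  = \det\Bigl(\frac{1}{u_i-v_j}\Bigr)_{\!i,j},
\]
with $u_i$ the positive and $v_j$ the negative charges, rewrites this weight as $\bigl|\det(1/(u_i-v_j))\bigr|^2$. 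Since $1/(u_i-v_j)$ and $1/(\bar u_i-\bar v_j)$ are precisely the short-distance limits of the off-diagonal Dirac kernel entries in \eqref{e:Smu}, the series in $z^2$ becomes the Fredholm expansion of a free-fermion determinant with mass $\mu$ to be identified.

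The insertions in \eqref{e:thm-correspondence} are handled by the same mechanism. A vertex operator $\wick{e^{\pm i\sqrt{4\pi}\varphi}}_\epsilon(f^\pm)$ amounts to adding an extra charge smeared against $f^\pm$, which after the Cauchy identity produces an extra $\bar\psi_1\psi_1$ or $\bar\psi_2\psi_2$ row/column in the fermionic determinant. A gradient insertion $-i\partial\varphi(g)$ is uncharged and contracts with every vertex charge through $\partial\log|x-w|$; assembling these contractions into the Cauchy determinant yields an insertion of $\bar\psi_2\psi_1$ on the fermionic side, and similarly $\bar\psi_1\psi_2$ for $-i\bar\partial\varphi(g)$. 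The constants $A=4\pi e^{-\gamma/2}$ and $B=\sqrt{\pi}$ arise from matching the $\epsilon\to 0$ asymptotics of the Wick-renormalised vertex operators and the gradient two-point function with the short-distance asymptotics $K_0(|\mu|r)=-\log(|\mu|r/2)-\gamma+o(1)$ appearing in $S_\mu$. I would carry out the whole matching at the level of the moment-generating function of the observables and extract the truncated correlations by taking a logarithm, matching \eqref{e:truncated} on the bosonic side with the analogous cumulant formula on the fermionic side.

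The main obstacle is the non-perturbative control of the limits. The bosonic $z$-series is expected to diverge once $L\to\infty$, so one cannot commute the sum with the limit on the sine-Gordon side. The Cauchy determinant identity is precisely what enables the proof: it trades the divergent bosonic sum for a fermionic object whose truncated correlations \eqref{e:fermtruncdef} are a finite sum (over cyclic permutations) of absolutely convergent integrals against $S_\mu$, depending continuously on $\mu$. At finite $(\epsilon,m,L)$ the identification represents the bosonic truncated correlations as the fermionic ones with an effective mass $\mu(\epsilon,m,L)\to Az$, and the uniform bounds transferred from the fermionic side allow passage to the limit by dominated convergence. The most delicate technical step is the quantitative control of the Wick constants under $\epsilon\to 0$ and the verification that the effective mass converges to $Az$; a finite-volume cluster expansion of the cumulants appears to be a natural tool here.
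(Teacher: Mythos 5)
Your proposal correctly identifies the algebraic core of the statement: at $\beta=4\pi$ the Coulomb-gas weight collapses by the Cauchy determinant identity onto $|\det(1/(x_i-y_j))|^2$, which matches the massless fermion determinant, and gradient/charge insertions assemble into the correct fermion bilinears, with $A$ and $B$ fixed by short-distance normalisations. This is exactly what the paper's Corollary~\ref{cor:massless-corr} establishes for $z=\mu=0$. But the step from $z=0$ to nonzero $z$, which is the actual content of the theorem, is where your plan has concrete gaps. You propose to sum the bosonic $z$-series, identify it with a fermionic determinant with some ``effective mass $\mu(\epsilon,m,L)\to Az$'', and then transfer uniform bounds from the fermionic side to pass to the limit. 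Neither idea works as stated: there is no finite-$\epsilon$ effective mass (the two sides are regularised incompatibly; the paper only compares them after $\epsilon,m\to 0$ at fixed $L$, against the fermion model with a sharp finite-volume mass $\mu\mathbf{1}_{\Lambda_L}$), and ``transferring bounds'' is circular, since you cannot use the fermion theory to control the sine-Gordon side before the correspondence has been proved. The Cauchy identity matches Taylor coefficients at the origin term by term; it does not by itself yield convergence of the bosonic series for all real $z$, nor equality of that series with the actual sine-Gordon correlations in the limit.

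The missing ingredient, which the paper proves \emph{without} invoking the fermion side, is that the truncated sine-Gordon correlations with finite-volume interaction, after $\epsilon,m\to 0$, extend to analytic functions of $z$ in a neighbourhood of the entire real axis (Theorem~\ref{thm:cf}(i)--(ii)); this is where the renormalized-potential and Yukawa-gas estimates of Sections~\ref{sec:renormpot}--\ref{sec:sg-expansion} enter, and it is what takes the result beyond the earlier small-coupling works. Once this analyticity and its fermionic analogue for $\mu\mapsto S_{\mu\mathbf{1}_{\Lambda_L}}$ (Theorem~\ref{thm:ferm}(i)) are in hand, Theorem~\ref{thm:correspondence} follows by unique analytic continuation from the $z=0$ Taylor matching (items (iii) of Theorems~\ref{thm:cf} and~\ref{thm:ferm}, combined with Corollary~\ref{cor:massless-corr}), and the infinite-volume limit is then taken on the fermionic side alone via Theorem~\ref{thm:ferm}(iii). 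Your fallback suggestion of a cluster expansion of the cumulants would, at best, reproduce the perturbative statement of the prior literature, not the non-perturbative claim of the theorem.
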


We emphasize that the right-hand side is the explicit polynomial in $S_{\alpha\beta}(x,y)$
given by \eqref{e:fermtruncdef} which is integrated over the test functions as in \eqref{e:truncated-smeared}.
To lighten notation, we will write the limit on left-hand side of \eqref{e:thm-correspondence} as
\begin{equation} \label{e:sg-truncated}
    \avga{\prod_{k=1}^{n} \wick{e^{+i \sqrt{4\pi} \varphi}}(f_k^+)
      \prod_{k'=1}^{n'}\wick{e^{-i \sqrt{4\pi} \varphi}}(f_{k'}^-)
    \prod_{j=1}^{q} (-i\partial \varphi(g_j^+))
    \prod_{j'=1}^{q'} (+i\bar\partial \varphi(g_{j'}^-))}_{\SG(4\pi,z)}^T.
\end{equation}

By choosing $n+n'=0$ and $q+q'=2$, respectively $n+n'=2$ and $q+q'=0$,
the gradient and charge two-point functions 
of the sine-Gordon model are in particular given, 
for test functions $f_1$ and $f_2$ with disjoint support, by:
\begin{align} \label{e:twopoint1}
  & 
  \avg{\partial\varphi(f_1) \partial\varphi(f_2)}_{\SG(4\pi,z)}
    \nnb 
  &\qquad = -\frac{B^2}{\pi^2} \int dx_1 \, dx_2\, f_1(x_1) f_2(x_2) \,
    (\partial_{x_1} K_0(A|z||x_1-x_2|))^2
    ,
  \\
  \label{e:twopoint2}
  & 
  \avg{\partial\varphi(f_1) \bar\partial\varphi(f_2)}_{\SG(4\pi,z)}
    \nnb 
  &\qquad = -\frac{B^2A^2z^2}{4\pi^2} \int dx_1 \, dx_2\, f_1(x_1) f_2(x_2)\,
    (K_0(A|z||x_1-x_2|))^2
    ,
\intertext{and}
  \label{e:twopoint3}
  & 
    \avg{\wick{e^{i \sqrt{4\pi} \varphi}}(f_1) \, \wick{e^{-i\sqrt{4\pi} \varphi}}(f_2) }_{\SG(4\pi,z)}^T
    \nnb 
  &\qquad= \frac{A^2}{\pi^2} \int dx_1 \, dx_2\, f_1(x_1) f_2(x_2)\,
    |\partial_{x_1} K_0(A|z||x_1-x_2|)|^2
    ,
  \\
  \label{e:twopoint4}
  & 
    \avg{\wick{e^{i \sqrt{4\pi} \varphi}}(f_1) \,\wick{e^{i\sqrt{4\pi} \varphi}}(f_2) }_{\SG(4\pi,z)}^T
  \nnb 
  &\qquad = -\frac{A^4z^2}{4\pi^2}
    \int dx_1 \, dx_2\, f_1(x_1) f_2(x_2)\,
    (K_0(A|z||x_1-x_2|))^2
    .
\end{align}
Indeed, for example, by  \eqref{e:fermtruncdef} and \eqref{e:Smu},
\begin{align}
  \avg{{\bar\psi_2\psi_1}(x){\bar\psi_2\psi_1}(y)}_{\FF(\mu)}^T
  = -S_{21}(x,y)S_{21}(y,x)
  = \frac{1}{(2\pi)^2}(2\partial_x K_0(|\mu| |x-y|))^2
\end{align}
so that \eqref{e:thm-correspondence} gives \eqref{e:twopoint1},
noting that for the gradient two-point function we can drop the truncation of the expectation since $\avg{\partial \varphi(f_i)}_{\SG(\beta,z)}=0$ by symmetry.
The equalities \eqref{e:twopoint2}--\eqref{e:twopoint4} are analogous.

Note that the right-hand side of \eqref{e:twopoint3} is not integrable for overlapping test functions,
explaining the restriction in the $(n+n',q+q')=(2,0)$ case in Theorem~\ref{thm:correspondence}.
For the gradient two-point functions, i.e., the case $(n+n',q+q')=(0,2)$,
the statement can be extended to test functions with overlapping support,
but the singular integrals on the right-hand side of \eqref{e:twopoint1} and \eqref{e:twopoint2} then require a more careful interpretation, as in the following theorem.
Similarly as before, we write
\begin{align}
  \avg{\partial\varphi(f_1) \partial\varphi(f_2)}_{\SG(4\pi,z)}
  &:=\lim_{L\to\infty}\lim_{m\to0}\lim_{\epsilon\to 0}\avg{\partial\varphi(f_1) \partial\varphi(f_2)}_{\SG(4\pi,z|\epsilon,m,L)}
\end{align}
when the limits exist, and similarly for its complex conjugate and $\avg{\partial\varphi(f_1) \bar\partial\varphi(f_2)}_{\SG(4\pi,z)}$.

\begin{theorem} \label{thm:twopoint}
  Let $\beta=4\pi$ and $z \in \R$. Then for $f_1,f_2 \in C_c^\infty(\R^2)$,
  \begin{align} \label{e:twopoint1-overlap}
  \avg{\partial\varphi(f_1) \partial\varphi(f_2)}_{\SG(4\pi,z)}
  &= -\frac{B^2}{\pi^2} \mathrm{p.v.}\int
    dx_1 \, dx_2\, f_1(x_1) f_2(x_2) \,
    (\partial_{x_1} K_0(A|z||x_1-x_2|))^2
    ,
  \\
  \label{e:twopoint2-overlap}
    \avg{\partial\varphi(f_1) \bar\partial\varphi(f_2)}_{\SG(4\pi,z)}
    &= -\frac{B^2A^2z^2}{4\pi^2} \int dx_1 \, dx_2\, f_1(x_1) f_2(x_2)\,
      (K_0(A|z||x_1-x_2|))^2
      \nnb
    &\qquad\qquad
      +\frac{1}{4} \int dx\, f_1(x) f_2(x) 
      ,
\end{align}
where $\mathrm{p.v.}\int$ denotes the Cauchy principal value integral: $\lim_{\delta\to 0}\int_{|x_1-x_2|\geq \delta}$.

In particular, the limits defining the left-hand sides exist.
\end{theorem}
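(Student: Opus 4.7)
The plan is to reduce Theorem~\ref{thm:twopoint} to the disjoint-support case already contained in Theorem~\ref{thm:correspondence} by localizing the test functions near the diagonal and analyzing the near-diagonal piece via the short-distance behaviour of the $\varphi$-covariance of the sine-Gordon model. I would fix a smooth cut-off $\eta_\delta(x_1-x_2)$ equal to $1$ for $|x_1-x_2|\ge 2\delta$ and to $0$ for $|x_1-x_2|\le \delta$, and split $f_1(x_1)f_2(x_2)=\eta_\delta\, f_1(x_1)f_2(x_2)+(1-\eta_\delta)f_1(x_1)f_2(x_2)$. Using a partition of unity on $\mathrm{supp}(f_1)\cup\mathrm{supp}(f_2)$, the first (``far'') piece can be written as a finite sum $\sum_k g_{1,k}(x_1)g_{2,k}(x_2)$ with $g_{1,k},g_{2,k}\in C_c^\infty(\R^2)$ having disjoint supports, so Theorem~\ref{thm:correspondence} applied term by term converts the far contribution of the regularized two-point function into the $\eta_\delta$-restricted version of the right-hand side of \eqref{e:twopoint1-overlap} or \eqref{e:twopoint2-overlap} in the limit $\epsilon,m\to 0$, $L\to\infty$.

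For the ``near'' contribution I would integrate by parts at the regularized level, $\partial\varphi(f)=-\varphi(\partial f)$, converting the gradient correlation into $\avg{\varphi(G_{1,\delta})\varphi(G_{2,\delta})}_{\SG(4\pi,z|\epsilon,m,L)}$ for smooth compactly supported $G_{i,\delta}$ localized in a $2\delta$-neighbourhood of the diagonal. The heart of the argument is that the sine-Gordon $\varphi$-two-point function has the same short-distance singularity as the massless GFF covariance $-\tfrac{1}{2\pi}\log|x_1-x_2|$, with a remainder that is continuous up to the diagonal uniformly in $\epsilon,m,L$; this type of control is exactly of the kind produced by the bounds underlying Theorem~\ref{thm:correspondence}. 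Substituting the GFF singular part gives the two distributional identities $\partial_{x_1}\partial_{x_2}(-\tfrac{1}{2\pi}\log|x_1-x_2|)=-\tfrac{1}{4\pi(x_1-x_2)^2}$ (as a principal-value distribution, since the kernel is homogeneous of degree $-2$ with vanishing spherical mean $e^{-2i\theta}/r^2$) and $\partial_{x_1}\bar\partial_{x_2}(-\tfrac{1}{2\pi}\log|x_1-x_2|)=\tfrac{1}{4}\delta(x_1-x_2)$ (via $\partial\bar\partial=\tfrac{1}{4}\Delta$, $-\Delta(-\tfrac{1}{2\pi}\log|x|)=\delta$, and the sign from $\bar\partial_{x_2}$).

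Matching the near and far pieces as $\delta\to0$ then finishes the proof. In case \eqref{e:twopoint1-overlap}, the asymptotic $-\tfrac{B^2}{\pi^2}(\partial_{x_1}K_0(A|z||x_1-x_2|))^2\sim -\tfrac{1}{4\pi(x_1-x_2)^2}$ near the diagonal (using $B^2=\pi$ and $K_0(t)\sim-\log t$ for small $t$) identifies the full expression with the claimed principal-value integral; in case \eqref{e:twopoint2-overlap}, the kernel $K_0^2$ is locally integrable so the overlap limit exists in the usual sense, and the $\tfrac{1}{4}\delta$-distribution coming from the GFF covariance yields exactly the contact term $\tfrac{1}{4}\int f_1 f_2\,dx$. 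The main obstacle will be the uniform control of the sine-Gordon $\varphi$-covariance near the diagonal: one must show that the $\cos(\sqrt{4\pi}\varphi)$ interaction, although marginal at $\beta=4\pi$, produces neither an anomalous renormalization of the logarithmic singularity nor a $z$-dependent dressing of the $\tfrac{1}{4}$ contact term, uniformly as $\epsilon,m\to 0$, $L\to\infty$. This should follow non-perturbatively from the same fermionic bosonization and multiscale estimates that drive the proof of Theorem~\ref{thm:correspondence}.
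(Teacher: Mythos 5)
Your proposal takes a genuinely different route from the paper, but it rests on an ingredient that the paper neither establishes nor needs, and which would itself require a substantial new argument.

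The paper's proof of Theorem~\ref{thm:twopoint} does not perform any near/far decomposition in physical space. Instead it subtracts the GFF two-point function (Lemma~\ref{le:gradint}) from both sides, reducing to \eqref{e:twopoint1-pf}--\eqref{e:twopoint2-pf}, and then applies the \emph{same} $z$-analyticity argument used in Theorem~\ref{thm:correspondence}: on the bosonic side, existence and analyticity in a neighbourhood of $\R$ (even for overlapping test functions when $n=0$, $q+q'=2$) comes from Theorem~\ref{thm:cf} items (i)--(ii); on the fermionic side, the $n=2$ case of Theorem~\ref{thm:ferm} with the subtraction \eqref{e:trunc2ptsub} provides the matching analyticity and the $\mu$-derivatives at $\mu=0$; and Corollary~\ref{cor:massless-corr} plus Theorem~\ref{thm:cf} item (iii) shows all Taylor coefficients at $z=0$ agree. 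No pointwise or distributional information about the interacting correlations near the diagonal is ever required.

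Your approach, by contrast, hinges on showing that the sine-Gordon $\varphi$-two-point function equals the GFF covariance $-\tfrac{1}{2\pi}\log|x_1-x_2|$ plus a remainder that is continuous up to the diagonal, \emph{uniformly} in $\epsilon,m,L$. You flag this as "the main obstacle" and assert it "should follow" from the estimates behind Theorem~\ref{thm:correspondence}, but this is precisely the kind of local near-diagonal control that the paper's framework is designed to avoid: Theorem~\ref{thm:cf} and the renormalized-potential machinery produce existence, analyticity, and Taylor-coefficient matching of \emph{smeared} correlations, not a pointwise short-distance expansion of the interacting covariance with uniform error control. Obtaining such an operator-product-expansion--type statement is a distinct and non-trivial task. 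It is also worth noting that a crude bound (e.g.\ via the infrared bound \eqref{e:GFF2ptbound} and Cauchy--Schwarz) applied to the near part does not close: the near pairs in a $\delta$-scale partition of unity yield, after integrating by parts, contributions whose absolute values sum to something of order $\log(1/\delta)$, so the $\delta\to 0$ limit requires exploiting cancellations that only become visible once one actually knows the sine-Gordon covariance has exactly the GFF singularity. In effect, your near-part argument assumes a version of the conclusion. The distributional identities $\partial_{x_1}\partial_{x_2}(-\tfrac1{2\pi}\log|x_1-x_2|) = -\tfrac1{4\pi}\mathrm{p.v.}\tfrac1{(x_1-x_2)^2}$ and $\partial_{x_1}\bar\partial_{x_2}(-\tfrac1{2\pi}\log|x_1-x_2|) = \tfrac14\delta(x_1-x_2)$ are correct, but without the uniform near-diagonal estimate they cannot be transferred from the GFF to the interacting model. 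To repair the argument you would either have to prove that near-diagonal estimate (new work), or switch to the paper's strategy of subtracting the GFF kernel and matching $z$-Taylor coefficients with the fermionic side via the $n=2$ subtraction \eqref{e:trunc2ptsub}.
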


In particular,
since the modified Bessel function $K_0$ and its derivative decay exponentially,
the massless sine-Gordon correlation functions decay exponentially whenever $z\neq 0$, when $\beta=4\pi$.
It is conjectured (but in general open) that the massless sine-Gordon model has exponential decay of correlations
for all $\beta \in (0,8\pi)$ and $z\in\R \setminus \{0\}$, with an explicit  conjectured relation between the rate of exponential decay (mass) and the parameters of the sine-Gordon model
\cite{10.1142/S0217751X9500053X}.
For further discussion of this problem, see also the last paragraph of \cite[p.717]{MR2461991}
and Section~\ref{sec:literature} below.

The exponential decay is, of course, in contrast to the well-known situation for the GFF (i.e., the case $z=0$).
It is an elementary computation that GFF
correlations decay polynomially:
\begin{align}
  \avg{\partial \varphi(x)\partial\varphi(y)}_{\GFF}
  &= \frac{-1}{4\pi (x-y)^2},
  \\
  \avg{\partial \varphi(x)\bar\partial\varphi(y)}_{\GFF}
  &= 0,
  \\
  \avg{\wick{e^{i \sqrt{4\pi} \varphi(x)}} \, \wick{e^{-i\sqrt{4\pi} \varphi(y)}} }_{\GFF}
  &= 
  \frac{4e^{-\gamma}}{|x-y|^2},
  \\
  \avg{\wick{e^{i \sqrt{4\pi} \varphi(x)}} \, \wick{e^{i\sqrt{4\pi} \varphi(y)}} }_{\GFF}
  &= 0,
\end{align}
and that the one-point functions exist and vanish; see, for example, the computations in Section~\ref{sec:massless-gff}.
The free field correlations $\avg{\cdot}_{\GFF}$ are defined as in \eqref{e:sg-truncated} with $z=0$.

While the above results are for the charge and gradient correlation functions,
as a consequence we can also construct the (probabilistic) massless sine-Gordon field itself when $\beta=4\pi$ and $z \neq 0$.
Note that the assumption $z\neq 0$ is essential as the massless GFF on $\R^2$
only exists up to an additive constant -- not in the sense of the following theorem.

\begin{theorem} \label{thm:field}
  Let $\beta =4\pi$ and $z\in \R$, $z \neq 0$.
  Then there exists a probability measure on $\mathcal{S}'(\R^2)$
  (not restricted to test functions with mean $0$)
  whose expectation we denote by $\avg{\cdot}_{\SG(4\pi,z)}$
  with the following properties.
  For any $f,g \in C_c^\infty(\R^2)$ with $\int dx\, f = 0 = \int dx\, g$,
  \begin{align}
    \avg{e^{i\varphi(f)}}_{\SG(4\pi,z)} &=
    \lim_{L\to\infty}\lim_{m\to 0} \lim_{\epsilon\to 0} \avg{e^{i\varphi(f)}}_{\SG(4\pi,z|\epsilon,m,L)},\\
    \avga{\varphi(f)\varphi(g)}_{\SG(4\pi,z)}&=\lim_{L\to\infty}\lim_{m\to 0}\lim_{\epsilon\to 0}\avga{\varphi(f)\varphi(g)}_{\SG(4\pi,z|\epsilon,m,L)}.
  \end{align}
  For $f,g\in C_c^\infty(\R^2)$,   one has $\avg{\varphi(f)}_{\SG(4\pi,z)} =0$ and the two-point function is given by
  \begin{equation} \label{e:twopoint-fourier}
    \avg{\varphi(f)\varphi(g)}_{\SG(4\pi,z)} = \int_{\R^2} \frac{dp}{(2\pi)^2} \, \hat f(p) \hat g(-p) \, \hat C_{Az}(p),
  \end{equation}
  where $\hat f(p)=\int_{\R^2}dx\, f(x)e^{-ip\cdot x}$ is the Fourier transform of $f$,
  \begin{equation}
  	\hat C_\mu(p)=\mu^{-2}F(|p|/\mu), \qquad where \qquad F(x)=\frac{1}{x^2}-4\frac{\mathrm{arsinh}(x/2)}{x^3\sqrt{4+x^2}},
  	\label{e:2ptsinh}
  \end{equation}
  and $\mathrm{arsinh}(x)=\log(x+\sqrt{x^2+1})$ is the inverse hyperbolic sine.

  In particular, the above massless sine-Gordon field on $\R^2$ is localized and has exponential decay of correlation:
  for any $f \in C_c^\infty(\R^2)$, 
  \begin{equation} \label{e:varbd}
    \sup_{x,y \in \R^2}\avg{(\varphi(f_x)-\varphi(f_y))^2}_{\SG(4\pi,z)}  < \infty,
  \end{equation}
  and
  \begin{equation}
    \label{e:twopoint-expdecay}
    \avg{\varphi(f_x)\varphi(f_y)}_{\SG(4\pi,z)} \qquad \text{decays exponentially as $|x-y|\to\infty$,}
  \end{equation}
  where $f_x(y)= f(y-x)$ denotes the translation of $f$ to $x \in \R^2$.
\end{theorem}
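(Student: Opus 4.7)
The plan is to leverage Theorems~\ref{thm:correspondence} and \ref{thm:twopoint}, which control all truncated correlations of $\partial\varphi$ and $\bar\partial\varphi$, to say something about $\varphi$ itself. The central reduction is that any $f\in C_c^\infty(\R^2)$ with $\int f=0$ admits a decomposition $f=-\partial G^+-\bar\partial G^-$ with $G^\pm\in C_c^\infty(\R^2;\C)$ (first writing $f=\partial_0 g_0+\partial_1 g_1$ via a Bogovskii-type operator, then passing to the complex derivatives). Since $\varphi$ is smooth under each regularized measure, integration by parts yields $\varphi(f)=\partial\varphi(G^+)+\bar\partial\varphi(G^-)$, so every joint cumulant of $\varphi(f_1),\dots,\varphi(f_n)$ for mean-zero $f_i$ expands as a finite linear combination of truncated gradient correlations smeared against smooth compactly supported $G_i^\pm$. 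These converge as $\epsilon\to 0$, $m\to 0$, $L\to\infty$ by Theorem~\ref{thm:correspondence} for $n\geq 3$ (the smoothness of the $G_i^\pm$ absorbs the singularities of $S$ even when supports overlap) and by Theorem~\ref{thm:twopoint} for $n=2$ (the principal-value prescription is harmless on smooth test functions, and the contact term contributes $\tfrac14\int G_1^+ G_2^-$).

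The next step is to upgrade this pointwise convergence of cumulants to convergence of characteristic functions. I would use the determinantal structure \eqref{e:fermtruncdef}: the sum over cyclic permutations together with the exponential decay of $S(x,y)$ for $\mu=Az\neq 0$ yields factorial-type bounds $|\kappa_n(\varphi(f))|\leq C(f)^n n!$, uniform in $(\epsilon,m,L)$. These both determine the limit distribution of $\varphi(f)$ via Carleman's moment condition and, combined with a uniform variance bound, upgrade to pointwise convergence of $\avg{e^{it\varphi(f)}}_{\SG(4\pi,z|\epsilon,m,L)}$ for all real $t$. The limit defines a continuous, positive-definite characteristic functional on the mean-zero subspace of $\mathcal{S}(\R^2)$, which extends canonically to all of $\mathcal{S}(\R^2)$ using the explicit (finite) two-point function below; Minlos' theorem then produces the probability measure on $\mathcal{S}'(\R^2)$, and the $\varphi\mapsto-\varphi$ symmetry of the regularized measures forces $\avg{\varphi(f)}=0$ in the limit.

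For the Fourier formula \eqref{e:twopoint-fourier}, I would Fourier-transform the real-space expression in \eqref{e:twopoint2-overlap}. Using the classical identity $\widehat{K_0(\mu|\cdot|)}(p)=2\pi/(p^2+\mu^2)$ together with a Feynman-parameter evaluation,
\begin{equation}
\int_{\R^2} K_0(\mu|x|)^2\, e^{-ip\cdot x}\, dx = \frac{4\pi\,\mathrm{arsinh}(|p|/(2\mu))}{|p|\sqrt{p^2+4\mu^2}}.
\end{equation}
Combining this with $\avg{\partial\varphi(x)\bar\partial\varphi(y)}=-\tfrac14\Delta_x C(x-y)$ (i.e.\ $\tfrac14|p|^2\hat C(p)$ in Fourier) and solving for $\hat C(p)$ gives exactly $\mu^{-2}F(|p|/\mu)$ with $F$ as in \eqref{e:2ptsinh}; the analogous computation starting from \eqref{e:twopoint1-overlap} provides a consistency check.

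The localization bound \eqref{e:varbd} is then immediate from the Fourier formula:
\begin{equation}
\avg{(\varphi(f_x)-\varphi(f_y))^2}_{\SG(4\pi,z)} = \int\frac{dp}{(2\pi)^2}\,|\hat f(p)|^2\,|e^{-ip\cdot x}-e^{-ip\cdot y}|^2\,\hat C_{Az}(p) \leq 4\int\frac{dp}{(2\pi)^2}\,|\hat f(p)|^2\,\hat C_{Az}(p),
\end{equation}
which is finite uniformly in $x,y$ since $F$ is bounded at the origin (via the Taylor expansion of $\mathrm{arsinh}$) and decays as $s^{-2}$ at infinity, integrated against the Schwartz decay of $|\hat f|^2$. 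For \eqref{e:twopoint-expdecay}, $F(s)$ viewed as a function of $s^2$ is analytic away from the branch point $s^2=-4$ of $\mathrm{arsinh}(s/2)$, so $\hat C_{Az}(p)$ extends holomorphically to the strip $|\mathrm{Im}\,p|<2|Az|$, and the Paley--Wiener theorem then gives exponential decay of the inverse Fourier transform at rate $2|Az|$. The main obstacle is the uniform factorial bound on cumulants: without exploiting both the determinantal cancellations of \eqref{e:fermtruncdef} and the exponential decay of $S$, one only obtains $(n!)^2$ growth from the cyclic permutation sum, which is insufficient both for distributional identification of $\varphi(f)$ and for the characteristic-function convergence needed to produce the measure on $\mathcal{S}'(\R^2)$.
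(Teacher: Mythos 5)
Your overall reduction (write mean-zero $f$ as $\partial G^+ + \bar\partial G^-$, integrate by parts, control gradient correlations) is the same as the paper's starting point, but you diverge in the two places where the real work happens, and both divergences cause problems.

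\textbf{Construction of the measure.} The paper does \emph{not} go through uniform cumulant bounds and Carleman's criterion. It reuses Theorem~\ref{thm:gradientlim} — which already gives a probability measure on $\mathcal{S}'(\R^2)/\text{constants}$ via the correlation inequalities of~\cite{MR496191} and Minlos — and then extends to $\mathcal{S}'(\R^2)$ by replacing a general $f\in\mathcal{S}(\R^2)$ with $f-\hat f(0)\gamma_N$ (where $\gamma_N$ is a Gaussian of variance $N$), showing that $\avg{e^{i\varphi(f-\hat f(0)\gamma_N)}}$ is Cauchy in $N$ precisely because $\hat C_{A z}$ is bounded, and passing to the limit. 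This is cheap and exploits positivity/monotonicity; it needs no cumulant control beyond the second moment. Your route of bounding all joint cumulants by $C(f)^n n!$ uniformly in $(\epsilon,m,L)$, invoking Carleman, and then upgrading to convergence of characteristic functionals is conceptually possible, but you yourself flag the factorial bound as ``the main obstacle'', and it is: the cyclic determinantal formula~\eqref{e:fermtruncdef} controls the \emph{limiting} cumulants, whereas you need uniformity in the regularized sine-Gordon expectations before the limit is taken, and the paper has no such uniform higher-cumulant estimate. Even granting the bound, you would still have to produce a characteristic functional on all of $\mathcal{S}(\R^2)$; ``extends canonically'' is not an argument, and this is exactly where the paper's $\gamma_N$ trick does the work.

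\textbf{The Fourier identity.} Your Feynman-parameter formula $\widehat{K_0(\mu|\cdot|)^2}(p)=4\pi\,\mathrm{arsinh}(|p|/2\mu)/(|p|\sqrt{p^2+4\mu^2})$ is correct, but feeding it into~\eqref{e:twopoint2-overlap} via $\tfrac14|p|^2\hat C(p)=\frac{\mu^2}{4\pi}\widehat{K_0^2}(p)+\frac14$ yields $\hat C(p)=\frac{1}{|p|^2}+\frac{4\mu^2\,\mathrm{arsinh}(|p|/2\mu)}{|p|^3\sqrt{p^2+4\mu^2}}$ — the \emph{wrong sign} on the $\mathrm{arsinh}$ term. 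This version of $\hat C$ blows up like $2/|p|^2$ at $p=0$ (so no measure on $\mathcal{S}'(\R^2)$, no localization bound, and it violates the infrared bound $\hat C\le 1/|p|^2$ from~\eqref{e:GFF2ptbound}). The formula~\eqref{e:2ptsinh} with the minus sign is what one actually gets, and the paper obtains it from the $\partial\varphi\,\partial\varphi$ correlation~\eqref{e:twopoint1-overlap} by a polar/residue computation (equivalently, a Feynman parameter applied to $\int\bar q(\bar p-\bar q)/[(q^2+\mu^2)((p-q)^2+\mu^2)]\,dq$, whose angular part kills the would-be $\bar q^2$ contribution). You should redo this: your claimed ``consistency check'' with the $\partial\partial$ route would in fact reveal a sign discrepancy, not confirm your formula. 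You must sort out the signs before you can assert boundedness of $\hat C_\mu$, which is the single fact driving both~\eqref{e:varbd} and the construction of the $\mathcal{S}'(\R^2)$ measure.

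The Paley--Wiener argument for~\eqref{e:twopoint-expdecay} is essentially the paper's, and the localization bound~\eqref{e:varbd} is the same once $\hat C_\mu$ is correctly identified.
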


Finally, we comment on the exclusion of the one-point functions in Theorem~\ref{thm:correspondence}.
While the charge one-point functions vanish
in the massless free field case, the following proposition shows that they typically
diverge when $z\neq 0$.

\begin{proposition} \label{prop:onepoint}
 Let $\beta=4\pi$ and $z \in \R$. For $f\in L^\infty(\R^2)$ with compact support, the charge one-point functions satisfy
\begin{equation}
  \lim_{L\to\infty}\lim_{m\to 0}\lim_{\epsilon\to 0}\left[\tfrac{1}{\log \epsilon^{-1}}\langle \wick{e^{\pm i\sqrt{\beta}\varphi}}_\epsilon(f)\rangle_{\SG({4\pi},z|\epsilon,m,L)}\right]=2\pi \nz e^{-\gamma}\int_{\R^2} dx\, f(x),
\end{equation}
while the gradient one-point functions vanish (by symmetry):
\begin{equation}
  \avg{\partial \varphi(f)}_{\SG({4\pi},z)}=\avg{\bar\partial \varphi(f)}_{\SG({4\pi},z)}=0.
\end{equation}
\end{proposition}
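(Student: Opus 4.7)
The plan is to compute the charge one-point function by differentiating in $z$, applying Theorem~\ref{thm:correspondence} to the resulting truncated two-point function, and integrating back in $z$. The gradient one-point functions follow immediately from the $\varphi\mapsto-\varphi$ symmetry of the sine-Gordon measure (the cosine interaction is even), under which $\partial\varphi\mapsto-\partial\varphi$ and $\bar\partial\varphi\mapsto-\bar\partial\varphi$, forcing both expectations to vanish. The same symmetry yields $\avg{\wick{e^{+i\sqrt{4\pi}\varphi}}_\epsilon(f)}_{\SG}=\avg{\wick{e^{-i\sqrt{4\pi}\varphi}}_\epsilon(f)}_{\SG}$, so it suffices to treat the $+$ sign.

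Differentiating in $z$ gives
\begin{equation*}
\frac{d}{dz}\avg{\wick{e^{i\sqrt{4\pi}\varphi}}_\epsilon(x)}_{\SG(4\pi,z|\epsilon,m,L)} = \int dy\,\avga{\wick{e^{i\sqrt{4\pi}\varphi}}_\epsilon(x);\wick{e^{i\sqrt{4\pi}\varphi}}_\epsilon(y)+\wick{e^{-i\sqrt{4\pi}\varphi}}_\epsilon(y)}^T_{\SG(4\pi,z|\epsilon,m,L)}.
\end{equation*}
For $y\ne x$, Theorem~\ref{thm:correspondence} identifies the integrand in the limit $\epsilon\to 0$, $m\to 0$, $L\to\infty$ with $A^2$ times the corresponding free-fermion truncated two-point function, which using the explicit form \eqref{e:Smu} of $S$ evaluates to $\tfrac{A^2\mu^2}{4\pi^2}[K_1(|\mu||x-y|)^2-K_0(|\mu||x-y|)^2]$ with $\mu=Az$. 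The short-distance singularity $K_1(r)^2\sim r^{-2}$ (while $K_0(r)^2$ is integrable at the origin) produces a logarithmic divergence of the $y$-integral after the cutoff $|y-x|\gtrsim\epsilon$: the change of variables $u=\mu(y-x)$ reduces matters to the dimensionless integral $\int_{|u|\ge|\mu|\epsilon}[K_1(|u|)^2-K_0(|u|)^2]du$, whose $\log\epsilon^{-1}$ coefficient is read off from the small-$r$ expansions $K_0(r)=-\log(r/2)-\gamma+O(r^2\log r)$ and $K_1(r)=1/r+O(r\log r)$. One then integrates in $z$ from $0$ to $z$, using $\avg{\wick{e^{i\sqrt{4\pi}\varphi}}_\epsilon(x)}_{\GFF(\epsilon,m)}=me^{\gamma/2}\to 0$ as $m\to 0$, to recover the claimed leading $\log\epsilon^{-1}$ behaviour.

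The main obstacle is that Theorem~\ref{thm:correspondence} requires disjoint supports in the case $n+n'=2$, $q+q'=0$, so it does not directly apply near the diagonal $y=x$ that the above $y$-integration passes through. I would handle this by splitting $\int dy=\int_{|y-x|\ge\delta}+\int_{|y-x|<\delta}$: on the outer region Theorem~\ref{thm:correspondence} applies and gives the fermionic expression, while on the inner region one computes the sine-Gordon truncated two-point function directly from the Wick-product formula using the regularised covariance \eqref{e:cov}, whose behaviour $\sim 4e^{-\gamma}|x-y|^{-2}$ for $\epsilon\ll|y-x|\ll 1/m$ provides the matching short-distance asymptotic. Additional technical points are: (i) justifying the interchange of $\int_0^z dz'$ with $\lim_{\epsilon\to 0}$ via a uniform-in-$z$ bound on a compact $z$-interval; and (ii) showing that higher-order perturbative corrections in $z$ confined to the short-distance region contribute only bounded (non-$\log\epsilon^{-1}$) terms—plausibly, since charge neutrality restricts surviving perturbative corrections to odd orders in $z$, whose truncated multi-point structure is controlled by the estimates underlying Theorem~\ref{thm:correspondence}.
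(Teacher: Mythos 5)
The gradient part is fine; the $\varphi\mapsto-\varphi$ symmetry argument is exactly what the paper uses.

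For the charge part, your strategy has a structural problem that prevents it from working as described. Theorem~\ref{thm:correspondence} gives only the $\epsilon\to 0$, $m\to 0$, $L\to\infty$ \emph{limit} of the truncated two-point function, and only for disjointly supported test functions. That limit is a fixed quantity containing no $\epsilon$ at all; integrating it over the outer region $|y-x|\geq\delta$ therefore contributes something finite and $\epsilon$-independent, which vanishes after dividing by $\log\epsilon^{-1}$. The \emph{entirety} of the $\log\epsilon^{-1}$ coefficient you are after lives in the inner region $|y-x|<\delta$, which is precisely where Theorem~\ref{thm:correspondence} does not apply and where you propose to fall back on a direct GFF estimate. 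Moreover, you cannot first take $\epsilon\to 0$ in the $z$-derivative formula and then integrate in $z$: you must work at finite $\epsilon$, estimate uniformly in $z$ on a compact interval, and only then extract the coefficient and pass to the limit. What remains of your argument is therefore the statement that, at finite $\epsilon$ and small $|y-x|$, the sine-Gordon truncated two-point function is asymptotically the regularised GFF one, with corrections whose $y$-integral is $o(\log\epsilon^{-1})$ uniformly in $z$. That is not a side remark — it is the whole content of the proof, and justifying it requires the renormalized-partition-function machinery of Section~\ref{sec:renormpot}--\ref{sec:sg-expansion}, not just a reference to ``the estimates underlying Theorem~\ref{thm:correspondence}.''

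The paper's route (Theorem~\ref{thm:cf}~(iv), proved in Section~\ref{sec:cfconv}) is more direct and avoids both issues. After a Girsanov change of measure the one-point function decomposes \emph{exactly}, for every $\epsilon,m>0$, into
\begin{equation}
  \avga{\wick{e^{i\sqrt{\beta}\sigma_1\varphi}}_\epsilon(f)}_{\SG(\beta,z|\epsilon,m,\Lambda)}
  = \left.\frac{\partial}{\partial\mu_1}\right|_{\mu_1=0}\log\mathcal Z(\zeta_{\mu_1,0,0,z,\Lambda}(\cdot|\epsilon,m)|\epsilon,m)
  + z\sum_{\tau\in\{\pm1\}}\int_{\Lambda^2}dx_1\,dx_2\,A\big((x_1,\sigma_1),(x_2,\tau)|\epsilon,m\big)f(x_1),
\end{equation}
with $A$ the \emph{free-field} two-point function. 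Theorem~\ref{th:main1} shows the first term is finite uniformly in $\epsilon,m$, so the divergence is explicitly $z$ times a GFF integral whose $\log\epsilon^{-1}$ asymptotics are an elementary Bessel/incomplete-gamma computation. No differentiation and re-integration in $z$, and no use of the correspondence theorem, is needed. One further caveat: your sketch of the fermionic calculation via $K_1^2-K_0^2$ and the GFF short-distance asymptotic $4e^{-\gamma}|x-y|^{-2}$ would, if pushed through, produce the coefficient $8\pi ze^{-\gamma}$ rather than the stated $2\pi ze^{-\gamma}$; if you pursue this you should track the prefactor $4^{\beta/4\pi}$ carefully rather than quote the result, as the two computations must be reconciled.
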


The above divergence of the charge one-point functions is shown in Theorem~\ref{thm:cf} item (iv),
in fact more generally for all $\beta \in [4\pi,6\pi)$.
As a consequence of this and of the existence of the truncated charge correlation functions,
none of the untruncated charge correlation functions involving a test function with
$\int_{\R^2} f\, dx \neq 0$ converge as $\epsilon \to 0$.
On the other hand, since the gradient one-point functions exist,
the existence of the truncated gradient correlation functions also implies that of
the untruncated gradient correlation functions
\begin{equation}
  \avga{\prod_{j=1}^q \partial \varphi(g_j^+) \prod_{j'=1}^{q'} \bar\partial\varphi(g_{j'}^-)}_{\SG(4\pi,z)},
\end{equation}
with explicit expressions given  by inverting \eqref{e:truncated}.

Before discussing consequences of Theorems~\ref{thm:correspondence}--\ref{thm:twopoint}
and our more general analysis in their proofs, we remark on the physical interpretation of the fermionic side of the Coleman correspondence.

\begin{remark}
  The Coleman correspondence can be written in terms of Dirac matrices $\gamma^\mu$
  satisfying $\gamma^\mu\gamma^\nu + \gamma^\nu\gamma^\mu = 2\delta_{\mu\nu} \mathbf 1$.
  In the representation we have chosen, these are
  \begin{equation} \label{e:gamma} 
    {\bf 1} = \begin{pmatrix} 1 & 0 \\ 0 & 1 \end{pmatrix},
    \qquad
    \gamma^0 = \begin{pmatrix} 0 & i \\ -i & 0 \end{pmatrix},
    \qquad
    \gamma^1 = \begin{pmatrix} 0 & 1 \\ 1 & 0 \end{pmatrix},
    \qquad
    \gamma^5 =
    \begin{pmatrix} 1 & 0 \\ 0 & -1 \end{pmatrix}.
  \end{equation}
  Thus $\gamma^0 = -\sigma_2$, $\gamma^1 = \sigma_1$, $\gamma^5 = \sigma_3$, where the $\sigma_i$ are the Pauli matrices.
  In terms of these, the Dirac operator can be written as
  \begin{align}
    \Dirac = \gamma^0\partial_0 + \gamma^1 \partial_1
    = \begin{pmatrix}
      0 & i\partial_0 + \partial_1 \\
      -i\partial_0 + \partial_1 & 0
    \end{pmatrix}
   = \begin{pmatrix}
      0 & 2\bar\partial \\
      2\partial & 0
    \end{pmatrix}.
    \label{e:slash}
  \end{align}
  The Coleman correspondence can then be regarded as the following equivalence of the fields:
  \begin{align}
    \label{e:corr1}
    \wick{e^{i\sqrt{{4\pi}}\varphi}}
    &\quad\leftrightarrow\quad
      A\bar\psi_1\psi_1
      =
      \frac{A}{2} {\bar\psi ({\bf 1}+\gamma_5) \psi},
    \\
        \label{e:corr2}
    \wick{e^{-i\sqrt{{4\pi}}\varphi}}
    &\quad\leftrightarrow\quad
      A\bar\psi_2\psi_2
      =
      \frac{A}{2} {\bar\psi ({\bf 1}-\gamma_5) \psi},
    \\
        \label{e:corr3}
    -i \partial \varphi
    &\quad\leftrightarrow\quad
      B\bar\psi_2\psi_1
      =
      \frac{B}{2} {\bar\psi (i\gamma^0+\gamma^1) \psi},
    \\
    \label{e:corr4}
    +i \bar\partial \varphi
    &\quad\leftrightarrow\quad
      B\bar\psi_1\psi_2
      =
      \frac{B}{2} {\bar\psi (-i\gamma^0+\gamma^1) \psi}.
  \end{align}  
  The right-hand sides of \eqref{e:corr1}--\eqref{e:corr2} have the interpretation of being the chiral densities associated with the
  spinor field $\psi$,
  and the right-hand sides of \eqref{e:corr3}--\eqref{e:corr4} that of the vector current $\bar\psi\gamma^\mu\psi$
  (written in complex coordinates); see, for example, \cite[Section~3]{MR937363}.
\end{remark}

\subsection{Further results}
\label{sec:intro-gradient}

Our estimates for the sine-Gordon model together with the correlation inequalities from \cite{MR496191}
also imply the following results for the infinite volume limit for $\beta \in (0,6\pi)$.

The first theorem is for the infinite volume limit of the massless sine-Gordon field modulo constants (the `gradient field').
Let $\mathcal S'(\R^2)/\text{constants}$ denote the topological dual of the (closed) subspace of integral-$0$ functions of the
Schwartz space $\mathcal S(\R^2)$. 

\begin{theorem} \label{thm:gradientlim}
  Let $\beta \in (0,6\pi)$ and $z\in \R$.
  Then for any $f \in C_c^\infty(\R^2)$ with $\int f \, dx =0$,
  the limit
  \begin{equation}
    \avg{e^{i\varphi(f)}}_{\SG(\beta,z)} :=
    \lim_{L\to\infty}\lim_{m\to 0} \lim_{\epsilon\to 0} \avg{e^{i\varphi(f)}}_{\SG(\beta,z|\epsilon,m,L)}
  \end{equation}
  exists, and extends to the characteristic functional of a probability measure on
  the space 
  $\mathcal S'(\R^2)/\text{constants}$
  whose expectation we denote by $\avg{\cdot}_{\SG(\beta,z)}$. This measure 
  is invariant under Euclidean transformations and satisfies
  \begin{equation} \label{e:corrbd1}
    \avg{e^{i\varphi(f)}}_{\SG(\beta,z)} \geq
    e^{-\frac12 (f,(-\Delta)^{-1}f)}
    ,
    \qquad
    \avg{\varphi(f)^2}_{\SG(\beta,z)} \leq
    (f,(-\Delta)^{-1}f)
    .
  \end{equation}
\end{theorem}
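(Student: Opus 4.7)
The plan is to combine the correlation inequalities of \cite{MR496191} with the uniform estimates on the regularized sine-Gordon measure developed earlier in the paper (valid for $\beta\in(0,6\pi)$) to deduce convergence of the characteristic functional, and then apply Minlos' theorem on the dual of the mean-zero subspace of $\mathcal S(\R^2)$ to construct the claimed probability measure.

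The two key a priori inputs, valid for $f\in C_c^\infty(\R^2)$ with $\int f\,dx=0$ and any $\epsilon,m,L$, are the Ginibre/Brascamp--Lieb-type bounds of \cite{MR496191}:
\begin{equation}
\avg{e^{i\varphi(f)}}_{\SG(\beta,z|\epsilon,m,L)} \geq e^{-\frac12 (f, C_{\epsilon,m,L} f)}, \qquad \avg{\varphi(f)^2}_{\SG(\beta,z|\epsilon,m,L)} \leq (f, C_{\epsilon,m,L} f),
\end{equation}
where $C_{\epsilon,m,L}$ is the covariance of $\nu^{\GFF(\epsilon,m)}$ restricted to $\Lambda_L$. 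On mean-zero $f$, $(f,C_{\epsilon,m,L}f) \to (f,(-\Delta)^{-1}f)$ under the iterated limit, so once the left-hand sides are shown to converge, the estimates \eqref{e:corrbd1} follow immediately by passing these sandwich bounds to the limit.

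For existence of the limit, I would combine the variance bound (which yields tightness of the distributions of $\varphi(f)$ under the regularized measures) with the convergence of the joint cumulants of the gradient fields $\partial\varphi,\bar\partial\varphi$ established elsewhere in the paper for $\beta\in(0,6\pi)$: since $f$ has mean zero, it can be written as a divergence (e.g.\ $f = -\Delta h = -4\partial\bar\partial h$ for a decaying potential $h$), so cumulants of $\varphi(f)$ are expressible through cumulants of $\partial\varphi$ and $\bar\partial\varphi$ tested against the appropriate integrated functions, which converge. Tightness plus convergence of all finite-dimensional cumulants then gives convergence of the characteristic function to a limiting functional $F(f)$ via L\'evy's continuity theorem. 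For Minlos' theorem, positive-definiteness and normalization $F(0)=1$ pass to the limit trivially, while the variance bound gives continuity of $F$ in the Hilbert seminorm $(f,(-\Delta)^{-1}f)^{1/2}$, which is stronger than the continuity required on the nuclear space of mean-zero Schwartz functions. This yields a unique Radon probability measure on $\mathcal S'(\R^2)/\text{constants}$ with characteristic functional $F$. Euclidean invariance descends from that of the regularized measures: rotational invariance of $\Lambda_L$ is exact, and translation invariance is recovered as $L\to\infty$.

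The main obstacle is the joint handling of the three limits $\epsilon\to 0$, $m\to 0$, $L\to\infty$ throughout $\beta\in(0,6\pi)$, especially for $\beta\in[4\pi,6\pi)$ where the Wick-renormalized cosine is no longer an $L^2$ random variable and the paper's sharper stability estimates are essential in order to make the correlation inequalities of \cite{MR496191} usable in the UV and infinite-volume limits. A secondary subtlety is the consistent treatment of $\mathcal S'(\R^2)/\text{constants}$: one must verify throughout that the mean-zero restriction on $f$ is used exactly where needed, both to apply $(-\Delta)^{-1}$ and to make the gradient-field representation of $\varphi(f)$ go through despite the fact that the potential $h$ in the divergence representation of $f$ is not itself of compact support.
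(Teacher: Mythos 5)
Your proposal correctly identifies the key ingredients—the correlation inequalities of \cite{MR496191}, the variance bound, the resulting continuity of the characteristic functional in the seminorm $(f,(-\Delta)^{-1}f)^{1/2}$, and Minlos' theorem—but the mechanism you propose for the $L\to\infty$ limit has a genuine gap. You want to extract a limit from tightness (via the variance bound) together with convergence of all joint cumulants, and then invoke L\'evy's continuity theorem. This does not work as stated: convergence of cumulants does not in general imply convergence of characteristic functions (this is the moment problem), and L\'evy's theorem runs the other way (from pointwise convergence of characteristic functions to weak convergence), so it cannot be used to \emph{produce} the characteristic function convergence you need. More seriously, the cumulant-convergence input you want to invoke is not actually available: Theorem~\ref{thm:cf} establishes convergence of the $\epsilon,m\to 0$ limits of the truncated gradient correlation functions at \emph{fixed} compact $\Lambda$, but for $\beta\in(0,6\pi)$ in general there is no statement in the paper that these truncated correlations converge as $L\to\infty$. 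That $L\to\infty$ convergence is obtained only for $\beta=4\pi$ via the Coleman correspondence (through Theorem~\ref{thm:ferm} item~(iii)), so your argument would not cover the stated range $\beta\in(0,6\pi)$.

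The paper avoids all of this by using a different consequence of the very same correlation inequalities you cite: Corollary~3.2 of \cite{MR496191} gives that $\avg{e^{i\varphi(f)}}_{\SG(\beta,z|\Lambda)}$ is \emph{monotone increasing} in $\Lambda$ (and in $m$). After the fixed-$\Lambda$ limit $\epsilon,m\to 0$ is established by Theorem~\ref{thm:sgphi} (using the analyticity of the renormalized partition function, not cumulant convergence), the $L\to\infty$ limit then exists simply as the limit of a bounded monotone sequence, with no tightness, L\'evy, or cumulant-convergence argument. Monotonicity in $\Lambda$ along arbitrary increasing families of compact sets also immediately yields Euclidean invariance of the limit, and the two bounds in \eqref{e:corrbd1} follow by passing the same monotone comparisons to the limit. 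This route is both simpler and, more to the point, actually available in the paper, whereas the route you sketch is not. Finally, your divergence representation $f=-\Delta h$ has the difficulty you already flag: $h$ need not have compact support, which conflicts with the hypotheses of Theorem~\ref{thm:cf} and would require an extra approximation argument even if the cumulant route were otherwise sound.
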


For $m>0$ fixed and $\nz>0$, we similarly obtain the existence of the infinite volume of limit of the massive sine-Gordon field.

\begin{theorem} \label{thm:fieldlim}
  Let $\beta \in (0,6\pi)$, $m,\nz>0$.
  For any $f \in C_c^\infty(\R^2)$ (not assuming $\int dx\, f  =0$), the limit
  \begin{equation}
    \avg{e^{i\varphi(f)}}_{\SG(\beta,z|m)} :=
    \lim_{L\to\infty}\lim_{\epsilon\to 0} \avg{e^{i\varphi(f)}}_{\SG(\beta,z|\epsilon,m,L)}
  \end{equation}
  exists, extends to the characteristic functional of a probability measure
  on $\mathcal S'(\R^2)$
  whose expectation we denote by $\avg{\cdot}_{\SG(\beta,z|m)}$.
  This measure is invariant under Euclidean transformations and satisfies 
  \begin{alignat}{2}
    \label{e:corrbd2}
    \avg{e^{i\varphi(f)}}_{\SG(\beta,z|m)} &\geq
    e^{-\frac12 (f,(-\Delta+m^2)^{-1}f)},
    &\qquad
    \avg{\varphi(f)^2}_{\SG(\beta,z|m)} &\leq
    (f, (-\Delta+m^2)^{-1}f)
    ,
    \intertext{and}
    \label{e:corrbd3}
    \avg{e^{i\varphi(f)}}_{\SG(\beta,z|m)} &\geq \avg{e^{i\varphi(f)}}_{\SG(\beta,z)},
    &\qquad
    \avg{\varphi(f)^2}_{\SG(\beta,z|m)} &\leq \avg{\varphi(f)^2}_{\SG(\beta,z)},
  \end{alignat}
  where the right-hand sides of the last two bounds are as in Theorem~\ref{thm:gradientlim} and hold if $\int dx\, f = 0$.
\end{theorem}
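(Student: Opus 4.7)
The plan is to follow the same strategy as in Theorem~\ref{thm:gradientlim}, combining the uniform estimates on the regularized finite-volume sine-Gordon measures developed earlier in the paper with the Ginibre-type correlation inequalities of \cite{MR496191}. The mass $m>0$ provides infrared regularization, which is what permits removing the integral-zero constraint on $f$.

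First, for fixed $L<\infty$ and $m>0$, I would show existence of $\lim_{\epsilon\to 0}\avg{e^{i\varphi(f)}}_{\SG(\beta,z|\epsilon,m,L)}$. For $\beta\in(0,6\pi)$ and $m>0$, the Wick-normalized interaction density $\exp(2z\int_{\Lambda_L}\epsilon^{-\beta/4\pi}\cos(\sqrt{\beta}\varphi))$ converges in $L^p$ of the regularized GFF uniformly in $\epsilon$ (this is where the restriction $\beta<6\pi$ enters), so numerator and partition function converge separately, and the normalized expectation converges. By the $\varphi\mapsto-\varphi$ symmetry the limit is real and equals $\avg{\cos(\varphi(f))}_{\SG(\beta,z|m,L)}$. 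Next, the Ginibre-type monotonicity from \cite{MR496191} implies that this expectation is non-decreasing in $L$ for $z\geq 0$ (enlarging $\Lambda_L$ corresponds to turning on additional non-negative sine-Gordon couplings); being bounded above by $1$, the $L\to\infty$ limit exists and defines $\avg{e^{i\varphi(f)}}_{\SG(\beta,z|m)}$.

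The Gaussian lower bound in \eqref{e:corrbd2} is the same Ginibre monotonicity applied in $z$ at $z=0$, where the model is the massive GFF with characteristic functional $e^{-\frac12(f,(-\Delta+m^2)^{-1}f)}$. The variance bound then follows by applying this lower bound to $\lambda f$ and comparing $O(\lambda^2)$ coefficients at $\lambda=0$, with $\avg{\varphi(f)^2}<\infty$ following from uniform second-moment bounds on the approximants. The comparison \eqref{e:corrbd3} with the massless field (for $\int f\,dx=0$) comes from Ginibre-type monotonicity in $m$, based on the resolvent decomposition $(-\Delta)^{-1}=(-\Delta+m^2)^{-1}+(-\Delta)^{-1}m^2(-\Delta+m^2)^{-1}$ which exhibits the massless covariance as the massive one plus a non-negative operator; concretely, one can couple the two fields as $\varphi_{\rm massless}=\varphi_m+\eta$ with $\eta$ an independent centered Gaussian of covariance $(-\Delta)^{-1}-(-\Delta+m^2)^{-1}$ and use the correlation inequality for the resulting larger covariance.

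Finally, to extract a probability measure on $\mathcal{S}'(\R^2)$ from the limiting characteristic functional I would apply Bochner--Minlos: positive-definiteness is preserved under the pointwise limits, and continuity in $f\in\mathcal{S}(\R^2)$ at $f=0$ follows from the Gaussian lower bound, which gives $|1-\avg{e^{i\varphi(f)}}_{\SG(\beta,z|m)}|\leq 1-e^{-\frac12(f,(-\Delta+m^2)^{-1}f)}$. Euclidean invariance is inherited from the approximate invariance of the disk approximants as $L\to\infty$ (rotational invariance is exact for each $L$, translation invariance becomes exact in the limit). The main obstacle I anticipate is the $\epsilon\to 0$ step in the harder range $\beta\in[4\pi,6\pi)$, where the Wick renormalization constant $\epsilon^{-\beta/4\pi}$ genuinely diverges and the required uniform $L^p$ bounds are delicate; the technical estimates developed earlier in the paper for the proof of Theorem~\ref{thm:correspondence} are what make this step go through, after which the correlation inequalities handle the remaining limits and bounds.
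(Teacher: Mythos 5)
Your overall architecture matches the paper's: existence of the $\epsilon\to 0$ limit for fixed $m,L$, then Ginibre-type monotonicity in $L$ from \cite{MR496191} for existence of the $L\to\infty$ limit, then Minlos for the measure, then monotonicity in the coupling and in the mass for \eqref{e:corrbd2}--\eqref{e:corrbd3}. The $L\to\infty$, Minlos, Euclidean invariance, and comparison steps are essentially as in the paper.

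The genuine gap is in your description of the $\epsilon\to 0$ step. You claim the interaction density $\exp(2z\int_{\Lambda_L}\epsilon^{-\beta/4\pi}\cos(\sqrt{\beta}\varphi))$ converges in $L^p$ of the GFF uniformly in $\epsilon$ "so numerator and partition function converge separately." This is false for $\beta\in[4\pi,6\pi)$: already the partition function $Z(\beta,z|\epsilon,m,L)$ diverges as $\epsilon\to 0$, because the truncated charge two-point function $\langle\wick{e^{i\sqrt\beta\varphi(x)}}\wick{e^{-i\sqrt\beta\varphi(y)}}\rangle^T_\GFF\propto|x-y|^{-\beta/2\pi}$ is not locally integrable at $\beta\geq 4\pi$, which is precisely the motivation for introducing the counterterm in $\mathcal Z(\zeta|\epsilon,m)$ in \eqref{e:rpf}. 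The correct mechanism is that the numerator and $Z$ each pick up a divergent multiplicative factor of the form $\exp[\frac{z^2}{2}\int\!\!\int A(\xi_1,\xi_2|\epsilon,m)\,d\xi_1d\xi_2]$ which cancels in the ratio, and establishing this requires the renormalized-potential analysis of Sections~\ref{sec:renormpot}--\ref{sec:sg-expansion}. The paper packages this cleanly in Theorem~\ref{thm:sgphi}, which gives the existence of $\lim_{\epsilon\to 0}\avg{e^{w\varphi(f)}}_{\SG(\beta,z|\epsilon,m,\Lambda)}$ for fixed $m>0,\Lambda$ via the renormalized partition function $\mathcal Z(\zeta_{w,z,\Lambda})$ and Girsanov. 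You acknowledge at the end that "the technical estimates ... are what make this step go through," which gestures at the right thing, but the specific cancellation-of-divergences mechanism is misstated as separate convergence. Once you cite Theorem~\ref{thm:sgphi} (or reproduce its proof) for this step, the rest of your argument is sound.

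A small additional remark: for the variance bound in \eqref{e:corrbd2} you propose a Taylor expansion in $\lambda$ of the characteristic functional, which requires a priori control of higher moments of $\varphi(f)$ to justify; the comparison inequality \cite[Corollary~3.2]{MR496191} that the paper invokes gives the second-moment bound $\avg{\varphi(g)^2}_{C_2,\rho_2}\leq\avg{\varphi(g)^2}_{C_1,\rho_1}$ directly, without this regularity step.
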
   

For $\beta=4\pi$, we can then deduce using the localization bound \eqref{e:varbd}
that the $m\to 0$ limit can be taken after the infinite volume limit,
which means that
the formal $\varphi\mapsto \varphi+\frac{2\pi}{\sqrt{\beta}}\Z$-symmetry of the massless sine-Gordon model is spontaneously broken in the infinite volume limit.

\begin{corollary} \label{cor:zle0}
  Let $\beta=4\pi$ and $\nz>0$. Then for any $f \in C_c^\infty(\R^2)$ (not assuming $\int dx\, f  =0$),
  \begin{equation} \label{e:zle0}
    \avg{\varphi(f)^2}_{\SG(4\pi,z|0^+)} :=
    \lim_{m\to 0}\lim_{L\to\infty} \avg{\varphi(f)^2}_{\SG(4\pi,z|m,L)}
    \leq \int_{\R^2}\frac{dp}{(2\pi)^2} \, |\hat f(p)|^2 \hat C_{Az}(p) 
    < \infty,
  \end{equation}
  where $\hat C_{\mu}(p)$ is as in Theorem~\ref{thm:twopoint}.
  
  Moreover, the limit
  $\avg{\cdot}_{\SG(4\pi,z|0^+)} := \lim_{m\to 0} \lim_{L\to\infty} \avg{\cdot}_{\SG(4\pi,z|m,L)}$ exists
  in the sense of characteristic functionals and defines
   a probability measure on $\mathcal S'(\R^2)$ (not dividing out constants).
\end{corollary}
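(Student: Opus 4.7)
The plan is to combine the comparison estimate \eqref{e:corrbd3} with the explicit massless two-point function from Theorem~\ref{thm:field} and the correlation inequalities of \cite{MR496191}. Throughout I write $\nu_m := \avg{\cdot}_{\SG(4\pi,z|m)}$ for the infinite-volume massive measure produced by Theorem~\ref{thm:fieldlim}.

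First I would establish the variance bound \eqref{e:zle0}. For $f \in C_c^\infty(\R^2)$ with $\int f\,dx = 0$, the inequality \eqref{e:corrbd3} combined with the Fourier representation of $\avg{\varphi(f)^2}_{\SG(4\pi,z)}$ in Theorem~\ref{thm:field} yields the bound immediately, uniformly in $m$. For general $f$ I use translation invariance of $\nu_m$: its two-point function admits a spectral representation $\avg{\varphi(f)^2}_{\nu_m} = \int |\hat f(p)|^2 \, d\rho_m(p)/(2\pi)^2$ for a positive tempered measure $\rho_m$, and the mean-zero bound then gives $\rho_m \leq \hat C_{Az}(p)\,dp$ as measures on $\R^2 \setminus \{0\}$. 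To extend the bound across the origin, I would show that $\rho_m$ carries no atom at $p=0$, by showing that $C_m(x) := \avg{\varphi(x)\varphi(0)}_{\nu_m}$, viewed as a tempered distribution on $\R^2$, is in fact integrable: this is a direct consequence of a pointwise comparison $|C_m(x)| \leq G_m(x)$ with the massive GFF Green's function (via the Ginibre-type inequalities of \cite{MR496191}) together with the exponential decay of $G_m$. Continuity of $\hat C_m$ then excludes any atom of $\rho_m$ at the origin and yields \eqref{e:zle0}.

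Given the uniform variance bound, the family $\{\nu_m\}_{m \in (0,1]}$ is tight on $\mathcal S'(\R^2)$ by the Bochner--Minlos theorem, since the standard estimate $|1 - \avg{e^{i\varphi(f)}}_{\nu_m}| \leq \tfrac12 \avg{\varphi(f)^2}_{\nu_m}$ is then dominated by the continuous positive-definite functional $\tfrac12 \int |\hat f(p)|^2 \hat C_{Az}(p)\,dp/(2\pi)^2$ on $\mathcal S(\R^2)$ (continuity in the Schwartz topology uses boundedness of $\hat C_{Az}$). To identify the limit, I would combine Theorem~\ref{thm:field} (which gives $\lim_{L\to\infty}\lim_{m\to 0}\lim_{\epsilon\to 0}\avg{e^{i\varphi(f)}}_{\SG(4\pi,z|\epsilon,m,L)} = \avg{e^{i\varphi(f)}}_{\SG(4\pi,z)}$ for mean-zero $f$) with Theorem~\ref{thm:fieldlim} (which produces the order $\lim_L \lim_\epsilon$ for the massive field), using monotonicity in $m$ of the finite-volume characteristic functional---another consequence of \cite{MR496191}---to interchange the $L \to \infty$ and $m \to 0$ limits. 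This shows $\lim_{m \to 0} \avg{e^{i\varphi(f)}}_{\nu_m} = \avg{e^{i\varphi(f)}}_{\SG(4\pi,z)}$ on the codimension-one subspace of mean-zero test functions, and tightness then forces the full limit to exist and define a probability measure on $\mathcal S'(\R^2)$.

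The hardest step will be ruling out the zero-mode atom of $\rho_m$: this is the genuinely non-perturbative assertion that the $\cos(\sqrt{4\pi}\varphi)$ interaction generates an effective infrared mass comparable to $Az$, strong enough to suppress the zero-mode fluctuations of the underlying free field that would otherwise diverge as $m \to 0$. The pointwise domination by the free Green's function through the correlation inequalities of \cite{MR496191} is the natural mechanism here, since Fourier-space arguments alone provide no information about $\hat C_m$ at the single point $p = 0$.
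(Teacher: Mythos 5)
Your overall architecture (spectral representation, mean-zero bound away from the origin, tightness, monotonicity in $m$) parallels the paper's, but your Step 3 — ruling out a zero-mode atom of $\rho_m$ — has a genuine gap. You assert a \emph{pointwise} domination $|C_m(x)| \leq G_m(x)$ of the infinite-volume covariance kernel by the massive GFF Green's function, and attribute it to the Ginibre-type inequalities of \cite{MR496191}. Those inequalities give the \emph{smeared} bound $\avg{\varphi(g)^2}_{\SG(\beta,z|m)} \leq (g,(-\Delta+m^2)^{-1}g)$, i.e.\ domination of quadratic forms, together with $\avg{\varphi(x)\varphi(y)}_{\SG}\geq 0$. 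Quadratic-form domination of one positive-definite kernel by another does not imply pointwise domination of the kernels, even when both are pointwise nonnegative; if $D := G_m - C_m$ is positive-definite, the value $D(g,h)$ for $g,h \geq 0$ has no a priori sign. So the premise of your integrability argument for $C_m$ is not available, and the atom is not excluded.

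The gap is easily repaired, and the repair is what the paper in effect does. Since $m > 0$, the infrared bound \eqref{e:corrbd2} holds for \emph{all} $f \in C_c^\infty(\R^2)$ with no mean-zero restriction, and $(|p|^2+m^2)^{-1}$ is an $L^1_{\mathrm{loc}}$, continuous function that is finite at $p=0$. Testing against Fej\'er-type approximations of indicators shows $\rho_m \leq (|p|^2+m^2)^{-1}\,dp$ as positive measures on all of $\R^2$, which is absolutely continuous and hence has no atoms. No spatial pointwise comparison is needed. The paper phrases the same fact as the preliminary lemma asserting $\avg{\varphi(f)\varphi(f_x)}_{\SG(\beta,z|m)} \to 0$ as $|x|\to\infty$ by Riemann--Lebesgue, since $\hat C_f(p) \leq |\hat f(p)|^2/(|p|^2+m^2) \in L^1$, and then uses the decoupling identity $\avg{\varphi(f)^2}_{\SG(\beta,z|m)} = \tfrac12\lim_{|x|\to\infty}\avg{(\varphi(f)-\varphi(f_x))^2}_{\SG(\beta,z|m)}$, exploiting that $f-f_x$ is mean-zero so the massless comparison \eqref{e:corrbd3} and Theorem~\ref{thm:field} apply directly; combining with Riemann--Lebesgue for the massless two-point function yields \eqref{e:zle0}. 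Once the atom is removed by the correct route, the rest of your argument (tightness via the uniform variance bound, Minlos, monotone-in-$m$ convergence of the characteristic functional) is sound and matches the paper's.
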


We expect that $\avg{\cdot}_{\SG(4\pi,z|0^+)}$ is the same as $\avg{\cdot}_{\SG(4\pi,z)}$
    but our arguments do not imply this.

\subsection{Heuristics and previous results}
\label{sec:literature}

The \emph{formal} equivalence of the massless sine-Gordon model and the massive Thirring model was observed by Coleman in \cite{PhysRevD.11.2088}.
The massive Thirring model with parameters $(g,\mu)$ is formally given by a fermionic path integral with ``density''
\begin{equation} \label{e:thirring}
  \exp\qa{-\int_{\R^2} dx\, \pa{ \psi \Dirac \bar\psi +\nmu (\psi_1\bar\psi_1+\psi_2\bar\psi_2) - 2g \psi_1\bar\psi_2 \psi_2\bar\psi_1} }.
\end{equation}
Coleman observed that, order by order in a formal expansion,
the massless sine-Gordon model with parameters $(\beta,z)$ is related to the massive Thirring model
if the parameters of the two models are related by
\begin{equation} \label{e:Coleman-gmu}
  g= B^2(1-\frac{4\pi}{\beta}), \qquad \mu=Az.
\end{equation}
Heuristically  this  prediction is not difficult to understand from the type of
massless Gaussian free field and massless free fermion computations we derive in Section~\ref{sec:massless};
these are versions of the identifications \eqref{e:corr1}--\eqref{e:corr4} in the elementary situation of the massless Gaussian free field and massless free fermions.
Indeed, after rescaling $\varphi$ by $\sqrt{4\pi/\beta}$, the measure of sine-Gordon model with parameters $(\beta,z)$ has formal density
\begin{align}
  &\exp\qa{-\int_{\R^2}dx\, \pa{ \frac{8\pi}{\beta} (\partial\varphi) (\bar\partial\varphi) - \nz (\wick{e^{i\sqrt{4\pi}\varphi}}+\wick{e^{-i\sqrt{4\pi}\varphi}}) } }
  \\
  &=\exp\left[-\int_{\R^2} dx\, 2 (\partial\varphi) (\bar\partial\varphi)
    \right.\nnb &\qquad\qquad \left. 
    +
    \int_{\R^2} dx\, \pa{ 2 (1-\frac{4\pi}{\beta}) (-i\partial\varphi) (+i\bar\partial\varphi) + \nz (\wick{e^{i\sqrt{4\pi}\varphi}}+\wick{e^{-i\sqrt{4\pi}\varphi}}) } \right].
    \nonumber
\end{align}
Thus relative to the massless free fermion ``measure'' respectively the massless Gaussian free field measure,
formally, the massive Thirring model and the sine-Gorden model are weighted by
\begin{gather}
  \label{e:thirring2}
  \exp\qa{\int_{\R^2} dx\, \pa{2g \bar\psi_1\psi_2 \bar\psi_2\psi_1 +\nmu (\bar\psi_1\psi_1+\bar\psi_2\psi_2)} },
  \\
  \exp\qa{\int_{\R^2} dx\, \pa{ 2 (1-\frac{4\pi}{\beta}) (-i\partial\varphi) (+i\bar\partial\varphi) +\nz (\wick{e^{i\sqrt{4\pi}\varphi}}+\wick{e^{-i\sqrt{4\pi}\varphi}}) }},
\end{gather}
and one can see the order by order correspondence of the models with parameters \eqref{e:Coleman-gmu},
using the equivalence of the correlations of \eqref{e:corr1}--\eqref{e:corr4} with respect to the noninteracting measures.
To directly apply these identities, note that we changed the order of the Grassmann variables
in \eqref{e:thirring2} compared to \eqref{e:thirring} explaining the change of the sign of the quadratic term.

Mathematically, this formal argument is however far from a proof.
To start with, the probabilistic or analytic existence of the massless sine-Gordon
model and the massive Thirring model is a nontrivial problem.
Both the ultraviolet (short-distance) and infrared (long-distance) behavior of both models
cause significant difficulties, while both regimes need to be handled to establish the Coleman correspondence
for the infinite volume models. We summarize the most relevant previous results on these problems now.

Concerning the ultraviolet stability of the sine-Gordon model, we note that
various constructions of the finite volume sine-Gordon model exist under different assumptions
(see in particular \cite{MR0443693,MR0421409,
  MR649810,MR849210,
  MR914427,
  MR1777310,MR1240586,
  MR4010781,
  1903.01394,
  2009.09664}),
but that
none of these covers all $\beta \in (0,8\pi)$ and all $z\in \R$.
For the ultraviolet construction of the Thirring model,
for $|g|$ small,
we refer in particular to \cite{MR2308750} which considers the massive case
using previous results on the massless case
including \cite{MR2070092,MR1947693}; see also the further references given therein.
In preparation for later discussion, we stress that it is a technically important ingredient of
these analyses that the finite volume regularisations of
these models are defined on a torus with spatially constant mass term.

Concerning the infrared behavior of the massless sine-Gordon and the massive Thirring models,
the following previous results are particularly relevant.
For the sine-Gordon model with $\beta>0$ small, exponential decay
of the charge correlation functions was proved for the model constructed with
Dirichlet boundary conditions \cite{MR923850}; see also the discussion on Debye screening further below.
For the massive Thirring model, stretched exponential correlation decay was proved for $|g|$ small
(corresponding to $|\beta-4\pi|$ small on the sine-Gordon side),
with antiperiodic boundary conditions \cite{MR2308750}.
For a potential application of these results to a  proof of the Coleman correspondence
in the regimes they apply to
(and thus to transfer the results from one side to the other as we do for $\beta=4\pi$ in this article),
we  emphasize the boundary conditions the former results are proved for.
Indeed,  the generalization of the argument we use for $\beta=4\pi$
(which is in line with Coleman's original proposal) would require
`free' boundary versions of the former results,
by which we mean that the sine-Gordon model is defined in terms of the
infinite volume free field but with finite volume interaction,
and that the massive Thirring model is defined with infinite volume quartic
interaction term but with finite volume mass term, all with uniform dependence on the volume.
We expect such estimates are true, but due to lack of translation invariance,
they are significantly more difficult to obtain and pose interesting problems for future works.

Concerning the Coleman correpondence, i.e., the equivalence of both models, 
we mention that in view of the restrictions in the domains of construction of the two models,
previous results are
restricted to models with finite volume sine-Gordon interaction or with fixed external `bare' mass $m>0$.
In particular, for avoidance of doubt, we stress that the Coleman correspondence
for the \emph{massless} sine-Gordon model in \emph{infinite volume} remains open for $\beta\neq 4\pi$.
In the presence of a bare mass or finite volume interaction, the relevant previous results are as follows.
For $\beta<4\pi$, a variant of the Coleman correspondence between the \emph{massive} sine-Gordon model (i.e., $m>0$ fixed) and
the massive Thirring--Schwinger model (QED$_2$) was proved in \cite{MR0443693}
for $z/m^2$ is sufficiently small; see also \cite{MR0523016} for a review.
Also for $\beta<4\pi$, but now with finite volume interaction instead of with an external mass term,
a version of the Coleman correspondence was shown in \cite{MR446210}.
In the same regime,  $\beta<4\pi$ and finite volume sine-Gordon interaction, a construction of 
Haag--Kastler nets of the sine-Gordon model with finite volume interaction
in Lorentzian signature was carried out in \cite{MR4236060},
and a version of the Coleman correspondence was verified in this setting of algebaric QFT.
For $\beta=4\pi$ but with finite volume interaction,
a version of the Coleman correspondence applying to the sine-Gordon model
with small coupling constant $z$ (depending on the volume) was proved in \cite{MR1672504}.
Finally, for $\beta$ in a neighborhood of $4\pi$, but again in finite volume and all coupling constants small depending on the volume,
the Coleman correspondence was proved in \cite{MR2461991}.

The integrability of two-dimensional conformal field theories is celebrated and well known.
That non-conformal  perturbations of conformal field theories are in some cases expected to remain integrable
is perhaps more surprising. The sine-Gordon and massive Thirring models are such examples,
and our result confirms the most fundamental (and arguably simplest) instance of this integrability.
In the physics literature, many other exact results have been predicted by employing various techniques.
For example, at the free fermion point $\beta=4\pi$,
exact expressions for the fractional charge two-point functions, i.e.,
$\avg{\wick{e^{i\sqrt{\alpha_1}\varphi(x_1)}}\wick{e^{i\sqrt{\alpha_2}\varphi(x_2)}}}_{\SG(4\pi,z)}$ with $\alpha_1,\alpha_2 \in (0,4\pi)$,
were derived in \cite{MR1297289},
the mass was determined for general $\beta$ by using a mapping to the continuum limit of an inhomogeneous six-vertex model and the Bethe ansatz
in \cite{MR1488586,10.1142/S0217751X9500053X},
and exact expressions for the fractional charge one-point function $\avg{\wick{e^{i\sqrt{\alpha}\varphi(x)}}}_{\SG(\beta,z)}$ for $\alpha \in (0,4\pi)$
and general $\beta$ were derived in \cite{MR1453266} by extrapolation of exact results for $\beta=4\pi$ and in the
asymptotics $\beta \to 0$.
Further references are \cite{MR1672130,MR1850795,PhysRevD.11.3424},
and for a review, see also \cite{MR2021021}.
All of these integrability results in infinite volume remain conjectural (except for our results at $\beta=4\pi$).
In finite volume, we mention the 
rigorous connection to the XOR Ising model at $\beta=2\pi$ proved in \cite{MR4149524}.

It is also well known that the sine-Gordon model is exactly related to the classical two-dimensional (two-component) Coulomb gas.
For this, we refer in particular to \cite{MR0434278} and also \cite{MR496191} where, using this relation,
many fundamental properties of the Coulomb gas have been derived when $\beta<4\pi$
including existence of the pressure and correlation functions,
the exact equation of state for the pressure,
and the exact scaling behaviour in $z$ of the rate of exponential decay of correlations assuming its existence in a suitable sense.
The latter exponential decay of correlations is in general open.
For the related three-dimensional Coulomb gas, exponential decay (Debye screening)
was proved for $\beta>0$ and $z$ both small in \cite{MR574172}.
The methods have also been partially extended to the two-dimensional Coulomb gas in \cite{MR923850}.
This latter result is incomplete in the sense that it requires  small coupling constants and
more signficiantly that it relies on Dirichlet boundary conditions.
On the other hand, the relation between the sine-Gordon model and the Thirring model only holds
for `free' boundary conditions in finite volume in the previously discussed sense.
Thus the proof of Debye screening of the two-dimensional
Coulomb gas with free boundary condition (and its equivalence with Dirichlet boundary conditions)
remains an interesting problem. For related results in the three-dimensional setting, see also \cite{MR818828}.
Correlation inequalities for the Coulomb gas and the sine-Gordon model as well as their applications are discussed
in \cite{MR0475380,MR0456220,MR496191}; we make some use of these in Section~\ref{sec:estimates}.
Assuming the validity of the Coleman correspondence at the free fermion point (which we here prove),
its implications for the Coulomb gas at $\beta=4\pi$ are discussed in \cite{MR923851};
see also \cite{MR799499}.

Next, we mention a few related bosonization results.
The concept of bosonization goes at least back to \cite{MR172638}; see also \cite{10.1142/2436} for a review.
In the free field case,
the boson--fermion correspondence has been extended by disorder operators \cite{MR937363}.
Second, while the bosonization relations in this paper rely essentially on the precise asymptotics of the correlations
in the continuum limit,
in the massless free field case, exact discrete versions have been found as well; see in particular \cite{MR3369909}.
Some bosonization results are also expected to extend to Riemann surfaces \cite{MR937363}.
For applications of bosonization of free fermions, see, for example,
\cite[Chapter 10.5]{MR1175177}.

Finally, let us emphasize that the massless sine-Gordon model is an essential example of a two-dimensional \emph{non-conformal} perturbation of a CFT.
For \emph{conformal} field theories, a lot of recent progress has been made, in particular
for the Ising model (see \cite{1307.4104,MR3305999,MR3296821} and references therein)
and for the Liouville CFT (see \cite{MR4060417,2005.11530} and references therein).
Moreover, we mention that models related to the \emph{massless} Thirring model have also been studied
in detail, in particular recently in the form of interacting dimers \cite{MR3606736,MR4121614}.


\subsection{Outline of the paper}
\label{sec:outline}

The paper is structured as follows.
 
In Section~\ref{sec:massless}, we derive the Coleman correspondence in the (noninteracting)
massless case $z=\mu=0$.
This analysis is elementary and the result is well known, but lacking a reference
providing exactly what we need later we include the short and instructive proofs.
This is also an opportunity for us to introduce notation as well as to
collect various estimates for Gaussian free fields and massless fermions for later use.

In Section~\ref{sec:estimates}, we state our estimates for the sine-Gordon model and free fermions,
and then prove our main theorems assuming these estimates.
As discussed already briefly in Section~\ref{sec:literature}, it is important
that these estimates apply to `free' boundary versions of both models.
The remainder of the paper is mainly devoted to proving these estimates.

In Sections~\ref{sec:renormpot} and~\ref{sec:sg-expansion}, we consider the sine-Gordon side.
In particular, we construct the renormalized potential for the regularized sine-Gordon model in Section~\ref{sec:renormpot},
and then use it, in Section~\ref{sec:sg-expansion},
to prove the analyticity of the partition function of the sine-Gordon model and the convergence of the correlations functions,
for any finite volume interaction.
The analysis in Section~\ref{sec:renormpot} extends the continuous renormalisation group approach of \cite{MR914427}
by allowing space dependent coupling constants and extraction of the precise estimates needed subsequently; similar results could presumably be obtained using the related methods of \cite{MR814849,MR2461991}.
The analyticity and convergence results of Section~\ref{sec:sg-expansion} rely on the 
combination of the expansions for the renormalized potential up a finite scale at which they converge
with qualitative bounds and concentration estimates for
Gaussian measures, which provide sufficient control in the regime where the expansions fail to converge.

In Section~\ref{sec:ferm}, we prove the corresponding results on the free fermion side.
Our main work here goes into the analysis of the Green's function of the Dirac operator with finite volume mass term.
Due to lack of the maximum principle or a random walk representation for the Dirac operator
as well as lack of translation invariance, we rely on a series construction by expansion in a carefully chosen basis.
This basis is related to eigenfunctions of the Laplacian on the disk
and the spherical geometry is convenient here,
but we expect that analogous results hold in more general geometry.

In Appendix~\ref{app:ferm}, we collect a few (well known) operational tools for cumulants and free fermions
that we use in various places throughout the paper.

\subsection{Notation}

We will write $f\in L_c^\infty(\R^2)$ if $f$ is compactly supported and essentially bounded.
We  write similarly $f \in L^\infty_c(\R^2 \times \{\pm 1\})$ if $f(x,\pm 1)$ is compactly supported and essentially bounded.
We often write $\xi = (x,\sigma) \in \R^2 \times \{\pm 1\}$ and
\begin{equation}
  \int d\xi \, f(\xi) \equiv \int_{\R^2 \times \{-1,1\}} d\xi \, f(\xi) \equiv \sum_{\sigma\in\{-1,1\}} \int_{\R^2} dx\, f(x,\sigma).
\end{equation}
Throughout the paper, $|\cdot|$ denotes the Euclidean norm, and we will often make use of the identification of $\R^2$ and $\C$.
More precisely, we will denote the components of a point $x \in \R^2$ by $x=(x_0,x_1)$ and its identification
with an element in $\C$
by $x_1+ix_0$. We will also repeatedly write $[n]:=\{1,2,\dots,n\}$. We write $A\subset B$ to indicate that $A$ is any subset of $B$ (no need to be proper).

\section{Free field estimates and bosonization of massless fermions}
\label{sec:massless}

A well-understood (but essential) step in the proof of Theorem~\ref{thm:correspondence} is to verify \eqref{e:thm-correspondence} when $z=\mu=0$.
Results of this flavor exist in the literature, see \cite[Section~3]{MR937363} or \cite[Section~2.2]{MR1672504}, for example, 
but since neither of these references provides the exact statements that we need,
we will give a derivation in our set-up in this section.
Along the way we  will also  collect estimates for the correlations of the free field that we require 
for the proof of the Coleman correspondence with $z\neq 0$.

\subsection{Fermionic side: massless free fermion correlations}
\label{sec:massless-ferm}

We start with computation of the correlation functions of free massless Dirac fermions
whose correlation kernel $S$ is given by \eqref{e:Smu} with $\mu=0$, i.e.,
\begin{equation} \label{e:S0-bis-sec2}
  S(x,y)
  =\frac{1}{2\pi}\begin{pmatrix}
    0 & 1/(\bar x-\bar y) \\
    1/(x-y) & 0
  \end{pmatrix}.
\end{equation}
In this section, the fermionic correlation functions are then defined by
\begin{equation} \label{e:fermdef-bis}
  \avga{\prod_{i=1}^n \bar\psi_{\alpha_i}(x_i)\psi_{\beta_i}(y_i)}_{\FF(0)}
  = \det(S_{\alpha_i\beta_j}(x_i,y_j))_{i,j=1}^n
\end{equation}
whenever the determinant on the right-hand side is well-defined, i.e.,
for all $i,j \in [n]$, either $x_i \neq y_j$ or $\alpha_i=\beta_j$.
The Coleman correspondence is in terms of truncated correlation functions, and importantly, we shall require the setting where $x_i=y_i$. These truncated correlation functions are defined by 
\begin{equation} \label{e:fermtruncdef-bis}
  \avga{\prod_{i=1}^n \bar\psi_{\alpha_i}\psi_{\beta_i}(x_i)}^T_{\FF(0)}
  = (-1)^{n+1} \sum_\pi \prod_{i=1}^{n} S_{\alpha_{\pi^i(1)}\beta_{\pi^{i+1}(1)}}(x_{\pi^i(1)},x_{\pi^{i+1}(1)}),
\end{equation}
where the sum is over cyclic permutations $\pi$,
whenever the right-hand side is well-defined, i.e., 
for all $i,j \in [n]$, either $x_i \neq x_j$ or $\alpha_i =\beta_j$ and $\alpha_j=\beta_i$.
These definitions are consistent with \eqref{e:fermdef} and \eqref{e:fermtruncdef} but slightly more general.
(This generality is required for the proof of Theorem~\ref{thm:correspondence}.)

We will need various identities for these determinants defining our correlation functions.
These identities are conveniently seen in the representation of these determinants in terms of Grassmann integrals.
We discuss the details of this representation and prove the required (well-known) properties in Appendix~\ref{app:ferm}. The connection between our discussion there and that here is that to study the correlation functions $\avg{\prod_{1\leq i\leq n} \bar\psi_{\alpha_i}(x_i)\psi_{\beta_i}(y_i)}_{\FF(0)}$ the matrix $(K_{ij})$ in Lemma~\ref{le:fcor} can be defined to be $S_{\alpha_i\beta_j}(x_i,y_j)$ off the diagonal and on the diagonal to be a constant real number chosen so large that $K$ is invertible -- the exact value of this constant is irrelevant (see also Remark~\ref{rk:KS}).
This definition and Lemma~\ref{le:fcor} then allow us to deduce that the properties of Lemma~\ref{le:fcor2} hold also for the correlation functions we are considering here. Based on this representation, we can also use Lemma~\ref{le:fcor3}, to see that \eqref{e:fermtruncdef-bis} is also consistent with \eqref{e:truncated},
i.e.,
\begin{equation}
  \avga{\bar\psi_{\alpha_i}\psi_{\beta_i}(x_i)}^T_{\FF(0)}
  = \avga{\bar\psi_{\alpha_i}\psi_{\beta_i}(x_i)}_{\FF(0)} = 0 \qquad (\text{assuming } \alpha_i=\beta_i),
\end{equation}
and, for $n \geq 2$,
\begin{equation}
  \avga{\prod_{i=1}^n \bar\psi_{\alpha_i}\psi_{\beta_i}(x_i)
  }^T_{\FF(0)}
  =
    \avga{\prod_{i=1}^n \bar\psi_{\alpha_i}\psi_{\beta_i}(x_i)}_{\FF(0)} - \sum_{P \in \mathfrak{P}_n} \prod_j \avga{\prod_{i\in P_j} \bar\psi_{\alpha_i}\psi_{\beta_i}(x_i)}^T_{\FF(0)}, 
     \label{e:truncated-bis}
\end{equation}
when the right-hand sides exist. Thus when the untruncated correlation functions exist, they determine the truncated ones by  \eqref{e:truncated-bis}.
In view of this fact, the next lemma determines the truncated correlation functions
\begin{equation}
  \avga{
    \prod_{k=1}^n \bar\psi_1\psi_1(x_k) \prod_{k'=1}^{n'} \bar\psi_2\psi_2(y_{k'})
  }_{\FF(0)}^T
\end{equation}
when $x_k \neq y_{k'}$ for all $k$ and $k'$.

\begin{lemma}
  For any $x_1, \dots, x_n, y_1,\dots, y_{n'}$ in $\R^2$ with $x_k \neq y_{k'}$ for all $k \in [n]$ and $k' \in [n']$,
  \begin{equation}
    \label{e:massless-ferm1}
    \avga{
      \prod_{k=1}^n \bar\psi_1\psi_1(x_k) \prod_{k'=1}^{n'} \bar\psi_2\psi_2(y_{k'})
    }_{\FF(0)}
    =\mathbf 1_{n=n'} \frac{1}{(2\pi)^{2n}} \absa{\det\left(\frac{1}{x_k-y_{k'}}\right)_{k,k'=1}^n}^2
    .
  \end{equation}
\end{lemma}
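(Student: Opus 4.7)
The plan is to unwind the definition~\eqref{e:fermdef-bis} directly. Relabel the $n+n'$ insertions by a single index $i=1,\dots,N$ with $N=n+n'$, so that the operator $\bar\psi_{\alpha_i}\psi_{\beta_i}(z_i)$ carries data $(\alpha_i,\beta_i,z_i)=(1,1,x_i)$ for $i\in[n]$ and $(2,2,y_{i-n})$ for $n<i\le N$. Then \eqref{e:fermdef-bis} identifies the correlation with $\det M$, where $M_{ij}=S_{\alpha_i\beta_j}(z_i,z_j)$. The first observation is that since $S_{11}=S_{22}\equiv 0$ in \eqref{e:S0-bis-sec2}, every entry $M_{ij}$ with $i$ and $j$ in the same group (both in $[n]$ or both in $\{n+1,\dots,N\}$) vanishes; in particular the diagonal vanishes, so no coincident-point singularity arises, and the hypothesis $x_k\neq y_{k'}$ is precisely what is required for the remaining off-diagonal-block entries to be well-defined in the sense of the consistency condition stated just before the lemma. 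Consequently $M$ has the block anti-diagonal form
\[
  M=\begin{pmatrix} 0 & A \\ B & 0 \end{pmatrix},\qquad
  A_{kk'}=\frac{1}{2\pi(\bar x_k-\bar y_{k'})},\qquad B_{k'k}=\frac{1}{2\pi(y_{k'}-x_k)},
\]
with $A$ of size $n\times n'$ and $B$ of size $n'\times n$.

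If $n\neq n'$, any permutation contributing to $\det M$ would have to map $[n]$ bijectively onto $\{n+1,\dots,N\}$, which is impossible, so $\det M=0$. This produces the indicator $\mathbf 1_{n=n'}$ on the right-hand side. For $n=n'$, I would perform the $n$ row swaps $i\leftrightarrow n+i$ to bring $M$ into the block-diagonal form $\mathrm{diag}(B,A)$ at a cost of $(-1)^n$, giving $\det M=(-1)^n\det(A)\det(B)$.

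Finally, I would recognize $A$ and $B$ as variants of the Cauchy matrix $C$ with $C_{kk'}=1/(x_k-y_{k'})$: entrywise complex conjugation gives $(2\pi)^n\det A=\overline{\det C}$, while pulling a minus sign out of each of the $n$ rows of $B$ and using $\det C^T=\det C$ gives $(2\pi)^n\det B=(-1)^n\det C$. The two factors of $(-1)^n$ cancel and one arrives at $\det M=(2\pi)^{-2n}\,|\det C|^2$, which is the stated identity. I do not anticipate any serious obstacle: the argument is purely linear-algebraic, and the only point requiring care is checking that the coincident-point evaluation is consistent with the definition of the fermionic correlation functions, which reduces precisely to the vanishing of the $(1,1)$ and $(2,2)$ entries of $S$.
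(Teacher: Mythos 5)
Your proof is correct. You and the paper both hinge on the single key observation that $S_{11}=S_{22}=0$ makes the matrix block anti-diagonal, so the determinant vanishes unless $n=n'$ and otherwise reduces to $\det A\cdot\det B$ up to a sign. The paper, however, argues at the level of the fermionic expectation: it uses the anticommutativity identity \eqref{e:fermantisym} to reorder $\prod\bar\psi_1(x_k)\psi_1(x_k)\prod\bar\psi_2(y_k)\psi_2(y_k)$ into $(-1)^n\prod\bar\psi_1(x_k)\psi_2(y_k)\prod\bar\psi_2(y_k)\psi_1(x_k)$, then invokes the factorization property \eqref{e:fermfactorize} (valid because $S_{11}=S_{22}=0$), and finally applies \eqref{e:fermdef-bis} to each factor. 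Your argument does the same computation entirely in terms of the determinant: the $n$ row swaps $i\leftrightarrow n+i$ give exactly the $(-1)^n$ sign, and the block-diagonalization gives exactly the factorization. What your version buys is self-containment — the paper's route leans on Lemma~\ref{le:fcor2}, which is itself established through the Grassmann-integral representation in Appendix~\ref{app:ferm}, whereas your derivation needs nothing beyond the defining formula \eqref{e:fermdef-bis} and elementary block-determinant linear algebra. What the paper's route buys is that the same anticommutativity and factorization properties are needed elsewhere (in the truncated-correlation identities of Lemma~\ref{lem:fermtruncid}, for instance), so stating and proving them once as operational tools avoids repeating the linear-algebra bookkeeping. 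Both the signs and the identification with $|\det C|^2$ in your write-up check out.
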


\begin{proof}
  First consider $n \neq n'$. Then every term in the expansion of the determinant \eqref{e:fermdef-bis}
  that defines the left-hand side
  must contain a factor $S_{11}$ or $S_{22}$ and hence vanish.
  Let us thus assume now that $n=n'$. Then, by anticommutativity (see \eqref{e:fermantisym}),
  \begin{multline}
    \avga{\prod_{k=1}^n \bar\psi_1(x_k)\psi_1(x_k) \prod_{k=1}^n \bar\psi_2(y_{k})\psi_2(y_{k})}_{\FF(0)}
    \\ 
    =
      (-1)^n \avga{\prod_{k=1}^n \bar\psi_1(x_k)\psi_2(y_k) \prod_{k=1}^n \bar\psi_2(y_k)\psi_1(x_k)}_{\FF(0)}.
  \end{multline}
  Since $S_{11}=S_{22}=0$ the right-hand side factorizes (see \eqref{e:fermfactorize}), and by \eqref{e:fermdef-bis}
  it is hence equal to
  \begin{multline}
    (-1)^n\avga{\prod_{k=1}^n \bar\psi_1(x_k)\psi_2(y_k)}_{\FF(0)} \avga{\prod_{k=1}^n\bar\psi_2(y_k) \psi_1(x_k)}_{\FF(0)}
    \\
      = \frac{(-1)^n}{(2\pi)^{2n}} \det\left(\frac{1}{\bar x_k-\bar y_{k'}}\right)_{k,k'=1}^{n} \det\left(\frac{1}{y_k-x_{k'}}\right)_{k,k'=1}^{n}
  \end{multline}
  which gives the right-hand side of the claim.
\end{proof}

The next two lemmas then allow computing all truncated correlation functions involving also the factors $\bar\psi_2\psi_1$ and $\bar\psi_1\psi_2$.

\begin{lemma} \label{lem:fermtruncid}
  For $n+n'+q+q'\geq 2$
  and any distinct $x_1, \dots, x_n$, $y_1, \dots, y_{n'}$, $z_1, \dots, z_q$, $w_1, \dots, w_{q'}$, $z$, $w$ in $\R^2$,
  the following identities hold: 
  \begin{align} \label{e:fermtruncid1}
    &\avga{
      {\bar\psi_2\psi_1}(z)
      \prod_{j=1}^{q} {\bar\psi_2\psi_1}(z_j)
      \prod_{{j'}=1}^{q'} {\bar\psi_1\psi_2}(w_{j'})
      \prod_{{k}=1}^n \bar\psi_1\psi_1(x_{k})
      \prod_{{k'}=1}^{n'} \bar\psi_2\psi_2(y_{k'})
    }^T_{\FF(0)}
    \nnb
    &= \frac{\mathbf 1_{n=n'}}{2\pi} \sum_{{i}=1}^n
    \pa{\frac{1}{x_i-z}-\frac{1}{y_i-z}}
    \nnb &\qquad \times 
    \avga{
      \prod_{j=1}^{q} {\bar\psi_2\psi_1}(z_j)
      \prod_{{j'}=1}^{q'} {\bar\psi_1\psi_2}(w_{j'})
      \prod_{{k}=1}^n \bar\psi_1\psi_1(x_{k})
      \prod_{{k'}=1}^n \bar\psi_2\psi_2(y_{k'})
    }^T_{\FF(0)}
    ,
  \end{align}
  and
  \begin{align} \label{e:fermtruncid2}
    &\avga{
      {\bar\psi_1\psi_2}(w)
      \prod_{j=1}^{q} {\bar\psi_2\psi_1}(z_j)
      \prod_{{j'}=1}^{q'} {\bar\psi_1\psi_2}(w_{j'})
      \prod_{{k}=1}^n \bar\psi_1\psi_1(x_{k})
      \prod_{{k'}=1}^{n'} \bar\psi_2\psi_2(y_{k'})
    }^T_{\FF(0)}
    \nnb
    &= - 
    \frac{\mathbf 1_{n=n'}}{2\pi} \sum_{{i}=1}^n
    \pa{\frac{1}{\bar x_i-\bar w}-\frac{1}{\bar y_i-\bar w}}
    \nnb & \qquad \times 
    \avga{
      \prod_{j=1}^{q} {\bar\psi_2\psi_1}(z_j)
      \prod_{{j'}=1}^{q'} {\bar\psi_1\psi_2}(w_{j'})
      \prod_{{k}=1}^n \bar\psi_1\psi_1(x_{k})
      \prod_{{k'}=1}^n \bar\psi_2\psi_2(y_{k'})
    }^T_{\FF(0)}
    .
  \end{align}
  The right-hand sides are interpreted as $0$ when $n=n'=0$.
\end{lemma}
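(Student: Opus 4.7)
The plan is to prove \eqref{e:fermtruncid1} by unpacking the cyclic-sum definition \eqref{e:fermtruncdef-bis} and matching each $N$-cycle with the extra $z$-insertion to an $(N-1)$-cycle without it; identity \eqref{e:fermtruncid2} will follow by the analogous argument with $(\alpha_0,\beta_0)=(1,2)$ and $S_{21}$ replaced by $S_{12}$. Set $N=n+n'+q+q'+1$ and label the insertions so that position $0$ is $\bar\psi_2\psi_1(z)$, hence $(\alpha_0,\beta_0)=(2,1)$, while positions $1,\dots,N-1$ carry the remaining labels and points. Cyclic permutations $\pi$ on $[N]$ are in bijection with pairs $(\pi',e)$, where $\pi'$ is a cyclic permutation on $[N-1]=[N]\setminus\{0\}$ (obtained from $\pi$ by bypassing $0$) and $e=(p,\pi'(p))$ is the edge of the $\pi'$-cycle into which $0$ is inserted; for such a pair, the ratio of the two $S$-products is
\begin{equation*}
\frac{S_{\alpha_p\beta_0}(x_p,z)\,S_{\alpha_0\beta_{\pi'(p)}}(z,x_{\pi'(p)})}{S_{\alpha_p\beta_{\pi'(p)}}(x_p,x_{\pi'(p)})}.
\end{equation*}

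Because $S$ has only off-diagonal nonzero entries, this ratio is nonzero exactly when $\alpha_p=2$ and $\beta_{\pi'(p)}=1$; the second condition is automatic whenever the unaugmented $\pi'$ contributes, since then the edge constraint $\alpha_p\neq\beta_{\pi'(p)}$ (required for the $\pi'$-product to be nonzero) forces $\beta_{\pi'(p)}=1$. For such admissible $p$ all three $S$-factors equal $S_{21}(x,y)=1/[2\pi(x-y)]$, and the partial-fraction identity $(a-b)/[(a-c)(c-b)]=1/(a-c)-1/(b-c)$ applied with $a=x_p$, $b=x_{\pi'(p)}$, $c=z$ gives
\begin{equation*}
\frac{S_{21}(x_p,z)\,S_{21}(z,x_{\pi'(p)})}{S_{21}(x_p,x_{\pi'(p)})}=\frac{1}{2\pi}\Bigl(\frac{1}{x_p-z}-\frac{1}{x_{\pi'(p)}-z}\Bigr).
\end{equation*}
Assembling the $N$-cycle formula with its sign $(-1)^{N+1}$, summing first over admissible edges for fixed $\pi'$, and then using that $\sum_{\pi'}\prod_i S_{\alpha_i\beta_{\pi'(i)}}(x_i,x_{\pi'(i)})=(-1)^{N}\avga{\cdots}^T_{\FF(0)}$ (the truncated correlation without the $z$-insertion), one finds
\begin{equation*}
\mathrm{LHS}=-\frac{1}{2\pi}\avga{\cdots}^T_{\FF(0)}\cdot\Bigl[\text{inner sum}(\pi')\Bigr],\qquad \text{inner sum}(\pi'):=\sum_{p:\alpha_p=2}\Bigl(\frac{1}{x_p-z}-\frac{1}{x_{\pi'(p)}-z}\Bigr),
\end{equation*}
provided that this inner sum turns out to be independent of $\pi'$.

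The crux of the proof is this $\pi'$-independence. The positions with $\alpha=2$ are exactly $\{z_j\}_{j=1}^q\cup\{y_{k'}\}_{k'=1}^{n'}$ and those with $\beta=1$ are $\{z_j\}_{j=1}^q\cup\{x_k\}_{k=1}^n$; the edge constraint forces $\pi'(\{\alpha=2\})\subseteq\{\beta=1\}$, so by injectivity $n'\leq n$, and the symmetric analysis of $\{\alpha=1\}$ versus $\{\beta=2\}$ gives $n\leq n'$. Hence contributions only occur when $n=n'$ (matching $\mathbf 1_{n=n'}$), in which case $\pi'$ restricts to a bijection $\{z_j\}\cup\{y_{k'}\}\to\{z_j\}\cup\{x_k\}$. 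Splitting the telescoping difference under this bijection, the contributions from $\{z_j\}$ cancel pairwise, leaving
\begin{equation*}
\sum_{p:\alpha_p=2}\Bigl(\frac{1}{x_p-z}-\frac{1}{x_{\pi'(p)}-z}\Bigr)=\sum_{k=1}^n\frac{1}{y_k-z}-\sum_{k=1}^n\frac{1}{x_k-z},
\end{equation*}
which upon substitution produces exactly the prefactor $\frac{1}{2\pi}\sum_k(1/(x_k-z)-1/(y_k-z))$ in \eqref{e:fermtruncid1}. The degenerate case $n=n'=0$ with $q\geq 1$ is covered by the same argument: then $\pi'|_{\{z_j\}}$ is a permutation of $\{z_j\}$, the inner sum telescopes to zero over each of its cycles, and the LHS vanishes, consistent with interpreting the right-hand side as $0$. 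The main obstacle is the combinatorial/counting step showing that the restriction of any contributing $\pi'$ to $\{\alpha=2\}$-positions is a bijection onto $\{\beta=1\}$-positions; once this is in place, the partial-fraction computation and sign bookkeeping finish the proof.
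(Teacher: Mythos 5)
Your proof is correct, and it takes essentially the same global route as the paper — both proofs organize the sum over $N$-cycles as a sum over pairs $(\pi',p)$, with $\pi'$ a cyclic permutation on the original $N-1$ labels and $p$ the edge into which the extra insertion is placed, and both reduce the prefactor via the same partial-fraction identity. The difference is in the final combinatorial step. The paper phrases it as a symmetry/averaging argument: the proportions $r,s,t$ of contributing cycles carrying a factor $S_{21}(y_i,x_j)$, $S_{21}(y_i,z_k)$, $S_{21}(z_k,z_l)$ are shown to be constants in $(i,j)$, etc., the sum is rearranged into $(rn+sq)\sum_i(1/(x_i-z)-1/(y_i-z))$, and a counting argument gives $rn+sq=1$. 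You instead prove directly that the inner sum over admissible edges $p$ is the \emph{same} for every contributing $\pi'$: since each edge of a contributing cycle must satisfy $\alpha_p\neq\beta_{\pi'(p)}$, the restriction of $\pi'$ to the positions with $\alpha=2$ is a bijection onto the positions with $\beta=1$, and the $\{z_j\}$-contributions then cancel by telescoping, leaving $\sum_k(1/(y_k-z)-1/(x_k-z))$ for any $\pi'$. This is more explicit than the paper's proportion argument (it in fact makes $rn+sq=1$ manifest), and it simultaneously gives the necessity of $n=n'$ and the vanishing of the $n=n'=0$ case by the same injectivity count. One step you leave implicit but should note for completeness: if $\pi'$ is \emph{not} contributing (i.e.\ has some edge with $\alpha_{p'}=\beta_{\pi'(p')}$), then all $N$-cycles obtained by inserting the extra point into $\pi'$ have vanishing product as well — for $p\neq p'$ the bad edge survives, and for $p=p'$ one of the two new factors is a diagonal $S_{11}$ or $S_{22}$ — so the restriction to contributing $\pi'$ in your rearrangement loses nothing.
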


\begin{proof}
  Since the proofs of \eqref{e:fermtruncid1} and \eqref{e:fermtruncid2} are analogous, we only consider  \eqref{e:fermtruncid1}.
  By \eqref{e:fermtruncdef-bis}, when $n+n'+q+q'\geq 2$, we have
  \begin{multline} \label{e:fermcoleman-pf1}
    \avga{
	\prod_{j=1}^{q} {\bar\psi_2\psi_1}(z_j)
	\prod_{{j'}=1}^{q'} {\bar\psi_1\psi_2}(w_{j'})
	\prod_{{k}=1}^n \bar\psi_1\psi_1(x_{k})
	\prod_{{k'}=1}^{n'} \bar\psi_2\psi_2(y_{k'})
    }^T_{\FF(0)}
    \\
    =
    (-1)^{n+n'+q+q'+1} \sum_{\pi \in C_{n+n'+q+q'}} \prod_{i=1}^{n+n'+q+q'}
    S_{\alpha_{\pi^i(1)}\beta_{\pi^{i+1}(1)}}(u_{\pi^i(1)},u_{\pi^{i+1}(1)})
  \end{multline}
  where we have defined
  \begin{equation}
    (\alpha_i,\beta_i,u_i) =\begin{cases}
      (1,1,x_i) & (1 \leq i \leq n)\\
      (2,2,y_{i-n}) & (n < i \leq n+n')\\
      (2,1, z_{i-n-n'}) & (n+n' < i \leq n+n'+q)\\
      (1,2, w_{i-n-n'-q}) & (n+n'+{q}< i \leq n+n'+q+q').
    \end{cases}
  \end{equation}
  By \eqref{e:S0-bis-sec2}, all terms that contain a factor $S_{11}$ or $S_{22}$ vanish.
  Therefore it is necessary that the number of factors of $\bar\psi_1$ equals that of $\psi_2$,
  which implies that $n=n'$ if \eqref{e:fermcoleman-pf1} is nonzero which we thus assume from now on.
  The truncated correlation function
  \begin{equation} \label{e:fermcoleman-pf2}
    \avga{
    {\bar\psi_2\psi_1}(z)
    \prod_{j=1}^{q} {\bar\psi_2\psi_1}(z_j)
    \prod_{{j'}=1}^{q'} {\bar\psi_1\psi_2}(w_{j'})
    \prod_{{k}=1}^n \bar\psi_1\psi_1(x_{k})
    \prod_{{k'}=1}^{n'} \bar\psi_2\psi_2(y_{k'})
}^T_{\FF(0)}
  \end{equation}
  is given by replacing one of the factors $S_{\alpha\beta'}(u,u')$ in \eqref{e:fermcoleman-pf1} by
  \begin{equation}
    -S_{\alpha 1}(u,z) S_{2 \beta'}(z,u')
  \end{equation}
  and then summing over the choice of which factor gets replaced.
  Using again $S_{11}=S_{22}=0$,
  the last term vanishes unless $(\alpha,\beta')= (2,1)$, and in this case,
  \begin{align}
    -S_{21}(u,z) S_{21}(z,u')
    =
    \frac{1}{(2\pi)^2(u-z)(u' - z)}
    &=
    \frac{1}{(2\pi)^2} \pa{\frac{1}{u- z} - \frac{1}{u' - z}} \frac{1}{u'-u}
    \nnb
    &=
      \frac{1}{2\pi}\pa{\frac{1}{u' -z}-\frac{1}{u-z}}
      S_{21}(u,u').
  \end{align}
  Thus the replacement of the factor $S_{21}(u,u')$ is equivalent to multiplying it by
  \begin{equation}
    \frac{1}{2\pi}\pa{\frac{1}{u' -z}-\frac{1}{u-z}}.
  \end{equation}
  The possibilities for $(u,u')$ that are compatible with the constraint $(\alpha,\beta')=(2,1)$ are
  \begin{equation}
    (u,u') = (y_i,x_j), \quad (u,u')=(y_i,z_k), \quad  (u,u')=(z_k,x_j) ,  \quad  (u,u')=(z_k,z_l),
  \end{equation}
  for some $i,j \in [n]$ and $k,l \in [q]$ with $k \neq l$.
  In these cases we obtain factors of, respectively,
  \begin{equation}
    \begin{gathered}
      \frac{1}{2\pi}\pa{ \frac{1}{x_j-z}-\frac{1}{y_i-z}  }
      ,\qquad
      \frac{1}{2\pi}    \pa{  \frac{1}{z_l-z} -\frac{1}{z_k-z}}
      ,\\
      \frac{1}{2\pi}\pa{ \frac{1}{z_k-z} -\frac{1}{y_i-z}}
      ,\qquad
      \frac{1}{2\pi}    \pa{ \frac{1}{x_j-z} - \frac{1}{z_k-z}}
      .
    \end{gathered}
  \end{equation}
  In the sum over cycles in \eqref{e:fermcoleman-pf1}, we may restrict to cycles which give a nonvanishing contribution,
  and we will do this in the following.
  Then by symmetry, given any pair $(i,j) \in [n]^2$, the proportion $r$ of such cycles giving the factor
  $S_{21}(y_i,x_j)$ is independent of $(i,j)$;
  given any pair $(i,k) \in [n] \times [q]$, the proportion $s$ of such cycles giving the factor $S_{21}(y_i,z_k)$
  is independent of $(i,k)$ and the same as the proportion of cycles giving the factor $S_{21}(z_k,x_i)$;
  and given any pair $(k,l) \in [q]^2$ with $k \neq l$ the proportion $t$ of cycles giving the factor $S_{21}(z_k,z_l)$
  is independent of $(k,l)$.
  Therefore \eqref{e:fermcoleman-pf2} is obtained from \eqref{e:fermcoleman-pf1} by multiplication with
  $1/2\pi$ and 
  \begin{multline}
    r \sum_{i,j} \pa{ \frac{1}{x_j-z} - \frac{1}{y_i-z}}
    +s\sum_{i,k} \pa{ \frac{1}{z_k-z}  - \frac{1}{y_i-z} }
    +s\sum_{i,k}\pa{ \frac{1}{x_i-z} - \frac{1}{z_k-z}}
    \\
    +t \sum_{k,l} \pa{\frac{1}{z_l-z}- \frac{1}{z_k-z}  }
    = (rn+sq) \sum_{i}    \pa{\frac{1}{x_i-z} -\frac{1}{y_i-z}}
    .
  \end{multline}
  Since for any cycle $\pi$ with nonvanishing contribution,
  each of the points $y_i$ must appear once as the first argument of $S_{\alpha\beta}$ (and then necessarily $\alpha=2$)
  and each $x_i$ once as the second argument of $S_{\alpha\beta}$ (and then necessily $\beta=1$)
  in the product in \eqref{e:fermcoleman-pf1}, we also see that $rn+sq=1$.
  Thus we have recovered \eqref{e:fermtruncid1} in the case $n+n'\neq 0$.
  
  For the case $n=n'=0$, the same argument shows that the only possibility for $u,u'$ is now $(u,u')=(z_k,z_l)$,
  and as before, this gives a zero contribution since the sum $\sum_{k,l}((z_k-z)^{-1}-(z_l-z)^{-1})$ vanishes. This concludes the proof.
\end{proof}

\begin{lemma}
  For $q+q' \geq 2$  and any distinct $z_1, \dots, z_q, w_1, \dots, w_{q'}$ in $\R^2$,
  \begin{equation} \label{e:fermtruncid3}
    \avga{
      \prod_{j=1}^{q} {\bar\psi_2\psi_1}(z_j)
      \prod_{j'=1}^{q'} {\bar\psi_1\psi_2}(w_{j'})
    }^T_{\FF(0)}
    =
    \begin{cases}
          \frac{1}{(2\pi)^2(z_1-z_2)^2} & (q=2,q'=0)\\
          \frac{1}{(2\pi)^2(\bar w_1 -\bar w_2)^2} & (q=0, q'=2)
      \\
      0 & \text{else}.
    \end{cases}
  \end{equation}
\end{lemma}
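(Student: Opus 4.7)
The plan is to treat the two-field case $q+q'=2$ by direct computation from the definition \eqref{e:fermtruncdef-bis}, and to reduce the remaining cases $q+q'\geq 3$ to the preceding Lemma~\ref{lem:fermtruncid}.

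For $q+q'=2$, there is only one cyclic permutation on two elements, so \eqref{e:fermtruncdef-bis} collapses to
\begin{equation*}
\avga{\bar\psi_{\alpha_1}\psi_{\beta_1}(u_1)\,\bar\psi_{\alpha_2}\psi_{\beta_2}(u_2)}^T_{\FF(0)} = -S_{\alpha_1\beta_2}(u_1,u_2)\,S_{\alpha_2\beta_1}(u_2,u_1).
\end{equation*}
Substituting the explicit form \eqref{e:S0-bis-sec2} (in particular $S_{11}=S_{22}=0$): for $(q,q')=(2,0)$ both factors carry spinor labels $(\alpha,\beta)=(2,1)$, giving $-S_{21}(z_1,z_2)S_{21}(z_2,z_1) = 1/((2\pi)^2(z_1-z_2)^2)$; for $(q,q')=(0,2)$ the analogous computation with $(\alpha,\beta)=(1,2)$ yields the complex-conjugate expression $1/((2\pi)^2(\bar w_1-\bar w_2)^2)$; and for $(q,q')=(1,1)$ the two relevant entries are $S_{22}(z_1,w_1)$ and $S_{11}(w_1,z_1)$, both of which vanish.

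For $q+q'\geq 3$, I would invoke Lemma~\ref{lem:fermtruncid} after peeling off a single field. Assume $q\geq 1$ (the case $q=0$, $q'\geq 3$ is symmetric, using \eqref{e:fermtruncid2} in place of \eqref{e:fermtruncid1}). Then the correlation in question is the left-hand side of \eqref{e:fermtruncid1} with $n=n'=0$, with $q-1$ of the $\bar\psi_2\psi_1$ factors in the inner correlation, with $q'$ of the $\bar\psi_1\psi_2$ factors, and with $z=z_1$. The hypothesis $n+n'+q+q'\geq 2$ of the previous lemma becomes $(q-1)+q'\geq 2$, which holds as $q+q'\geq 3$. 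With $n=0$ the right-hand side of \eqref{e:fermtruncid1} is an empty sum, which by the convention stated at the end of Lemma~\ref{lem:fermtruncid} equals zero; hence the truncated correlation vanishes.

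No step presents a real obstacle: the two-field case is a one-line consequence of the definition together with $S_{11}=S_{22}=0$, and the many-field case is a direct corollary of the preceding lemma. The conceptual point is that, absent any $\bar\psi_1\psi_1$ or $\bar\psi_2\psi_2$ insertions to seed the cyclic contraction formula of Lemma~\ref{lem:fermtruncid}, the only nonvanishing truncated correlation of pure chiral currents is the special two-point contribution.
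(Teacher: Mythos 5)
Your proof is correct and follows essentially the same route as the paper: for $q+q'=2$ the paper also computes directly from the definition \eqref{e:fermtruncdef-bis} using $S_{11}=S_{22}=0$, and for $q+q'>2$ it likewise invokes Lemma~\ref{lem:fermtruncid} with $n=n'=0$ so that the right-hand side collapses to the empty sum. The only difference is presentational (you spell out the peeling-off step and the index bookkeeping that the paper leaves implicit).
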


\begin{proof}
  Lemma~\ref{lem:fermtruncid} implies that the left-hand side is $0$ when $q+q' > 2$.
  In the case $(q,q')=(1,1)$, any of the products in \eqref{e:fermtruncdef-bis} must contain
  factors $S_{11}$ or $S_{22}$, and thus vanish as well. In the case $(q,q')=(2,0)$, by \eqref{e:fermtruncdef-bis}, we get
  \begin{equation}
    - S_{21}(z_1,z_2)S_{21}(z_2,z_1) = - \frac{1}{(2\pi)^2} \frac{1}{z_1-z_2} \frac{1}{z_2-z_1}
    = \frac{1}{(2\pi)^2}\frac{1}{(z_1-z_2)^2}.
  \end{equation}
  The case $(q,q')=(0,2)$ is analogous.
\end{proof}

\subsection{Bosonic side: free field correlations}
\label{sec:massless-gff}

For the computation of the free field correlations,
we first recall that, for $\epsilon>0$ and $m>0$, our regularized GFF is the centered Gaussian field with covariance
\begin{equation}
  \int_{\epsilon^2}^\infty ds \, e^{-s(-\Delta +m^2)}(x,y)
  = \int_{\epsilon^2}^{\infty} ds\, \frac{e^{-\frac{|x-y|^2}{4s}}}{4\pi s} e^{-m^2 s}
  .
  \label{e:cov2}
\end{equation}
We write $\nu^{\GFF(\epsilon,m)}$ for the (centered) Gaussian measure with this covariance.
It is a basic fact that this measure is supported on smooth functions
and that the covariance of the derivatives of the field is given by the derivatives of the covariance,
see e.g. \cite[Appendix B]{MR3339158}.
We also recall the definition
\begin{align}
  \wick{e^{\pm i \sqrt{\beta} \varphi(x)}}_\epsilon :=
  \epsilon^{-\beta/4\pi} e^{\pm i \sqrt{\beta} \varphi(x)}.
\end{align}
Our goal is to compute the truncated correlation functions
\begin{align} \label{e:gffcfdef}
  &\avga{
    \prod_{j=1}^{q} \partial \varphi(z_j)
    \prod_{j'=1}^{q'} \bar\partial \varphi(w_{j'})
    \prod_{k=1}^n \wick{e^{+i\sqrt{\beta}\varphi(x_{k})}}
    \prod_{k'=1}^{n'} \wick{e^{-i\sqrt{\beta}\varphi(y_{k'})}}
  }_{\GFF}^T
  \\ \notag
  &:=
  \lim_{m\to 0}\lim_{\epsilon\to 0}
   \avga{
     \prod_{j=1}^{q} \partial \varphi(z_j)
    \prod_{j'=1}^{q'} \bar\partial \varphi(w_{j'})
    \prod_{k=1}^n \wick{e^{+i\sqrt{\beta}\varphi(x_k)}}_\epsilon
    \prod_{k'=1}^{n'} \wick{e^{-i\sqrt{\beta}\varphi(y_{k'})}}_\epsilon
  }_{\GFF(\epsilon,m)}^T
\end{align}
as well as smeared versions of them.

The following estimates for the covariance of $\varphi$ and its derivatives will be useful.
(As before, $\gamma$ is the Euler--Mascheroni constant.)

\begin{lemma} \label{lem:cov}
  Uniformly on compact subsets of $x \neq y \in \R^2$, as $\epsilon \to 0$,
  \begin{align}
    \avg{\varphi(x)^2}_{\GFF(\epsilon,m)}
    + \frac{1}{2\pi} \log \epsilon
    &\to -\frac{1}{2\pi} \log m - \frac{\gamma}{4\pi},\label{e:var}
    \\
    \avg{\varphi(x)\varphi(y)}_{\GFF(\epsilon,m)}
    &\to -\frac{1}{2\pi} \log m - \frac{1}{2\pi} \log \frac{|x-y|}{2} - \frac{\gamma}{2\pi} + O(m|x-y|).\label{e:covest}
  \end{align}
  Moreover, uniformly on compact sets of $x\neq y$, as $\epsilon\to 0$ and then $m\to 0$,
  \begin{equation} \label{e:cderiv1lim}
    - \avg{\partial\varphi(x)\varphi(y)}_{\GFF(\epsilon,m)}
    =  \avg{\varphi(x)\partial\varphi(y)}_{\GFF(\epsilon,m)}
    \to \frac{1}{4\pi} \frac{1}{x-y},
  \end{equation}
  and
  \begin{equation} \label{e:cderivlim}
    \avg{\partial\varphi(x)\partial\varphi(y)}_{\GFF(\epsilon,m)}
    \to -\frac{1}{4\pi} \frac{1}{(x-y)^2},
    \qquad
    \avg{\partial\varphi(x)\bar\partial\varphi(y)}_{\GFF(\epsilon,m)}
    \to 0.
  \end{equation}
  Moreover, for any $g \in L^\infty_c(\R^2)$, uniformly in compact subsets of $u \in \R^2$,
  \begin{equation} \label{e:phigradphi}
    \avga{\varphi(u)\partial \varphi(g)}_{\GFF(\epsilon,m)}
    \to -\int_{\R^2}dx g(x)\frac{1}{4\pi}\frac{1}{x-u},
  \end{equation}
  and for all $f,g \in L^\infty_c(\R^2)$ with disjoint supports,
  \begin{equation}
  	\label{e:ddsmear}
    \avga{\partial \varphi(f)\partial \varphi(g)}_{\GFF(\epsilon,m)} \to
    -\frac{1}{4\pi}\int_{\R^2} dx\, dy\, f(x)\, g(y) \, \frac{1}{(x-y)^2}
    .
  \end{equation}
  Finally, for any $f \in L^\infty_c(\R^2)$ with $\int f\, dx=0$, uniformly on compact subsets of $x\in \R^2$,
  \begin{equation}
    \avg{\varphi(x)\varphi(f)}_{\GFF(\epsilon,m)} \to -\int_{\R^2} dy\, \frac{1}{2\pi} \log|x-y|\, f(y) .
    \label{e:csmear}
  \end{equation}
  The limits above also exist when $\epsilon\to 0$ with $m>0$ fixed and have the same local uniformity.
\end{lemma}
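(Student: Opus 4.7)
The overall strategy is direct computation from the heat-kernel representation \eqref{e:cov2}, together with the classical asymptotics of the exponential integral $E_1$ and of the modified Bessel function $K_0$.

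For \eqref{e:var}, I would substitute $u = m^2 s$ to evaluate
\begin{equation}
  \avg{\varphi(x)^2}_{\GFF(\epsilon,m)}
  = \int_{\epsilon^2}^\infty \frac{ds}{4\pi s} e^{-m^2 s}
  = \frac{1}{4\pi} E_1(\epsilon^2 m^2),
\end{equation}
and then use $E_1(u) = -\gamma - \log u + O(u)$ as $u \to 0$ to extract $-\frac{1}{2\pi}\log\epsilon - \frac{1}{2\pi}\log m - \frac{\gamma}{4\pi}$. For \eqref{e:covest}, the integrand $(4\pi s)^{-1} e^{-|x-y|^2/(4s) - m^2 s}$ is absolutely integrable over $s \in (0,\infty)$ when $x \neq y$, so monotone convergence gives $\lim_{\epsilon \to 0} C_{\epsilon,m}(x,y) = \frac{1}{2\pi} K_0(m|x-y|)$ (using the standard Schläfli representation of $K_0$), and the small-argument expansion $K_0(z) = -\log(z/2) - \gamma + O(z^2 |\log z|)$ delivers the claimed asymptotic with an $O(m|x-y|)$ remainder, uniformly on compact subsets of $\{x \neq y\}$.

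For the derivative formulas \eqref{e:cderiv1lim} and \eqref{e:cderivlim}, I would differentiate \eqref{e:cov2} under the integral sign; this is justified on any compact subset $K \subset \{x \neq y\}$, because for $x, y \in K$ the differentiated integrand is dominated (uniformly in $\epsilon, m$ small) by an $s$-integrable bound. Using $\partial_x |x-y|^2 = \overline{x-y}$ and $\partial_x \overline{(x-y)} = 0$, one obtains
\begin{equation}
  \partial_x C_{\epsilon,m}(x,y) = -\int_{\epsilon^2}^\infty \frac{ds}{16\pi s^2}\, \overline{(x-y)}\, e^{-|x-y|^2/(4s) - m^2 s},
\end{equation}
and a similar formula for $\partial_x\partial_y$ and $\partial_x \bar\partial_y$; in the last case the integrand vanishes because $\bar\partial_y \overline{(x-y)} = 0$ while $\partial_x \partial_y(\overline{x-y})(\text{something antiholomorphic}) = 0$ combines with the symmetric term to give zero. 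Passing $\epsilon, m \to 0$ and evaluating the remaining integrals via $t = |x-y|^2/(4s)$ yields $-1/(4\pi(x-y))$, $-1/(4\pi(x-y)^2)$, and $0$ respectively, with local uniformity.

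The smeared identities follow from pointwise convergence plus dominated convergence. For \eqref{e:phigradphi} the kernel $|x-u|^{-1}$ is locally integrable in $\R^2$; for \eqref{e:ddsmear} the disjoint supports of $f, g$ keep the integrand bounded. The delicate case is \eqref{e:csmear}: here the mean-zero hypothesis $\int f\, dx = 0$ lets me write
\begin{equation}
  \avg{\varphi(x)\varphi(f)}_{\GFF(\epsilon,m)}
  = \int dy\, f(y) \bigl[C_{\epsilon,m}(x,y) - c_{\epsilon,m}\bigr]
\end{equation}
for any constant $c_{\epsilon,m}$, chosen to absorb the $-\frac{1}{2\pi}\log m$, $\frac{\log 2}{2\pi}$, and $-\frac{\gamma}{2\pi}$ contributions identified in \eqref{e:covest}. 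The bracket then converges to $-\frac{1}{2\pi}\log|x-y|$ as $\epsilon \to 0$ and then $m \to 0$, with an error that is integrable against $f$, so dominated convergence gives \eqref{e:csmear}. The case of fixed $m > 0$ is handled by the same arguments without the final $m \to 0$ step. There is no real obstacle; the main care is in tracking the constants $\gamma$ and $\log 2$ and in verifying the dominated-convergence bounds near the singularities.
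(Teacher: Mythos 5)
Your overall approach matches the paper's: compute directly from the heat-kernel representation \eqref{e:cov2}, use $E_1 = \Gamma(0,\cdot)$ and $K_0$ asymptotics for the variance and covariance, differentiate under the integral for the gradient limits, and pass to smeared versions by dominated convergence (with the mean-zero hypothesis absorbing the divergent constants in \eqref{e:csmear}). That is correct and is essentially what the paper does.

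There is, however, a genuine error in your justification of $\avg{\partial\varphi(x)\bar\partial\varphi(y)}_{\GFF(\epsilon,m)} \to 0$. You assert that the integrand vanishes because ``$\bar\partial_y\overline{(x-y)}=0$,'' but in the paper's convention $\bar\partial_y \bar y = 1$, so $\bar\partial_y\overline{(x-y)} = -1 \neq 0$. The mixed derivative is
\begin{equation}
  \partial_x\bar\partial_y C_{\epsilon,m}(x,y)
  = \int_{\epsilon^2}^\infty \frac{ds}{4\pi s}\left(\frac{1}{4s}-\frac{|x-y|^2}{16s^2}\right)e^{-|x-y|^2/(4s)-m^2 s},
\end{equation}
which is manifestly nonzero pointwise. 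The correct reason the limit is $0$ is that, after $\epsilon\to 0$ and the substitution $t=|x-y|^2/(4s)$, one gets $\frac{1}{4\pi|x-y|^2}\int_0^\infty (1-t)e^{-t-m^2|x-y|^2/(4t)}\,dt$, which tends to $\frac{1}{4\pi|x-y|^2}\int_0^\infty (1-t)e^{-t}\,dt = 0$ as $m\to 0$. (Equivalently, $\partial_x\bar\partial_y = -\frac14\Delta_x$ on translation-invariant kernels, and $\Delta \dot c_s^{m^2} = \partial_s\dot c_s^{m^2} + m^2\dot c_s^{m^2}$ makes the $s$-integral telescope.) Your conclusion is right, but the stated reason is wrong and needs replacing by one of these arguments. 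The rest of the proposal is sound.
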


\begin{proof}[Proof]
  The estimates here are largely routine, so we sketch the main ideas and leave the full details to the reader.
  Let us consider separately pointwise estimates and smeared estimates. 
	 
 \medskip 
	 
 \underline{Pointwise estimates:} 
 For \eqref{e:var}, we note that by definition, 
  \begin{align}
    \avga{\varphi(x)^2}_{\GFF(\epsilon,m)}
    = \int_{\epsilon^2}^\infty dt\, \frac{e^{-m^2 t}}{4\pi t}
    & 
      = \int_{m^2 \epsilon^2}^\infty dt\, \frac{e^{-t}}{4\pi t}
    \nnb  
    &=
      \frac{1}{4\pi}\Gamma(0,m^2\epsilon^2)
      \nnb
    &=
    -\frac{1}{2\pi}\log(m\epsilon) - \frac{\gamma}{4\pi} + O(\epsilon^2 m^2),
  \end{align}
  where $\Gamma$ is the incomplete Gamma function and we used its well-known asymptotics.
  Similarly, for \eqref{e:covest}, we note that as $\epsilon \to 0$,
  \begin{align}
    \avga{\varphi(x)\varphi(y)}_{\GFF(\epsilon,m)}
    & 
    \to \int_0^\infty dt\, \frac{e^{-|x-y|^2/4t}}{4\pi t}e^{-m^2 t}
    \nnb 
    &=  \int_0^\infty dt\, \frac{e^{-(m |x-y|/2)(t+1/t)}}{4\pi t} 
      \nnb
    &=  \frac{1}{2\pi} K_0(m |x-y|)
      \nnb
    &= -\frac{1}{2\pi} \log \frac{m|x-y|}{2} - \frac{\gamma}{2\pi} + O(m|x-y|),
  \end{align}
  where $K_0$ the modified Bessel function of the second kind and we used its well-known asymptotics.
  The proofs of \eqref{e:cderiv1lim} and \eqref{e:cderivlim} are similar,
  and make use of standard asymptotics of Bessel functions -- we omit further details.
  
  \medskip 
  
  \underline{Smeared estimates:} Consider next \eqref{e:phigradphi}.
  For $g\in L_c^\infty(\R^2)$ and $u \in \R^2$, we have
  \begin{align}
    \avga{\varphi(u)\nabla \varphi(g)}_{\GFF(\epsilon,m)}&=\int_{\R^2}dx \, g(x)\int_{\epsilon^2}^{\infty}ds\left(-\frac{x-u}{2s}\right)\frac{e^{-\frac{|x-u|^2}{4s}}}{4\pi s}e^{-m^2s}\nnb
                                                         &=-\int_{\R^2}dy \, ye^{-|y|^2/4}\int_{\epsilon^2}^{\infty}ds \frac{e^{-m^2 s}}{8\pi \sqrt{s}} g(u+\sqrt{s}y).
  \end{align}
  Thus if $u\in \R^2$ is in some fixed compact set, say a disc of radius $r_1$ and we choose that $r_2>0$ is such that $\mathrm{supp}(g)\subset B(0,r_2)$ (where both $r_1,r_2$ are fixed in $\epsilon,m$), then one readily checks via the triangle inequality that $|g(u+\sqrt{s}y)|\leq \|g\|_{L^\infty(\R^2)}\mathbf 1\{s\leq (r_1+r_2)^2/|y|^2\}$. Applying this type of bound in the above integral representation, it follows that as $\epsilon,m\to 0$, $\avga{\varphi(u)\nabla \varphi(g)}_{\GFF(\epsilon,m)}$	converges  uniformly in $u$ in a fixed compact set.
  On the other hand, this type of estimate can readily be used to justify the use of the dominated convergence theorem so using \eqref{e:cderiv1lim}, we see that in fact as $\epsilon,m\to 0$,
  \begin{equation} \label{e:phigradphi_old}
    \avga{\varphi(u)\nabla \varphi(g)}_{\GFF(\epsilon,m)}\to -\int_{\R^2}dx g(x)\frac{1}{2\pi}\frac{x-u}{|x-u|^2},
  \end{equation}
  and that this is a locally bounded function of $u$.
  
  The bound \eqref{e:ddsmear} follows directly from \eqref{e:cderivlim}, while  \eqref{e:csmear} follows from \eqref{e:covest} through similar estimates as above (and making use of our assumption that $\int f=0$). This concludes our proof.
\end{proof}        

Next, we record a basic estimate for the charge correlation functions.

\begin{lemma} 
  \label{le:GFFcharge}
  For any $\beta>0$ and any distinct $x_1, \dots, x_n, y_1, \dots, y_{n'}$ in $\R^2$, where $n+n' \geq 1$,
  the limits
  \begin{align}
    &\avga{
    \prod_{k=1}^n \wick{e^{+i\sqrt{\beta}\varphi(x_k)}}
    \prod_{k'=1}^{n'} \wick{e^{-i\sqrt{\beta}\varphi(y_{k'})}}
      }_{\GFF(m)}
      \nnb
    &=\lim_{\epsilon \to 0}
    \avga{
    \prod_{k=1}^n \wick{e^{+i\sqrt{\beta}\varphi(x_k)}}_{\epsilon}
    \prod_{k'=1}^{n'} \wick{e^{-i\sqrt{\beta}\varphi(y_{k'})}}_{\epsilon}
    }_{\GFF(\epsilon,m)}
    \\
    &\avga{
    \prod_{k=1}^n \wick{e^{+i\sqrt{\beta}\varphi(x_k)}}
    \prod_{k'=1}^{n'} \wick{e^{-i\sqrt{\beta}\varphi(y_{k'})}}
      }_{\GFF}
      \nnb
    &=\lim_{m \to 0} \lim_{\epsilon \to 0}
    \avga{
    \prod_{k=1}^n \wick{e^{+i\sqrt{\beta}\varphi(x_k)}}_{\epsilon}
    \prod_{k'=1}^{n'} \wick{e^{-i\sqrt{\beta}\varphi(y_{k'})}}_{\epsilon}
  }_{\GFF(\epsilon,m)}
\end{align}
exist, and
  \begin{align}
    \label{e:gff-ee}
    &\avga{
    \prod_{k=1}^n \wick{e^{+i\sqrt{\beta}\varphi(x_k)}}
    \prod_{k'=1}^{n'} \wick{e^{-i\sqrt{\beta}\varphi(y_{k'})}}
      }_{\GFF}
      \nnb
      &=
      \mathbf 1_{n=n'} 
      (4e^{-\gamma})^{\beta n/4\pi}
      \frac{\prod_{i<j} |x_i-x_j|^{\beta/2\pi} |y_i-y_j|^{\beta/2\pi}}{\prod_{i,j}|x_i-y_j|^{\beta/2\pi}}
    \end{align}
    where the empty product $\prod_{i<j}$ is interpreted as $1$ if $n=n'=1$.
\end{lemma}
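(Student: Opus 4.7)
The approach will be to exploit the Gaussianity of $\varphi$ under $\nu^{\GFF(\epsilon,m)}$, reducing the expectation of a product of charge exponentials to a single Gaussian moment-generating function in the covariance $C_{\epsilon,m}(u,v)=\avg{\varphi(u)\varphi(v)}_{\GFF(\epsilon,m)}$. The role of the Wick factor $\epsilon^{-\beta/4\pi}$ per charge is engineered precisely to cancel the $-\tfrac{1}{2\pi}\log\epsilon$ divergence of the diagonal variance \eqref{e:var}, so that the $\epsilon\to 0$ limit at fixed $m>0$ will exist with an explicit form.

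Concretely, I will collect the charges as $(\sigma_1,\dots,\sigma_{n+n'})=(+1,\dots,+1,-1,\dots,-1)$ with corresponding points $(z_1,\dots,z_{n+n'})=(x_1,\dots,x_n,y_1,\dots,y_{n'})$, and apply the Gaussian identity to express the regularized expectation as $\exp\!\bigl(-\tfrac{\beta}{2}\sum_{j,k}\sigma_j\sigma_k C_{\epsilon,m}(z_j,z_k)\bigr)$. Separating diagonal terms (using $\sigma_k^2=1$) from off-diagonal, inserting the asymptotics \eqref{e:var} and \eqref{e:covest}, and multiplying by $\epsilon^{-(n+n')\beta/4\pi}$ will make the $\epsilon$-dependence disappear exactly, leaving a continuous function of $m$ and the points on the compact set of distinct configurations.

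For the subsequent $m\to 0$ limit, I will compute the overall power of $m$ as the sum $\tfrac{\beta}{4\pi}(n+n')+\tfrac{\beta}{2\pi}\sum_{j<k}\sigma_j\sigma_k$ of diagonal and off-diagonal contributions. The square-expansion identity $\sum_k\sigma_k^2+2\sum_{j<k}\sigma_j\sigma_k=(\sum_k\sigma_k)^2$ collapses this into $\tfrac{\beta}{4\pi}(\sum_k\sigma_k)^2$, which is strictly positive when $n\neq n'$ and zero when $n=n'$. This is the algebraic mechanism enforcing the charge-neutrality indicator $\mathbf 1_{n=n'}$ in the limit. In the neutral case one has $\sum_{j<k}\sigma_j\sigma_k=-n$, and I will simply assemble the residual constants $e^{(n+n')\beta\gamma/8\pi}$ (diagonal) together with $2^{-\beta\sigma_j\sigma_k/2\pi}e^{\beta\sigma_j\sigma_k\gamma/2\pi}$ per off-diagonal pair into the factor $(4e^{-\gamma})^{\beta n/4\pi}$, while the positional product $\prod_{j<k}|z_j-z_k|^{\beta\sigma_j\sigma_k/2\pi}$, sorted by pair type $(x,x)$, $(y,y)$, $(x,y)$, produces the claimed ratio.

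The main obstacle is less a deep estimate than the careful ordering of the two limits: the cancellation between the Wick factor and the diagonal divergence must be performed before $m\to 0$ is taken, since only afterwards does the sign of the net $m$-exponent cleanly decide between vanishing (non-neutral case) and the explicit formula (neutral case). All remaining analytic content is already packaged in Lemma~\ref{lem:cov}, whose uniform convergence on compact sets of distinct points legitimates interchanging the limits with evaluation of the quadratic form.
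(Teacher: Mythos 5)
Your plan is correct and coincides with the paper's proof: compute the Gaussian expectation as $\exp(-\tfrac{\beta}{2}\sum\sigma_j\sigma_k c(z_j,z_k))$, absorb the Wick factor into the diagonal terms via \eqref{e:var}, and observe that the residual $m$-power is $(\beta/4\pi)(n-n')^2$, which produces the neutrality indicator in the $m\to 0$ limit. The bookkeeping of the constants $(4e^{-\gamma})^{\beta n/4\pi}$ and the positional product you describe is exactly what the paper does.
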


\begin{proof}
  Since $\varphi$ is Gaussian under $\nu^{\GFF(\epsilon,m)}$ with covariance $c(x,y) = \avg{\varphi(x)\varphi(y)}_{\GFF(\epsilon,m)}$,
  \begin{align}
    &\avga{
    \prod_{k=1}^n \wick{e^{+i\sqrt{\beta}\varphi(x_k)}}
    \prod_{k'=1}^{n'} \wick{e^{-i\sqrt{\beta}\varphi(y_{k'})}}
      }_{\GFF(\epsilon,m)}
      \nnb
    & \qquad = 
      \epsilon^{-(n+n') (\beta/4\pi)}e^{-\frac{\beta}{2}\qa{ \sum_{i,j=1}^{n} c(x_i,x_j) + \sum_{i,j=1}^{n'}c(y_i,y_j) -
      2 \sum_{i=1}^n\sum_{j=1}^{n'} c(x_i,y_j) }}
    \nnb
    &
      \qquad =
      (\epsilon^{-1/2\pi} e^{-c(0,0)})^{\beta (n+n')/2}
      e^{-\beta\qa{ \sum_{i<j} c(x_i,x_j) + \sum_{i<j} c(y_i,y_j) -
      \sum_{i,j} c(x_i,y_j) }}
      .
  \end{align}
  By Lemma~\ref{lem:cov}, the limits $\epsilon\to 0$ and $m\to 0$ both exist, and the $\epsilon,m\to 0$ limit is given by
  \begin{multline}
    \lim_{m\downarrow 0}
    m^{(\beta/4\pi)(n-n')^2}
    e^{(\beta \gamma/4\pi) (n+n')/2} (2^{\beta/2\pi} e^{- \gamma \beta /2\pi})^{n}
    \frac{\prod_{i<j} |x_i-x_j|^{\beta/2\pi} |y_i-y_j|^{\beta/2\pi}}{\prod_{i,j}|x_i-y_j|^{\beta/2\pi}}
    \\
    = \mathbf 1_{n=n'} (4e^{-\gamma})^{\beta n/4\pi}
      \frac{\prod_{i<j} |x_i-x_j|^{\beta/2\pi} |y_i-y_j|^{\beta/2\pi}}{\prod_{i,j}|x_i-y_j|^{\beta/2\pi}}
    \end{multline}
    as claimed.
\end{proof}

By definition,
the truncated correlation functions of $\wick{e^{\pm i\sqrt{\beta}\varphi}}$
are determined by  \eqref{e:gff-ee} and \eqref{e:truncated}.
The next two lemmas give the general truncated correlations also
involving factors $\partial \varphi$ or $\bar\partial \varphi$.

\begin{lemma}\label{le:GFFmixed}
  Let $\beta>0$.
  For $n \geq 1$, $q,q'\geq 0$, $x_1,...,x_n,z_1,...,z_q,w_1,...,w_{q'}\in \R^2$ distinct, and $\sigma_1,...,\sigma_n\in \{-1,1\}$,
  the limits
  \begin{align}
    &\avga{\prod_{k=1}^n \wick{e^{i\sqrt{\beta}\sigma_k\varphi(x_k)}}\prod_{j=1}^{q} \partial \varphi(z_j)\prod_{j'=1}^{q'} \bar \partial \varphi(w_{j'})}_{\GFF(m)}^T
    \nnb
    &\qquad
      = \lim_{\epsilon \to 0}
      \avga{\prod_{k=1}^n \wick{e^{i\sqrt{\beta}\sigma_k\varphi(x_k)}}_{\epsilon}\prod_{j=1}^{q} \partial \varphi(z_j)\prod_{j'=1}^{q'} \bar \partial \varphi(w_{j'})}_{\GFF(\epsilon,m)}^T
    \\
        &\avga{\prod_{k=1}^n \wick{e^{i\sqrt{\beta}\sigma_k\varphi(x_k)}}\prod_{j=1}^{q} \partial \varphi(z_j)\prod_{j'=1}^{q'} \bar \partial \varphi(w_{j'})}_{\GFF}^T
    \nnb
    &\qquad= \lim_{m\to 0}\lim_{\epsilon \to 0}
        \avga{\prod_{k=1}^n \wick{e^{i\sqrt{\beta}\sigma_k\varphi(x_k)}}_{\epsilon}\prod_{j=1}^{q} \partial \varphi(z_j)\prod_{j'=1}^{q'} \bar \partial \varphi(w_{j'})}_{\GFF(\epsilon,m)}^T
      \end{align}
      exist uniformly on compact subsets of $u_i\neq u_j$ for $i\neq j$ (where the $u_i$ are an enumeration of the points $x_k,z_j,w_{j'}$), and
  we have
  \begin{multline} \label{e:GFFmixed}
    \avga{\prod_{k=1}^n \wick{e^{i\sqrt{\beta}\sigma_k\varphi(x_k)}}\prod_{j=1}^{q} \partial \varphi(z_j)\prod_{j'=1}^{q'} \bar \partial \varphi(w_{j'})}_{\GFF}^T
    \\
    =      \avga{\prod_{k=1}^n \wick{e^{i\sqrt{\beta}\sigma_k\varphi(x_k)}}}_\GFF^T \prod_{j=1}^q\left(i\frac{\sqrt{\beta}}{4\pi}\sum_{k=1}^n\frac{\sigma_k}{x_k-z_j}\right) \prod_{j'=1}^{q'}\left(i\frac{\sqrt{\beta}}{4\pi}\sum_{k=1}^n\frac{\sigma_k}{\bar x_k-\bar w_{j'}}\right)
    .
  \end{multline}
\end{lemma}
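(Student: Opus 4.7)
The plan is to view \eqref{e:GFFmixed} as a joint cumulant of a Gaussian field with two kinds of insertions --- linear ones ($\partial\varphi(z_j)$, $\bar\partial\varphi(w_{j'})$) and bounded exponential ones ($\wick{e^{\pm i\sqrt{\beta}\varphi}}_\epsilon$) --- and to convert the linear insertions into multiplicative twists of the exponentials via a Cameron--Martin shift. At cutoff $(\epsilon,m)$ the exponentials $V_k:=\wick{e^{i\sqrt{\beta}\sigma_k\varphi(x_k)}}_\epsilon$ are bounded by $\epsilon^{-\beta/4\pi}$, so the joint moment generating function
\begin{equation}
  M_{\epsilon,m}(t,s,s') := \avga{\exp\!\Big(\sum_k t_k V_k + \sum_j s_j\partial\varphi(z_j) + \sum_{j'} s'_{j'}\bar\partial\varphi(w_{j'})\Big)}_{\GFF(\epsilon,m)}
\end{equation}
is well defined for all real $(t,s,s')$, and the truncated correlation at cutoff $(\epsilon,m)$ is its mixed partial $\partial_{t_1}\cdots\partial_{t_n}\partial_{s_1}\cdots\partial_{s_q}\partial_{s'_1}\cdots\partial_{s'_{q'}}\log M_{\epsilon,m}$ at $t=s=s'=0$, by the moment/cumulant equivalence (cf.\ \eqref{e:truncated} and Appendix~\ref{app:ferm}).

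Writing $L:=\sum_j s_j\partial\varphi(z_j)+\sum_{j'} s'_{j'}\bar\partial\varphi(w_{j'})$ and applying the Cameron--Martin shift $\varphi\mapsto \varphi+h$ with $h(x):=\avga{\varphi(x) L(\varphi)}_{\GFF(\epsilon,m)}$ yields
\begin{equation}
  M_{\epsilon,m}(t,s,s') = e^{\frac12\mathrm{Var}(L)}\,\avga{\exp\!\Big(\sum_k t_k\,e^{i\sqrt{\beta}\sigma_k h(x_k)}\,V_k\Big)}_{\GFF(\epsilon,m)}.
\end{equation}
Since $\mathrm{Var}(L)$ is $t$-independent and $n\geq 1$, the first factor drops out of $\log M_{\epsilon,m}$ once we differentiate in $t$. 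Multilinearity of cumulants then yields that the $t$-derivatives applied to the second logarithm produce $\avga{V_1\cdots V_n}^T_{\GFF(\epsilon,m)}$ times $\exp(i\sqrt{\beta}\sum_k\sigma_k h(x_k))$. As $h$ is \emph{linear} in $(s,s')$, so is the exponent, and the mixed $(q+q')$-th partial at $s=s'=0$ of this exponential splits into the product of its first partials. This produces precisely the $\prod_j(\cdots)\prod_{j'}(\cdots)$ in \eqref{e:GFFmixed}, with entries $i\sqrt{\beta}\sum_k\sigma_k \avga{\varphi(x_k)\partial\varphi(z_j)}_{\GFF(\epsilon,m)}$ and $i\sqrt{\beta}\sum_k\sigma_k \avga{\varphi(x_k)\bar\partial\varphi(w_{j'})}_{\GFF(\epsilon,m)}$, respectively.

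The final step is to send $\epsilon\to 0$ (optionally followed by $m\to 0$). Lemma~\ref{le:GFFcharge} together with \eqref{e:truncated} gives convergence of $\avga{V_1\cdots V_n}^T_{\GFF(\epsilon,m)}$, uniformly on compact sets of distinct $x_k$; and \eqref{e:cderiv1lim} in Lemma~\ref{lem:cov} (together with its complex conjugate) gives the uniform-on-compacts limits $\avga{\varphi(x_k)\partial\varphi(z_j)}_{\GFF(\epsilon,m)}\to \frac{1}{4\pi(x_k-z_j)}$ and $\avga{\varphi(x_k)\bar\partial\varphi(w_{j'})}_{\GFF(\epsilon,m)}\to \frac{1}{4\pi(\bar x_k-\bar w_{j'})}$ away from the diagonals. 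Taking products of these uniform limits yields \eqref{e:GFFmixed}. The conceptual content lies in the Cameron--Martin decoupling in the second paragraph; the only real obstacle is the routine bookkeeping of uniform convergence needed to exchange products with limits.
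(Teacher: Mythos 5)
Your approach is the same as the paper's: both express the truncated correlation as a mixed logarithmic derivative of a joint moment generating function, then use a Cameron--Martin/Girsanov shift to remove the linear insertions $\partial\varphi$, $\bar\partial\varphi$, so that all dependence on $(s,s')$ appears as a multiplicative exponential twist of the $V_k$. The final passage to the limit via Lemma~\ref{le:GFFcharge} and \eqref{e:cderiv1lim} is also the same.

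However, you treat as routine a step that genuinely needs an argument, and the paper devotes most of its proof to it. Since $\partial = \tfrac12(-i\partial_0+\partial_1)$ and $\bar\partial = \tfrac12(i\partial_0+\partial_1)$, the linear functional $L = \sum_j s_j\,\partial\varphi(z_j) + \sum_{j'} s'_{j'}\bar\partial\varphi(w_{j'})$ and the shift $h(x) = \avga{\varphi(x)L}_{\GFF(\epsilon,m)}$ are \emph{complex}-valued even for real $s,s'$. Cameron--Martin/Girsanov is a statement about real shifts of a real Gaussian field; the identity
\begin{equation}
  \avga{F e^{L}} = e^{\frac12\avga{L^2}}\avga{F(\cdot+h)}
\end{equation}
does not follow directly when $L$ is complex. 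The paper closes this gap by decomposing $\partial\varphi, \bar\partial\varphi$ into real and imaginary parts, applying Girsanov for real tilting parameters, and then analytically continuing both sides in the parameters (entireness being checked via dominated convergence, Fubini, and Morera). Without some version of this argument your second paragraph is not justified; as written you have only displayed the answer that Girsanov \emph{would} give if it applied. The same analytic-continuation machinery also handles the secondary issue you skip, namely that the logarithm of $M_{\epsilon,m}$ is only defined near the origin in parameter space.
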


\begin{proof}
  By Lemma~\ref{lem:cumulants},
  when $\epsilon,m>0$, the truncated correlation functions are given by
  \begin{align} \label{e:gff-cumulantslogexp}
    &\avga{\prod_{k=1}^n \wick{e^{i\sqrt{\beta}\sigma_k\varphi(x_k)}}_\epsilon\prod_{j=1}^q \partial \varphi(z_j)\prod_{j'=1}^{q'} \bar \partial \varphi(w_{j'})}_{\GFF(\epsilon,m)}^T
    \\ 
    &=\prod_{k=1}^n \left. \frac{\partial}{\partial \mu_k}\right|_{\mu_k=0}
    \prod_{j=1}^{q} \left.\frac{\partial}{\partial \nu_j}\right|_{\nu_j=0}
    \prod_{j'=1}^{q'} \left.\frac{\partial}{\partial \eta_{j'}}\right|_{\eta_{j'}=0}
    \nnb & \qquad
    \log \avga{\exp\left[\sum_{k=1}^n\mu_k\wick{e^{i\sqrt{\beta}\sigma_k\varphi(x_k)}}_\epsilon
        +\sum_{j=1}^q\nu_j\partial \varphi(z_j)
           +\sum_{j'=1}^{q'} \eta_{j'} \bar\partial \varphi(w_{j'})\right]}_{\GFF(\epsilon,m)}.
           \nonumber
  \end{align}
  We would like to use the Girsanov--Cameron--Martin theorem
  to get rid of the $\partial \varphi$ and $\bar \partial \varphi$ terms at the expense of replacing $\mu_k$ by something which depends on $\nu_j$, $\eta_{j'}$, $z_j$, and $w_{j'}$ as well.
  We need to be slightly careful here as $\partial \varphi$ and $\bar \partial \varphi$ are complex valued, and Girsanov's theorem holds a priori only for real-valued Gaussian random variables.
  
  To justify the use of Girsanov's theorem in our setting, assume we have some real-valued Gaussian random variables $X_1,...,X_N$ and (possibly complex) constants $\gamma_1,...,\gamma_N$. Then by a routine combination of  the dominated convergence theorem (to justify continuity), Fubini's theorem, and Morera's theorem, one finds that 
  \begin{equation}
  (\lambda_1,...,\lambda_N)\mapsto \avga{e^{\sum_{j=1}^N \gamma_j e^{i X_j}} e^{\sum_{j=1}^N \lambda_j X_j}}
  \end{equation}
  is an entire function.
  Then by an elementary version of Girsanov's theorem for finite dimensional Gaussian vectors
  (which is just completion of the square and change of variables),
  we find for real $\lambda_i$ that 
  \begin{equation}
  \avga{e^{\sum_{j=1}^N \gamma_j e^{i X_j}} e^{\sum_{j=1}^N \lambda_j X_j}}=\avga{e^{\sum_{j=1}^N \gamma_j e^{i X_j+i \sum_{k=1}^N \lambda_k\avga{X_jX_k}}}}e^{\frac{1}{2}\avga{\left(\sum_{j=1}^N\lambda_j X_j\right)^2}}.
  \end{equation}
  Using a similar argument as before, one checks that this also defines an entire function of the $\lambda_i$, so as these entire functions agree on real values, they must be the same:
  \begin{equation}
  \avga{e^{\sum_{j=1}^N \gamma_j e^{i X_j}} e^{\sum_{j=1}^N \lambda_j X_j}}=\avga{e^{\sum_{j=1}^N \gamma_j e^{i X_j+i\sum_{k=1}^N \lambda_k\avga{X_jX_k}}}}e^{\frac{1}{2}\avga{\left(\sum_{j=1}^N\lambda_j X_j\right)^2}},
  \end{equation}
  also for complex $\lambda_i$.
  
  Applying this to our setting (taking $X_k$ to consist of $\varphi(x_k)$ and the real and imaginary parts of $\partial \varphi(z_j)$ and $\bar \partial \varphi(w_{j'})$ -- the values of $\gamma_i$ and $\lambda_i$ are chosen accordingly), we see that the expectation on the right-hand side of \eqref{e:gff-cumulantslogexp} equals (when the expectation is non-zero -- this is true at least for small enough parameter values, and in the end, we evaluate derivatives at zero)
  \begin{align}
    &\left\langle \exp\left[\sum_{k=1}^n\mu_k\wick{e^{i\sqrt{\beta}\sigma_k\varphi(x_k)}}_\epsilon
      \right.\right.\nnb & \qquad\qquad \times \left.\left. 
      e^{i\sqrt{\beta}\sigma_k\left(\sum_{j=1}^q\nu_j\avg{\varphi(x_k)\partial \varphi(z_j)}_{\GFF(\epsilon,m)}
        +\sum_{j'=1}^{q'}\eta_{j'} \avg{\varphi(x_k)\bar\partial \varphi(w_{j'})}_{\GFF(\epsilon,m)}\right)}\right] \right\rangle_{\GFF(\epsilon,m)}
        \nnb
     &\times \exp\qa{\frac{1}{2}\avga{\left(\sum_{j=1}^{q} \nu_j \partial \varphi(z_j)+\sum_{j'=1}^{q'}\eta_{j'}\bar \partial \varphi(w_{j'})\right)^2}_{\GFF(\epsilon,m)}}.
  \end{align}
  The last term does not contribute when we take derivatives with respect to $\mu_k$ so we can ignore it.
  Therefore, using the last identity and rewriting the result in terms of the truncated charge correlations
  given by \eqref{e:gff-cumulantslogexp} with $q=q'=0$,
  \begin{align}
    &\prod_{k=1}^n \left. \frac{\partial}{\partial \mu_k}\right|_{\mu_k=0}\log \avga{\exp\left[\sum_{k=1}^n\mu_k\wick{e^{i\sqrt{\beta}\sigma_k\varphi(x_k)}}_\epsilon+\sum_{j=1}^q\nu_j\partial \varphi(z_j)+\sum_{j'=1}^{q'} \eta_{j'} \bar\partial \varphi(w_{j'})\right]}_{\GFF(\epsilon,m)}
    \nnb
    &\qquad =\avga{\prod_{k=1}^n \wick{e^{i\sqrt{\beta}\sigma_k \varphi(x_k)}}_\epsilon}_{\GFF(\epsilon,m)}^T
    \\\notag
    &\qquad \qquad \times \prod_{k=1}^n e^{i\sqrt{\beta}\sigma_k\left(\sum_{j=1}^q\nu_j\avga{\varphi(x_k)\partial \varphi(z_j)}_{\GFF(\epsilon,m)}
      +\sum_{j'=1}^{q'}\eta_{j'} \avga{\varphi(x_k)\bar\partial \varphi(w_{j'})}_{\GFF(\epsilon,m)}\right)}.
  \end{align}
  Thus, carrying out the $\nu_j$ and $\eta_{j'}$ differentiations,  we obtain
  \begin{align}
    &\avga{\prod_{k=1}^n \wick{e^{i\sqrt{\beta}\sigma_k\varphi(x_k)}}_\epsilon
      \prod_{j=1}^q \partial \varphi(z_j)\prod_{j'=1}^{q'} \bar \partial \varphi(w_{j'})}_{\GFF(\epsilon,m)}^T
      \notag \\
    &\qquad =\avga{\prod_{k=1}^n \wick{e^{i\sqrt{\beta}\sigma_k \varphi(x_k)}}_\epsilon}_{\GFF(\epsilon,m)}^T
      \times \prod_{j=1}^q \left(i\sqrt{\beta}\sum_{k=1}^{n}\sigma_k \avga{\varphi(x_k)\partial \varphi(z_j)}_{\GFF(\epsilon,m)}\right)\notag \\
    &\qquad \qquad  \times \prod_{j'=1}^{q'} \left(i\sqrt{\beta}\sum_{k=1}^n\sigma_k \avga{\varphi(x_k)\bar\partial \varphi(w_{j'})}_{\GFF(\epsilon,m)}\right). \label{e:truncem}
  \end{align}
  Using the covariance estimate \eqref{e:cderiv1lim} (and its complex conjugate version),
  we obtain \eqref{e:GFFmixed} by taking $\epsilon\to 0$ and $m \to 0$.
\end{proof}

\begin{lemma}
  For $q+q' \geq 1$ and $z_1, \dots, z_q, w_1, \dots, w_{q'} \in \R^2$ distinct,
  the limits
    \begin{align}
      \avga{\prod_{j=1}^q \partial \varphi(z_j)\prod_{j'=1}^{q'} \bar \partial \varphi(w_{j'})}_{\GFF(m)}^T
      &= \lim_{\epsilon\to 0}
        \avga{\prod_{j=1}^q \partial \varphi(z_j)\prod_{j'=1}^{q'} \bar \partial \varphi(w_{j'})}_{\GFF(\epsilon,m)}^T
        \\
      \avga{\prod_{j=1}^q \partial \varphi(z_j)\prod_{j'=1}^{q'} \bar \partial \varphi(w_{j'})}_{\GFF}^T
      &= \lim_{m\to 0}\lim_{\epsilon\to 0}
      \avga{\prod_{j=1}^q \partial \varphi(z_j)\prod_{j'=1}^{q'} \bar \partial \varphi(w_{j'})}_{\GFF(\epsilon,m)}^T
    \end{align}
  exist, and
  \begin{equation} \label{e:GFFgrad}
    \avga{\prod_{j=1}^q \partial \varphi(z_j)\prod_{j'=1}^{q'} \bar \partial \varphi(w_{j'})}_{\GFF}^T\\
    =\begin{cases}
      -\frac{1}{4\pi}\frac{1}{(z_1-z_2)^2} & (q=2,q'=0)\\
      -\frac{1}{4\pi}\frac{1}{(\bar w_1-\bar w_2)^2} & (q=0, q'=2)\\
      0 & \text{else}.
    \end{cases}
  \end{equation}
   \label{le:gradcor}
\end{lemma}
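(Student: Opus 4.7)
The plan is to reduce everything to Gaussian cumulants and then invoke the pointwise covariance asymptotics of Lemma~\ref{lem:cov}. At fixed $\epsilon,m>0$, the field $\varphi$ is almost surely smooth under $\nu^{\GFF(\epsilon,m)}$, and each $\partial\varphi(z_j)$ and $\bar\partial\varphi(w_{j'})$ is a complex linear combination of the jointly Gaussian real random variables $\{\partial_0\varphi(z_j),\partial_1\varphi(z_j),\partial_0\varphi(w_{j'}),\partial_1\varphi(w_{j'})\}$. The proof then splits according to the value of $q+q'$.

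For $q+q'\geq 3$, I would invoke the classical fact (Wick's theorem) that joint cumulants of order at least three of jointly Gaussian real random variables vanish identically. By multilinearity of cumulants, this property is inherited by complex linear combinations, so the truncated correlation on the left-hand side of \eqref{e:GFFgrad} is identically zero for every $\epsilon,m>0$. Both iterated limits therefore exist trivially and equal $0$.

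For $q+q'=1$, the symmetry $\varphi\mapsto-\varphi$ of the centered Gaussian measure $\nu^{\GFF(\epsilon,m)}$ gives $\avg{\partial\varphi(z)}_{\GFF(\epsilon,m)}=\avg{\bar\partial\varphi(w)}_{\GFF(\epsilon,m)}=0$, so the truncated one-point functions vanish identically and the limits are $0$, consistent with the ``else'' case.

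For $q+q'=2$, the defining recursion \eqref{e:truncated} subtracts only products of one-point functions, which were just shown to vanish, so the truncated two-point function coincides with the ordinary two-point function. The cases $(q,q')=(2,0)$ and $(1,1)$ are then exactly the limits recorded in \eqref{e:cderivlim}. For $(q,q')=(0,2)$, since $\varphi$ is real I have $\bar\partial\varphi(w)=\overline{\partial\varphi(w)}$, and consequently
\[
\avga{\bar\partial\varphi(w_1)\bar\partial\varphi(w_2)}_{\GFF(\epsilon,m)}=\overline{\avga{\partial\varphi(w_1)\partial\varphi(w_2)}_{\GFF(\epsilon,m)}}\longrightarrow -\frac{1}{4\pi(\bar w_1-\bar w_2)^2}
\]
by \eqref{e:cderivlim}. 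Local uniformity in the points away from the diagonal is inherited from that of Lemma~\ref{lem:cov}. There is no substantive obstacle here; the lemma is simply a packaging of the Gaussianity of $\varphi$ with the covariance asymptotics already established.
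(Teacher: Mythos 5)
Your proof is correct and follows essentially the same route as the paper's: both reduce the statement to the fact that for centered Gaussian families only second-order cumulants survive, then read off the limits from the covariance asymptotics of Lemma~\ref{lem:cov}, eq.~\eqref{e:cderivlim}. The only difference is presentational; you spell out the $q+q'=1,2,\geq 3$ cases separately, whereas the paper handles them in one line via the displayed case distinction \eqref{e:truncgrad}.
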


\begin{proof}
  Since $\partial\varphi(z_j)$ and $\bar\partial\varphi(w_{j'})$ are Gaussian variables,
  only the second order cumulants (truncated correlation functions) are non-wero and given by the covariance
  \begin{align}
    \avga{\prod_{j=1}^q \partial \varphi(z_j)\prod_{j'=1}^{q'} \bar \partial \varphi(w_{j'})}_{\GFF(\epsilon,m)}^T
    =\begin{cases}
      \avga{ \partial \varphi(z_1) \partial \varphi(z_2)}_{\GFF(\epsilon,m)} & (q=2,q'=0)\\
      \avga{ \bar\partial \varphi(w_1) \bar\partial \varphi(w_2)}_{\GFF(\epsilon,m)} & (q=0,q'=2)\\
      \avga{ \partial \varphi(z_1) \bar\partial \varphi(w_1)}_{\GFF(\epsilon,m)} & (q=1,q'=1)\\
      0 & \text{else}.
    \end{cases}\label{e:truncgrad}
  \end{align}
  Their limits as $\epsilon \to 0$ and $m\to 0$ are given by \eqref{e:cderivlim} (and its complex conjugate version).
\end{proof}
 
We are ultimately interested in smeared correlation functions, and there is some care to be taken on the diagonal of the pointwise correlation functions. The following result describes what happens with the truncated charge correlation functions.

\begin{lemma} \label{lem:gffintegrability1}
  For $\beta \in (0,6\pi)$ and $n \neq 2$,
  the truncated charge correlations are in $L^1_{\mathrm{loc}}((\R^2)^{n})$.
  Namely, for any $\sigma_1,...,\sigma_n\in \{-1,1\}$ and 
  any compact set $K\subset (\R^2)^{n}$, 
  \begin{equation}
    \int_{K}dx_1\cdots dx_n\left|\avga{\prod_{k=1}^n\wick{e^{i\sqrt{\beta}\sigma_k\varphi(x_k)}}}_\GFF^T \right|<\infty.
  \end{equation}
  Moreover, if $f_1,...,f_n\in L^\infty_c(\R^2\times\{-1,1\})$, then for $n\neq 2$,
  \begin{align}
          &\lim_{m\to 0}\lim_{\epsilon\to 0}\int_{(\R^2\times \{-1,1\})^n}d\xi_1\cdots d\xi_n \, f_1(\xi_1)\cdots f_n(\xi_n)\, \avga{\prod_{k=1}^n\wick{e^{i\sqrt{\beta}\sigma_k\varphi(x_{k})}}_\epsilon}_{\GFF(\epsilon,m)}^T\nnb
        &\qquad  =\int_{(\R^2\times \{-1,1\})^n}d\xi_1\cdots d\xi_n \, f_1(\xi_1)\cdots f_n(\xi_n) \, \avga{\prod_{k=1}^n\wick{e^{i\sqrt{\beta}\sigma_{k}\varphi(x_{k})}}}_{\GFF}^T.
  \end{align}
  If $K$ and the set $\{x_k = x_{k'} \text{ for some $k \neq k'$}\}$ are disjoint and if
  the $f_k$ have disjoint supports, the statements also hold for $n=2$.
\end{lemma}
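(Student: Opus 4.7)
The plan is to combine the explicit pointwise formula from Lemma~\ref{le:GFFcharge} with the M\"obius inversion \eqref{e:truncated}, analyze singularities at coincidences, and finish with dominated convergence. By Lemma~\ref{le:GFFcharge}, for any block $B\subset[n]$, $\avga{\prod_{k\in B}\wick{e^{i\sqrt{\beta}\sigma_k\varphi(x_k)}}}_{\GFF}$ vanishes unless $\sum_{k\in B}\sigma_k=0$. Since the right-hand side of \eqref{e:truncated} is a sum over set partitions of $[n]$ with each block contributing such a factor, the truncated correlation vanishes identically unless $\sum_k\sigma_k=0$. This dispatches $n=1$ (where $\sigma_1=\pm 1$) and all odd $n$, and reduces matters to $n\geq 4$ even with $n_+=n_-=n/2$.

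In the remaining case the untruncated correlation has the Coulomb-gas form
\[
\avga{\prod_{k=1}^n\wick{e^{i\sqrt{\beta}\sigma_k\varphi(x_k)}}}_{\GFF}=C_n\prod_{i<j}|x_i-x_j|^{\sigma_i\sigma_j\beta/2\pi},
\]
and I would analyze its singularities inductively in $n$, cluster by cluster. At a two-point collision $x_i\to x_j$ with opposite charges, Taylor-expanding the factors not involving $\{i,j\}$ around $x_i=x_j$ shows that the $|x_i-x_j|^{-\beta/2\pi}$ singularity of the full correlation has leading coefficient matching exactly the $\{i,j\}$-block contribution to \eqref{e:truncated} (the two-point truncated and untruncated correlations coincide because the one-point functions vanish), leaving a remainder of order $|x_i-x_j|^{1-\beta/2\pi}$, which is integrable in $\R^2$ exactly when $\beta<6\pi$. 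For a proper cluster $S\subsetneq[n]$ of larger size collapsing, the leading behaviour of the full correlation factorizes as $\avga{\prod_{k\in S}}_{\GFF}\cdot\avga{\prod_{k\notin S}}_{\GFF}$; regrouping partition terms in \eqref{e:truncated} reduces the divergent piece to sums of products of strictly smaller truncated correlations, which are integrable by the inductive hypothesis. When all $n$ points collapse at scale $r$, the integrand scales as $r^{-n\beta/4\pi}$ against the $(n-1)$-dimensional volume $r^{2(n-1)}$, integrable for $\beta<8\pi(n-1)/n$, which is at least $6\pi$ for $n\geq 4$.

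For the convergence of smeared correlations, pointwise convergence of the regularized truncated correlations on the off-diagonal set is immediate from Lemma~\ref{le:GFFcharge} and \eqref{e:truncated}, and the same singularity analysis applied to the regularized covariance $C_{\epsilon,m}$ yields a uniform-in-$(\epsilon,m)$ upper bound by a constant multiple of the limit correlation. Combined with the local integrability just established, dominated convergence yields the claimed convergence of the smeared integrals. The $n=2$ case with $K$ disjoint from the diagonal (or with test functions of disjoint support) is immediate from the explicit $|x_1-x_2|^{-\beta/2\pi}$ formula, which is bounded there.

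The main obstacle is the rearrangement identifying the leading singularity of the full correlation at a cluster $S$ with the corresponding partition terms of \eqref{e:truncated}: making explicit how the partial cumulant--moment inversion within $S^c$ turns the partition sum into the factorized form $\avga{\prod_{k\in S}}_{\GFF}\cdot\avga{\prod_{k\notin S}}_{\GFF}$, modulo products of strictly smaller truncated correlations handled by induction. Organising the argument as an induction on $n$ and on cluster size keeps this bookkeeping tractable and is what allows the otherwise potentially divergent $|S|$-cluster singularities to be absorbed into integrable pieces.
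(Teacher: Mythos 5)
Your approach is fundamentally different from the paper's. You attempt a direct singularity analysis of the Coulomb-gas form of the untruncated correlation combined with the M\"obius inversion \eqref{e:truncated}, inducting on cluster size. The paper does not do this; instead it derives Lemma~\ref{lem:gffintegrability1} as a by-product of the renormalized-potential analysis: the smeared truncated correlations are expressed as $t$-derivatives of $\log\mathcal Z$, identified with a sum $\mathcal P = \sum_{P}c_P\prod_j\Vsym((\xi_l)_{l\in P_j}|\epsilon,m)$, and then Lemma~\ref{le:Vub} (an $\epsilon,m$-uniform, locally integrable bound on $\Vsym$, proved via Proposition~\ref{pr:ybound}) and Lemma~\ref{le:Vsymlim} (pointwise convergence of $\Vsym$) deliver both claims at once by dominated convergence. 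The paper explicitly notes, in the paragraph just after the lemma statement, that the direct route you are attempting is ``not completely straightforward'' and that in \cite{MR1672504} it was carried out only for $\beta<4\pi$.

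There are genuine gaps in your sketch. First, your ``cluster by cluster'' induction treats one collapsing cluster $S$ at a time, holding $S^c$ fixed and generic, but it does not handle configurations where two or more disjoint clusters collapse simultaneously — and these are exactly the dangerous ones. For example, at $n=4$ with charges $(+,+,-,-)$, take $x_1\to x_3$ and $x_2\to x_4$ at rates $r_1, r_2$: the factorization $\avg{\prod_S}\avg{\prod_{S^c}}$ is then singular in \emph{both} factors, and your single-cluster bookkeeping does not apply. Second, even for that two-pair configuration, a first-order Taylor expansion of the ``cross'' factors (those coupling $\{1,3\}$ to $\{2,4\}$) gives only a remainder of order $r_1+r_2$, so that $u^T = O(r_1^{-\alpha}r_2^{-\alpha}(r_1+r_2))$ with $\alpha=\beta/2\pi$, which fails to be locally integrable for $\alpha\in[2,3)$, i.e.\ for $\beta\in[4\pi,6\pi)$ — precisely the regime the paper needs. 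What actually saves the day is a \emph{second-order} cancellation specific to the log-gas potential: writing $a=x_1-x_3$, $b=x_2-x_4$, $D=x_3-x_4$, the logarithm of the cross-factor ratio is
\begin{equation}
  \alpha\big[\log|D+a-b| - \log|D+a| - \log|D-b| + \log|D|\big] = O(|a|\,|b|),
\end{equation}
not merely $O(|a|+|b|)$, so the remainder is in fact $O(r_1^{1-\alpha}r_2^{1-\alpha})$. Your sketch does not identify this, and without it the argument does not close for $\beta\geq 4\pi$. More generally, organizing the simultaneous subtraction of all cluster hierarchies is exactly the hard combinatorial content of the Nicol\'o--Renn--Steinmann / Dimock--Hurd style analyses that the paper chooses to bypass; you correctly flag this as the ``main obstacle'' but do not resolve it. Finally, your convergence step asserts an $\epsilon,m$-uniform dominating function ``by the same singularity analysis'', but producing such a bound is itself nontrivial at $\beta\geq 4\pi$ and is essentially what Lemma~\ref{le:Vub} accomplishes through the renormalized potential.
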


The proof of this lemma is not completely straightforward from the direct definition of the truncated charge correlation functions.
For example, in \cite[Lemma~3]{MR1672504}, the analogous statement is only shown for $\beta<4\pi$ (and the need for the statement at $\beta=4\pi$
is circumvented there by defining the sine-Gordon model with $\beta=4\pi$ in terms of the limit $\beta \uparrow 4\pi$).
For us, Lemma~\ref{lem:gffintegrability1} follows immediately as a by-product of our later analysis, and we thus postpone its proof
to Section~\ref{sec:pfconv}.

For the gradient fields, we have the following smeared analogue of Lemma~\ref{le:gradcor}.

\begin{lemma}
	\label{le:gradint}
    For $q,q'\geq 0$ with $q+q'\geq 1$ and $g_1,...,g_q,h_1,...,h_{q'}\in C_c^\infty(\R^2)$, the limits
    \begin{align}
      \avga{\prod_{j=1}^q\partial \varphi(g_j)\prod_{j'=1}^{q'}\bar \partial \varphi(h_{j'})}_{\GFF(m)}^T
      &:= \lim_{\epsilon\to 0}\avga{\prod_{j=1}^q\partial \varphi(g_j)\prod_{j'=1}^{q'}\bar \partial \varphi(h_{j'})}_{\GFF(\epsilon,m)}^T
      \\
            \avga{\prod_{j=1}^q\partial \varphi(g_j)\prod_{j'=1}^{q'}\bar \partial \varphi(h_{j'})}_{\GFF}^T
      &:=\lim_{m\to 0}\lim_{\epsilon\to 0}\avga{\prod_{j=1}^q\partial \varphi(g_j)\prod_{j'=1}^{q'}\bar \partial \varphi(h_{j'})}_{\GFF(\epsilon,m)}^T
    \end{align}
    exist, and
    \begin{multline}
      \avga{\prod_{j=1}^q\partial \varphi(g_j)\prod_{j'=1}^{q'}\bar \partial \varphi(h_{j'})}_{\GFF}^T
      \\ 
      =\begin{cases}
       \frac{1}{2\pi}\int_{\R^2\times \R^2}dx\,dy\,\partial g_1(x)\partial g_2(y) \log |x-y|^{-1} & (q=2, q'=0)\\
    	\frac{1}{2\pi}\int_{\R^2\times \R^2}dx\,dy\,\bar \partial h_1(x)\bar \partial h_2(y) \log |x-y|^{-1} & (q=0, q'=2)\\
    	\frac{1}{4}\int_{\R^2}dx\, g_1(x)h_1(x) & (q=q'=1)\\
    	0 & \text{else}	.
    	\end{cases}
      \end{multline}
      For $(q,q')=(2,0),(0,2)$, the right-hand sides are also equal to the Cauchy principal value integrals
        \begin{equation} \label{e:gffpv}
          \frac{-1}{4\pi} \mathrm{p.v.}\int dx\, dy\, \frac{g_1(x)g_2(y)}{(x-y)^2},
          \qquad
          \frac{-1}{4\pi} \mathrm{p.v.}\int dx\, dy\, \frac{h_1(x)h_2(y)}{(\bar x-\bar y)^2}.
        \end{equation}   
\end{lemma}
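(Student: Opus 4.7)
The plan is to reduce all the smeared gradient correlations to smeared covariance estimates from Lemma~\ref{lem:cov} by integrating by parts to move the derivatives off the field: for \(g\in C_c^\infty(\R^2)\), one has the exact identity \(\partial\varphi(g)=-\varphi(\partial g)\) (and analogously for \(\bar\partial\)). The resulting test function \(\partial g\) is compactly supported and has integral zero, which is the essential feature: the divergent constant \(-\tfrac{1}{2\pi}\log m-\tfrac{\gamma}{2\pi}\) appearing in \eqref{e:covest} integrates to zero against any mean-zero test function, and the statement \eqref{e:csmear} is precisely tailored to this situation.

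First I would dispose of the vanishing cases. Under \(\nu^{\GFF(\epsilon,m)}\) the field \(\varphi\) is Gaussian, so the joint law of \(\{\varphi(\partial g_j)\}_j\) and \(\{\varphi(\bar\partial h_{j'})\}_{j'}\) is jointly Gaussian; their joint cumulants of order \(q+q'\geq 3\) therefore vanish identically, and the \(q+q'=1\) case is zero by centredness of the GFF. This leaves only \((q,q')\in\{(2,0),(0,2),(1,1)\}\) to be computed.

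For \((q,q')=(2,0)\), the above integration by parts gives
\begin{equation}
\avga{\partial\varphi(g_1)\partial\varphi(g_2)}_{\GFF(\epsilon,m)}
= \int dx\,dy\, \partial g_1(x)\,\partial g_2(y)\, \avga{\varphi(x)\varphi(y)}_{\GFF(\epsilon,m)}.
\end{equation}
I would pass to the limit by applying \eqref{e:csmear} at fixed \(x\) with \(f=\partial g_2\) (whose integral vanishes), then integrate against \(\partial g_1\) and send \(m\to 0\) by dominated convergence, with dominating function of the form \(C\mathbf{1}_{\mathrm{supp}\,g_1\times\mathrm{supp}\,g_2}(1+|\log|x-y||)\in L^1\). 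This yields the claimed \(\tfrac{1}{2\pi}\int\partial g_1\partial g_2\log|x-y|^{-1}\); the case \((q,q')=(0,2)\) is identical upon conjugation. For the mixed case \((q,q')=(1,1)\) the same integration by parts reduces the problem to evaluating \(-\tfrac{1}{2\pi}\int \partial g_1(x)\bar\partial h_1(y)\log|x-y|\,dx\,dy\); two further integrations by parts then give \(\tfrac14\int g_1(x)h_1(x)\,dx\) after using the distributional identity \(\partial_x\bar\partial_y\log|x-y|=-\tfrac\pi2\delta(x-y)\), which follows from \(\Delta\log|z|=2\pi\delta\), \(\partial\bar\partial=\tfrac14\Delta\), and the chain-rule sign coming from \(\bar\partial_y(\bar x-\bar y)=-1\).

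The principal value identities \eqref{e:gffpv} I would establish by integration by parts on the region \(\{|x-y|\geq\delta\}\), using \(\partial_y\log|x-y|^{-1}=\tfrac{1}{2(x-y)}\) and \(\partial_x(x-y)^{-1}=-(x-y)^{-2}\) together with Stokes' theorem, and then taking \(\delta\to 0\). The main obstacle throughout is the careful handling of the singular distributional limits: both the contact term \(\tfrac14\int g_1h_1\) in the mixed case and the circular boundary terms on \(\{|x-y|=\delta\}\) in the principal-value identity must be shown to arise from clean cancellations (in the latter case, from \(\oint_0^{2\pi}e^{-i\theta}\,d\theta=0\), which reflects the homogeneity of \(1/(x-y)\)). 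I would make these steps rigorous by working first at \(\epsilon,m>0\), where \(c_{\epsilon,m}\) is smooth and satisfies \((-\Delta+m^2)c_{\epsilon,m}=p_{\epsilon^2,m}\), and only then passing to the distributional limit, exploiting that \(p_{\epsilon^2,m}\to\delta\) weakly while the \(m^2 c_{\epsilon,m}\) contribution vanishes after smearing.
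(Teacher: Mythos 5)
Your proposal is correct and follows essentially the same route as the paper: both dispose of $q+q'\neq 2$ by Gaussianity, both handle $(2,0)$ and $(0,2)$ by smearing \eqref{e:csmear} against $\partial g_i$ (whose integral vanishes), and both obtain the $(1,1)$ contact term from $\Delta\log|\cdot|=2\pi\delta$. The only cosmetic difference is in the $(1,1)$ computation: the paper integrates by parts once to reach $\frac{1}{4\pi}\int\partial g_1(x)h_1(y)(\bar y-\bar x)^{-1}$ and then uses $\partial_x\frac{1}{\pi(\bar x-\bar y)}=\delta(x-y)$, whereas you perform both integrations by parts at once and invoke $\partial_x\bar\partial_y\log|x-y|=-\tfrac{\pi}{2}\delta(x-y)$; these are the same identity in different clothing.
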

      
\begin{proof}
	The fact that the truncated correlation function vanishes for $q+q'=1$ or $q+q'\geq3$ follows from the fact that we are dealing with centered Gaussian random variables.
	
	We thus need to only focus on the $q+q'=2$ case. The $q=2,q'=0$ and $q'=2,q=0$ cases follow readily from \eqref{e:csmear} (note that $\int \partial g_i=\int\bar \partial h_j=0$). For the $q=q'=1$ case, we find again from \eqref{e:csmear} and integrating by parts that 
	\begin{align}
          \avga{\partial \varphi(g_1)\bar \partial \varphi(h_1)}_{\GFF(\epsilon,m)}
          &\to \frac{1}{2\pi}\int_{\R^2}dx\,dy\, \partial g_1(x)\bar \partial h_1(y)\log |x-y|^{-1}\nnb
            &=\frac{1}{4\pi}\int_{\R^2}dx\,dy\, \partial g_1(x)h_1(y)\frac{1}{\bar y-\bar x},
	\end{align}
        from which the claim follows after noting that $\partial_x\frac{1}{\pi(\bar x-\bar y)}=\delta(x-y)$.
        For smooth test functions, it is well known that \eqref{e:gffpv} follows by integration by parts.
\end{proof}

With Lemma~\ref{lem:gffintegrability1} and Lemma~\ref{le:gradint} in hand, we can describe the smeared free field correlation functions in the generality we need them.

\begin{lemma} \label{lem:gffintegrability}
  Let $\beta \in (0,6\pi)$, $n=1$ or $n\geq 3$, and $q,q'\geq 0$.
  If $f_1,...,f_n\in L_c^\infty(\R^2\times \{-1,1\})$
  and $g_1,...,g_q,h_1,...,h_{q'}\in C_c^\infty(\R^2)$,
  then 
  \begin{align}
      (\xi_1,...,\xi_n,z_1,...,z_q,w_1,...,w_{q'})&\mapsto f_1(\xi_1)\cdots f_n(\xi_n)g_1(z_1)\cdots g_{q}(z_q)h_1(w_1)\cdots h_{q'}(w_{q'})
      \nnb
    & \qquad \times \avga{\prod_{k=1}^n \wick{e^{i\sqrt{\beta}\sigma_k\varphi(x_k)}}\prod_{j=1}^q \partial \varphi(z_j)\prod_{j'=1}^{q'} \bar \partial \varphi(w_{j'})}_{\GFF}^T\nnb
    &\in L^1((\R^2\times \{-1,1\})^{n}\times (\R^2)^{q+q'}),
  \end{align}
  and
  \begin{align}
    &\lim_{m\to 0} \lim_{\epsilon \to 0}\avga{\prod_{k=1}^n \wick{e^{i\sqrt{\beta}\sigma_k\varphi}}(f_k)\prod_{j=1}^q \partial \varphi(g_j)\prod_{j'=1}^{q'} \bar \partial \varphi(h_{j'})}_{\GFF(\epsilon,m)}^T
      \nnb&=
      \int \prod_{i=1}^n d\xi_i  \, f_i(\xi_i) \prod_{j=1}^q dz_j \, g_j(z_j)\prod_{j'=1}^{q'} dw_{j'} \,h_{j'}(w_{j'})\, 
      \nnb&\qquad\qquad
      \times \avga{\prod_{k=1}^n \wick{e^{i\sqrt{\beta}\sigma_k\varphi(x_k)}}\prod_{j=1}^{q} \partial \varphi(z_j)\prod_{j'=1}^{q'} \bar \partial \varphi(w_{j'})}_{\GFF}^T
      .
  \end{align}
  Moreover, if the $f_j$ have disjoint supports or if $q+q'\geq 1$, the claims hold also for $n=2$.
\end{lemma}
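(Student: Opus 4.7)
The plan is to reduce the claim to the preceding lemmas via the factorization \eqref{e:GFFmixed}. For $n=0$ the statement is precisely Lemma \ref{le:gradint}, and for $n=1$ the charge one-point function vanishes in the limit (by Lemma \ref{le:GFFcharge} with $n\neq n'$), so \eqref{e:GFFmixed} makes both sides trivially zero. Henceforth I treat $n\geq 2$.

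By \eqref{e:GFFmixed}, the pointwise mixed truncated correlation factorizes as the charge truncated correlation times explicit gradient factors of the form $(i\sqrt{\beta}/4\pi)\sum_k \sigma_k/(x_k-z_j)$ and the conjugate analogue in $\bar x_k - \bar w_{j'}$. Smearing each factor against $g_j, h_{j'}\in C_c^\infty(\R^2)$ yields a linear combination of Cauchy transforms $G_j(x) := \int dz\, g_j(z)/(x-z)$ and $\overline{H_{j'}}(x)$ evaluated at the $x_k$. A direct computation shows $G_j$ is globally bounded and Lipschitz: $\bar\partial G_j = \pi g_j$ is $C_c^\infty$, and writing $G_j$ as a convolution and differentiating in $x$ identifies $\partial G_j$ with the Cauchy transform of $\partial g_j$, also bounded. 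Expanding the product of smeared gradient factors, the smeared mixed correlation becomes a finite sum over multi-indices $\mathbf{k}\in [n]^{q+q'}$ of integrals
\begin{equation*}
\int d\xi_1 \cdots d\xi_n \, f_1(\xi_1) V_1^{\mathbf{k}}(x_1) \cdots f_n(\xi_n) V_n^{\mathbf{k}}(x_n) \, \avga{\prod_{k=1}^n \wick{e^{i\sqrt{\beta}\sigma_k\varphi(x_k)}}}_{\GFF}^T,
\end{equation*}
where each $V_k^{\mathbf{k}}$ is a bounded function on $\R^2$ built as a product of some of the $G_j$'s and $\overline{H_{j'}}$'s. For $n \geq 3$, or for $n=2$ with the $f_k$ of disjoint supports, $L^1$-integrability then follows by applying Lemma \ref{lem:gffintegrability1} (with Tonelli) to the bounded compactly supported test functions $f_k \cdot V_k^{\mathbf{k}}$.

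The main technical obstacle is the remaining $n=2$ case with overlapping supports but $q+q'\geq 1$. Charge neutrality forces $\sigma_1 = -\sigma_2$; writing $\sum_k \sigma_k/(x_k - z_j) = \sigma_1[1/(x_1-z_j) - 1/(x_2-z_j)]$ and smearing against $g_j$ gives $\sigma_1(G_j(x_1) - G_j(x_2))$, of size $O(|x_1-x_2|)$ by the Lipschitz bound above. Each of the $q+q' \geq 1$ smeared gradient factors contributes such an $O(|x_1-x_2|)$ gain, so the charge singularity $|x_1-x_2|^{-\beta/2\pi}$ is upgraded to the bound $|x_1-x_2|^{q+q'-\beta/2\pi}$; this is locally integrable on $(\R^2)^2$ whenever $\beta < 2\pi(q+q'+2)$, which is satisfied for all $\beta \in (0, 6\pi)$ once $q+q' \geq 1$. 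This Lipschitz regularity of the Cauchy transform of a $C_c^\infty$ function is exactly what makes the $\beta \geq 4\pi$ charge singularity tolerable here.

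For the convergence statement, the same factorization \eqref{e:truncem} holds at finite $\epsilon, m>0$ with the covariances $\avga{\varphi(x_k)\partial\varphi(z_j)}_{\GFF(\epsilon,m)}$ in place of $1/[4\pi(x_k-z_j)]$, and smearing produces bounded $V_k^{\mathbf{k},\epsilon,m}$ that converge uniformly on compact sets to $V_k^{\mathbf{k}}$ by \eqref{e:phigradphi}. In the cases $n \geq 3$ or $n=2$ with disjoint $f_k$-supports, the convergence for each fixed $\mathbf{k}$ follows by writing $f_k V_k^{\mathbf{k},\epsilon,m} = f_k V_k^{\mathbf{k}} + f_k (V_k^{\mathbf{k},\epsilon,m} - V_k^{\mathbf{k}})$, applying Lemma \ref{lem:gffintegrability1} to the first summand, and controlling the second via uniform-in-$(\epsilon,m)$ $L^1$ bounds on the absolute value of the charge truncated correlation over the support of $\prod f_k$ (obtainable from the same analysis used to prove Lemma \ref{lem:gffintegrability1}). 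For the remaining $n=2$, $q+q'\geq 1$ overlapping-support case, the Lipschitz-cancellation bound of the preceding paragraph applies uniformly in $\epsilon, m$ small (the finite-regularization Cauchy-type factors being uniformly bounded and Lipschitz), producing an $(\epsilon,m)$-uniform integrable majorant, and dominated convergence completes the argument.
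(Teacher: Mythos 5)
Your proof follows essentially the same route as the paper's: reduce via the factorization \eqref{e:GFFmixed} and Lemma~\ref{lem:gffintegrability1}, with the only delicate case being $n=2$, $q+q'\geq 1$, overlapping supports. The genuine difference is in how you handle that case. You smear the gradient variables first and exploit Lipschitz continuity of the Cauchy transform $G_j(x)=\int dz\, g_j(z)/(x-z)$ (which uses that $g_j$ is smooth, so $\partial G_j$ is bounded) to obtain the clean bound $|G_j(x_1)-G_j(x_2)|\leq C|x_1-x_2|$ with no logarithms. The paper instead uses the exact algebraic identity
\begin{equation*}
\left|\frac{1}{x_1-z}-\frac{1}{x_2-z}\right|=\frac{|x_1-x_2|}{|x_1-z||x_2-z|},
\end{equation*}
integrates out $z_j$ with absolute values picking up a harmless factor $1+|\log|x_1-x_2||$ per gradient variable, and directly verifies that $(1+|\log|x_1-x_2||)^{q+q'}|x_1-x_2|^{q+q'-\beta/2\pi}$ is locally integrable. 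Both give the right conclusion for $\beta<6\pi$, but note that the lemma's first claim is $L^1$ of the full multi-variable integrand (i.e.\ with absolute values inside the $z_j$-integrals), which your smear-first Lipschitz argument does not literally establish—it proves finiteness of the (sign-exploiting) iterated integral. That gap is trivial to close via the paper's absolute-value bound, but as written you have proved a slightly different statement. Your convergence argument fills in what the paper leaves as ``can be readily deduced with a similar argument''; the one point that would need a line of justification is the claimed uniform-in-$(\epsilon,m)$ Lipschitz bound on the regularized factors $x\mapsto\avga{\varphi(x)\partial\varphi(g_j)}_{\GFF(\epsilon,m)}$, e.g.\ by integrating by parts onto $g_j$ and using the uniform local bounds on $\avga{\partial\varphi(u)\varphi(y)}_{\GFF(\epsilon,m)}$ from Lemma~\ref{lem:cov}.
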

  
\begin{proof}  
  In the case that $n \neq 2$ or that the $f_k$ have disjoint supports,
  the claim follows immediately from  Lemma \ref{le:GFFmixed} and Lemma~\ref{lem:gffintegrability1}.
  Thus the only slightly delicate case is the claim that the supports of $f_1$ and $f_2$ need not be disjoint for $n=2$ if $q+q'\geq 1$. For this, note first that if $\sigma_1=\sigma_2$, then the charge correlation function vanishes and there is nothing to prove.
  For $\sigma_1\neq \sigma_2$, let us only prove that the limiting quantity is integrable -- justifying convergence can be readily deduced with a similar argument.
  By \eqref{e:gff-ee}, the truncated charge two-point function function is proportional to
  \begin{equation}
    \frac{1}{|x_1-x_2|^{\beta/2\pi}},
  \end{equation}  
  and, by Lemma~\ref{le:GFFmixed}, the correlation function in the claim is thus proportional to
  \begin{equation}
    \frac{1}{|x_1-x_2|^{\beta/2\pi}}
    \prod_{j=1}^q\left(i\frac{\sqrt{\beta}}{4\pi}\left(\frac{1}{x_1-z_j}-\frac{1}{x_2-z_j}\right)\right)
    \prod_{j'=1}^{q'}\left(i\frac{\sqrt{\beta}}{4\pi}\left(\frac{1}{\bar x_1-\bar w_{j'}}-\frac{1}{\bar x_2-\bar w_{j'}}\right)\right)
    .
  \end{equation}
  It thus suffices to show that
  \begin{multline}
    \frac{1}{|x_1-x_2|^{\beta/2\pi}}
    \prod_{j=1}^q \left|\frac{1}{x_1-z_j}-\frac{1}{x_2-z_j}\right|
    \prod_{j'=1}^{q'} \left|\frac{1}{x_1-w_{j'}}-\frac{1}{x_2-w_{j'}}\right|
    \\\leq
    |x_1-x_2|^{-\beta/2\pi+{q+q'}}
    \prod_{j=1}^q\frac{1}{|z_j-x_1||z_j-x_2|}
    \prod_{j'=1}^{q'}\frac{1}{|w_{j'}-x_1||w_{j'}-x_2|}
  \end{multline}
  is locally integrable.
  One readily checks that since we are integrating over given compact sets, each $z_j$-integral gives a bound of the form $1+|\log |x_1-x_2||$
  and analogously for the $w_{j'}$ integrals. Thus it suffices to check the local integrability of
  \begin{equation}
    (1+|\log |x_1-x_2||)^{q+q'}|x_1-x_2|^{-\beta/2\pi+q+q'}.
  \end{equation}
  As we are in two dimensions, this certainly holds for $\beta<6\pi$ when $q+q' \geq 1$.
\end{proof}

\subsection{Bosonization in the massless case}

That the Coleman correspondence \eqref{e:thm-correspondence} holds in the non-interacting case $z=\mu=0$
follows
by matching the above computations of the correlation functions of massless free fermions and
of the massless Gaussian free field, together
with the following well-known identity for Cauchy--Vandermonde matrices:
\begin{equation}
  \label{e:detid1}
  \det\pa{\frac{1}{x_i-y_j}}_{i,j=1}^n
  =
  \frac{\prod_{1 \leq i<i'\leq {n}} (x_i-x_{i'}) \prod_{1 \leq j< j' \leq n} (y_j - y_{j'})}
  {\prod_{1 \leq i \leq {n}} \prod_{1 \leq j \leq n} (x_i-y_j)}
  .
\end{equation}

This allows us to prove the Coleman correspondence in the case $\mu=z=0$.

\begin{corollary} \label{cor:massless-corr}
  Let $\beta=4\pi$, $z=\mu=0$. For $n,n',q,q'\geq 0$ with $n+n'+q+q'=1$ or $n+n'+q+q'\geq 3$, $f_1^+,...,f_n^+,f_1^-,...,f_{n'}^{-}\in L_c^\infty(\R^2)$, and $g_1^+,...,g_q^+,g_1^-,...,g_{q'}^{-}\in C_c^\infty(\R^2)$,
  the identity \eqref{e:thm-correspondence} holds:
  \begin{multline} \label{e:correspondence-GFF}
  \avga{\prod_{k=1}^{n} \wick{e^{+i \sqrt{4\pi} \varphi}}(f_k^+)
    \prod_{k'=1}^{n'}\wick{e^{-i \sqrt{4\pi} \varphi}}(f_{k'}^-)
    \prod_{j=1}^{q} (-i \partial \varphi(g_j^+))
    \prod_{j'=1}^{q'} (+i\bar\partial \varphi(g_{j'}^-))}_{\GFF}^T
    \\
  = A^{n+n'} B^{q+q'}
   \avga{
    \prod_{k=1}^{n} {\bar\psi_1 \psi_1}(f_k^+)
    \prod_{k'=1}^{n'} {\bar\psi_2 \psi_2}(f_{k'}^-)
    \prod_{j=1}^{q} {\bar\psi_2\psi_1}(g_j^+)
    \prod_{j'=1}^{q'} {\bar\psi_1 \psi_2}(g_{j'}^-)
  }_{\FF(0)}^T,
\end{multline}
where $A$ and $B$ are as in Theorem~\ref{thm:correspondence}.

Moreover, if $n+n'+q+q'=2$, we have the following statements: (i) for $n+n'=2$, $q+q'=0$, the claim holds if $f_i^\pm$ have disjoint supports, (ii) if $n+n'=1$ and $q+q'=1$, the claim holds in the same generality as for $n+n'+q+q'\geq 3$ (both sides vanish), and (iii) if $n+n'=0$ and $q+q'=2$, the claim holds either if $g_j^\pm$ are disjoint supports, or if the right hand side is understood as that given by Lemma~\ref{le:gradint}.
\end{corollary}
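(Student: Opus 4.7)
The plan is to verify the identity pointwise for distinct points first, and then pass to smeared observables using Lemmas~\ref{lem:gffintegrability} and \ref{le:gradint}. The pointwise argument itself reduces to two inputs: the pure-charge case, which amounts to the Cauchy--Vandermonde identity \eqref{e:detid1}, together with the observation that on each side of \eqref{e:correspondence-GFF} the recursive formulas of Lemmas~\ref{le:GFFmixed} and \ref{lem:fermtruncid} allow one to peel off gradient and current insertions with exactly matching multiplicative factors, provided $A$ and $B$ are chosen as in the statement.

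I would first treat $q=q'=0$. By Lemma~\ref{le:GFFcharge} with $\beta=4\pi$, the untruncated GFF charge correlation equals $\mathbf{1}_{n=n'}(4e^{-\gamma})^n$ times the Cauchy ratio $\prod_{i<j}|x_i-x_j|^2|y_i-y_j|^2/\prod_{i,j}|x_i-y_j|^2$, which by \eqref{e:detid1} collapses to $|\det(1/(x_i-y_j))_{i,j=1}^n|^2$. On the other hand, \eqref{e:massless-ferm1} expresses the untruncated fermionic correlation as $\mathbf{1}_{n=n'}(2\pi)^{-2n}|\det(1/(x_i-y_j))|^2$. Since $A^{2n}/(2\pi)^{2n}=(4e^{-\gamma})^n$ for $A=4\pi e^{-\gamma/2}$, the two sides of \eqref{e:correspondence-GFF} agree as untruncated correlations, and therefore also as truncated ones via the inductive definition \eqref{e:truncated}.

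For $n+n'\geq 1$ with arbitrary $q,q'\geq 0$, Lemma~\ref{le:GFFmixed} turns each insertion of $-i\partial\varphi(z_j)$ on the bosonic side into a multiplicative factor $\frac{1}{\sqrt{4\pi}}\sum_k \sigma_k/(x_k-z_j)$ (using $(-i)\cdot i\sqrt{4\pi}/(4\pi)=1/\sqrt{4\pi}$), and analogously for $-i\bar\partial\varphi$. On the fermionic side, Lemma~\ref{lem:fermtruncid} produces exactly the same factor from each $B\,\bar\psi_2\psi_1(z_j)$ insertion, since $B/(2\pi)=1/\sqrt{4\pi}$ for $B=\sqrt{\pi}$ and the sign structure $\sigma_k=\pm1$ corresponds precisely to the $x_i$ versus $y_i$ contributions in \eqref{e:fermtruncid1}--\eqref{e:fermtruncid2}. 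Iterating reduces the mixed case to the pure-charge case already handled. The remaining case $n=n'=0$ with $q+q'\geq 1$ requires a separate treatment: for $q+q'=1$ both sides vanish trivially, for $q+q'\geq 3$ they vanish by Gaussianity on the bosonic side and by iterating \eqref{e:fermtruncid1}--\eqref{e:fermtruncid2} (whose right-hand side is zero when $n=n'=0$) on the fermionic side, and for $q+q'=2$ a direct comparison of Lemma~\ref{le:gradcor} with \eqref{e:fermtruncid3} confirms \eqref{e:correspondence-GFF} after using $B^2=\pi$. Finally, the smeared statement and the support conditions (i)--(iii) follow from Lemma~\ref{lem:gffintegrability} on the bosonic side and the $O(1/|x-y|)$ singularity of $S$ on the fermionic side (the restrictions in cases (i) and (iii) mirror exactly the integrability issues of the corresponding pointwise formulas on the diagonal, handled via the principal value interpretation of Lemma~\ref{le:gradint}). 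The main potential obstacle is bookkeeping: tracking the $\mathbf{1}_{n=n'}$ indicator and the sign/chirality dictionary so that the prefactors cancel exactly; the actual analytic content is light.
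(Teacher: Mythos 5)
Your proposal is correct and follows essentially the same route as the paper: reduce $q=q'=0$ to the Cauchy--Vandermonde identity \eqref{e:detid1} matched against \eqref{e:massless-ferm1} and \eqref{e:gff-ee}, peel off gradient/current insertions via Lemmas~\ref{le:GFFmixed}, \ref{lem:fermtruncid} and the pure-gradient formulas \eqref{e:fermtruncid3}, Lemma~\ref{le:gradcor}, and then pass to smeared observables using Lemmas~\ref{le:gradint} and \ref{lem:gffintegrability}. The only difference is that you spell out the $n+n'=0$ subcases more explicitly than the paper does, but the mechanism and the constants $A$, $B$ match.
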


\begin{proof}
  Let $q=q'=0$. Then
  applying \eqref{e:gff-ee} with $\beta=4\pi$,
  the determinant identity \eqref{e:detid1}, and finally \eqref{e:massless-ferm1},
  we find that for any distinct points,
  \begin{align}
    &\avga{
    \prod_{k=1}^n \wick{e^{+i\sqrt{\beta}\varphi(x_k)}}
    \prod_{k'=1}^{n'} \wick{e^{-i\sqrt{\beta}\varphi(y_{k'})}}
      }_{\GFF}
      \nnb 
    &=
      \mathbf 1_{n=n'} 
      (4e^{-\gamma})^{\beta n/4\pi}
      \frac{\prod_{i<j} |x_i-x_j|^{\beta/2\pi} |y_i-y_j|^{\beta/2\pi}}{\prod_{i,j}|x_i-y_j|^{\beta/2\pi}}
    \nnb
    &=
      \mathbf 1_{n=n'} 
      (4e^{-\gamma})^{n}
      \frac{\prod_{i<j} |x_i-x_j|^{2} |y_i-y_j|^{2}}{\prod_{i,j}|x_i-y_j|^{2}}
    \\
    &=
      (4\pi  e^{-\gamma/2})^{n+n'}
      \avga{
      \prod_{k=1}^n \bar\psi_1\psi_1(x_k) \prod_{k'=1}^{n'} \bar\psi_2\psi_2(y_{k'})
      }_{\FF(0)}
      \nonumber
      .
  \end{align}
  Using this, if $q+q'>0$ then \eqref{e:fermtruncid1}--\eqref{e:fermtruncid2} 
  and \eqref{e:fermtruncid3} for the fermionic side
  respectively \eqref{e:GFFmixed} for the bosonic side
  imply that, for any distinct points,
\begin{align} \label{e:GFF-correspondence}
  &\avga{\prod_{k=1}^{n} \wick{e^{+i \sqrt{4\pi} \varphi(x_k)}}
    \prod_{k'=1}^{n'}\wick{e^{-i \sqrt{4\pi} \varphi(y_{k'})}}
    \prod_{j=1}^{q} (-i \partial \varphi(z_j))
    \prod_{j'=1}^{q'} (+i\bar\partial \varphi(w_{j'}))}_{\GFF}^T
  \\ & \nonumber
  = (4\pi  e^{-\gamma/2})^{n+n'}
   \sqrt{\pi}^{q+q'}
   \avga{
    \prod_{k=1}^{n} {\bar\psi_1 \psi_1}(x_k)
    \prod_{k'=1}^{n'} {\bar\psi_2 \psi_2}(y_{k'})
    \prod_{j=1}^{q} {\bar\psi_2\psi_1}(z_j)
    \prod_{j'=1}^{q'} {\bar\psi_1 \psi_2}(w_{j'})
  }_{\FF(0)}^T.
\end{align}
The claim (along with the relevant restrictions for the $n+n'+q+q'=2$-case) now follows from Lemma \ref{le:gradint} (possibly using integration by parts) and Lemma \ref{lem:gffintegrability}.
\end{proof}

\section{Estimates for the sine-Gordon model and free fermions; proof of main theorems}
\label{sec:estimates}

In this section, we record our main estimates for sine-Gordon correlation functions as well as those for free fermions with a finite volume mass term. The proofs of these estimates are presented in the remainder of the paper.
Assuming these estimates, we then give our proofs of the theorems of Section~\ref{sec:intro}
in this section.
The intuition for Theorems~\ref{thm:correspondence}--\ref{thm:twopoint} is as outlined in Section~\ref{sec:literature}.
Namely, in view of the Coleman correspondence when $z=\mu=0$, i.e., Corollary~\ref{cor:massless-corr},
the sine-Gordon measure which is formally obtained from the GFF measure by weighting it by
\begin{equation}\label{e:sg-weight}
  e^{2\nmu\int dx\, \wick{\cos\sqrt{4\pi}\varphi(x)}\mathbf 1_{\Lambda_L}(x)}
\end{equation}
 should correspond to the massless free fermion ``Grassmann measure'' weighted by
\begin{equation} \label{e:ff-weight}
  e^{A\nz\int dx\, (\bar\psi_1 \psi_1(x)+\bar\psi_2 \psi_2(x))\mathbf 1_{\Lambda_L}(x)}.
\end{equation}
Our estimates stated in this section provide the required analyticity and convergence to make this correspondence rigorously.
Our main innovation here is that our estimates hold for all $z$ in a complex neighborhood of the entire real axis
(not just a neighborhood of the origin) and for all $L>0$,
and that we control the infinite volume limit $L\to\infty$.
The main analyticity results for the sine-Gordon model stated in this section do not cause additional difficulties
for general $\beta \in (0,6\pi)$,
so we state them in this generality.
Together with well-known correlation inequalities they then imply the remaining results
stated in Section~\ref{sec:intro}.

\subsection{The sine-Gordon model and estimates for its correlation functions}
\label{sec:intro-sg}

To state our estimates for the sine-Gordon model,
we begin with the precise definition of our regularization of the continuum, finite volume, massless sine-Gordon model.

For $\epsilon,m>0$, we define the probability measure $\nu^{\GFF(\epsilon,m)}$ of the regularized GFF
as in Section~\ref{sec:massless-gff} and recall that $\nu^{\GFF(\epsilon,m)}$ is supported on $C^\infty(\R^2)$.
We then take as a regularization of the sine-Gordon model the probability measure
\begin{equation}
\nu^{\SG(\beta,z|\epsilon,m,\Lambda)}(d\varphi) = 
\frac{1}{Z(\beta,z|\epsilon,m,\Lambda)}
\exp\qa{2\nz \int_{\Lambda} dx\, \epsilon^{-\beta/4\pi} \cos(\sqrt{\beta}\varphi)} \nu^{\GFF(\epsilon,m)} (d\varphi),
\label{e:sgdef}
\end{equation}
where $\Lambda \subset \R^2$ is a compact set, 
$\beta\in(0,6\pi)$, $z\in \R$, and $Z$ is the partition function -- a normalization constant.
We will also write $\avga{\cdot}_{\GFF(\epsilon,m)}$ for integration with respect to $\nu^{\GFF(\epsilon,m)}$ and $\avga{\cdot}_{\SG(\beta,z|\epsilon,m,\Lambda)}$ for integration with respect to $\nu^{\SG(\beta,z|\epsilon,m,\Lambda)}$.
For $\Lambda=\Lambda_L = \{x\in \R^2 : |x| \leq L\}$ we of course recover our definition
of $\avg{\cdot}_{\SG(\beta,z|\epsilon,m,L)}$ in \eqref{e:SGdef}, but we allow more general $\Lambda$ here because this
allows us to obtain the Euclidean invariance of the infinite volume limits in Theorems~\ref{thm:gradientlim} and~\ref{thm:fieldlim}.

Let us comment briefly on some of the restrictions we have imposed here.
As mentioned earlier, the continuum sine-Gordon model is interesting for $\beta\in (0,8\pi)$. While we are mainly interested in proving the Coleman correspondence for $\beta=4\pi$, the sine-Gordon estimates we prove hold for all $\beta\in(0,6\pi)$, so we present the results in this generality. The regime $\beta\in[6\pi,8\pi)$ is also interesting, but would require finer estimates. 
For $\beta \in (0,4\pi)$, the sine-Gordon measure is absolutely continuous with respect to the GFF when $\Lambda$ is compact.
The free fermion point, $\beta=4\pi$, is precisely where this fails.

We now state our main result about 
the sine-Gordon correlation functions that are important for the Coleman correspondence.

\begin{theorem}\label{thm:cf}
  For $\beta\in (0,6\pi)$, $z\in \R$, and $\Lambda \subset \R^2$ compact,
  $n,q,q'\geq 0$ and 
  $f_1,...,f_n\in L_c^\infty(\R^2)$, $g_1,...,g_q,h_1,...,h_{q'}\in C_c^\infty(\R^2)$
  and $\sigma_1,...,\sigma_{n}\in \{-1,1\}$,
  \begin{enumerate}
  \item If either $(n,q+q')\neq (1,0)$ and $(n,q+q')\neq (2,0)$ or if $f_1,f_2$ have disjoint supports, the limit
    \begin{align}
      &\avga{\prod_{{k}=1}^{n} \wick{e^{i\sqrt{\beta}\sigma_{k}\varphi}}(f_{k}) \prod_{{j}=1}^{q}\partial \varphi(g_{j})\prod_{{j'}=1}^{q'} \bar{\partial} \varphi(h_{j'})}_{\SG(\beta,z|\Lambda)}^T\\
      &\notag\qquad :=\lim_{m\to 0}\lim_{\epsilon\to 0}\avga{\prod_{{k}=1}^{n} \wick{e^{i\sqrt{\beta}\sigma_{k}\varphi}}_\epsilon(f_{k}) \prod_{{j}=1}^{q}\partial \varphi(g_{j})\prod_{{j'}=1}^{q'} \bar{\partial} \varphi(h_{j'})}_{\SG(\beta,z|\epsilon,m,\Lambda)}^T
    \end{align}
    exists and is finite.

  \item Under the assumptions of item (i), the function 
    \begin{equation}
      z\mapsto \avga{\prod_{{k}=1}^{n} \wick{e^{i\sqrt{\beta}\sigma_{k}\varphi}}(f_{k}) \prod_{{j}=1}^{q}\partial \varphi(g_{j})\prod_{{j'}=1}^{q'} \bar{\partial} \varphi(h_{j'})}_{\SG(\beta,z|\Lambda)}^T
    \end{equation}
    has an analytic continuation into a $\Lambda$-dependent neighborhood of the real axis. Moreover, it is even in $z$ when ${n}=0$.
    
  \item Under the assumptions of item (i), for any $l\geq 0$,
    \begin{align}
      &\left.\frac{d^{l}}{dz^{l}}\right|_{z=0}\avga{\prod_{{k}=1}^{n} \wick{e^{i\sqrt{\beta}\sigma_{k}\varphi}}(f_{k}) \prod_{{j}=1}^{q}\partial \varphi(g_{j})\prod_{{j'}=1}^{q'} \bar{\partial} \varphi(h_{j'})}_{\SG(\beta,z|\Lambda)}^T\\
      &= 
        \avga{\prod_{{k}=1}^{n} \wick{e^{i\sqrt{\beta}\sigma_{k}\varphi}}(f_{k}) \prod_{{j}=1}^{q}\partial \varphi(g_{j})\prod_{{j'}=1}^{q'} \bar{\partial} \varphi(h_{j'}) \left(\wick{e^{i\sqrt{\beta}\varphi}(\mathbf 1_{\Lambda})}+\wick{e^{-i\sqrt{\beta}\varphi}(\mathbf 1_{\Lambda})}\right)^{l}}_{\GFF}^T.
        \nonumber
    \end{align}
  \item 	For any $f\in L_c^\infty(\R^2)$ with support in $\Lambda$, we have for $\beta\in (4\pi,6\pi)$
    \begin{multline}
      \lim_{m\to 0}\lim_{\epsilon\to 0}\left[\epsilon^{\frac{\beta}{2\pi}-2}\langle \wick{e^{\pm i\sqrt{\beta}\varphi}}_\epsilon(f)\rangle_{\SG(\beta,z|\epsilon,m,\Lambda)}\right]\\
      = 2\pi\nz e^{-\frac{\gamma \beta}{4\pi}}\int_{\Lambda}dx \, f(x)\int_0^\infty dr\, r^{-\frac{\beta}{2\pi}+1}e^{-\frac{\beta}{4\pi}\Gamma(0,r^2)},
    \end{multline}
    where $\Gamma$ is the incomplete gamma function, and for $\beta=4\pi$,
    \begin{equation}
      \lim_{m\to 0}\lim_{\epsilon\to 0}\left[\tfrac{1}{\log \epsilon ^{-1}}\langle \wick{e^{\pm i\sqrt{\beta}\varphi}}_\epsilon(f)\rangle_{\SG(\beta,z|\epsilon,m,\Lambda)}\right]=2\pi \nz e^{-\gamma}\int_{\R^2} dx\, f(x)
      .
    \end{equation}
  \end{enumerate}
\end{theorem}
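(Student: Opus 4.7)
All four items of Theorem~\ref{thm:cf} can be encoded in the single generating functional
\begin{equation}
  \mathcal{F}(z,t_1,\ldots,t_N\mid\epsilon,m,\Lambda) := \log\avga{\exp\qa{2z\int_\Lambda dx\, \wick{\cos\sqrt{\beta}\varphi}_\epsilon + \sum_{k=1}^N t_k U_k(\varphi)}}_{\GFF(\epsilon,m)},
\end{equation}
where $U_1,\ldots,U_N$ enumerate the smeared observables appearing in the theorem ($\wick{e^{i\sqrt{\beta}\sigma_k\varphi}}_\epsilon(f_k)$, $\partial\varphi(g_j)$, $\bar\partial\varphi(h_{j'})$). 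The truncated correlation in item (i) is the mixed derivative $\partial_{t_1}\cdots\partial_{t_N}\mathcal{F}\big|_{t=0}$; item (ii) is analyticity of $\mathcal F\big|_{t=0}$ in $z$, transferred to the correlation functions via Cauchy's formula and joint analyticity; and item (iii) identifies the $l$-th Taylor coefficient of $\mathcal F$ in $z$ at $z=0$ with the $l$-fold GFF cumulant obtained by expanding the exponential, noting that $2\wick{\cos\sqrt{\beta}\varphi}_\epsilon(\mathbf 1_\Lambda)=\wick{e^{i\sqrt{\beta}\varphi}}_\epsilon(\mathbf 1_\Lambda)+\wick{e^{-i\sqrt{\beta}\varphi}}_\epsilon(\mathbf 1_\Lambda)$. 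Evenness in $z$ when $n=0$ then follows from the parity $\varphi\mapsto-\varphi$ of $\nu^{\GFF}$, under which $\wick{\cos\sqrt{\beta}\varphi}_\epsilon$ is invariant and $\partial\varphi,\bar\partial\varphi$ pick up a sign. Thus items (i)--(iii) reduce to proving that $\mathcal F$ extends to a jointly analytic function of $(z,t_1,\ldots,t_N)$ on a polydisk about $\R^{1+N}$ whose radii depend only on $\beta$, $\Lambda$, and the support and norm data of the $f_k,g_j,h_{j'}$, with bounds uniform in $\epsilon,m\in(0,1]$ and with the limits $\epsilon\to 0$ and $m\to 0$ commuting with the $t$-derivatives at $t=0$.

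\textbf{Main technical input.} These uniform analytic bounds are the content of Sections~\ref{sec:renormpot}--\ref{sec:sg-expansion}, carried out via the construction of a \emph{renormalized potential} $V_s$ from a continuous-scale (Polchinski) renormalization group flow. Writing the covariance of $\nu^{\GFF(\epsilon,m)}$ as $\int_{\epsilon^2}^\infty \dot c_s\,ds$ with $\dot c_s(x,y)=e^{-s(-\Delta+m^2)}(x,y)$, one defines $V_s$ by the Polchinski equation
\begin{equation}
  \partial_s V_s = \tfrac{1}{2}\Delta_{\dot c_s}V_s - \tfrac{1}{2}\langle\nabla V_s,\dot c_s\nabla V_s\rangle
\end{equation}
with initial datum at $s=\epsilon^2$ equal to the bare sine-Gordon-plus-sources potential $-2z\int_\Lambda dx\,\wick{\cos\sqrt{\beta}\varphi}_\epsilon-\sum_k t_k U_k$; the identity $\mathcal F=-V_\infty(0)$ (up to an explicit normalization) then reduces everything to estimates on $V_s$. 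The key estimates express $V_s$ as a convergent polymer-type expansion of products of smeared Wick exponentials with kernels analytic in $(z,t)$ on the above polydisk and uniform in $\epsilon,m\in(0,1]$. The restriction $\beta\in(0,6\pi)$ enters through the integrability of the two-dimensional operator product singularity $|x-y|^{-\beta/2\pi}$ (allowing at most one logarithm), which is what makes the two- and three-body vertex contributions in this expansion summable. The hard part of the argument, and the main novelty over \cite{MR446210,MR1672504,MR2461991}, is that these bounds must be established in a neighborhood of the \emph{entire} real $z$-axis rather than only near $z=0$ and stably under $L\to\infty$; it is precisely this non-perturbative control that later unlocks the Coleman correspondence at $\beta=4\pi$ for all real $z$.

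\textbf{Item (iv).} This last statement requires an explicit leading-order asymptotic rather than analyticity alone. The first-order-in-$z$ contribution to the one-point function is
\begin{equation}
  z\int dy\,f(y)\int_\Lambda dx\,\epsilon^{-\beta/2\pi}\,e^{-\beta[C_\epsilon(0,0)-C_\epsilon(y,x)]},
\end{equation}
where $C_\epsilon$ denotes the covariance of $\nu^{\GFF(\epsilon,m)}$. Rescaling $|x-y|=\epsilon r$ and computing, for $y$ in the interior of $\Lambda$, the explicit identity $C_\epsilon(0,0)-C_\epsilon(y,y+\epsilon r\hat n)=\tfrac{1}{4\pi}\mathrm{Ein}(r^2/4)=\tfrac{1}{4\pi}(\gamma+\log(r^2/4)+\Gamma(0,r^2/4))$ at $m=0$ produces the claimed $\epsilon^{2-\beta/2\pi}$ scaling together with the explicit $r$-integral against $e^{-(\beta/4\pi)\Gamma(0,r^2)}$ after a final change of variables, with the announced prefactor involving $e^{-\beta\gamma/(4\pi)}$. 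For $\beta\in(4\pi,6\pi)$ the resulting $r$-integral converges at both ends, while for $\beta=4\pi$ its upper tail diverges logarithmically, yielding the $\log\epsilon^{-1}$ factor in the statement. The renormalized-potential estimates of paragraph two are then used to show that all higher-order contributions in $z$ to the one-point function are strictly subdominant under the $\epsilon^{\beta/2\pi-2}$ (respectively $1/\log\epsilon^{-1}$) rescaling, so that the full $\epsilon\to 0$ asymptotics are captured by this single first-order calculation.
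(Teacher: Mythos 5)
Your overall strategy matches the paper: encode the correlations in a generating functional, control it uniformly in $\epsilon,m$ through a renormalized potential built from a Polchinski-type flow, transfer analyticity in $z$ via Cauchy's formula, and identify the $z$-Taylor coefficients at $z=0$ with free-field truncated correlations. One packaging difference is that you keep the gradient observables $\partial\varphi(g_j),\bar\partial\varphi(h_{j'})$ inside the exponent and run the flow with them there, whereas the paper removes them by a Girsanov shift, thereby reducing everything to a modified source $\zeta$ of purely charge type; the Girsanov route is cleaner because the flow estimates then only need to be proved once for sine-Gordon-type initial data, and the shifted $\zeta$ still lies in the ``$L^\infty_c$, small enough norm'' class that the renormalized-potential bounds require.

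There is, however, a genuine gap in your argument for evenness in $z$ when $n=0$. The parity $\varphi\mapsto-\varphi$, under which $\wick{\cos\sqrt{\beta}\varphi}$ is invariant and $\partial\varphi,\bar\partial\varphi$ change sign, fixes the sine-Gordon measure at \emph{fixed} $z$; it therefore yields only the selection rule $\avg{\prod_j\partial\varphi(g_j)\prod_{j'}\bar\partial\varphi(h_{j'})}^T_{\SG(\beta,z)}=(-1)^{q+q'}\avg{\cdots}^T_{\SG(\beta,z)}$, i.e.\ vanishing when $q+q'$ is odd, and says nothing about $z\leftrightarrow -z$. The transformation that flips the sign of $z$ is the shift $\varphi\mapsto\varphi+\pi/\sqrt{\beta}$, but this is not a symmetry of $\nu^{\GFF(\epsilon,m)}$ for $m>0$ (the mass term breaks translation invariance in field space). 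The correct mechanism, which your sketch does not supply, is the charge-neutrality selection rule that emerges only in the $m\to0$ limit: expanding $\mathcal Z(z\zeta)=\sum_n z^n\mathcal M_n(\zeta)/n!$ as in \eqref{e:Zcalexpans}, the odd coefficients $\mathcal M_n(\zeta)$ vanish because the limiting massless GFF charge correlation functions $\avga{\prod\wick{e^{i\sqrt{\beta}\sigma_k\varphi(x_k)}}}_{\GFF}$ carry the factor $\mathbf 1_{n=n'}$ of Lemma~\ref{le:GFFcharge}, and since $\zeta_{0,\nu,\eta,z,\Lambda}$ is proportional to $z$ when $n=0$, evenness in $z$ follows. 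This requires having first established the dominated convergence $\mathcal M_n(\zeta|\epsilon,m)\to\mathcal M_n(\zeta)$ and the neutrality structure of the limit, neither of which follows from parity.

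A second, milder point: the generating functional $\mathcal F$ as you write it diverges as $\epsilon\to0$ for $\beta\geq 4\pi$ (already at quadratic order in the sources, since $\avga{\wick{e^{i\sqrt{\beta}\varphi(x)}}\wick{e^{-i\sqrt{\beta}\varphi(y)}}}_{\GFF}\sim|x-y|^{-\beta/2\pi}$ is not locally integrable). Your phrase ``up to an explicit normalization'' conceals the essential point that one must subtract the quadratic-in-$\zeta$ counterterm $\tfrac12\int d\xi_1 d\xi_2\,\zeta(\xi_1)\zeta(\xi_2)A(\xi_1,\xi_2|\epsilon,m)$ before taking $\epsilon\to0$, and it is precisely the surviving pieces of this counterterm after the Girsanov shift that produce the exceptional behaviour for $(n,q+q')=(1,0)$ and $(2,0)$ in item (i), as well as the divergent one-point function of item (iv). This bookkeeping needs to be explicit for the proof to close.

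Your item (iv) sketch is structurally correct and matches the paper: the $z$-linear (counterterm) contribution carries the $\epsilon^{2-\beta/2\pi}$ (or $\log\epsilon^{-1}$ at $\beta=4\pi$) scaling, and the uniform bounds on the renormalized generating functional show all other contributions are $O(1)$.
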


By essentially the same proof, we also obtain the following existence of the $\varphi$ field.

\begin{theorem} \label{thm:sgphi}
  Let $\beta\in (0,6\pi)$, $z\in \R$, $m\in (0,\infty)$, and $\Lambda\subset \R^2$ compact.
  Then for any  $f \in C_c^\infty(\R^2)$ and $w\in \C$, the limit
  \begin{equation}
    \avg{e^{w\varphi(f)}}_{\SG(\beta,z|m,\Lambda)} = \lim_{\epsilon \to 0} \avg{e^{w\varphi(f)}}_{\SG(\beta,z|\epsilon,m,\Lambda)}
  \end{equation}
  exists and is entire in $w$. 
  If also $\int f \, dx = 0$, then the limit
  \begin{equation}
    \avg{e^{w\varphi(f)}}_{\SG(\beta,z|\Lambda)} = \lim_{m\to 0}\lim_{\epsilon \to 0} \avg{e^{w\varphi(f)}}_{\SG(\beta,z|\epsilon,m,\Lambda)}
  \end{equation}
  also exists and is an even function of $z$ and an entire function of $w$.
\end{theorem}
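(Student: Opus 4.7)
The plan is to reduce Theorem~\ref{thm:sgphi} to the partition-function analyticity and convergence estimates of the paper via a Cameron--Martin shift. Let $C_{\epsilon,m}$ denote the (smooth) covariance of $\nu^{\GFF(\epsilon,m)}$. The Cameron--Martin formula applied to the numerator of $\avg{e^{w\varphi(f)}}_{\SG(\beta,z|\epsilon,m,\Lambda)}$ with the shift $\varphi \mapsto \varphi + w C_{\epsilon,m} f$ (valid for real $w$, then extended to complex $w$ by analytic continuation in $w$ on both sides, both of which are entire at fixed $\epsilon,m>0$) yields
\begin{equation}
\avg{e^{w\varphi(f)}}_{\SG(\beta,z|\epsilon,m,\Lambda)}
= e^{\frac{w^2}{2}(f, C_{\epsilon,m} f)} \cdot \frac{\widetilde Z(\beta, z, w|\epsilon,m,\Lambda)}{Z(\beta, z|\epsilon,m,\Lambda)},
\end{equation}
where $\widetilde Z(\beta,z,w|\epsilon,m,\Lambda)$ is the sine-Gordon-type partition function obtained by replacing the constant coupling $z$ in front of each $\wick{e^{\pm i\sqrt{\beta}\varphi(x)}}_\epsilon$ by the smooth, spatially-varying complex coupling $z\, e^{\pm i\sqrt{\beta}\, w\, (C_{\epsilon,m} f)(x)}$. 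This replacement coupling is entire in $w$ for each $x$ and satisfies the uniform bound $|z\, e^{\pm i\sqrt{\beta} w(C_{\epsilon,m}f)(x)}| \leq |z|\, e^{\sqrt{\beta}\, |\mathrm{Im}\,w|\,\|C_{\epsilon,m}f\|_\infty}$ on compact $w$-subsets of $\C$, uniformly in $\epsilon$.

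For the first claim (fixed $m>0$), I would verify that the renormalized potential construction of Section~\ref{sec:renormpot} and the convergence arguments of Section~\ref{sec:sg-expansion} extend essentially unchanged to a smooth, bounded, spatially-varying complex coupling supported in $\Lambda$: the paper's estimates depend on the coupling only through sup-norm-type bounds on $\Lambda$, which here hold uniformly in $\epsilon$ and in $w$ on compact sets. This yields convergence of $\widetilde Z(\beta,z,w|\epsilon,m,\Lambda)/Z(\beta,z|\epsilon,m,\Lambda)$ as $\epsilon\to 0$, uniformly on compact $w$-sets. Combined with the convergence $(f, C_{\epsilon,m}f)\to (f, (-\Delta+m^2)^{-1}f)$, this gives existence of the limit; entireness in $w$ then follows from Morera's theorem, since the limit is a uniform-on-compacts limit of entire functions.

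For the second claim (further taking $m\to 0$ when $\int f\,dx=0$), the Fourier representation $(f,(-\Delta+m^2)^{-1}f)=\int |\hat f(p)|^2/(|p|^2+m^2)\,dp/(2\pi)^2$ converges as $m\to 0$ thanks to $\hat f(0)=0$ making the integrand integrable at $p=0$; similarly $(-\Delta+m^2)^{-1}f$ converges pointwise, and uniformly on compacts, to a bounded function. The same extended estimates then produce the $m\to 0$ limit and its entireness in $w$. Evenness in $z$ follows from the Taylor expansion at $z=0$ (justified by analyticity in $z$, as in the proof of item~(ii) of Theorem~\ref{thm:cf}): the $n$-th coefficient is a sum of massless GFF truncated correlations of the form $\avg{e^{w\varphi(f)}\prod_{k=1}^n \wick{e^{i\sigma_k\sqrt{\beta}\varphi(x_k)}}}_{\GFF}^T$, which by direct Gaussian computation (as in Section~\ref{sec:massless-gff}) vanish when $\int f=0$ unless $\sum_k \sigma_k=0$, forcing $n$ even.

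The main obstacle is verifying that the renormalized potential estimates of Sections~\ref{sec:renormpot}--\ref{sec:sg-expansion}, developed for a real constant coupling $z$, adapt to a smooth, bounded, spatially-varying complex coupling. Since those estimates are controlled by sup-norm-type bounds on the coupling and by the geometry of $\Lambda$---rather than by structural features specific to constant real $z$---the extension is expected to be essentially routine, but does require careful bookkeeping to track the $w$-dependence uniformly on compact sets.
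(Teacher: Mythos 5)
Your approach via Cameron--Martin/Girsanov is essentially identical to the paper's own proof (see the passage that introduces $\zeta_{w,z,\Lambda}(\xi|\epsilon,m)=-z\mathbf 1_\Lambda(x)e^{i\sqrt{\beta}\sigma w\avg{\varphi(x)\varphi(f)}_{\GFF(\epsilon,m)}}$, which is exactly your $z\,e^{\pm i\sqrt\beta\,w(C_{\epsilon,m}f)(x)}$). Your stated ``main obstacle'' is not actually an obstacle: the partition-function analysis of Sections~\ref{sec:renormpot}--\ref{sec:sg-expansion} is formulated from the outset for an arbitrary complex $\zeta\in L^\infty_c(\R^2\times\{-1,1\},\C)$ (see \eqref{e:pfdef}, Proposition~\ref{pr:pfbound}, Theorem~\ref{th:main1}), not for constant real $z$; Theorem~\ref{th:main1} item~(iv) in particular gives exactly the uniform-on-compacts convergence $\mathcal Z(\zeta_\alpha(\cdot|\epsilon,m)|\epsilon,m)\to\mathcal Z(\zeta_\alpha)$ you need, with $\alpha=(w,z)$ and the uniformity in $\xi,w,z$ supplied by Lemma~\ref{lem:cov} (in particular \eqref{e:csmear} when $\int f=0$). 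The one thing you should make explicit is that neither $\widetilde Z$ nor $Z$ converges individually when $\beta\geq 4\pi$; their ratio must be rewritten in terms of the renormalized $\mathcal Z$ of \eqref{e:rpf}, which produces additional explicit exponential factors involving $A(\xi_1,\xi_2|\epsilon,m)$; those factors converge because the difference $\zeta_{w,z,\Lambda}(\xi_1)\zeta_{w,z,\Lambda}(\xi_2)-z^2\mathbf 1_\Lambda(x_1)\mathbf 1_\Lambda(x_2)$ vanishes on the diagonal, damping the non-integrable singularity of $A$ — this is precisely what the paper's formula \eqref{e:girsanov2-bis} tracks. For evenness the paper's argument is shorter than yours: $\zeta_{w,z,\Lambda}$ is proportional to $z$ and $\mathcal Z$ is an even function of $\zeta$ in the massless limit by Theorem~\ref{th:main1} item~(ii), which follows from the vanishing of the massless GFF charge correlations for non-neutral charge; your Taylor-coefficient argument reaches the same conclusion with slightly more work.
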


Before turning to fermions, we comment here on a few facts the reader might want to keep in mind concerning these theorems.
		
First of all, we 
recall from \eqref{e:truncated} and the discussion following it
that the product notation in the truncated correlation functions 
$\langle \prod_{i=1}^n X_i\rangle^T$ means 
$\langle X_1;X_2;\dots;X_n\rangle^T$,
and that
correspondingly, in item (iii), terms involving powers 
should be interpreted as
\begin{equation}
  \avga{\left(\prod_{i=1}^n X_i\right)Y^l}^T=\avga{X_1;\dots;X_n;Y;\dots;Y}^T,
\end{equation}
where there are $l$ copies of $Y$.
 
Next we mention
that by Lemma \ref{lem:gffintegrability} and our assumptions on $n,q,q'$
the derivatives in item (iii) are indeed finite as they should be.

Finally we mention that in the literature, there certainly exist some results that are
similar to parts of this theorem -- see in particular \cite{MR1672504,MR2461991,MR1777310,1903.01394}.
What we believe is truly new, and critical to our proof of the Coleman correspondence, is that we are able to treat all values of $z\in \R$ and prove analyticity in a neighborhood of the real axis -- not just in a neighborhood of the origin.

\bigskip

We now turn to describing what we need to know about free massive fermions with a finite volume mass term.

\subsection{Free fermion estimates}

As discussed at the beginning of Section~\ref{sec:estimates}, we will establish the equivalence of the sine-Gordon measure
with finite volume interaction with that of Dirac fermions with a finite volume mass term.
We will here choose $\Lambda = \Lambda_L$ to be a disk of radius $L>0$ centered at the origin.
We again take the pragmatic approach of defining the free fermion model with a finite volume mass term,
formally represented by the fermionic path integral with weight \eqref{e:ff-weight}, 
directly through its correlation functions. 
Namely, given a corresponding propagator $S_{\mu \mathbf 1_{\Lambda_L}}$ 
constructed in  Theorem~\ref{thm:ferm} below,
the correlation functions are defined by formulas like \eqref{e:fermdef} and \eqref{e:fermtruncdef}
but now with $S_{\mu \mathbf 1_{\Lambda_{L}}}$ instead of $S$.
In particular, given $n\geq 3$, $f_1,...,f_n\in L_c^\infty(\R^2)$ 
and $\alpha_1,\beta_1,...,\alpha_n,\beta_n\in \{1,2\}$,
the smeared truncated correlation functions are defined by
\begin{multline} \label{e:truncated-muLambda}
  \avga{\prod_{i=1}^n \bar\psi_{\alpha_i}\psi_{\beta_i}(f_i)}^T_{\FF(\mu \mathbf 1_{\Lambda_L})}
  \\ 
  := (-1)^{n+1} \sum_\pi \int_{(\R^2)^n} \prod_{i=1}^n f_{i}(x_i)\prod_{i=1}^{n} S_{\mu\mathbf 1_{\Lambda_L}; \alpha_{\pi^i(1)}\beta_{\pi^{i+1}(1)}}(x_{\pi^i(1)},x_{\pi^{i+1}(1)}),
\end{multline}
where we sum over cyclic permutations $\pi$ -- as we see in our proof of Theorem \ref{thm:ferm}, this is finite for $n\geq 3$.
For $n=2$, the same definition applies to $f_1$ and $f_2$ with disjoint compact supports.
For $n=2$ and $f_1$ and $f_2$ with overlapping supports the above integral is no longer necessarily finite and
we will instead consider the two-point function
with the singularity subtracted, i.e., 
\begin{multline} \label{e:trunc2ptsub}
  \int dx_1\, dx_2\, f_1(x_1) f_2(x_2)
  \\   \times 
  \pa{-S_{\mu\mathbf{1}_{\Lambda_L};\alpha_1\beta_2}(x_1,x_2)S_{\mu\mathbf{1}_{\Lambda_L};\alpha_2\beta_1}(x_2,x_1)
    + S_{0;\alpha_1\beta_2}(x_1,x_2)S_{0;\alpha_2\beta_1}(x_2,x_1)},
\end{multline}
with $S_0$ is given by the right-hand side of \eqref{e:S0-bis-sec2}. This is
formally equal to 
\begin{equation}
  \avg{\bar\psi_{\alpha_1}\psi_{\beta_1}(f_1) \bar\psi_{\alpha_2}\psi_{\beta_2}(f_2)}_{\FF(\mu\mathbf 1_\Lambda)}^T
  -
  \avg{\bar\psi_{\alpha_1}\psi_{\beta_1}(f_1) \bar\psi_{\alpha_2}\psi_{\beta_2}(f_2)}_{\FF(0)}^T
  .
\end{equation}      
Therefore the existence of the propagator $S_{\mu \mathbf 1_{\Lambda_{L}}}$
and some of its basic properties are our main result
concerning such models -- this is summarized in the following theorem. 
Here recall our definition of the Dirac operator $\Dirac$ from \eqref{e:Dirac}.

\begin{theorem} \label{thm:ferm}
  For each $\mu\in \R$ and $L>0$, the Dirac operator with finite volume mass term, $i\Dirac+\nmu \mathbf 1_{\Lambda_L}$,
  where $\Lambda_L = \{x\in \R^2 : |x|\leq L\}$,
  has a fundamental solution $S_{\mu \mathbf 1_{\Lambda_L}}(x,y)$, $x\neq y$,
  with values in $\C^{2\times 2}$, namely 
  \begin{equation}
    (i\Dirac_x+\nmu \mathbf 1_{\Lambda_L}(x))S_{\mu \mathbf 1_{\Lambda_L}}(x,y)=\delta(x-y) \qquad \text{and} \qquad \lim_{x\to\infty}S_{\mu\mathbf 1_{\Lambda_L}}(x,y)=0,
  \end{equation}	   
  such that given $n \geq 3$, $f_1,...,f_n\in L_c^\infty(\Lambda_L)$ 
  and $\alpha_1,\beta_1,...,\alpha_n,\beta_n\in \{1,2\}$,
  the smeared truncated correlation functions \eqref{e:truncated-muLambda}
  satisfy the following properties:
  \begin{enumerate}
  \item The function
    \begin{equation}
      \mu \mapsto \avga{\prod_{i=1}^n \bar\psi_{\alpha_i}\psi_{\beta_i}(f_i)}^T_{\FF(\mu \mathbf 1_{\Lambda_L})}
    \end{equation}
    has an analytic continuation into an $L$-dependent neighborhood of the real axis.
    In particular, the smeared truncated correlation function is finite.
    (For $\mu=0$, $S_{\mu \mathbf 1_\Lambda}=S_0$.)
    
  \item For $l\geq 1$,
    \begin{multline}
      \left.\frac{d^{l}}{d\mu^{l}}\right|_{\mu=0} \avga{\prod_{i=1}^n \bar\psi_{\alpha_i}\psi_{\beta_i}(f_i)}^T_{\FF(\mu \mathbf 1_{\Lambda_L})}
      \\ 
      =
      \avga{\prod_{i=1}^n \bar\psi_{\alpha_i}\psi_{\beta_i}(f_i)\big(\bar\psi_{1}\psi_{1}(\mathbf 1_{\Lambda_L})+\bar\psi_{2}\psi_{2}(\mathbf 1_{\Lambda_L})\big)^{l}}^T_{\FF(0)}.
    \end{multline}
  \item
    For any $\mu\in \R$, as $L\to\infty$, 
    \begin{equation}
      \avga{\prod_{i=1}^n \bar\psi_{\alpha_i}\psi_{\beta_i}(f_i)}^T_{\FF(\mu \mathbf 1_{\Lambda_L})}\rightarrow \avga{\prod_{i=1}^n \bar\psi_{\alpha_i}\psi_{\beta_i}(f_i)}^T_{\FF(\mu)}.
    \end{equation}
    On the right hand side, the correlation functions with index $\FF(\mu)$ are defined by 
    the propagator \eqref{e:Smu} of Dirac fermions with infinite volume mass term $\mu$.
  \end{enumerate}
  For $n = 2$, the same statements remain true if $f_1$ and $f_2$ have disjoint compact supports,
  or if the truncated two-point function is replaced by \eqref{e:trunc2ptsub} in (i) and on the left-hand sides of (ii) and (iii),
  and analogously on the right-hand side of (iii).
\end{theorem}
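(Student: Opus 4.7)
The plan is to realize $S_{\mu\mathbf 1_{\Lambda_L}}$ as the integral kernel of the inverse of $i\Dirac+\mu\mathbf 1_{\Lambda_L}$ via an analytic Fredholm argument built on the explicit massless propagator $S_0$, and then to read off items (i)--(iii) from the resulting analytic structure together with the exponential decay of the infinite-volume propagator $S_\mu$. I would first introduce the integral operator $T_L$ on $L^2(\Lambda_L,\C^2)$ with kernel $S_0(x,y)$. Because $|S_0(x,y)|\leq C|x-y|^{-1}$ is weakly singular in two dimensions, Schur's test shows that $T_L$ is the operator-norm limit of the Hilbert--Schmidt truncations obtained by cutting out $|x-y|<\varepsilon$, so $T_L$ is compact. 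Applying $S_0$ to $(i\Dirac+\mu\mathbf 1_{\Lambda_L})S_{\mu\mathbf 1_{\Lambda_L}}(\cdot,y)=\delta(\cdot-y)$ gives the integral equation $(I+\mu T_L)S_{\mu\mathbf 1_{\Lambda_L}}(\cdot,y)=S_0(\cdot,y)$ on $\Lambda_L$, motivating the ansatz
\begin{equation*}
S_{\mu\mathbf 1_{\Lambda_L}}(x,y)=S_0(x,y)-\mu\int_{\Lambda_L}S_0(x,z)\bigl[(I+\mu T_L)^{-1}S_0(\cdot,y)\bigr](z)\,dz.
\end{equation*}
By analytic Fredholm theory, $\mu\mapsto(I+\mu T_L)^{-1}$ is meromorphic on $\C$ with poles at $\mu=-1/\lambda$ for each non-zero eigenvalue $\lambda$ of $T_L$.

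The main obstacle is to show that these poles avoid the real axis. Suppose $(I+\mu T_L)\phi=0$ for some $\mu\in\R\setminus\{0\}$ and $\phi\in L^2(\Lambda_L,\C^2)$, and extend it to $\tilde\phi(x):=-\mu\int_{\Lambda_L}S_0(x,y)\phi(y)\,dy$ on $\R^2$. Then $\tilde\phi|_{\Lambda_L}=\phi$, $(i\Dirac+\mu\mathbf 1_{\Lambda_L})\tilde\phi=0$ as a distribution, $\tilde\phi$ is continuous across $|z|=L$, and $\tilde\phi(x)=O(|x|^{-1})$ at infinity. Outside $\Lambda_L$ one has $\partial\tilde\phi_1=0$ and $\bar\partial\tilde\phi_2=0$ with decay, so $\tilde\phi_1=\sum_{n\geq 1}a_n\bar z^{-n}$ and $\tilde\phi_2=\sum_{n\geq 1}b_n z^{-n}$; inside $\Lambda_L$, $(-\Delta+\mu^2)\tilde\phi_1=0$ with regularity at the origin forces $\tilde\phi_1=\sum_{n\in\Z}c_n I_{|n|}(|\mu|r)e^{in\theta}$, and $\tilde\phi_2=-(2/\mu)\partial\tilde\phi_1$ together with the recurrence $I_n'(x)+(n/x)I_n(x)=I_{n-1}(x)$ yields $\tilde\phi_2=-\mathrm{sgn}(\mu)\sum_{n\in\Z}c_{n+1}I_{|n|}(|\mu|r)e^{in\theta}$. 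Matching Fourier modes at $r=L$ and using $I_{|n|}(|\mu|L)>0$: the exterior of $\tilde\phi_1$ has no modes with $n\leq 0$, forcing $c_n=0$ for $n\leq 0$; the exterior of $\tilde\phi_2$ has no modes with $n\geq 0$, forcing $c_{n+1}=0$ for $n\geq 0$, i.e.\ $c_n=0$ for $n\geq 1$. Hence $\tilde\phi\equiv 0$, so $I+\mu T_L$ is invertible on the entire real axis and $\mu\mapsto S_{\mu\mathbf 1_{\Lambda_L}}(x,y)$ extends holomorphically to an $L$-dependent complex neighbourhood of $\R$.

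Items (i) and (ii) then follow quickly. The truncated correlation \eqref{e:truncated-muLambda} is a finite polynomial in the analytic kernels $S_{\mu\mathbf 1_{\Lambda_L}}$, and the cyclic products $\prod_i|x_{\pi^i(1)}-x_{\pi^{i+1}(1)}|^{-1}$ are locally integrable on $(\R^2)^n$ for $n\geq 3$ (each two-dimensional variable integrates to at worst a logarithm), giving both analyticity in $\mu$ and finiteness. For the $n=2$ case with overlapping supports, the difference $R_\mu:=S_{\mu\mathbf 1_{\Lambda_L}}-S_0$ is at most logarithmically singular on the diagonal (since $(i\Dirac)R_\mu=-\mu\mathbf 1_{\Lambda_L}S_{\mu\mathbf 1_{\Lambda_L}}$ is locally in $L^p$ for any $p<2$, and applying the Riesz-type inverse $(i\Dirac)^{-1}$ boosts this to $L^q$ for any $q<\infty$), so the subtraction in \eqref{e:trunc2ptsub} removes exactly the non-integrable $|x-y|^{-2}$ piece. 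Item (ii) is obtained by expanding the Neumann series $(I+\mu T_L)^{-1}=\sum_{k\geq 0}(-\mu)^kT_L^k$ and substituting into \eqref{e:truncated-muLambda}; the $\mu^l$-coefficient of the resulting polynomial in $S_0$-kernels, after regrouping cyclic products, is precisely the representation \eqref{e:fermtruncdef} of $\langle\cdots(\bar\psi_1\psi_1(\mathbf 1_{\Lambda_L})+\bar\psi_2\psi_2(\mathbf 1_{\Lambda_L}))^l\rangle^T_{\FF(0)}$, an identity I would verify using the Grassmann formalism of Appendix~\ref{app:ferm}.

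Finally, (iii) follows from the second resolvent identity
\begin{equation*}
S_{\mu\mathbf 1_{\Lambda_L}}(x,y)-S_\mu(x,y)=\mu\int_{|z|>L}S_{\mu\mathbf 1_{\Lambda_L}}(x,z)\,S_\mu(z,y)\,dz,
\end{equation*}
combined with the exponential decay $|S_\mu(z,y)|\leq Ce^{-|\mu||z-y|/2}$ from the $K_0$ asymptotics in \eqref{e:Smu} and a uniform-in-$L$ bound $|S_{\mu\mathbf 1_{\Lambda_L}}(x,z)|\leq C|x-z|^{-1}$ for $x$ in the (compact) test-function support and $z\notin\Lambda_L$, obtained from the integral equation and the invertibility already established. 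For fixed $\mu$ and $x,y$ in a compact set the right-hand side vanishes exponentially as $L\to\infty$; dominated convergence applied to \eqref{e:truncated-muLambda} (using local integrability for $n\geq 3$ or the subtraction for $n=2$) completes the proof.
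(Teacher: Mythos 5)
Your approach is genuinely different from the paper's. You build $S_{\mu\mathbf 1_{\Lambda_L}}$ from the second-kind integral equation $(I+\mu T_L)\,S_{\mu\mathbf 1_{\Lambda_L}}(\cdot,y)|_{\Lambda_L}=S_0(\cdot,y)|_{\Lambda_L}$ and invoke analytic Fredholm theory, while the paper instead gives a fully explicit construction via the Dirichlet eigenfunctions of $-\Delta$ on the disk (the building blocks $E_j$, $F_j$, $R_{m;ij}$ in \eqref{e:Ej}, \eqref{e:Fj}, \eqref{e:R11}, \eqref{e:R21}) and deduces all estimates from that expansion. Your route is more abstract and would generalize readily to non-circular $\Lambda$ (since it avoids Bessel functions), at the cost of requiring a separate argument that the Fredholm poles avoid $\R$ and of losing the quantitative $P(L,|\mu|)$ control that the paper gets for free from the series. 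Your pole-avoidance argument via the $I_{|n|}(|\mu|r)e^{in\theta}$ expansion inside and the holomorphic/antiholomorphic $O(|x|^{-1})$ decay outside is correct in spirit and, interestingly, is the direct analogue of what the paper effectively exploits through its use of the $e_{n,k}$ and $f_{n,k}$ bases. Items (ii) (Neumann series $\leftrightarrow$ cyclic-product expansion) and (i) are in substance the same as the paper's Lemma~\ref{le:f2} and Lemma~\ref{le:f1}, just phrased in resolvent language.

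There are, however, two gaps you would need to close. First, in the pole-avoidance argument you assert that $\tilde\phi$ is continuous across $|z|=L$ and that the interior Bessel series can be matched termwise against the exterior Laurent series; this requires regularity of the $L^2$ eigenfunction $\phi$ up to the boundary, which you should justify (e.g.\ by bootstrapping through the Cauchy transform $\phi\mapsto\int S_0(\cdot,y)\phi(y)\,dy$, which maps $L^p(\Lambda_L)$ into Hölder spaces for $p>2$). Second, and more seriously, your resolvent identity for (iii) is written as
\begin{equation*}
  S_{\mu\mathbf 1_{\Lambda_L}}(x,y)-S_\mu(x,y)=\mu\int_{|z|>L}S_{\mu\mathbf 1_{\Lambda_L}}(x,z)\,S_\mu(z,y)\,dz,
\end{equation*}
with $S_{\mu\mathbf 1_{\Lambda_L}}(x,z)$ evaluated at an \emph{exterior} source point $z\notin\Lambda_L$. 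Your construction produces $S_{\mu\mathbf 1_{\Lambda_L}}(\cdot,y)$ only for $y\in\mathrm{int}(\Lambda_L)$, and since $i\Dirac+\mu\mathbf 1_{\Lambda_L}$ is not self-adjoint the Green's function is not obviously symmetric, so the object $S_{\mu\mathbf 1_{\Lambda_L}}(x,z)$ with $z$ outside is undefined in your setup; the claimed uniform-in-$L$ bound $|S_{\mu\mathbf 1_{\Lambda_L}}(x,z)|\leq C|x-z|^{-1}$ is also not established (uniform control of $\|(I+\mu T_L)^{-1}\|$ as $L\to\infty$ is nontrivial, and the paper's analogous estimate in Theorem~\ref{th:gestimate} carries a polynomial factor $P(L,|\mu|)$). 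Both problems disappear if you switch to the other ordering of the second resolvent identity,
\begin{equation*}
  S_{\mu\mathbf 1_{\Lambda_L}}(x,y)-S_\mu(x,y)=\mu\int_{|u|>L}S_\mu(x,u)\,S_{\mu\mathbf 1_{\Lambda_L}}(u,y)\,du,
\end{equation*}
which only needs $S_{\mu\mathbf 1_{\Lambda_L}}(u,y)$ for $u\notin\Lambda_L$ and $y\in\mathrm{int}(\Lambda_L)$ — a quantity that you can bound by the maximum principle (as the paper does in Proposition~\ref{pr:Gbounds}) since $S_{\mu\mathbf 1_{\Lambda_L}}(\cdot,y)$ is harmonic outside $\Lambda_L$ — and tolerates any $L$-polynomial growth because the Bessel-$K_0$ decay of $S_\mu$ wins.
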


We again comment on some issues regarding this theorem. 

First of all, one could readily formulate a non-smeared version of this result as well, but for the proof of Theorem~\ref{thm:correspondence}, the smeared versions of the correlation functions are the relevant ones.

Secondly, in item (ii), the correct way to understand the term on the right hand side is that one expands the power, uses multilinearity, and \eqref{e:fermdef}. Moreover, the fact that the right hand side is finite 
follows from the last statement in Corollary~\ref{cor:massless-corr}.
	
Finally, we mention that given that this theorem is essentially about controlling a finite volume approximation to massive free fermions,
we expect that at least parts of this result is well known to some experts. Unfortunately we were not unable to find a suitable reference for the results we need.
Our proof makes use of the convenient domain of a disk, but we
expect that the result also holds for much more general domains.

\subsection{Proof of Theorems~\ref{thm:correspondence} and \ref{thm:twopoint}}

Assuming Theorems~\ref{thm:cf} and~\ref{thm:ferm},
we are now in a position to prove the Coleman correspondence at $\beta=4\pi$.

\begin{proof}[Proof of Theorem~\ref{thm:correspondence}]
	It suffices to show that
	the correlation functions of the massless sine-Gordon model with an interaction term supported in $\Lambda_L$
	at $\beta=4\pi$
	and those of free Dirac fermions with a mass term supported in $\Lambda_L$ agree for all $z \in \R$ (and corresponding $\mu =Az$)
	and all $L<\infty$.
	Indeed, by Theorem \ref{thm:ferm} item (iii),
	the smeared truncated correlation functions of free Dirac fermions with a $\Lambda_L$ mass term
	converge as $L\to\infty$ to their infinite volume versions, 
	and hence, the identification in finite volume implies that
	the sine-Gordon correlation functions converge to the same limit.
	
	For the equivalence in finite volume, let us write $O_{k}^B$ for one of the quantities $\wick{e^{\pm i \sqrt{4\pi}\varphi}}$, $\partial \varphi$, or $\bar \partial \varphi$ on the sine-Gordon side, and write $O_{k}^F$ for the corresponding one on the fermionic side -- the correspondence being the one given by the statement of Theorem \ref{thm:correspondence}. 
        Thus $\wick{e^{i\sqrt{4\pi}\varphi}}$ corresponds to $A\bar\psi_1\psi_1$,
        $\wick{e^{-i\sqrt{4\pi}\varphi}}$ corresponds to $A\bar\psi_2\psi_2$,
        $-i\partial \varphi$ corresponds to $B\bar\psi_2\psi_1$,
        and $+i\bar\partial \varphi$ corresponds to $B\bar\psi_1\psi_2$.
        We also let $f_j$ be compactly supported and either essentially bounded or smooth (depending on whether it is a charge or gradient observable that is acting on it)
        that $O_{k}^B$ and $O_{k}^F$ act on.
	
        Let us first focus on the case where 
        $n+n'+q+q'\geq 3$ and let us not assume that the supports of the test functions are disjoint.
        To see that the truncated correlation functions agree for all $z \in \R$ when $L<\infty$,
	we use that both are analytic in $z$ respectively $\mu$ in a complex neighbourhood of the real axis,
	by Theorem~\ref{thm:cf} item (ii) and Theorem~\ref{thm:ferm} item (i).
	By unique analytic continuation, it therefore suffices to verify that they agree in a complex
	neighbourhood of $z=\mu=0$. This in turn holds if the truncated correlation functions agree at
	$z=0$ and all $z$-derivatives at $z=0$ agree.
	That they agree for $z=0$ is Corollary~\ref{cor:massless-corr}.
	On the fermionic side, the $\mu$-derivatives at $\mu=0$ are given
	by Theorem \ref{thm:ferm} item (ii) as
	\begin{align} \label{e:ferm-muderiv-bis}
	&\frac{d^{l}}{dz^{l}}
	\avga{\prod_{{k}=1}^{n+n'+q+q'} O_{k}^F(f_{k})}^T_{\FF(Az \mathbf 1_{\Lambda_{L}})} \Bigg|_{z=0}
        \nnb
	&\qquad = A^l
	\avga{\prod_{{k}=1}^{n+n'+q+q'} O_{k}^F(f_{k}) \Big(\bar\psi_1\psi_1(\mathbf 1_{\Lambda_{L}})+\bar\psi_2\psi_2(\mathbf 1_{\Lambda_{L}})\Big)^{l}}^T_{\FF(0)}
	.
	\end{align}
	On the sine-Gordon side, the ${z}$-derivatives at ${z}=0$ are given by
	Theorem \ref{thm:cf} item (iii) as
	\begin{align}
        &\frac{d^{l}}{dz^{l}}
	\avga{\prod_{{k}=1}^{n+n'+q+q'} O_{k}^B(f_{k})}^T_{\SG(\beta,z |{\Lambda_{L}})}\Bigg|_{z=0}
        \nnb
        &\qquad =
	\avga{\prod_{{k}=1}^{n+n'+q+q'} O_{k}^B(f_{k})\, \Big(\wick{e^{i\sqrt{\beta}\varphi}(\mathbf 1_{{\Lambda_{L}}})}+\wick{e^{-i\sqrt{\beta}\varphi}(\mathbf 1_{{\Lambda_{L}}})}\Big)^{l}}_{\GFF}^T.
	\end{align}
	That these are equal when $\beta=4\pi$ again follows from Corollary \ref{cor:massless-corr}.

        The same argument is valid for 
        $n+n'=q+q'=1$. Moreover, if we further assume that $f_k$ have disjoint supports, then the same argument works also for general $n+n'+q+q'=2$.
\end{proof}

The remaining $(n+n',q+q')=(0,2)$ case with overlapping test functions, i.e., Theorem~\ref{thm:twopoint},
works similarly, as follows.

\begin{proof}[Proof of Theorem~\ref{thm:twopoint}]
By Theorem~\ref{thm:cf}, the relevant finite volume $\epsilon,m\to 0$ limits exist on the sine-Gordon side. Moreover, by Lemma \ref{le:gradint}, it suffices to show that
\begin{align} \label{e:twopoint1-pf}
  &\avg{\partial\varphi(f_1) \partial\varphi(f_2)}_{\SG(4\pi,z)}
    - \avg{\partial\varphi(f_1) \partial\varphi(f_2)}_{\GFF}
    \nnb
  & 
    = -\frac{B^2}{\pi^2} \int dx_1 \, dx_2\, f_1(x_1) f_2(x_2) \,
    \pa{(\partial_{x_1} K_0(A|z||x_1-x_2|))^2-\frac{1}{4(x_1-x_2)^2}}
    ,
  \\
  \label{e:twopoint2-pf}
  &\avg{\partial\varphi(f_1) \bar\partial\varphi(f_2)}_{\SG(4\pi,z)}
    -
    \avg{\partial\varphi(f_1) \bar\partial\varphi(f_2)}_{\GFF}
    \nnb
  & 
    = -\frac{B^2A^2 z^2}{4\pi^2} \int dx_1 \, dx_2\, f_1(x_1) f_2(x_2)\,
    (K_0(A|z||x_1-x_2|))^2
    .
\end{align}
The claim then follows from the result about the GFF two-point function from
Lemma~\ref{le:gradint}.
  
The proof of \eqref{e:twopoint1-pf} and \eqref{e:twopoint2-pf} is analogous to that of Theorem~\ref{thm:correspondence},
as follows.
To be concrete, we focus on the proof of  \eqref{e:twopoint1-pf}; the other one is analogous.
By Theorem \ref{thm:ferm}, item (iii), it suffices to show that
\begin{multline}
  \avg{\partial\varphi(f_1) \partial\varphi(f_2)}_{\SG(4\pi,z|\Lambda_L)}
        - \avg{\partial\varphi(f_1) \partial\varphi(f_2)}_{\GFF}
  = B^2 \int dx_1 \, dx_2\, f_1(x_1) f_2(x_2) \,
  \\ \times 
  \pa{-S_{Az\mathbf 1_\Lambda;21}(x_1,x_2)S_{Az\mathbf 1_\Lambda;21}(x_2,x_1)+ S_{0;21}(x_1,x_2)S_{0;21}(x_1,x_2) }.
\end{multline}
For $z=0$, this claim is trivial as both sides vanish then.
Theorems~\ref{thm:cf} and and the special $n=2$ case of Theorem~\ref{thm:ferm} now again imply that both sides are
analytic in $z$ and that their derivatives are identical, using  Corollary \ref{cor:massless-corr}.
\end{proof}

\subsection{Proof of Theorems~\ref{thm:gradientlim} and \ref{thm:fieldlim}}

For the proofs of the results stated in Section~\ref{sec:intro-gradient},
we need the following correlation inequalities from \cite{MR496191}.

First note that the $\varphi\mapsto -\varphi$ symmetry of the measure implies
$\avg{e^{i\varphi(f)}}_{\SG(\beta,z|\epsilon,m,\Lambda)} = \avg{\cos(\varphi(f))}_{\SG(\beta,z|\epsilon,m,\Lambda)}$.
For $\nz>0$, it then follows from \cite[Corollary~3.2]{MR496191} that,
as a function of $m>0$ and $\nz>0$ and the set $\Lambda$,
\begin{equation} \label{e:monotone}
  \avg{e^{i\varphi(f)}}_{\SG(\beta,z|\epsilon,m,\Lambda)} \text{ is increasing, and}
  \qquad
  \avg{\varphi(g)^2}_{\SG(\beta,z|\epsilon,m,\Lambda)}  \text{ is decreasing}.
\end{equation}
Indeed, by rescaling $\varphi$ by $\sqrt{\beta}$, in the notation of \cite[Section~3]{MR496191}, one has
\begin{equation}
  \avg{F(\varphi/\sqrt{\beta})}_{\SG(\beta,z | \epsilon,m,\Lambda)}
  = \avg{F(\varphi)}_{C,\rho},
\end{equation}
where
\begin{equation}
  C = \beta \int_{\epsilon^2}^\infty dt\, e^{t\Delta-tm^2},
  \qquad \rho(dx) = 2\nz \epsilon^{-\beta/4\pi} \mathbf 1_{\Lambda}(x) \, dx,
\end{equation}
and \cite[Corollary~3.2]{MR496191} states that if $\rho_1 \leq \rho_2$ and $C_2 \leq C_1$ then
\begin{equation}
  \avg{\cos(\varphi(g))}_{C_2,\rho_2} \geq \avg{\cos(\varphi(g))}_{C_1,\rho_1}, \qquad
  \avg{\varphi(g)^2}_{C_2,\rho_2} \leq \avg{\varphi(g)^2}_{C_1,\rho_1}.
\end{equation}
The monotonicity \eqref{e:monotone} is immediate from this.

As a particular case of \eqref{e:monotone} we get the following infrared bound: 
for any $f\in C_c^\infty(\R^2)$, $m,\nz>0$ and $\Lambda$, we have
\begin{equation}
  \avga{\varphi(f)^2}_{\SG(\beta,z|\epsilon,m,L)}\leq \avga{\varphi(f)^2}_{\GFF(\epsilon,m)}.
  \label{e:GFF2ptbound}
\end{equation} 

\begin{proof}[Proof of Theorems~\ref{thm:gradientlim} and \ref{thm:fieldlim}]
	Since the proofs of Theorems~\ref{thm:gradientlim} and~\ref{thm:fieldlim} are essentially identical,
	we focus on the first theorem and leave the modifications for the second theorem to the reader.
	
	By Theorem~\ref{thm:sgphi}, for any $f\in C_c^\infty(\R^2)$ with $\int f\, dx=0$, the limit
	\begin{equation}
		\avg{e^{i\varphi(f)}}_{\SG(\beta,z|\Lambda)}:=
		\lim_{m\to 0} \lim_{\epsilon\to 0}\avg{e^{i\varphi(f)}}_{\SG(\beta,z|\epsilon,m,\Lambda)}
	\end{equation}
	exists and is invariant under $z \mapsto -z$. Thus without loss of generality we can and will assume $\nz>0$.
	By \eqref{e:monotone}, it follows that $\avg{e^{i\varphi(f)}}_{\SG(\beta,z|\Lambda)}$ is monotone in $\Lambda$,
	and thus converges as $\Lambda \uparrow \R^2$ to a limit which we denote by $\avg{e^{i\varphi(f)}}_{\SG(\beta,z)}$.
	
	The limit is trivially bounded above by $1$ and the map $f\mapsto \avg{e^{i\varphi(f)}}_{\SG(\beta,z)}$ satisfies
	the following continuity estimate: for any $g \in C_c^\infty(\R^2)$ with $\int dx\, g = 0$,
	\begin{equation} \label{e:laplacecont}
		|\avg{e^{i\varphi(f+g)}}_{\SG(\beta,z)} - \avg{e^{i\varphi(f)}}_{\SG(\beta,z)}|
		\leq \frac12 \avg{\varphi(g)^2}_{\SG(\beta,z)} \leq \frac12 (g,(-\Delta)^{-1}g).
	\end{equation}
	Indeed, for $\epsilon,m>0$ and $\Lambda$ finite,
	the analogue of the first inequality is immediate,
	and the second inequality follows from \eqref{e:monotone}. The claimed inequality then follows by taking the
	limits in $\epsilon,m,\Lambda$.
	
	In particular, if functions $g_k \in C_c^\infty(\R^2)$ with $\int dx \, g_k =0$ converge to $0$ in the topology of $\mathcal S(\R^2)$,
	the right-hand side of \eqref{e:laplacecont} converges to $0$. Since $C_c^\infty(\R^2)$ is dense in $\mathcal S(\R^2)$
	(and likewise for the subspaces of functions which integrate to $0$),
	it follows that $\avg{e^{i\varphi(f)}}_{\SG(\beta,z)}$ extends to a continuous
	functional on $\mathcal S'(\R^2)/\text{constants}$
        (the topological dual space of the closed subspace of integral-$0$ functions in $\mathcal S(\R^2)$).
	Minlos's theorem then implies that $\avg{e^{i\varphi(f)}}_{\SG(\beta,z)}$ is the characteristic functional
	of a probability measure on $\mathcal S'(\R^2)/\text{constants}$.
	
	That the limit is Euclidean invariant is a standard argument that
	follows from the Euclidean invariance of the GFF and the monotonicity of
	$\avg{e^{i\varphi(f)}}_{\SG(\beta,z|\Lambda)}$ in $\Lambda$ for any increasing family of sets,
	see, e.g., \cite[Section~VIII.6]{MR0489552}.
	
	Finally, the bounds \eqref{e:corrbd1}--\eqref{e:corrbd3} are immediate from the monotonicity of \eqref{e:monotone}.
\end{proof}

\subsection{Proof of Theorem~\ref{thm:field} and Corollary~\ref{cor:zle0}}

\begin{proof}[Proof of Theorem~\ref{thm:field}]
  The main step of the proof will be to show that \eqref{e:twopoint-fourier} holds
  for functions with integral 0, i.e.,
  for all $f_1,f_2 \in \mathcal{S}(\R^2)$ with $\int dx \, f_i = 0$,
    \begin{equation}
    \avg{\varphi(f_1)\varphi(f_2)}_{\SG(4\pi,z)} =
    \lim_{L\to\infty}\lim_{m\to 0} \lim_{\epsilon\to 0} \avg{\varphi(f_1)\varphi(f_2)}_{\SG(4\pi,z|\epsilon,m,L)}.
  \end{equation}
  To this end,
let us first further assume that there are $g_i,h_i \in C_c^\infty(\R^2)$ such that
\begin{equation} \label{e:fgh}
  f_i = \partial g_i + \bar\partial h_i.
\end{equation}
In this case, we note that, by integrating by parts,
\begin{align}	
  \avga{\varphi(f_1)\varphi(f_2)}_{\SG(4\pi,z|\epsilon,m,L)}
  &=\avga{\partial\varphi(g_1)\partial\varphi(g_2)}_{\SG(4\pi,z|\epsilon,m,L)}
    \nnb &\quad 
    +\avga{\bar\partial\varphi(h_1)\bar\partial\varphi(h_2)}_{\SG(4\pi,z|\epsilon,m,L)}
    \nnb
  &\quad +\avga{\partial\varphi(g_1)\bar\partial\varphi(h_2)}_{\SG(4\pi,z|\epsilon,m,L)}
    \nnb &\quad 
    +\avga{\bar\partial\varphi(h_1)\partial\varphi(g_2)}_{\SG(4\pi,z|\epsilon,m,L)},
\end{align} 
and we find that, by Theorem \ref{thm:twopoint}, the $\epsilon,m\to 0$, $L\to\infty$ limits exist and
\begin{align}
  \avg{\varphi(f_1)\varphi(f_2)}_{\SG(4\pi,z)}
  &=
    \avg{\partial\varphi(g_1)\partial\varphi(g_2)}_{\SG(4\pi,z)}
    \nnb &\quad 
  +
  \avg{\bar\partial\varphi(h_1)\bar\partial\varphi(h_2)}_{\SG(4\pi,z)}
  \nnb
  &\quad+
    \avg{\partial\varphi(g_1)\bar\partial\varphi(h_2)}_{\SG(4\pi,z)}
    \nnb &\quad 
  +
    \avg{\bar\partial \varphi(h_1)\partial\varphi(g_2)}_{\SG(4\pi,z)}
    .
\end{align}
To express the right-hand side as in \eqref{e:twopoint-fourier}--\eqref{e:2ptsinh},
let us first look at the $g_1,g_2$-term -- the remaining terms are similar.
Recalling that $\frac{1}{2\pi}K_0(A|z||x-y|)$ is the covariance of the massive free field, we have the following Fourier space representation of $K_0$:
\begin{equation}
  K_0(A|z||x-y|)=\int_{\R^2} \frac{dp}{2\pi}\, \frac{e^{-ip\cdot(x-y)}}{|p|^2+A^2|z|^2},
\end{equation}
where the integral is understood either in principal value sense or in the sense of distributions. Thus with the convention $\hat f(p)=\int_{\R^2}f(x)e^{-ip\cdot x}dp$ for the Fourier transform and $2\hat \partial = i\bar p$, Theorem~\ref{thm:twopoint} and a routine calculation shows that (with integrals understood in a principal value sense)
\begin{align}
	  &\avg{\partial\varphi(g_1)\partial\varphi(g_2)}_{\SG(4\pi,z)}
          \nnb
          &=\frac{1}{16\pi^3}\int_{\R^2\times \R^2}dp_1\, dp_2\, \hat g_1(p_1+p_2)\hat g_2(-p_1-p_2)\frac{\bar p_1\bar p_2}{(|p_1|^2+A^2|z|^2)(|p_2|^2+A^2|z|^2)}\notag\\
	  &=\int_{\R^2} \frac{dp}{(2\pi)^2}\,\hat g_1(p)\hat g_2(-p)\int_{\R^2}\frac{dq}{4\pi} \frac{\bar q(\bar p-\bar q)}{(|q|^2+A^2|z|^2)(|p-q|^2+A^2|z|^2)}\\
	  &\notag =\int_{\R^2} \frac{dp}{(2\pi)^2}\,\widehat {\partial g_1}(p)\widehat{ \partial g_2}(-p)\frac{1}{\pi\bar p^2}\int_{\R^2}dq \frac{\bar q(\bar p-\bar q)}{(|q|^2+A^2|z|^2)(|p-q|^2+A^2|z|^2)}\\
	  &=:\int_{\R^2} \frac{dp}{(2\pi)^2}\,\widehat {\partial g_1}(p)\widehat{ \partial g_2}(-p)\hat C_{A|z|}(p).\notag 
\end{align}
The $\hat C$ on the right-hand side can be computed as follows.
Going into polar coordinates, scaling the radial variable, and translating the angular variable shows that
\begin{equation}
	\hat C_{A|z|}(p)=\frac{2}{|p|^2}\int_0^\infty dr\, \frac{r}{r^2+\mu_p^2} \int_0^{2\pi}\frac{dt}{2\pi}\frac{e^{-it}r(1-re^{-it})}{1+r^2-2r\cos t+\mu_p^2}
\end{equation}
where $\mu_p=A|z|/|p|$.
To evaluate the $t$-integral through the residue theorem, we note that out of the two poles for $\eta =e^{-it}$,
\begin{equation}
  \eta=\frac{1+\mu_p^2+r^2\pm \sqrt{(1+\mu_p^2+r^2)^2-4r^2}}{2r},
\end{equation}
only the minus-one is inside the unit disk, and we thus find 
\begin{multline}
  \hat C_{A|z|}(p)=\frac{2}{|p|^2}\int_0^\infty dr \, \frac{r}{r^2+\mu_p^2}\oint_{|\eta|=1}\frac{d\eta}{2\pi i \eta}\frac{\eta r(1-r\eta)}{1+r^2-r(\eta+\eta^{-1})+\mu_p^2} \\ 
  =\frac{2}{|p|^2}\int_0^\infty dr \, \frac{r}{r^2+\mu_p^2}\frac{-\mu_p^2+r^2-(\mu_p^2+r^2)^2+(\mu_p^2+r^2)\sqrt{(1+\mu_p^2+r^2)^2-4r^2}}{2\sqrt{(1+\mu_p^2+r^2)^2-4r^2}}.
\end{multline}
A straightforward (but slightly tedious) calculation shows that the last integrand can be written as
\begin{multline}
  \frac{1}{4}\partial_r \Bigg(r^2-\sqrt{(1+\mu_p^2)^2+2(\mu_p^2-1)r^2+r^4}
    -\frac{2\mu_p^2\log(r^2+\mu_p^2)}{\sqrt{1+4\mu_p^2}}
    \\ 
    +\frac{2\mu_p^2\log(1+3\mu_p^2-r^2+\sqrt{1+4\mu_p^2}\sqrt{(1+\mu_p^2)^2+2(\mu_p^2-1)r^2+r^4})}{\sqrt{1+4\mu_p^2}}\Bigg)
    \\
    =\frac{r}{r^2+\mu_p^2}\frac{-\mu_p^2+r^2-(\mu_p^2+r^2)^2+(\mu_p^2+r^2)\sqrt{(1+\mu_p^2+r^2)^2-4r^2}}{2\sqrt{(1+\mu_p^2+r^2)^2-4r^2}},
\end{multline}
from which we see after another slightly tedious calculation that $\hat C_{A|z|}(p)$ equals
\begin{align}
  \frac{1}{ |p|^2}\Bigg(1+\frac{\mu_p^2}{\sqrt{1+4\mu_p^2}}\left[\log \mu_p^2+\log(\sqrt{1+4\mu_p^2}-1)-\log(1+3\mu_p^2+(1+\mu_p^2)\sqrt{1+4\mu_p^2})\right]\Bigg).
\end{align}
Finally, an elementary calculation shows that 
\begin{equation}
	\frac{x^3+3x+(x^2+1)\sqrt{x^2+4}}{\sqrt{x^2+4}-x}=\left(\frac{x}{2}+\sqrt{\frac{x^2}{4}+1}\right)^{4},
\end{equation}
from which we can deduce \eqref{e:2ptsinh} with another routine calculation.

We see in particular from this that $\hat C_{A|z|}$ is bounded for $|z|>0$.
A similar calculation shows that 
\begin{align}
	  \avg{\partial\varphi(g_1)\bar\partial\varphi(h_2)}_{\SG(4\pi,z)}=\int_{\R^2}\frac{dp}{(2\pi)^2}\, \widehat {\partial g_1}(p)\widehat{ \bar\partial h_2}(-p)\hat C_{A|z|}(p),
\end{align}
with the same $\hat C_{A|z|}$. Thus taking complex conjugates of these identities, we find that for our $f_i$ given by $f_i=\partial g_i+\bar \partial h_i$, 
\begin{equation}
	\avga{\varphi(f_1)\varphi(f_2)}_{\SG(4\pi,z)}=\int_{\R^2} \frac{dp}{(2\pi)^2} \, \hat f_1(p)\hat f_2(-p)\hat C_{A|z|}(p),
	\label{e:target}
\end{equation}
which is precisely the claim for the $f_i$ which can be represented this way.

Finally, to extend the statement to arbitrary $f_i \in C_c^\infty(\R^2)$ or $f_i \in \mathcal{S}(\R^2)$ with $\int dx\, f_i=0$, we note that
such $f_i$ can be written as in \eqref{e:fgh} but with $g_i$ and $h_i$ in $\mathcal{S}(\R^2)$,
by Taylor expanding $\hat f_i$.
Thus it remains to extend our argument to Schwartz functions. For this, given $f_i\in \mathcal S(\R^2)$ satisfying $\int f_i=0$, let $g_i,h_i\in \mathcal S(\R^2)$ be such that we have the representation \eqref{e:fgh}. Let us take $\chi\in C_c^\infty(\R^2)$ non-negative, bounded by 1, supported in $\Lambda_{2R}=\{x\in \R^2: |x|<2R\}$, equal to one in $\Lambda_R$, and with gradient bounded as a function of $R$. Then write $g_i=\chi g_i+(1-\chi)g_i$ and similarly for $h_i$. We then have 
\begin{equation}
	\avga{\varphi(f_1)\varphi(f_2)}_{\SG(4\pi,z|\epsilon,m,L)}=\Sigma_{1}(R|\epsilon,m,L)+\Sigma_{2}(R|\epsilon,m,L),
\end{equation}
where in $\Sigma_1$, we have kept only the $\chi g_i,\chi h_i$-terms, while in $\Sigma_2$ we have at least one $(1-\chi)g_i$ or $(1-\chi)h_i$-term.

Using the initial part of this proof and a routine dominated convergence argument, we see that when we let $\epsilon\to 0$, $m\to 0$, $L\to\infty$, and finally $R\to\infty$, $\Sigma_1$ converges to our target -- namely \eqref{e:target} (which is perfectly well defined for $f_i\in \mathcal S(\R^2)$). Thus we need to show that $\Sigma_2$ tends to zero in the same limit. For this, using \eqref{e:GFF2ptbound} and routine Cauchy-Schwarz arguments shows that (in the $\epsilon\to 0$, $m\to 0$, $L\to\infty$-limit) we end up estimating e.g. quantities of the form 
\begin{equation}
	\int_{\R^2\times \R^2} dx\, dy\, |\nabla ((1-\chi(x))g_1(x))||\nabla ((1-\chi(y))g_1(y))| |\log |x-y||.
\end{equation}
By dominated convergence, this tends to zero as $R\to\infty$, and one finds that $\Sigma_2$ tends to zero in our limit. This shows that \eqref{e:target} is true also for $f_1,f_2\in \mathcal S(\R^2)$ satisfying $\int f_i=0$.

The localization bound  \eqref{e:varbd} now follows easily by observing  that, $\int du\, (f_x(u)-f_y(u))=0$  so
by \eqref{e:twopoint-fourier} for integral $0$ test functions (for which we have now established \eqref{e:twopoint-fourier}),
the left-hand side of \eqref{e:varbd} is given by
\begin{equation}
  \sup_{x \in \R^2}      \int_{\R^2} \frac{dp}{(2\pi)^2}\, |\hat f(p)|^2 (2-2\cos(p\cdot x)) \hat C_{Az}(p).
\end{equation}
This is uniformly bounded since $\hat C_\mu(p)$ is bounded for $\mu \neq 0$.

Finally, we construct the required probability measure $\avg{\cdot}_{\SG(4\pi,z)}$ on $\mathcal{S}'(\R^2)$.
In Theorem~\ref{thm:gradientlim}, we have already constructed such a measure on $\mathcal{S}'(\R^2)/\text{constants}$,
i.e., for test functions $f \in \mathcal{S}(\R^2)$ with $\int dx\, f = 0$.
Using the uniform bound on $\hat C_\mu$ for $\mu \neq 0$ we can extend this measure to all test functions in $\mathcal{S}(\R^2)$ as follows.
Let $\gamma_N(x) = (2\pi N)^{-1}e^{-|x|^2/(2N)}$ be the density of the two-dimensional Gaussian probability measure of variance $N$
and Fourier transform $\hat \gamma_N(p) = e^{-\frac{1}{2}N|p|^2}$.
For any $f \in \mathcal{S}(\R^2)$, the function $f-\hat f(0)\gamma_N \in \mathcal S(\R^2)$ then has integral $0$, and
\begin{equation}
  \avg{e^{i\varphi(f-\hat f(0)\gamma_N)}}_{\SG(4\pi,z)}
\end{equation}
is well defined by Theorem~\ref{thm:gradientlim}.
For $\mu \neq 0$, we will show that it is a Cauchy sequence in $N$,
as a consequence of the boundedness of $\hat C_\mu$. Indeed,
\begin{align}
  &\absa{\avg{e^{i\varphi(f-\hat f(0)\gamma_N)}}_{\SG(4\pi,z)}
  -
  \avg{e^{i\varphi(f-\hat f(0)\gamma_M)}}_{\SG(4\pi,z)}}
    \nnb
  &\leq \frac{|\hat f(0)|^2}{2} \avg{ \varphi(\gamma_N-\gamma_M)^2}_{\SG(4\pi,z)}
    \nnb
  &= \frac{|\hat f(0)|^2}{2} \int_{\R^2}  \frac{dp}{(2\pi)^2} \, \absa{e^{-\frac{1}{2}N|p|^2} - e^{-\frac{1}{2}M|p|^2}}^2 \hat C_{A|z|}(p) \to 0,
\end{align}
as $N,M\to\infty$. For $f\in \mathcal{S}(\R^2)$, we may thus define
\begin{equation}
  \avg{e^{i\varphi(f)}}_{\SG(4\pi,z)} = \lim_{N\to\infty}   \avg{e^{i\varphi(f - \hat f(0)\gamma_N)}}_{\SG(4\pi,z)}.
\end{equation}
That this is indeed the characteristic functional of a probability measure on $\mathcal{S}'(\R^2)$
again follows from Minlos' theorem and the continuity of $f \mapsto \avg{e^{i\varphi(f)}}_{\SG(4\pi,z)}$ which follows from
the boundedness of $\hat C_\mu$ by an argument analogous to the above Cauchy sequence argument.
This argument also shows that the covariance is given by
\begin{equation}
  \avg{\varphi(f_1)\varphi(f_2)}_{\SG(4\pi,z)} = \int_{\R^2} \frac{dp}{(2\pi)^2} \, \hat f_1(p) \hat f_2(-p) \hat C_{A|z|}(p).
\end{equation}
The exponential decay \eqref{e:twopoint-expdecay} now follows (see e.g. \cite[Theorem IX.14]{MR0493420}) from the fact that $\hat C_\mu(p)$ is uniformly bounded and that, as one readily checks from \eqref{e:2ptsinh}, it has an analytic continuation
into
a strip $|\mathrm{Im}(p_0)|,|\mathrm{Im}(p_1)|<\eta$ for some $\eta>0$ (proportional to $|\mu|$).
\end{proof}

For the proof of Corollary~\ref{cor:zle0}, we need the following observation from \cite{MR496191} adapted to our setting.

\begin{lemma}
	Let $\beta \in (0,6\pi)$ and $m,\nz>0$. Then for any $f \in C^\infty_c(\R^2)$,
	with $f_x(y)= f(y-x)$,
	\begin{equation} \label{e:RL}
		\avg{\varphi(f)\varphi(f_x)}_{\SG(\beta,z|m)} \to 0 \qquad (|x|\to\infty).
	\end{equation}
\end{lemma}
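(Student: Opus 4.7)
The plan is to invoke the correlation inequalities for the two-component Coulomb gas/sine-Gordon model from \cite{MR496191} and combine them with the exponential decay of the free massive two-point function. First, by the $\varphi\mapsto -\varphi$ invariance of the cosine interaction, we immediately get $\avg{\varphi(f)}_{SG(\beta,z|m)}=0$, so \eqref{e:RL} is precisely a statement of clustering of the truncated two-point function. By the translation invariance of $\avg{\cdot}_{SG(\beta,z|m)}$ established in Theorem \ref{thm:fieldlim}, $\avg{\varphi(f_x)^2}_{SG(\beta,z|m)}=\avg{\varphi(f)^2}_{SG(\beta,z|m)}$, and by polarization
\begin{equation*}
2\avg{\varphi(f)\varphi(f_x)}_{SG(\beta,z|m)}=\avg{\varphi(f+f_x)^2}_{SG(\beta,z|m)}-2\avg{\varphi(f)^2}_{SG(\beta,z|m)}.
\end{equation*}

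The key input is a Ginibre/GKS-type inequality from \cite[Section~3]{MR496191}, which in finite volume provides a comparison of the truncated two-point function of the sine-Gordon measure with the free massive two-point function: schematically,
\begin{equation*}
\bigl|\avg{\varphi(f);\varphi(g)}^T_{SG(\beta,z|\epsilon,m,\Lambda)}\bigr|\leq C\cdot\avg{\varphi(|f|)\varphi(|g|)}_{GFF(\epsilon,m)},
\end{equation*}
uniformly in $\epsilon>0$ and $\Lambda$. Together with the monotonicity in $\Lambda$ already recorded in \eqref{e:monotone} and the convergence results of Theorem~\ref{thm:fieldlim}, this bound passes to the infinite-volume limit. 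Since the massive free two-point function equals $\tfrac{1}{2\pi}K_0(m|x-y|)$, which decays exponentially in $|x-y|$, one obtains
\begin{equation*}
\bigl|\avg{\varphi(f)\varphi(f_x)}_{SG(\beta,z|m)}\bigr|=O(e^{-m|x|/2}),\qquad |x|\to\infty,
\end{equation*}
which is stronger than \eqref{e:RL}. For general $f\in C_c^\infty(\R^2)$ possibly changing sign, one writes $f=f^+-f^-$ and argues by bilinearity.

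The main obstacle is extracting the precise comparison inequality from \cite{MR496191} in the form stated above. Note that the monotonicity \eqref{e:monotone} by itself is not sufficient: it yields only $\avg{\varphi(g)^2}_{SG}\leq\avg{\varphi(g)^2}_{GFF(m)}$, and applying polarization then gives an upper bound on $\avg{\varphi(f)\varphi(f_x)}_{SG}$ by the free covariance plus a non-vanishing defect $\avg{\varphi(f)^2}_{GFF(m)}-\avg{\varphi(f)^2}_{SG}$, which does not decay in $|x|$. Closing this gap requires the finer Ginibre-type argument from \cite{MR496191} that directly bounds truncated covariances rather than only second moments. (A reflection-positivity fallback, using that the heat-kernel regularized covariance \eqref{e:cov} is reflection positive and the cosine interaction preserves this property, would provide an alternative route via the transfer-matrix spectral representation and the free mass gap.)
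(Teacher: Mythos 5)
Your proposal has a genuine gap, and you acknowledge it yourself: the key comparison inequality
\begin{equation*}
\bigl|\avg{\varphi(f);\varphi(g)}^T_{\SG(\beta,z|\epsilon,m,\Lambda)}\bigr|\leq C\cdot\avg{\varphi(|f|)\varphi(|g|)}_{\GFF(\epsilon,m)}
\end{equation*}
is not available. The correlation inequalities in \cite{MR496191} (Corollary~3.2, as used in \eqref{e:monotone}) compare \emph{second moments} and \emph{characteristic functionals} between two sine-Gordon (or free) measures; they do not give a pointwise bound on a truncated two-point function of the interacting measure by a free covariance. You correctly identify that polarization applied to the crude infrared bound $\avg{\varphi(g)^2}_{\SG}\leq\avg{\varphi(g)^2}_{\GFF(m)}$ leaves an $x$-independent defect. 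But the ``finer Ginibre-type argument'' you invoke to close the gap is precisely the step that is missing, and the reflection-positivity fallback is offered only as an unrealized alternative. Since the lemma asserts only convergence to zero, not exponential decay, reaching for a rate is also overkill and makes the problem harder than it needs to be.

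The paper's proof is more elementary and routes around exactly this obstacle. Set $C_f(x)=\avg{\varphi(f)\varphi(f_x)}_{\SG(\beta,z|m)}$. Translation invariance of the limit measure makes $C_f$ a positive-definite function, so its Fourier transform $\hat C_f$ is a nonnegative measure. The infrared bound \eqref{e:corrbd2} then gives, for every test function $g$,
\begin{equation*}
\int\frac{dp}{(2\pi)^2}|\hat g(p)|^2\hat C_f(p)
=\avg{(\varphi*f)(g)^2}_{\SG(\beta,z|m)}
\leq\int\frac{dp}{(2\pi)^2}\frac{|\hat g(p)|^2|\hat f(p)|^2}{|p|^2+m^2},
\end{equation*}
whence $0\leq\hat C_f(p)\leq|\hat f(p)|^2/(|p|^2+m^2)$ distributionally. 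Since $f$ is Schwartz and $m>0$, the right-hand side is in $L^1(\R^2)$, so $\hat C_f\in L^1$, and the Riemann--Lebesgue lemma gives $C_f(x)\to0$. The key point you miss is that the infrared bound and positive definiteness \emph{together} deliver $L^1$-control of the spectral density in Fourier space, bypassing any need for a truncated-correlation comparison inequality or a decay rate.
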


\begin{proof}
	The argument is as in the proof of \cite[Theorem~4.4]{MR496191}.
	Indeed, by Theorem~\ref{thm:fieldlim},
	the measure $\avg{\cdot}_{\SG(\beta,z|m)}$ is translation invariant and satisfies, for any $f \in C_c^\infty(\R^2)$,
	\begin{equation}
		\avg{\varphi(f)^2}_{\SG(\beta,z|m)} \leq (f,(-\Delta+m^2)^{-1}f) = \int_{\R^2} \frac{dp}{(2\pi)^2} \frac{|\hat f(p)|^2}{|p|^2+m^2} .
	\end{equation}
	Therefore $C_f(x) = \avg{\varphi(f)\varphi(f_x)}_{\SG(\beta,z|m)}$ 
	satisfies 
	\begin{equation}
		0 \leq \hat C_f(p) \leq \frac{|\hat f(p)|^2}{|p|^2+m^2} \in L^1(\R^2)
	\end{equation}
	in the distributional sense. Indeed, this follows from
		\begin{equation}
			\avg{(\varphi*f)(g)^2}_{\SG(\beta,z)} = \int_{\R^2 \times \R^2}dx\, dy\, g(x) C_f(x-y) g(y) 
			= \int_{\R^2}\frac{dp}{(2\pi)^2}\, |\hat g(p)|^2 \hat C_f (p).
	\end{equation}
	Thus the Riemann--Lebesgue lemma implies that
	\begin{equation}
		\avg{\varphi(f)\varphi(f_x)} = \int_{\R^2} \frac{dp}{(2\pi)^2}\, e^{i p\cdot x} \hat C_f(p) \to 0 \qquad (|x|\to \infty)
	\end{equation}
	as claimed.
\end{proof}

\begin{proof}[Proof of Corollary~\ref{cor:zle0}]
	By \eqref{e:RL}, for $m>0$,
	\begin{equation}
		\avg{\varphi(f)^2}_{\SG(\beta,z|m)}
		= \frac12 \lim_{|x|\to\infty}   \avg{(\varphi(f)-\varphi(f_x))^2}_{\SG(\beta,z|m)}.
	\end{equation}
	By monotonicity in $m$ and $L$ due to \eqref{e:monotone}, the limits $m\to 0$ and $L\to \infty$ exist in both orders and,
	if $\int dx\, f  = 0$,
	\begin{equation}
		\sup_{m>0}\lim_{L\to\infty} \avg{\varphi(f)^2}_{\SG(\beta,z|m,\Lambda_L)}
		\leq
		\lim_{L\to\infty} \sup_{m> 0} \avg{\varphi(f)^2}_{\SG(\beta,z|m,\Lambda_L)}
		= \avg{\varphi(f)^2}_{\SG(\beta,z)}
		.
	\end{equation}
	In conclusion, we get that
	\begin{align}
		\sup_{m > 0} \avg{\varphi(f)^2}_{\SG(\beta,z|m)}
		&=
		\frac12 \sup_{m>0}\lim_{|x|\to\infty}   \avg{(\varphi(f)-\varphi(f_x))^2)}_{\SG(\beta,z|m)}
		\nnb
		&\leq
		\frac12 \limsup_{|x|\to\infty} \sup_{m>0}\avg{(\varphi(f)-\varphi(f_x))^2)}_{\SG(\beta,z|m)}
		\nnb
		&\leq
		\frac12 \limsup_{|x|\to\infty} \avg{(\varphi(f)-\varphi(f_x))^2)}_{\SG(\beta,z)}.
	\end{align}
	For $\beta=4\pi$, the right-hand side is finite by Theorem \ref{thm:field}.
	In fact, since $|\hat f|^2\hat C_\mu$
	is integrable for $\mu \neq 0$, by the Riemann-Lebesgue lemma, it is equal to
	\begin{equation}
		\limsup_{x\to\infty} \int_{\R^2}\frac{dp}{(2\pi)^2}\, |\hat f(p)|^2 (1-\cos(p\cdot x)) \hat C_{Az}(p)
		=
                 \int_{\R^2} \frac{dp}{(2\pi)^2}\, |\hat f(p)|^2 \hat C_{Az}(p) 
	\end{equation}
	as claimed.
        
        The proof of the existence of the infinite volume measure as $m\to 0$ is now exactly as in the proof of
        Theorems~\ref{thm:gradientlim} and \ref{thm:fieldlim}, only using the now proved bound \eqref{e:zle0}
        instead of the last bound in \eqref{e:laplacecont} for the continuity of the characteristic functional.
\end{proof}

\section{The sine-Gordon model: the renormalized potential}
\label{sec:renormpot}

One of our main tools in the proof of Theorem~\ref{thm:cf} are estimates for a renormalized version
of the sine-Gordon potential, and we turn to studying it now.
For $\varphi\in C_b(\R^2)$ and $\zeta\in L^\infty_c(\R^2\times \{-1,1\},\C)$,
we define
\begin{equation}
  v_0(\zeta,\varphi|\epsilon) = \epsilon^{-\frac{\beta}{4\pi}}
  \int_{\R^2\times \{-1,1\}} d\xi \, \zeta(\xi)e^{i\sqrt{\beta}\sigma \varphi(x)}, 
  \label{e:micropotdef}
\end{equation}
which we refer to as the microscopic (sine-Gordon) potential.
In terms of this microscopic potential, we introduce the following
generalized partition function
that can be seen as a generating function for charge correlation functions:
\begin{equation} \label{e:Zmicropot}
Z(\zeta|\epsilon,m)=\avga{e^{-v_0(\zeta,\varphi|\epsilon)}}_{\GFF(\epsilon,m)},
\end{equation}
where the GFF expectation is over $\varphi$.
For $\nzeta=-\nz\mathbf 1_{\Lambda}$ this would just be the partition function of the (regularized) sine-Gordon model.

Our analysis of the generating function $Z(\zeta|\epsilon,m)$ relies on a convenient decomposition of the regularized free field $\GFF(\epsilon,m)$. More precisely, we define for any $t,m>0$ and $x,y\in \R^2$ with $x\neq y$ 
\begin{equation}
	c_t^{m^2}(x-y):=\int_0^t ds \, \dot c_s^{m^2}(x-y):=\int_0^t ds \, e^{-m^2 s}\frac{e^{-\frac{|x-y|^2}{4s}}}{4\pi s} .
	\label{e:tcov}
\end{equation} 
For any $t>\epsilon^2$, note that
$c_t^{m^2}-c_{\epsilon^2}^{m^2}$ and $c_\infty^{m^2}-c_t^{m^2}$ are 
covariances, so
the fact that the sum of two independent Gaussian processes is a Gaussian process whose
covariance is the sum of the covariances of the two processes
implies that we can in fact write \eqref{e:Zmicropot} as
\begin{equation}
  Z(\zeta|\epsilon,m) = 
  \avga{e^{-v_t(\zeta,\varphi|\epsilon,m)}}_{\GFF(\sqrt{t},m)}
  \label{e:Zvt}
\end{equation}
where we have defined the renormalized potential $v_t$ by
\begin{equation}
  e^{-v_t(\zeta,\varphi|\epsilon,m)} = 
  \EE_{c_t^{m^2}-c_{\epsilon^2}^{m^2}}\pa{e^{-v_0(\zeta,\varphi+\eta|\epsilon)}},
  \label{e:pfrpot}
\end{equation}
and have written $\EE_{c_t^{m^2}-c_{\epsilon^2}^{m^2}}$ for the expectation with respect to the law of the Gaussian process with covariance $c_t^{m^2}-c_{\epsilon^2}^{m^2}$ and the last integral is over $\eta$.

The analysis of the $\epsilon,m\to 0$ behavior of the generating 
function $Z(\zeta|\epsilon,m)$ can thus be rephrased in terms of
$\epsilon,m\to 0$ asymptotics of the renormalized potential $v_t(\zeta,\cdot|\epsilon,m)$.
Note that as $\zeta$ is complex, only $e^{-v_t(\zeta,\varphi|\epsilon,m)}$ is 
a priori well defined, but we will see in this section that for any given $\zeta \in L^\infty_c(\R^2\times \{-1,1\})$ and $t$ small enough,
its logarithm $v_t(\zeta,\varphi|\epsilon,m)$ is also well-defined. 
Moreover, the goal of this section is to prove bounds for $v_t(\zeta,\varphi|\epsilon,m)$ that are uniform in $\epsilon>0$ and $m>0$.
Our analysis follows the approach of \cite{MR914427} as presented in \cite[Section 3]{MR4303014},
but it permits space-dependent coupling constants and we also work directly in the continuum.
As discussed in Section~\ref{sec:outline},
we expect that similar results could be obtained by using the methods
of \cite{MR814849,MR2461991}.
The $\epsilon \to 0$ and $m\to 0$ limits will be studied in Section~\ref{sec:sg-expansion}.

To control $v_t$ we will show in this section that the following expansion is convergent and agrees with $v_t(\varphi,\zeta|\epsilon,m)$
for $\zeta \in L^\infty_c(\R^2\times \{-1,1\})$ and suitable $t$:
\begin{equation}
   \sum_{n=1}^\infty \frac{1}{n!}\int_{(\R^2\times \{-1,1\})^n}d\xi_1\cdots d\xi_n\,\zeta(\xi_1)\cdots \zeta(\xi_n)
   \widetilde v_t^{n}(\xi_1,...,\xi_n|\epsilon,m)e^{i\sqrt{\beta}\sum_{j=1}^n \sigma_j\varphi(x_j)}
   \label{e:yukawa}
\end{equation}
where the coefficients $\widetilde v_t^{n}$ are determined recursively as follows.
For $t>\epsilon^2$ and $\xi\in \R^2\times \{-1,1\}$, we set
\begin{equation}
  \widetilde v_t^{1}(\xi|\epsilon,m) = 
  e^{-\frac{\beta}{2}(\int_{\epsilon^2}^{t}ds \, \dot c_s^{m^2}(0)+\frac{1}{4\pi}\log \epsilon^2)},
  \label{e:v1def}
\end{equation}
and for $n\geq 2$ and $\xi_j=(x_j,\sigma_j)\in \R^2\times \{-1,1\}$, 
\begin{align}   \label{e:vndef}
\widetilde  v_t^{n}(\xi_1,...,\xi_n|\epsilon,m) &= 
  \frac{1}{2}\int_{\epsilon^2}^tds \sum_{I_1\dot{\cup}I_2=[n]}
  \sum_{i\in I_1,j\in I_2}\dot u_s^{m^2}(\xi_i,\xi_j)\widetilde v_s^{|I_1|}(\xi_{I_1}|\epsilon,m)\widetilde v_s^{|I_2|}(\xi_{I_2}|\epsilon,m)\notag \\
  & \qquad \qquad \times e^{-(w_t^{m^2}(\xi_1,...,\xi_n)-w_s^{m^2}(\xi_1,...,\xi_n))},
\end{align}
where
\begin{align}
  \dot{u}_s^{m^2}(\xi_1,\xi_2)
  &= 
  \beta \sigma_1\sigma_2 \dot{c}_s^{m^2}(x_1-x_2),
  \\
  w_t^{m^2}(\xi_1,\dots,\xi_n)-w_s^{m^2}(\xi_1,\dots,\xi_n)
  &= 
  \frac{1}{2}\sum_{i,j=1}^n \int_s^t dr\, \dot u_r^{m^2}(\xi_i,\xi_j).
  \label{e:uwdef}
\end{align}
We have also written $[n]=\{1,\dots,n\}$ and $I_1\dot \cup I_2=[n]$ to indicate that $I_1\cap I_2=\emptyset$ and $I_1\cup I_2=[n]$.
For controlling the expansion \eqref{e:yukawa}, 
we introduce the following norms for $f: (\R^2\times \{-1,1\})^n\to \C$:
\begin{equation}
  \|f\|_n=\begin{cases}
  \sup_{\xi\in \R^2\times \{-1,1\}}|f(\xi)|, & \text{if} \quad n=1,\\
  \sup_{\xi_1\in \R^2\times \{-1,1\}}\int_{(\R^2\times \{-1,1\})^{n-1}}d\xi_2\cdots d\xi_n|f(\xi_1,\dots,\xi_n)|, & \text{if} \quad n\geq 2.
  \end{cases}
  \label{e:norms}
\end{equation}

The goal of the rest of this section is to prove the following proposition.
In its statement, the condition $\beta<6\pi$ necessitates the exclusion of the $n=2$ term as
the analogous estimate fails when $\beta \geq 4\pi$, see also Remark~\ref{rk:conjecture8pi} below.
The $n=2$ term will be considered explicitly later.

\begin{proposition} \label{prop:mayer}
	For $\beta\in(0,6\pi)$, $t>0$, and $n\neq 2$, there exists functions $h_t^n:(\R^2\times \{-1,1\})^n\to[0,\infty]$ which are independent of $\epsilon,m$ and for  $0<\epsilon^2<t<m^{-2}$, one has
	\begin{equation}
	   |\widetilde v_t^{n}(\xi_1,\dots,\xi_n|\epsilon,m)|\leq h_t^{n}(\xi_1,\dots,\xi_n)
	  \label{e:ybd1}
	\end{equation}
	for all $\xi_1,\dots,\xi_n\in (\R^2\times \{-1,1\})^n$ and 
	\begin{equation}
	  \|h_t^{n}\|_n\leq n^{n-2}t^{-1}\left(C_\beta t^{1-\frac{\beta}{8\pi}}\right)^n
	  \label{e:ybd2}
	\end{equation}
	for some constant $C_\beta$ depending only on $\beta$.
	\label{pr:ybound}
\end{proposition}

\begin{remark}\label{rk:conjecture8pi}
It remains a conjecture \cite[p.672]{MR814849} that similar estimates remain valid
for all $\beta<8\pi$
when not only the $n=2$ term is excluded but when the first $n_0$ terms are excluded where
$n_0$ is the largest integer such that $2(n_0-1)-\beta n_0/4\pi \leq 0$.
(The results of \cite{MR1777310,MR849210} which do construct the (massive) sine-Gordon model for all
$\beta < 8\pi$ do not proceed by this expansion and instead rely on probabilistic estimates on large gradients,
thus leaving this stronger conjecture open.)
\end{remark}

Proposition~\ref{prop:mayer} 
allows us to identify the expansion \eqref{e:yukawa} with the renormalized potential as follows.

\begin{corollary}   \label{cor:yukawa}	
  For all $0<\epsilon^2<t<m^{-2}<\infty$, $\varphi\in C^\infty(\R^2)$,
  and $\zeta\in L^\infty_c(\R^2\times \{-1,1\},\C)$ satisfying
  \begin{equation} \label{e:zetasmall}
    \sup_{\xi\in \R^2\times \{-1,1\}}|\zeta(\xi)| 
    < \frac{1}{e C_\beta t^{1-\beta/8\pi}}
  \end{equation}
  where $C_\beta$ is the constant from Proposition~\ref{pr:ybound},
  the sums and integrals in the expansion \eqref{e:yukawa} converge absolutely and equal $v_t(\zeta,\varphi|\epsilon,m)$ defined
  in    \eqref{e:pfrpot}:
  \begin{multline}\label{e:vtexpan}
    v_t(\zeta,\varphi|\epsilon,m)= \sum_{n=1}^\infty \frac{1}{n!} \int_{(\R^2\times \{-1,1\})^n}d\xi_1\cdots d\xi_n\,
    \zeta(\xi_1)\cdots \zeta(\xi_n)
    \\\times  
    \widetilde v_t^{n}(\xi_1,\dots,\xi_n|\epsilon,m)e^{i\sqrt{\beta}\sum_{j=1}^n \sigma_j\varphi(x_j)}
    .
  \end{multline}  
\end{corollary}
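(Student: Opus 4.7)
The plan is to first establish the absolute convergence of the series \eqref{e:vtexpan} using Proposition \ref{pr:ybound}, and then to identify its sum with $v_t$ by showing that both, viewed as power series in $\zeta$, satisfy a common integral recursion with matching initial value.

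\emph{Absolute convergence.} Let $K = \mathrm{supp}(\zeta) \subset \R^2 \times \{-1,1\}$ be compact and set $\|\zeta\|_\infty := \sup_\xi |\zeta(\xi)|$. Pulling one factor of $|\zeta|$ to the outer integral and using $|\zeta(\xi_1)| \leq \|\zeta\|_\infty \mathbf{1}_K(\xi_1)$, the $n$-th term of \eqref{e:vtexpan} is dominated in modulus by $\tfrac{1}{n!}|K|\,\|\zeta\|_\infty^n\,\|h_t^n\|_n$ by the definition of the norm \eqref{e:norms}. For $n \neq 2$, Proposition \ref{pr:ybound} combined with Stirling's bound $n^n/n! \leq e^n$ gives
\begin{equation*}
\tfrac{1}{n!}|K|\,\|\zeta\|_\infty^n\,\|h_t^n\|_n \;\leq\; \tfrac{|K|}{t\,n^2}\,\bigl(eC_\beta\,\|\zeta\|_\infty\,t^{1-\beta/8\pi}\bigr)^n.
\end{equation*}
Under \eqref{e:zetasmall} the ratio in parentheses is strictly less than $1$, so the tail $n \geq 3$ is summable; the excluded $n = 2$ term is finite as well by a direct estimate from \eqref{e:vndef}, using local integrability of the $s$-integrand on $K \times K$ for $\beta \in (0,6\pi)$.

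\emph{Identification with $v_t$.} Write $V_t(\zeta,\varphi|\epsilon,m)$ for the right-hand side of \eqref{e:vtexpan} and set $C_t := c_t^{m^2} - c_{\epsilon^2}^{m^2}$. At $t = \epsilon^2$ every integral in \eqref{e:vndef} vanishes while \eqref{e:v1def} reduces to $\widetilde v_{\epsilon^2}^1 = \epsilon^{-\beta/4\pi}$, so $V_{\epsilon^2} = v_0(\zeta,\varphi|\epsilon) = v_{\epsilon^2}$. Differentiating \eqref{e:pfrpot} in $t$ and invoking the Gaussian heat-semigroup identity
\begin{equation*}
\partial_t \EE_{C_t}\bigl[F(\varphi+\eta)\bigr] \;=\; \tfrac12 \iint \dot c_t^{m^2}(x-y)\, \EE_{C_t}\!\Bigl[\tfrac{\delta^2 F}{\delta\varphi(x)\,\delta\varphi(y)}(\varphi+\eta)\Bigr] dx\, dy
\end{equation*}
yields the flow equation
\begin{equation*}
\partial_t v_t \;=\; -\tfrac12 \iint \dot c_t^{m^2}(x-y) \Bigl(\tfrac{\delta^2 v_t}{\delta\varphi(x)\,\delta\varphi(y)} - \tfrac{\delta v_t}{\delta\varphi(x)}\tfrac{\delta v_t}{\delta\varphi(y)}\Bigr) dx\, dy.
\end{equation*}
Substituting $V_t$ and using $\tfrac{\delta}{\delta\varphi(x)} e^{i\sqrt\beta \sum_j \sigma_j \varphi(x_j)} = i\sqrt\beta \sum_j \sigma_j\, \delta(x-x_j)\, e^{i\sqrt\beta \sum_j \sigma_j \varphi(x_j)}$, the diagonal ($j=k$) part of $-\tfrac12 \delta^2/\delta\varphi^2$ produces the self-energy contribution whose Duhamel resolution yields the prefactor $\widetilde v_t^1$ of \eqref{e:v1def} together with the integrating factor $e^{-(w_t^{m^2}-w_s^{m^2})}$; the off-diagonal part combined with $+\tfrac12(\delta V_t/\delta\varphi)^2$ produces exactly the partition sum $\tfrac12 \sum_{I_1 \dot\cup I_2 = [n]} \sum_{i\in I_1,\, j \in I_2} \dot u_s^{m^2}(\xi_i,\xi_j)\,\widetilde v_s^{|I_1|}\,\widetilde v_s^{|I_2|}$ in \eqref{e:vndef}. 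Comparing the coefficients of $\prod_j \zeta(\xi_j)$ at each order $n$ thus recovers \eqref{e:vndef}, so the $\zeta$-expansion of $v_t$ has the same coefficients as $V_t$ and the two agree as analytic functions of $\zeta$ under \eqref{e:zetasmall}. Term-by-term $t$-differentiation is justified by the uniform-in-$t$ absolute convergence on compact subintervals from the first step.

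\emph{Main obstacle.} The Gaussian functional-derivative identity and the Stirling-based convergence are routine. The real work is the combinatorial bookkeeping in the identification step---matching the unordered partition sum $I_1 \dot\cup I_2 = [n]$, the Wick self-energy packaged into $w_t^{m^2}$ and $\widetilde v_t^1$, and the $\sigma_j\sigma_k$ signs in $\dot u_s^{m^2}$, with the corresponding pieces of the flow equation. In addition, the $n = 2$ term must be treated by a separate direct estimate since it lies outside the scope of Proposition \ref{pr:ybound}.
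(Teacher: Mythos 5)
Your absolute convergence argument is essentially the same as the paper's (same use of Proposition~\ref{pr:ybound}, same Stirling bound, same observation that $n=2$ requires separate but routine treatment), and it is correct. The identification step, however, has a genuine gap.

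You work with $v_t$ directly and derive a Polchinski flow $\partial_t v_t = \tfrac12\Delta_{\dot c_t} v_t - \tfrac12(\nabla v_t, \dot c_t \nabla v_t)$ (note your stated sign is off: with the Gaussian heat-semigroup identity $\partial_t e^{-v_t} = \tfrac12\Delta_{\dot c_t}e^{-v_t}$ the coefficient should be $+\tfrac12$, not $-\tfrac12$). Deriving this flow for $v_t$ requires taking a logarithm, which requires $e^{-v_t}\neq 0$ --- but the paper explicitly warns at the start of Section~\ref{sec:renormpot} that only $e^{-v_t}$ is a priori well-defined and that the well-definedness of $v_t$ itself is part of what the corollary must establish. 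Your ``comparison of Taylor coefficients in $\zeta$'' step then needs to know that (i) $v_t$ admits a convergent $\zeta$-expansion on the claimed domain and (ii) the flow PDE in the infinite-dimensional variable $\varphi$ has a unique solution; neither of these is addressed, and (i) in particular is exactly the conclusion of the corollary. Finally, you assert the agreement of unordered-partition combinatorics but give no actual argument --- in the paper this matching is trivial precisely because the recursion \eqref{e:vndef} is definitional, and the work lies entirely in the uniqueness step, not in bookkeeping.

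The paper sidesteps all of this by never taking logarithms: it defines $a_t$ as the series, shows $h_t=e^{-a_t}$ is Fréchet differentiable and solves $\partial_t h_t = \tfrac12\Delta_{\dot c_t}h_t$, shows $g_t=e^{-v_t}$ solves the same linear heat equation with the same initial data, and then --- and this is the crux --- invokes uniqueness of bounded solutions by reducing to a genuine finite-dimensional heat equation via a periodization $\dot c_t'$ of the covariance, a spectral expansion $\dot c_t'(x-y)=\sum_k\dot\lambda_{k,t}f_k(x)f_k(y)$, and finite-rank projections $\Pi_N$. This finite-dimensional reduction is the technical core of the proof and is entirely absent from your proposal; it is what makes the uniqueness rigorous in this infinite-dimensional setting. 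Patching your argument would amount to either (a) first proving $e^{-V_t}=e^{-v_t}$ by the paper's exponential/heat-equation route and then inferring $V_t=v_t$, or (b) carefully formalizing analyticity in a complex parameter $z$ (via $\zeta\mapsto z\zeta$), showing $e^{-V_t}$ and $e^{-v_t}$ are both entire in $z$ and agree on an open set near $0$, and invoking the identity theorem --- but as written, the gap stands.
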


\begin{proof}
Throughout the proof, we fix $\epsilon,m>0$ and $\zeta \in L^\infty(\R^2\times \{-1,1\})$ with support in a compact set $\Lambda \times \{-1,1\} \subset \R^2\times \{-1,1\}$,
and we will always assume that $t \in (\epsilon^2,t_0)$ where $t_0$ is the supremum over $t>\epsilon^2$ such that \eqref{e:zetasmall} holds.
Then, for $n \geq 3$,
\begin{multline} \label{e:vtsumbd}
  \frac{1}{n!}
  \int_{(\R^2\times \{-1,1\})^n}d\xi_1\cdots d\xi_n
  \absa{ \zeta(\xi_1)\cdots \zeta(\xi_n)
    \widetilde v_t^{n}(\xi_1,\dots,\xi_n|\epsilon,m)e^{i\sqrt{\beta}\sum_{j=1}^n \sigma_j\varphi(x_j)} } 
  \\
  \leq
  \frac{1}{n!}2|\Lambda|
  \|\zeta\|_{L^\infty(\R^2 \times \{\pm 1\})}^{n}
  \|h_t^n\|_n
  \leq
  2 |\Lambda|  n^{-2} t^{-1} (t/t_0)^{(1-\beta/8\pi)n}
\end{multline}
where we used $n^n/n! \leq e^n$.
The $n=1,2$ terms are  trivially bounded with $\epsilon,m$-dependent constants when $t > \epsilon^2$, uniformly in $\varphi: \R^2 \to \R$,
by the definitions \eqref{e:v1def}--\eqref{e:vndef}.
For $t<t_0$, it follows that the sum over $n$ in \eqref{e:yukawa} converges geometrically,
again uniformly in $\varphi: \R^2 \to \R$.
We denote this sum by $a_t(\varphi)$ and note that $a_t(\varphi)$ only depends on $\varphi|_\Lambda$,
so that we can consider $\varphi \mapsto a_t(\varphi)$ as a function $a_t: C(\Lambda) \to \R$.
We will denote the supremum norm on $C(\Lambda)$ by $\|\cdot\|$ below.
From the geometric convergence,
\begin{equation}
\left|e^{i\sqrt{\beta}\sum_{j=1}^n \sigma_j f(x_j)}-1-i\sqrt{\beta}\sum_{j=1}^n \sigma_jf(x_j)\right|\leq \frac{1}{2}\beta n^2 \|f\|^2,
\end{equation}
and similar estimates for higher derivatives, we then see that 
$a_t: C(\Lambda) \to \R$ is actually smooth, i.e., Frechet differentiable to any order, for $t \in (\epsilon^2,t_0)$.
Its first two derivatives are given by
\begin{align}
  D a_t(\varphi; f_1)
  &=
    i\sqrt{\beta}
    \sum_{n=1}^\infty \frac{1}{n!}{\int_{(\R^2\times \{-1,1\})^n}d\xi_1\cdots d\xi_n} \, \zeta(\xi_1)\cdots \zeta(\xi_n)
    \nnb
    &\qquad 
      \times \sum_{k=1}^n \sigma_k f_1(x_k)
      \widetilde v_t^{n}(\xi_1,\dots,\xi_n|\epsilon,m)e^{i\sqrt{\beta}\sum_{j=1}^n \sigma_j\varphi(x_j)} 
      \nnb
  &=: \int dx_1\, f_1(x_1) \nabla a_t(\varphi,x_1),
    \\
  D^2 a_t(\varphi; f_1, f_2)
  &=
    -\beta
    \sum_{n=1}^\infty \frac{1}{n!}{\int_{(\R^2\times \{-1,1\})^n}d\xi_1\cdots d\xi_n} \, \zeta(\xi_1)\cdots \zeta(\xi_n)
    \nnb
    &\qquad 
      \times \sum_{k=1}^n \sigma_k f_1(x_k)\sum_{l=1}^n \sigma_l f_2(x_l)
      \widetilde v_t^{n}(\xi_1,\dots,\xi_n|\epsilon,m)e^{i\sqrt{\beta}\sum_{j=1}^n \sigma_j\varphi(x_j)}.
  \nnb
    &=: \int dx_1 \, dx_2\, f_1(x_1)f_2(x_2) \He a_t(\varphi,x_1,x_2) ,
\end{align}
where $f_1, f_2 \in C(\Lambda)$.
As in \eqref{e:vtsumbd}, 
$\|\nabla a_t(\varphi, \cdot)\|_{L^\infty(\R^2)}$ and $\|\He a_t(\varphi,\cdot,\cdot)\|_{L^1L^\infty(\R^2\times \R^2)}$ are bounded independently of $\varphi$
and, since $\zeta$ has support in $\Lambda$, it is also clear that $\nabla a_t(\varphi,\cdot)$ has support in $\Lambda$
and that $\He a_t(\varphi,\cdot,\cdot)$ has support in $\Lambda^2$.
Defining
\begin{align} \label{e:Deltaa}
  \Delta_{\dot c_t} a_t(\varphi)
  &= \int_{\Lambda^2} dx_1 \, dx_2\, \dot c_t^{m^2}(x_1-x_2) \He a_t(\varphi,x_1,x_2)
  \\
  (\nabla a_t(\varphi), \dot c_t^{m^2} \nabla a_t(\varphi))
  &= \int_{\Lambda^2} dx_1 \, dx_2\, \dot c_t^{m^2}(x_1-x_2) \nabla a_t(\varphi,x_1) \nabla a_t(\varphi,x_2) 
    ,
\end{align}
it then follows from \eqref{e:vndef} that, for $t \in (\epsilon^2,t_0)$,
\begin{align}
  \ddp{}{t} a_t(\varphi) = \frac12 \Delta_{\dot c_t} a_t(\varphi) - \frac12 (\nabla a_t(\varphi), \dot c_t^{m^2} \nabla a_t(\varphi)).
\end{align}
Let $h_t(\varphi) = e^{-a_t(\varphi)}$. Then by the chain rule, $h_t$ is also twice Frechet differentiable with
(using similar notation as above)
\begin{align}
  \nabla h_t(\varphi,x_1) &= - \nabla a_t(\varphi,x_1) e^{-a_t(\varphi)}
  \\
  \He h_t(\varphi,x_1,x_2) &= \qa{- \He a_t(\varphi,x_1,x_2) + \nabla a_t(\varphi,x_1)\nabla a_t(\varphi,x_2)} e^{-a_t(\varphi)},
\end{align}
and hence 
\begin{equation}
  \frac12 \Delta_{\dot c_t} h_t(\varphi) - \ddp{}{t} h_t(\varphi)
  = \qa{-\frac12 \Delta_{\dot c_t} a_t(\varphi) + \frac12(\nabla a_t(\varphi), \dot c_t^{m^2} \nabla a_t(\varphi)) + \ddp{}{t} a_t(\varphi)} e^{-a_t(\varphi)} = 0.
\end{equation}
We will show that $e^{-v_t}$ satisfies this same heat equation (with the same initial data at $t=\epsilon^2$) and argue that the solution must be unique, so $v_t=a_t$, which will then yield the proof.

The Laplacian $\Delta_{\dot c_t}h_t$ can alternatively be expressed as follows. 
Since $\Lambda$ is bounded, assume that $\Lambda \subset [-L,L]^2$.
Let $\chi$ be a smooth function with $\chi(t) = 1$ for $t\leq 4L$ and $\chi(t)=0$ for $t \geq 8L$.
We then choose a torus $\Lambda'$ of period $16L$
and set $\dot c_t'(x) = \sum_{n\in\Z^2} \dot c_t^{m^2}(x+16Ln)\chi(|x+16Ln|)$.
Thus $\dot c_t'$ is a smooth (periodic) function on $\Lambda'$,
and we note that $\dot c_t'(x)= \dot c_t^{m^2}(x)$ if $|x| \leq 4L$.
Thus if we regard $\Lambda$ as a subset of $\Lambda'$ (by embedding it into a fundamental domain centered at $0$ in the obvious way),
we have $\dot c_t'(x-y) =  \dot c_t^{m^2}(x-y)$, for $x,y\in \Lambda$.
In particular, there are $\dot \lambda_{t,k} \geq 0$ decaying rapidly in $k$ for each $t>0$ such that
\begin{align}
  \dot c_t^{m^2}(x-y) = \dot c_t'(x-y) =  \sum_{k} \dot\lambda_{k,t} f_{k}(x) f_{k}(y), \qquad \text{for $x,y \in \Lambda$},
\end{align}
where $(f_k)$ is the real orthonormal Fourier basis of $L^2(\Lambda')$ consisting of $\sin$ and $\cos$ functions,
so in particular satisfying $\|f_k\| \leq C$.
For a general function $g \in C^2_b(C(\Lambda))$ and $t>0$ we can now define
\begin{equation}
  \Delta_{\dot c_t} g(\varphi) = \sum_{k} \dot\lambda_{k,t} D^2 g(\varphi; f_k,f_k).
\end{equation}
By Fubini (whose application is justified by rapid convergence of all sums and integrals),
this definition is consistent with \eqref{e:Deltaa}.

Let $\Pi_N \varphi$ be the $L^2(\Lambda')$ projection of $\varphi|_{\Lambda'}$ to Fourier modes $k \leq N$.
For any $N$, the above implies that $h_t^N(\varphi) = h_t(\Pi_N\varphi)$
satisfies the finite dimensional heat equation
\begin{equation}
  \partial_t h_t^N(\varphi)
  =
  \frac12 \sum_{k \leq N} \dot\lambda_{k,t} D^2 h_t^N(\varphi; f_k,f_k)
  ,\qquad h_{\epsilon^2}^N(\varphi) = h_{\epsilon^2}(\Pi_N\varphi).
\end{equation}
Next we will verify that $g_t(\varphi) = e^{-v_t(\varphi)}$
defined in \eqref{e:pfrpot} also satisfies the heat equation $\partial_tg_t = \frac12 \Delta_{\dot c_t}g_t$
with the same initial condition $g_{\epsilon^2}=h_{\epsilon^2}$.
To see this, first observe that the definition of $g_t(\varphi)$ in \eqref{e:pfrpot} only depends on $\eta|_\Lambda$.
The Gaussian field $\eta|_\Lambda$ has covariance $c_t^{m^2}-c_{\epsilon^2}^{m^2}|_{\Lambda\times \Lambda}$
and can be realized
in terms of independent standard Gaussian random variables $(X_k)_{k \in \N}$ as
\begin{equation}
  \eta|_{\Lambda} = \sum_{k} \sqrt{\lambda_{k,t}} X_k f_k|_{\Lambda}
\end{equation}
where $\dot\lambda_{k,t}$ above is the $t$-derivative of these $\lambda_{k,t}$. (This follows from the fact that $\lambda_{k,t} = (f_k, c'_t f_k)$
and the differentibility of $c_t'$ in $t$.)
From this representation we again see that $g_t^N(\varphi) = g_t(\Pi_N \varphi)$ satisfies
\begin{equation}
  \partial_t g_t^N(\varphi)
  =
  {\frac12}  \sum_{k \leq N} \dot\lambda_{k,t} D^2 g_t^N(\varphi; f_k,f_k)
  \qquad g_{\epsilon^2}^N(\varphi) = g_{\epsilon^2}(\Pi_N\varphi) = h_{\epsilon^2}(\Pi_N\varphi).
\end{equation}
By the standard uniqueness of bounded solutions to such equations
(finite dimensional heat equations), we conclude that $h_t^N(\varphi) = g_t^N(\varphi)$ for all $t\in(\epsilon^2,t_0)$ and $N \in \N$.
It remains to conclude that this implies that $g_t(\varphi)=h_t(\varphi)$ for all smooth $\varphi$.
Indeed, $\|\Pi_N\varphi-\varphi\|\to 0$ for any smooth $\varphi$ and since both $g_t$ and $h_t$ are continuous in $\varphi \in C(\Lambda)$,
thus $h_t^N(\varphi)= h_t(\Pi_N\varphi) \to h_t(\varphi)$ as $N\to\infty$ and analogously $g_t^N(\varphi) \to g_t(\varphi)$.
\end{proof}

\subsection{Covariance and (massive) heat kernel estimates}

For the proof of Proposition~\ref{pr:ybound}, we require some basic estimates 
for the covariance $c_t^{m^2}$ and the (massive) heat kernel $\dot c_t^{m^2}$.
We turn to recording these now.
The most basic estimate we shall have use for is just for $c_t^{m^2}(x)$.

\begin{lemma}\label{le:covest1}
	There exists a universal constant $C>0$ such that for $0<t<m^{-2}$ and $x\in \R^2$, we have the estimate
	\begin{equation}
	  \left|c_t^{m^2}(x)+\frac{1}{2\pi}\log\left(\frac{|x|}{\sqrt{t}}\wedge 1\right)\right|\leq C.
	  \label{e:covest1}
	\end{equation}
\end{lemma}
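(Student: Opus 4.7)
The plan is to first reduce to the massless case $m=0$, for which $c_t^0$ can be computed explicitly via a change of variables, and then to absorb the mass contribution as a bounded error.

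For the reduction, I would write
\begin{equation}
c_t^0(x)-c_t^{m^2}(x) = \int_0^t (1-e^{-m^2 s})\,\frac{e^{-|x|^2/(4s)}}{4\pi s}\, ds,
\end{equation}
and use the elementary inequalities $1-e^{-m^2 s}\leq m^2 s$ and $e^{-|x|^2/(4s)}\leq 1$ to bound this by $m^2 t/(4\pi)$. Under the hypothesis $0<t<m^{-2}$ this is at most $1/(4\pi)$, so the mass can be discarded uniformly at the cost of an additive constant.

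For the massless covariance, the substitution $u=|x|^2/(4s)$ (so $ds/s = -du/u$) yields
\begin{equation}
c_t^0(x) = \frac{1}{4\pi}\int_{|x|^2/(4t)}^\infty \frac{e^{-u}}{u}\, du.
\end{equation}
If $|x|^2/(4t)\geq 1$ (in particular whenever $|x|\geq 2\sqrt t$) the integral is bounded by $\int_1^\infty e^{-u}/u\, du<\infty$, which matches $\log(|x|/\sqrt t\wedge 1)=0$. If $|x|^2/(4t)\leq 1$ I would split the integral at $1$; the tail is a constant, and decomposing $e^{-u}/u = 1/u + (e^{-u}-1)/u$ with the fact that $(e^{-u}-1)/u$ is uniformly bounded on $[0,1]$ gives
\begin{equation}
\int_{|x|^2/(4t)}^1 \frac{e^{-u}}{u}\, du = -\log\!\pa{|x|^2/(4t)}+O(1) = -2\log(|x|/\sqrt t)+O(1).
\end{equation}
Hence $c_t^0(x) = -\frac{1}{2\pi}\log(|x|/\sqrt t)+O(1)$ in this regime.

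Combining the two steps produces the claimed estimate; the transitional window $\sqrt t<|x|<2\sqrt t$ is harmless since $\log(|x|/\sqrt t)$ itself lies in $[0,\log 2]$ there and can be absorbed into the $O(1)$ constant. There is no real obstacle: this is essentially the standard asymptotic of the exponential integral $E_1$ written directly in heat-kernel form, and the hypothesis $t<m^{-2}$ serves only to keep the massive correction uniformly bounded.
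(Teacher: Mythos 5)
Your proof is correct and essentially parallels the paper's argument; both hinge on a rescaling change of variables in the $s$-integral and on isolating the $\log$ singularity by peeling off the asymptotic $1/s$ behaviour. The one organizational difference is that you dispose of the mass factor first (via $1-e^{-m^2 s}\le m^2 s$ together with $m^2 t\le 1$) and then reduce to the exponential integral $\int_{|x|^2/(4t)}^\infty e^{-u}/u\,du$, whereas the paper keeps $e^{-m^2 s}$ inside the rescaled integral $\int_0^{t/|x|^2} e^{-m^2 s|x|^2-1/(4s)}/(4\pi s)\,ds$ and handles both the mass term and the Gaussian cutoff together with $|e^{-x}-1|\le x$. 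Your modular separation of the mass correction is slightly cleaner to read, but mathematically the two routes are the same.
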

\begin{proof}
	Let us write 
	\begin{equation}
	  c_t^{m^2}(x) = \int_0^{t}ds\,\frac{e^{-m^2 s}}{4\pi s}e^{-\frac{|x|^2}{4s}}
	  = \int_0^{\frac{t}{|x|^2}}ds\,\frac{e^{-m^2 s |x|^2}}{4\pi s}e^{-\frac{1}{4s}}.
	\end{equation}
	For $\frac{|x|}{\sqrt{t}}\geq 1$, we see that 
	\begin{equation}
	  0\leq c_t^{m^2}(x)\leq \int_0^1ds\,\frac{e^{-\frac{1}{4s}}}{4\pi s}<\infty,
	\end{equation}
	so it is sufficient to focus on the regime $|x|<\sqrt{t}$.
	Here (using $|e^{-x}-1|\leq x$ for $x>0$)
	\begin{align}
	  \left| c_t^{m^2}(x)+\frac{1}{2\pi}\log \frac{|x|}{\sqrt{t}}\right|
          &  \leq \int_1^{\frac{t}{|x|^2}}ds\,\frac{|e^{-m^2 s |x|^2-\frac{1}{4s}}-1|}{4\pi s}+\int_0^1 ds\,\frac{e^{-\frac{1}{4s}}}{4\pi s} \nnb
	  & \leq \frac{m^2 t}{4\pi}+\frac{1}{16\pi}\int_1^\infty \frac{ds}{s^2}+\int_0^1ds\, \frac{e^{-\frac{1}{4s}}}{4\pi s}.
	\end{align}
	Recalling that we are assuming that $m^2 t\leq 1$, this concludes the proof.
\end{proof}

The next estimate is slightly more involved.
\begin{lemma}\label{le:covest2}
	For $\beta\in(0,6\pi)$, $\xi_1,\xi_2,\xi_3\in \R^2\times \{-1,1\}$, and $0<\epsilon^2\leq t\leq m^{-2}$ we have 
	\begin{equation}
	  \left|\dot u_t^{m^2}(\xi_1,\xi_3)+\dot u_t^{m^2}(\xi_2,\xi_3)\right|\left|1-e^{-\sigma_1\sigma_2\beta \int_{\epsilon^2}^{t} ds\,\dot c_s^{m^2}(x_1-x_2)}\right|\leq F_t(\xi_1,\xi_2,\xi_3)
	\end{equation}
	for some function $F_t:(\R^2\times \{-1,1\})^3\to [0,\infty]$ which is invariant under permutations of the coordinates, independent of $\epsilon,m$, and in the notation \eqref{e:norms}, satisfies 
	\begin{equation}
	\|F_t\|_3\leq C_\beta t
	\end{equation}
	for some constant $C_\beta$ depending only on $\beta$. 
\end{lemma}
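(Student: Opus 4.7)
The plan is to distinguish the cases $\sigma_1\sigma_2=+1$ and $\sigma_1\sigma_2=-1$, produce a separate $\epsilon,m$-independent pointwise bound in each case, and then symmetrize over permutations to obtain $F_t$. For the first factor, noting that $\dot u_t^{m^2}(\xi_i,\xi_3)=\beta\sigma_i\sigma_3\dot c_t^{m^2}(x_i-x_3)$, one always has
\[
  |\dot u_t^{m^2}(\xi_1,\xi_3)+\dot u_t^{m^2}(\xi_2,\xi_3)|\leq \beta\bigl(\dot c_t^0(x_1-x_3)+\dot c_t^0(x_2-x_3)\bigr)
\]
(using $\dot c_t^{m^2}\leq\dot c_t^0$), while when $\sigma_1=-\sigma_2$ the sign cancellation gives the pointwise identity $|\dot u_t^{m^2}(\xi_1,\xi_3)+\dot u_t^{m^2}(\xi_2,\xi_3)|=\beta e^{-m^2 t}|\dot c_t^0(x_1-x_3)-\dot c_t^0(x_2-x_3)|$, which vanishes at $x_1=x_2$. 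For the second factor, Lemma~\ref{le:covest1} together with $(c_t^{m^2}-c_{\epsilon^2}^{m^2})(y)\leq c_t^0(y)$ yields the universal bound $|1-e^{-\sigma_1\sigma_2\beta(c_t^{m^2}-c_{\epsilon^2}^{m^2})(y)}|\leq \bar K(y)$, where I take $\bar K(y):=C(t/|y|^2)^{\beta/4\pi}$ for $|y|\leq\sqrt t$ and $\bar K(y):=Ce^{-|y|^2/4t}$ for $|y|>\sqrt t$ (the tail coming from $c_t^0(y)\leq Ce^{-|y|^2/4t}$, read off directly from the heat-kernel integral representation).

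In the case $\sigma_1=\sigma_2$ one additionally has the much better bound $|1-e^{-\beta(c_t^{m^2}-c_{\epsilon^2}^{m^2})(y)}|\leq\beta c_t^0(y)$ from $1-e^{-x}\leq x$. Multiplying by the first-factor bound and changing variables $y=x_1-x_2$, the $(x_1,x_2)$ double integral factors into $\int\dot c_t^0\,dx\cdot\int c_t^0(y)\,dy=1\cdot t$, giving the desired $O_\beta(t)$ contribution uniformly in $\xi_3$.

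The main obstacle is the case $\sigma_1=-\sigma_2$, since $\bar K$ is not integrable near $y=0$ once $\beta\geq 4\pi$. Here the cancellation in the first factor becomes essential: the fundamental theorem of calculus gives
\[
  |\dot c_t^0(x_1-x_3)-\dot c_t^0(x_2-x_3)|\leq |x_1-x_2|\int_0^1 |\nabla\dot c_t^0(x_2+s(x_1-x_2)-x_3)|\,ds,
\]
so after the change of variables $y=x_1-x_2$ the $x_2$-integration decouples and produces $\int|\nabla\dot c_t^0|\,dx=O(t^{-1/2})$ by an elementary Gaussian moment computation. The remaining radial integral $\int |y|\bar K(y)\,dy$ is then split at $|y|=\sqrt t$: the inner region contributes $t^{\beta/4\pi}\int_0^{\sqrt t} r^{2-\beta/2\pi}\,dr=O(t^{3/2})$, which converges precisely because $2-\beta/2\pi>-1$, i.e.\ $\beta<6\pi$, and the outer region is $O(t^{3/2})$ by Gaussian decay. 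Multiplying by the $O(t^{-1/2})$ factor yields the required $O_\beta(t)$ bound.

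Finally, I take $F_t(\xi_1,\xi_2,\xi_3)$ to be the sum over all six permutations of $(\xi_1,\xi_2,\xi_3)$ of the two pointwise bounds constructed above. By construction $F_t$ is permutation-invariant, independent of $\epsilon$ and $m$, dominates the left-hand side in every sign configuration, and satisfies $\|F_t\|_3\leq C_\beta t$. The crux of the argument is the case $\sigma_1\sigma_2=-1$ at $\beta\geq 4\pi$, where the singular exponential factor near the diagonal must be balanced against the Lipschitz vanishing of the difference of heat kernels; the condition $\beta<6\pi$ provides exactly the margin needed for the resulting radial integral to converge at the origin.
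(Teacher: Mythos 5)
Your proof is correct and rests on the same case distinction as the paper ($\sigma_1=\sigma_2$ versus $\sigma_1\neq\sigma_2$) and the same compensation mechanism in the hard case: the non-integrable singularity of $|1-e^{\beta(c_t^{m^2}-c_{\epsilon^2}^{m^2})}|$ at $x_1=x_2$ is offset by the Lipschitz vanishing of $\dot c_t(x_1-x_3)-\dot c_t(x_2-x_3)$, with $\beta<6\pi$ providing exactly the margin for integrability. The technical execution differs, however. The paper bounds the heat-kernel difference via the mean value theorem and the pointwise estimate $|\nabla_u e^{-|u-x_3|^2/4t}|\leq A(\alpha)t^{-1/2}e^{-\alpha|u-x_3|/\sqrt{t}}$, then uses a triangle inequality to trade $|u-x_3|$ for $|x_1-x_3|$ at the cost of a factor $e^{\alpha|x_1-x_2|/\sqrt{t}}$ that is absorbed into the Gaussian produced by the exponential factor, which in turn is controlled through the $r$-integral representation \eqref{e:expest}. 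You instead use the FTC integral $|x_1-x_2|\int_0^1|\nabla\dot c_t^0|\,ds$ followed by a linear change of variables that decouples the $x_3$-integration, and replace \eqref{e:expest} by the piecewise explicit majorant $\bar K(y)$ (power law for $|y|\leq\sqrt{t}$, Gaussian tail for $|y|>\sqrt{t}$); this is cleaner bookkeeping and reaches the same $O_\beta(t)$ estimate. One thing you assert but should spell out: that the fully symmetrized $F_t$ still satisfies $\|F_t\|_3\leq C_\beta t$. Your pointwise bound is not symmetric in its integration structure, so you must check that each of the six permuted copies still integrates to $O_\beta(t)$ when any one coordinate is held fixed and the other two are integrated; this does hold after appropriate linear changes of variables, but it is not automatic from the single estimate and deserves an explicit line.
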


\begin{proof}
	The proof is slightly lengthy and we split it into two parts.
	
	\medskip 
	
	\underline{Case 1: $\sigma_1\neq \sigma_2$:} Let us first consider $\sigma_1\neq \sigma_2$ and bound the quantity
	$|\dot u_t^{m^2}(\xi_1,\xi_3)+\dot u_t^{m^2}(\xi_2,\xi_3)|=\beta|\dot c_t^{m^2}(x_1-x_3)-\dot c_t^{m^2}(x_2-x_3)|$. 
	Let us write $[x_1,x_2]$ for the line segment from $x_1$ to $x_2$. 
	We then have by the mean value theorem (recalling that $t\leq m^{-2}$)
	\begin{align}
	\left|\dot c_t^{m^2}(x_1-x_3)-\dot c_t^{m^2}(x_2-x_3)\right| & =
	\frac{e^{-m^2t}}{4\pi t}
	\left|e^{-\frac{|x_1-x_3|^2}{4t}}-e^{-\frac{|x_2-x_3|^2}{4t}}\right|\notag\\
	&\leq \frac{1}{4\pi t}
	|x_1-x_2| \sup_{u\in[x_1,x_2]}
	\left|\nabla_u e^{-\frac{|u-x_3|^2}{4t}}\right|. \label{e:1diff}
	\end{align}
        To bound the gradient, we use that
        for any $\alpha>0$, 
	there exists $A(\alpha)$ (depending only on $\alpha$) such that 
	\begin{equation}
	\left|\nabla_u e^{-\frac{|u-x_3|^2}{4t}}\right|
	\leq \frac{|u-x_3|}{t} e^{-\frac{|u-x_3|^2}{4t}}
	\leq A(\alpha) t^{-1/2}e^{-\alpha\frac{|u-x_3|}{\sqrt{t}}}.
	\end{equation}
	We used here the estimate that there 
	exists a $A(\alpha)$ such that $xe^{-\frac{x^2}{4}}\leq A(\alpha)e^{-\alpha x}$ for all $x>0$. 
	
	From the triangle inequality we find that, for $u\in [x_1,x_2]$,
	\begin{align}
	-|u-x_3|\leq |x_1-u|-|x_1-x_3|\leq |x_1-x_2|-|x_1-x_3|.
	\label{e:triang}
	\end{align}
	This leads to the following bound: for any $\alpha>0$
	\begin{equation}
	\left|\dot c_t^{m^2}(x_1-x_3)-\dot c_t^{m^2}(x_2-x_3)\right| \leq A(\alpha)\frac{1}{4\pi t^{3/2}}|x_1-x_2| e^{\alpha \frac{|x_1-x_2|}{\sqrt{t}}} e^{-\alpha\frac{|x_1-x_3|}{\sqrt{t}}}.
	\label{e:diffbound}
	\end{equation}
	
	The second term in our statement we bound with the following estimate which is a consequence of Lemma \ref{le:covest1} (recall that $\int_{\epsilon^2}^t\dot c_r^{m^2}(x_1-x_2)dr\leq c_t^{m^2}(x_1-x_2)$):
	\begin{align}\label{e:expest}
	\left|1-e^{\beta \int_{\epsilon^2}^{t} dr\, \dot c_r^{m^2}(x_1-x_2)}\right|&\leq e^{\beta c_t^{m^2}(x_1-x_2)}-1\notag \\
	&= \int_0^t dr\, \beta \dot c_r^{m^2}(x_1-x_2)e^{\beta c_r^{m^2}(x_1-x_2)}\notag \\
	&\leq C\int_0^t \frac{dr}{r}e^{-\frac{|x_1-x_2|^2}{4r}} e^{-\frac{\beta}{2\pi}\log \left[\frac{|x_1-x_2|}{\sqrt{r}}\wedge 1\right]},
	\end{align}
	where the constant is universal.
	
	Combining our estimates, we see that for each $\alpha>0$, there 
	exist $A(\alpha),\widetilde A(\alpha)$ (depending only on $\alpha$ and possibly different from our previous $A(\alpha)$) such that 
	\begin{align}
	&\left|\dot u_{t}^{m^2}(\xi_1,\xi_3)+\dot u_{t}^{m^2}(\xi_2,\xi_3)\right|\left|1-e^{-\sigma_1\sigma_2\beta \int_{\epsilon^2}^{t}ds\, \dot c_s^{m^2}(x_1-x_2)}\right|\nnb
	&\qquad \leq A(\alpha)t^{-3/2}\int_0^t \frac{dr}{r}|x_1-x_2| e^{\alpha \frac{|x_1-x_2|}{\sqrt{t}}} e^{-\alpha\frac{|x_1-x_3|}{\sqrt{t}}}e^{-\frac{|x_1-x_2|^2}{4r}} e^{-\frac{\beta}{2\pi}\log \left[\frac{|x_1-x_2|}{\sqrt{r}}\wedge 1\right]}\nnb
	&\qquad \leq \widetilde A(\alpha)t^{-3/2}\int_0^t \frac{dr}{r}|x_1-x_2| e^{\alpha \frac{|x_1-x_2|}{\sqrt{t}}} e^{-\alpha\frac{|x_1-x_3|}{\sqrt{t}}}e^{-2\alpha\frac{|x_1-x_2|}{\sqrt{r}}} e^{-\frac{\beta}{2\pi}\log \left[\frac{|x_1-x_2|}{\sqrt{r}}\wedge 1\right]}\nnb
	&\qquad \leq \widetilde A(\alpha)t^{-3/2}\int_0^t\frac{dr}{r} |x_1-x_2| e^{-\alpha\frac{|x_1-x_3|}{\sqrt{t}}}e^{-\alpha\frac{|x_1-x_2|}{\sqrt{r}}} e^{-\frac{\beta}{2\pi}\log \left[\frac{|x_1-x_2|}{\sqrt{r}}\wedge 1\right]},
	\end{align}
	where we made use of the estimate that for some $A(\alpha)$, $e^{-x^2}\leq A(\alpha)e^{-4\alpha x}$ for $x>0$ and that for $r\leq t$, $e^{\alpha \frac{|x_1-x_2|}{\sqrt{t}}-\alpha \frac{|x_1-x_2|}{\sqrt{r}}}\leq 1$. To summarize, choosing $\alpha=1$ we have the bound 
	\begin{align}
		&\left|\dot u_{t}^{m^2}(\xi_1,\xi_3)+\dot u_{t}^{m^2}(\xi_2,\xi_3)\right|\left|1-e^{-\sigma_1\sigma_2\beta \int_{\epsilon^2}^{t}ds \,\dot c_s^{m^2}(x_1-x_2)}\right|\nnb
		&\qquad \leq Ct^{-3/2}\int_0^t \frac{dr}{r} |x_1-x_2| e^{-\frac{|x_1-x_3|}{\sqrt{t}}}e^{-\frac{|x_1-x_2|}{\sqrt{r}}} e^{-\frac{\beta}{2\pi}\log \left[\frac{|x_1-x_2|}{\sqrt{r}}\wedge 1\right]}
	\end{align}
	for some universal constant $C$ and we can then define $F_t$ (at least in the case $\sigma_1\neq \sigma_2$) to be the function obtained by symmetrizing the above function with respect to the variables $x_i$. Note in particular that this is independent of $\epsilon,m$.
	
	To control $\|F_t\|_3$, let us in all of our terms (coming from symmetrization) shift $x_2$ and $x_3$ by $x_1$ so we are left with the estimate 
	\begin{equation}
	\|F_t\mathbf{1}_{\sigma_1\neq \sigma_2}\|_3\leq Ct^{-3/2}\int_0^t \frac{dr}{\sqrt{r}}\int_{\R^2}dx \, e^{-\frac{|x|}{\sqrt{t}}}\int_{\R^2}dy \, \frac{|y|}{\sqrt{r}}e^{-\frac{|y|}{\sqrt{r}}}\left(\frac{|y|}{\sqrt{r}}\wedge 1\right)^{-\frac{\beta}{2\pi}}
	\end{equation}
	for some universal constant $C$. By a change of integration variables, the $x$-integral is some universal constant times $t$ while the $y$-integral is some constant depending on $\beta$ times $r$ (note that the singularity at the origin is integrable precisely for $\beta<6\pi$). Thus for some constant $C_\beta$ (depending only on $\beta$)
	\begin{equation}
	\|F_t \mathbf{1}_{\sigma_1\neq \sigma_2}\|_3\leq C_\beta t^{-1/2}\int_0^t dr\, r^{1/2}\leq C_\beta t
	\end{equation}
	which was the claim. 
	
	\medskip 
	
	\underline{Case 2: $\sigma_1=\sigma_2$:} For $\sigma_1=\sigma_2$, we simply write 
	\begin{equation}
	\left|\dot u_{t}^{m^2}(\xi_1,\xi_3)+\dot u_{t}^{m^2}(\xi_2,\xi_3)\right|\leq \frac{\beta}{4\pi t}\left(e^{-\frac{|x_1-x_3|^2}{4t}}+e^{-\frac{|x_2-x_3|^2}{4t}}\right), 
	\end{equation}
	while for the exponential, we have by Lemma \ref{le:covest1} (for some universal constant $C$)
	\begin{align}
          \left|1-e^{-\beta \int_{\epsilon^2}^{t}dr\,\dot c_r^{m^2}(x_1-x_2)}\right|&\leq 1-e^{-\beta c_t^{m^2}(x_1-x_2)}
                                                                                    \nnb
          &=\int_0^t dr\, \beta \dot c_r^{m^2}(x_1-x_2)e^{-\beta c_r^{m^2}(x_1-x_2)}
          \nnb
          &\leq C\int_0^t dr\, \frac{e^{-\frac{|x_1-x_2|^2}{4r}}}{r}e^{\frac{\beta}{2\pi}\log\left[\frac{|x_1-x_2|}{\sqrt{r}}\wedge 1\right]}.
	\end{align}
    Combining the estimates, we have (for some possibly different universal constant)
    \begin{align}
      &\left|\dot u_{t}^{m^2}(\xi_1,\xi_3)+\dot u_{t}^{m^2}(\xi_2,\xi_3)\right|\left|1-e^{-\beta \int_{\epsilon^2}^{t} dr\, \dot c_r^{m^2}(x_1-x_2)}\right|
      \nnb
    &\qquad \leq Ct^{-1}\left(e^{-\frac{|x_1-x_3|^2}{4t}}+e^{-\frac{|x_2-x_3|^2}{4t}}\right)\int_0^t dr\,  \frac{e^{-\frac{|x_1-x_2|^2}{4r}}}{r}e^{\frac{\beta}{2\pi}\log\left[\frac{|x_1-x_2|}{\sqrt{r}}\wedge 1\right]}.
    \end{align} 
    The relevant function $F_t$ is again obtained by symmetrizing with respect to $x_1,x_2,x_3$.
    
    To estimate the norm, we can again get rid of $x_1$ by a shift of the integration variables. One is left with the estimate 
    \begin{equation}
    \|F_t \mathbf{1}_{\sigma_1= \sigma_2}\|_3\leq C \int_{\R^2}dx\frac{e^{-\frac{|x|^2}{4t}}}{t}\int_0^t dr\int_{\R^2}\frac{dy}{r} e^{-\frac{|y|^2}{4r}}\left(\frac{|y|}{\sqrt{r}}\wedge 1\right)^{\frac{\beta}{2\pi}}\leq \widetilde C t
    \end{equation}
    now for universal constants $C,\widetilde C$. This concludes the proof.
\end{proof}

The final estimate we shall need involves four points. 

\begin{lemma}\label{le:covest3}
	For $\beta\in(0,6\pi)$, $0<\epsilon^2\leq t<m^{-2}$, and $\xi_1,\dots,\xi_4\in \R^2\times \{-1,1\}$, we have
	\begin{multline}
          \left|\sum_{i\in \{1,2\},j\in \{3,4\}}\dot u_t^{m^2}(\xi_i,\xi_j)\right|\left|1-e^{-\beta\sigma_1\sigma_2\int_{\epsilon^2}^tdr\,\dot c_r^{m^2}(x_1-x_2)}\right|\left|1-e^{-\beta\sigma_3\sigma_4\int_{\epsilon^2}^t dr\,\dot c_r^{m^2}(x_3-x_4)}\right|
          \\
          \leq G_t(\xi_1,\xi_2,\xi_3,\xi_4)
	\end{multline}
	for some function $G_t$ which is independent of $\epsilon,m$ and is symmetric in the arguments. Moreover, there exists a constant $C_\beta$ depending only on $\beta$ such that, in the notation \eqref{e:norms},
	\begin{equation}
	\|G_t\|_4\leq C_\beta t^2.
	\end{equation}
\end{lemma}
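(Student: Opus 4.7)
My plan is to mimic the structure of Lemma~\ref{le:covest2}, splitting into cases according to the signs $\sigma_1\sigma_2,\sigma_3\sigma_4\in\{\pm 1\}$. In each case I will produce a pointwise $\epsilon,m$-independent upper bound for the integrand; the required symmetric $G_t$ is then obtained by symmetrizing the maximum of these bounds over the $24$ permutations of the four arguments, which preserves the $O(t^2)$ norm bound up to a combinatorial constant.

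In the easy case $\sigma_1\sigma_2=\sigma_3\sigma_4=+1$, each factor $|1-e^{-\beta\int\dot c_r^{m^2}}|$ admits the bound $C\int_0^t(dr/r)e^{-|y|^2/4r}(|y|/\sqrt{r}\wedge 1)^{+\beta/2\pi}$ with the \emph{positive} power of $\beta/2\pi$ coming from \eqref{e:expest}, whose integral over the corresponding coordinate is $O(t)$ for every $\beta>0$. Combined with the crude pointwise bound $|S|\leq(\beta/\pi t)\sum_{i,j}e^{-|x_i-x_j|^2/4t}$, where $S:=\sum_{i\in\{1,2\},j\in\{3,4\}}\dot u_t^{m^2}(\xi_i,\xi_j)$, the product integrates to $O(t^2)$ (one spatial integral absorbs the Gaussian in $|S|$ and contributes $O(1)$, while the two $r$-integrals each contribute $O(t)$). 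In the mixed case $\sigma_1\sigma_2=+1$, $\sigma_3\sigma_4=-1$ (and its mirror), I will write $S=[\dot u_t^{m^2}(\xi_1,\xi_3)+\dot u_t^{m^2}(\xi_1,\xi_4)]+[\dot u_t^{m^2}(\xi_2,\xi_3)+\dot u_t^{m^2}(\xi_2,\xi_4)]$ and apply Lemma~\ref{le:covest2} to each bracket---with $(\xi_3,\xi_4)$ playing the role of the cancelling pair---to obtain $|S|\,|1-e^{-\beta\sigma_3\sigma_4\int\dot c_r^{m^2}(x_3-x_4)}|\leq F_t(\xi_3,\xi_4,\xi_1)+F_t(\xi_3,\xi_4,\xi_2)$. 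Multiplying by the remaining factor $|1-e^{-\beta\int\dot c_r^{m^2}(x_1-x_2)}|$, which enjoys the good $+\beta/2\pi$ bound because $\sigma_1\sigma_2=+1$, and using Fubini, yields the product of $\|F_t\|_3=O(t)$ from Lemma~\ref{le:covest2} and an $O(t)$ integral over $x_2$, for a total of $O(t^2)$.

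The main obstacle is the case $\sigma_1\sigma_2=\sigma_3\sigma_4=-1$, where both exponential factors admit only the bad bound with $(|y|/\sqrt{r})^{-\beta/2\pi}$, which is not integrable for $\beta\in[4\pi,6\pi)$. The key observation is that in this case
\[S=\beta\sigma_1\sigma_3\bigl[\dot c_t^{m^2}(x_1-x_3)-\dot c_t^{m^2}(x_1-x_4)-\dot c_t^{m^2}(x_2-x_3)+\dot c_t^{m^2}(x_2-x_4)\bigr]\]
is a \emph{mixed second difference}. Two Taylor expansions in the variables $x_1-x_2$ and $x_3-x_4$ give $|S|\leq C\beta|x_1-x_2|\,|x_3-x_4|\sup_u|\nabla^2\dot c_t^{m^2}(u)|$, and the standard Gaussian-to-exponential estimate (used already for the first-derivative bound in Lemma~\ref{le:covest2}) upgrades this, for any $\alpha>0$, to $|S|\leq C(\alpha)\beta t^{-2}|x_1-x_2|\,|x_3-x_4|\,e^{\alpha(|x_1-x_2|+|x_3-x_4|)/\sqrt{t}}e^{-\alpha|x_1-x_3|/\sqrt{t}}$, after using the triangle inequality $|u_{s,u}|\geq|x_1-x_3|-|x_1-x_2|-|x_3-x_4|$ along the segments produced by the Taylor expansion. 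The extra factors $|x_1-x_2|$ and $|x_3-x_4|$ exactly cure the integrability defect, matching the $(|y|/\sqrt{r})^{-\beta/2\pi}$ singularities in the same way as in the $\sigma_1\neq\sigma_2$ sub-case of Lemma~\ref{le:covest2}; the delicate step is absorbing the exponential correction factors $e^{\alpha|x_1-x_2|/\sqrt{t}},e^{\alpha|x_3-x_4|/\sqrt{t}}$ into the dominant decay $e^{-|y|^2/4r}\leq A(\alpha)e^{-2\alpha|y|/\sqrt{r}}$ from the integral representations of the two exponential factors, which is possible because $r\leq t$ forces $1/\sqrt{r}\geq 1/\sqrt{t}$. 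The substitutions $y=\sqrt{r}z$ in each of the $x_2$- and $x_4$-integrals then reduce everything to universal constants finite for $\beta<6\pi$, the $x_3$-integral contributes $\int e^{-\alpha|y|/\sqrt t}dy=O(t)$, and the two $r$-integrals each contribute $\int_0^t r^{1/2}dr=O(t^{3/2})$, combining to $t^{-2}\cdot t\cdot t^{3/2}\cdot t^{3/2}=O(t^2)$ as claimed.
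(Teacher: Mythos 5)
Your proof is correct and follows essentially the same strategy as the paper: a case split on the signs $\sigma_1\sigma_2,\sigma_3\sigma_4$, with a second-order mean-value bound plus the triangle inequality for the hard case $\sigma_1\sigma_2=\sigma_3\sigma_4=-1$, and a reduction to Lemma~\ref{le:covest2} in the remaining cases, leading to the same power-counting $t^{-2}\cdot t\cdot t^{3/2}\cdot t^{3/2}=t^2$. The only (immaterial) difference is that you treat three sub-cases where the paper uses two: the paper handles ``at least one same-sign pair'' uniformly by always applying the $F_t$-bound of Lemma~\ref{le:covest2} to the $(\xi_1,\xi_2)$-pair (valid regardless of the sign of $\sigma_1\sigma_2$) together with the trivial bound $1-e^{-\beta\int\dot c_r}\leq\beta c_t^{m^2}$ on the same-sign side, so your separate ``easy case'' argument, while correct, is a harmless redundancy.
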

\begin{proof}
	The proof is very similar to that of Lemma \ref{le:covest2}. We again split it into two cases. 
	
	\medskip
	
	\underline{Case 1: $\sigma_1\neq \sigma_2$ and $\sigma_3\neq \sigma_4$:} Let us begin by considering the case $\sigma_1\neq \sigma_2$ and $\sigma_3\neq \sigma_4$. Arguing as in \eqref{e:1diff}, but noting that now we are dealing with a kind of second order difference, we find bounds in terms of the second order derivative of the heat kernel. Using again an elementary estimate bounding $x^2 e^{-x^2}+e^{-x^2}$ in terms of $e^{-\alpha x}$ times a constant depending only on $\alpha>0$, we find that for each $\alpha>0$ there exists a constant $A(\alpha)$ (depending only on $\alpha$) such that
	\begin{align}
	\left|\sum_{\substack{i\in \{1,2\}\\j\in \{3,4\}}}\dot u_t^{m^2}(\xi_i,\xi_j)\right|&=\left|\dot c_t^{m^2}(x_1-x_3)-\dot c_t^{m^2}(x_2-x_3)-\dot c_t^{m^2}(x_1-x_4)+\dot c_t^{m^2}(x_2-x_4)\right|\nnb
          &\leq A(\alpha)\frac{|x_1-x_2||x_3-x_4|}{t^2}\sup_{\substack{u\in[x_1,x_2]\\v\in[x_3,x_4]}}e^{-\alpha \frac{|u-v|}{\sqrt{t}}}. 
	\end{align}
	Instead of the bound \eqref{e:triang}, we now use the fact (again following from the triangle inequality) that 
	\begin{equation}
	-|u-v|\leq |x_1-u|+|x_3-v|-|x_1-x_3|\leq |x_1-x_2|+|x_3-x_4|-|x_1-x_3|
	\end{equation}
	which yields for our choice of $\sigma$'s that for each $\alpha>0$, there exists $A(\alpha)$ such that
	\begin{equation}
	\left|\sum_{i\in \{1,2\},j\in \{3,4\}}\dot u_t^{m^2}(\xi_i,\xi_j)\right|\leq t^{-2}A(\alpha)|x_1-x_2||x_3-x_4|e^{\alpha\frac{|x_1-x_2|}{\sqrt{t}}+\alpha\frac{|x_3-x_4|}{\sqrt{t}}-\alpha \frac{|x_1-x_3|}{\sqrt{t}}}.
	\end{equation}
	The exponentials we estimate as in \eqref{e:expest} and arguing as in the proof of Lemma \ref{le:covest2} (choosing $\alpha'$ and $\alpha$ in a similar way etc.), we arrive at the bound 
	\begin{align}
	&\left|\sum_{i\in \{1,2\},j\in \{3,4\}}\dot u_t^{m^2}(\xi_i,\xi_j)\right|\left|1-e^{-\beta\sigma_1\sigma_2\int_{\epsilon^2}^t dr \,\dot c_r^{m^2}(x_1-x_2)}\right|\left|1-e^{-\beta\sigma_3\sigma_4\int_{\epsilon^2}^t dr \,\dot c_r^{m^2}(x_3-x_4)}\right| \nnb
	&\qquad \leq C t^{-2}e^{-\frac{|x_1-x_3|}{\sqrt{t}}}\int_0^t \frac{dr}{r^{1/2}} \frac{|x_1-x_2|}{\sqrt{r}}e^{-\frac{|x_1-x_2|}{\sqrt{r}}}\left(\frac{|x_1-x_2|}{\sqrt{r}}\wedge 1\right)^{-\frac{\beta}{2\pi}} \nnb
	&\qquad \qquad \qquad \qquad \times \int_0^t\frac{ds}{\sqrt{s}} \frac{|x_3-x_4|}{\sqrt{s}}e^{-\frac{|x_3-x_4|}{\sqrt{s}}}\left(\frac{|x_3-x_4|}{\sqrt{s}}\wedge 1\right)^{-\frac{\beta}{2\pi}}. 
	\end{align}
	Symmetrizing with respect to the $x_i$ yields our function $G_t$. Its norm can be estimated with similar scaling arguments as in the proof of Lemma \ref{le:covest2} and we find 
	\begin{equation}
	\|G_t \mathbf 1_{\sigma_1\neq \sigma_2}  \mathbf 1_{\sigma_3 \neq \sigma_4}\|_4\leq C_\beta t^{-1}\left(\int_0^t dr\, r^{1/2}\right)^2\leq C_\beta t^2,
	\end{equation}
	which was the claim. 

	\medskip 

    \underline{Case 2: $\sigma_1=\sigma_2$ or $\sigma_3=\sigma_4$ or both:} let us assume (by symmetry) that $\sigma_3=\sigma_4$. We can then use Lemma \ref{le:covest2} (and the triangle inequality) to write 
    \begin{align}
    &\left|\sum_{i\in \{1,2\},j\in \{3,4\}}\dot u_t^{m^2}(\xi_i,\xi_j)\right|\left|1-e^{-\beta\sigma_1\sigma_2\int_{\epsilon^2}^t dr\, \dot c_r^{m^2}(x_1-x_2)}\right|\left|1-e^{-\beta\sigma_3\sigma_4\int_{\epsilon^2}^t dr \,\dot c_r^{m^2}(x_3-x_4) }\right| \nnb
    &\qquad \leq (F_t(\xi_1,\xi_2,\xi_3)+F_t(\xi_1,\xi_2,\xi_4))\left(1-e^{-\beta \int_0^t dr\, \dot c_r^{m^2}(x_3-x_4)}\right)\nnb
    &\qquad \leq \beta(F_t(\xi_1,\xi_2,\xi_3)+F_t(\xi_1,\xi_2,\xi_4)) \int_0^t \frac{dr}{4\pi r}e^{-\frac{|x_3-x_4|^2}{4r}}.
    \end{align}
    The claim follows now by symmetrizing and Lemma~\ref{le:covest2} which implies
    \begin{equation}
      \|G_t (1-\mathbf 1_{\sigma_1\neq \sigma_2}  \mathbf 1_{\sigma_3 \neq \sigma_4})\|_4 \leq
      C_\beta t \int_{\R^2} dx \int_0^t \frac{dr}{4\pi r}e^{-\frac{|x|^2}{4r}}
      \leq
      C_\beta t^2
    \end{equation}
    as needed.
\end{proof}

\subsection{Proof of Proposition~\ref{pr:ybound}}

We begin our proof of Proposition~\ref{pr:ybound} with the remark that
\eqref{e:v1def} implies (using $x>0$, $1-e^{-x}\leq x$ and that $m^2t\leq 1$)
\begin{equation}
0 \leq \widetilde v_t^{1}(\xi|\epsilon,m)=e^{-\frac{\beta}{2}(\int_{\epsilon^2}^{t} ds \,\frac{e^{-m^2s}-1}{4\pi s}+\frac{1}{4\pi}\log t)}\leq t^{-\frac{\beta}{8\pi}}e^{\frac{\beta}{8\pi}m^2 t}\leq C_\beta t^{-\frac{\beta}{8\pi}}=:h_t^1(\xi)
\label{e:v1bound}
\end{equation}
for a constant $C_\beta$ depending only on $\beta$. This verifies the bound in Proposition~\ref{pr:ybound} for $n=1$.
We will verify the claimed bound explicitly also for $n=3$ and $n=4$, but prove the rest of it by induction. We will make use of the following explicit form of the $n=2$ term:
\begin{align}
  \widetilde v_t^{2}(\xi_1,\xi_2|\epsilon,m)
  &=\beta \sigma_1\sigma_2\int_{\epsilon^2}^{t}ds\, \bigg(\dot c_s^{m^2}(x_1-x_2)e^{-\beta(\int_{\epsilon^2}^{s} dr \,\dot c_r^{m^2}(0)+\frac{1}{4\pi}\log \epsilon^2)}
    \nnb
  & \qquad\qquad\qquad\qquad\qquad e^{-\beta \int_s^t dr\, \dot c_r^{m^2}(0)-\beta \sigma_1\sigma_2\int_{s}^t dr\,\dot c_r^{m^2}(x_1-x_2)} \bigg)
    \notag \\
&=e^{-\beta (\int_{\epsilon^2}^{t} dr\,\dot c_r^{m^2}(0)+\frac{1}{4\pi}\log \epsilon^2)}\left(1-e^{-\beta \sigma_1\sigma_2\int_{\epsilon^2}^{t} ds\,\dot c_s^{m^2}(x_1-x_2)}\right). 
\label{e:v2expl}
\end{align}
Indeed, this equalitity follows from a straightforward calculation (using \eqref{e:v1def} and \eqref{e:vndef}).
This allows us to prove Proposition \ref{pr:ybound} in the special case $n=3$.

\begin{lemma}\label{le:v3bound}
	For $\beta\in(0,6\pi)$ and $t>0$, there exists a function $h_t^3$ which is independent of $\epsilon,m>0$ and for $0<\epsilon^2<t<m^{-2}$
	\begin{equation} 
	|\widetilde v_t^{3}(\xi_1,\xi_2,\xi_3|\epsilon,m)|\leq h_t^3(\xi_1,\xi_2,\xi_3)
	\end{equation}
	and, in the notation \eqref{e:norms}, $\|h_t^3\|_3\leq C_\beta t^{-1}(t^{1-\frac{\beta}{8\pi}})^3$ for a constant $C_\beta$ depending only on $\beta$.
\end{lemma}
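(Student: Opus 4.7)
The plan is to apply the recursion \eqref{e:vndef} at $n=3$ and combine the bound \eqref{e:v1bound} on $\widetilde v^1$ with the explicit formula \eqref{e:v2expl} for $\widetilde v^2$ and the covariance estimates in Lemmas~\ref{le:covest1} and~\ref{le:covest2}. After pairing each ordered partition $I_1\dot{\cup}I_2=[3]$ with its swap (absorbed by the $\tfrac12$), the recursion leaves three contributions indexed by the choice $k\in\{1,2,3\}$ of singleton, each of the form
\begin{equation}
 \int_{\epsilon^2}^t ds\; \widetilde v_s^1(\xi_k) \sum_{j\neq k}\dot u_s^{m^2}(\xi_k,\xi_j)\, \widetilde v_s^2(\xi_{[3]\setminus\{k\}})\, e^{-(w_t^{m^2}(\xi_1,\xi_2,\xi_3)-w_s^{m^2}(\xi_1,\xi_2,\xi_3))}.
\end{equation}

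To handle the exponential, I first observe that via the Fourier representation $\dot c_r^{m^2}(x)=\int \frac{dp}{(2\pi)^2}\, e^{-ip\cdot x}e^{-r(|p|^2+m^2)}$ one has
\begin{equation}
 w_t^{m^2}-w_s^{m^2}=\tfrac{\beta}{2}\int_s^t dr \int \frac{dp}{(2\pi)^2}\,e^{-r(|p|^2+m^2)}\,\Big|\sum_{i=1}^3 \sigma_i e^{ip\cdot x_i}\Big|^2\geq 0,
\end{equation}
so that $|e^{-(w_t^{m^2}-w_s^{m^2})}|\leq 1$. Using \eqref{e:v1def}, \eqref{e:v2expl}, and Lemma~\ref{le:covest1} to evaluate the integrals $\int_{\epsilon^2}^s \dot c_r^{m^2}(0)\,dr$ appearing in $\widetilde v_s^1$ and $\widetilde v_s^2$, I next extract from $\widetilde v_s^1(\xi_k)\widetilde v_s^2(\xi_i,\xi_j)$ a uniform prefactor bounded by $C_\beta s^{-3\beta/8\pi}$ (valid for $\epsilon^2\leq s\leq m^{-2}$), leaving the factor $\big|1-e^{-\beta\sigma_i\sigma_j\int_{\epsilon^2}^s \dot c_r^{m^2}(x_i-x_j)\,dr}\big|$. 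The key step is to pair this with the sum $\dot u_s^{m^2}(\xi_k,\xi_i)+\dot u_s^{m^2}(\xi_k,\xi_j)$ and to invoke Lemma~\ref{le:covest2} with its indices relabeled as $(\xi_i,\xi_j,\xi_k)$; the stated symmetry of $F_s$ then bounds each of the three contributions by $F_s(\xi_1,\xi_2,\xi_3)$.

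Setting $h_t^3(\xi_1,\xi_2,\xi_3):=3C_\beta\int_0^t ds\, s^{-3\beta/8\pi}\,F_s(\xi_1,\xi_2,\xi_3)$, which is manifestly $\epsilon,m$-independent and dominates $|\widetilde v_t^3|$ (since the integrand is nonnegative and $[\epsilon^2,t]\subset[0,t]$), the Minkowski-type inequality $\|\int g(s,\cdot)\,ds\|_3\leq \int\|g(s,\cdot)\|_3\,ds$ combined with $\|F_s\|_3\leq C_\beta s$ yields
\begin{equation}
 \|h_t^3\|_3\leq C_\beta'\int_0^t s^{1-3\beta/8\pi}\,ds = C_\beta''\,t^{2-3\beta/8\pi} = C_\beta''\,t^{-1}\big(t^{1-\beta/8\pi}\big)^3,
\end{equation}
as required. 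The main delicate point is uniformity in $\epsilon$, which rests on the integrability of $s^{1-3\beta/8\pi}$ at $s=0$; this is immediate for $\beta$ bounded away from $6\pi$, and at the upper end of the interval $(0,6\pi)$ one would expect to need to supplement the scalar norm bound on $F_s$ with the explicit spatial decay visible in the proof of Lemma~\ref{le:covest2}.
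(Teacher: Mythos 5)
Your argument is structurally the same as the paper's — same unwinding of the recursion \eqref{e:vndef}, same use of \eqref{e:v1bound}, \eqref{e:v2expl} and Lemma~\ref{le:covest2}, same Minkowski step — but you diverge at exactly one spot, and that spot is a genuine gap. You drop the factor $e^{-(w_t^{m^2}-w_s^{m^2})}$ entirely, using only $w_t^{m^2}-w_s^{m^2}\geq 0$. This produces the integrand $s^{-3\beta/8\pi}F_s$, and $\int_0^t s^{1-3\beta/8\pi}\,ds$ converges near $s=0$ only for $1-\tfrac{3\beta}{8\pi}>-1$, i.e.\ $\beta<\tfrac{16\pi}{3}\approx 5.33\pi$. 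So your proof fails on the nonempty interval $[\tfrac{16\pi}{3},6\pi)$, which the lemma must cover. Your closing remark locates the problem "at the upper end of the interval $(0,6\pi)$," but the threshold is $\tfrac{16\pi}{3}$, strictly interior to $(0,6\pi)$, and your proposed fix (exploit spatial decay of $F_s$) is not what closes it.

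What the paper does instead is keep a piece of the dropped exponential. Writing $\widetilde v_s^1(\xi_k)\,\widetilde v_s^2(\xi_i,\xi_j)=(\widetilde v_s^1)^3\bigl(1-e^{-\beta\sigma_i\sigma_j\int_{\epsilon^2}^s\dot c_r^{m^2}(x_i-x_j)\,dr}\bigr)$, they factor out the \emph{less} singular $(\widetilde v_t^1)^3\leq C_\beta t^{-3\beta/8\pi}$ (constant for the $s$-integral), absorbing the ratio $(\widetilde v_s^1/\widetilde v_t^1)^3=e^{\frac{3\beta}{2}\int_s^t\dot c_r^{m^2}(0)\,dr}$ into the exponential. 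The combined exponential is then $e^{-\beta\sum_{i<j}\sigma_i\sigma_j\int_s^t\dot c_r^{m^2}(x_i-x_j)\,dr}$, and for this they prove the lower bound \eqref{e:covineq},
\begin{equation}
 \sigma_1\sigma_2\int_s^t\dot c_r^{m^2}(x_1-x_2)\,dr+\sigma_1\sigma_3\int_s^t\dot c_r^{m^2}(x_1-x_3)\,dr+\sigma_2\sigma_3\int_s^t\dot c_r^{m^2}(x_2-x_3)\,dr\geq -\tfrac{1}{4\pi}\log\tfrac{t}{s}-C,
\end{equation}
which cannot be obtained from positive-definiteness alone: it exploits the triangle inequality to beat the worst case $\sigma_1=\sigma_2\neq\sigma_3$ with $|x_1-x_2|$ largest. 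This turns the exponential into a factor $(t/s)^{\beta/4\pi}$, and the $s$-integrand becomes $s^{1-\beta/4\pi}$, integrable for all $\beta<8\pi$. The moral: the "double the charge" configuration $\sigma_1=\sigma_2\neq\sigma_3$ forces a genuine cancellation in the off-diagonal covariances, and you need to keep track of it; dropping the whole exponential by positivity throws that cancellation away and costs you an extra $\beta/8\pi$ in the $s$-exponent.
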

\begin{proof}
  From the definitions of $\widetilde v_t^{n}$ in \eqref{e:v1def} and \eqref{e:vndef} and the expression for $\widetilde v_t^2$ from \eqref{e:v2expl},
  a straightforward calculation shows that, for any $\xi_1,\xi_2,\xi_3$,
	\begin{align}
          &\widetilde v_t^{3}(\xi_1,\xi_2,\xi_3|\epsilon,m)
          \nnb
          & = \beta \left(\widetilde v_t^{1}(\xi|\epsilon,m)\right)^3 \int_{\epsilon^2}^{t} ds\bigg[\left(\dot u_s^{m^2}(\xi_1,\xi_2)+\dot u_s^{m^2}(\xi_1,\xi_3)\right)\left(1-e^{-\sigma_2\sigma_3\beta\int_{\epsilon^2}^{s}dr\,\dot c_r^{m^2}(x_2-x_3)}\right)\nnb
	& \qquad +\left(\dot u_s^{m^2}(\xi_2,\xi_1)+\dot u_s^{m^2}(\xi_2,\xi_3)\right)\left(1-e^{-\sigma_1\sigma_3\beta\int_{\epsilon^2}^{s}dr\,\dot c_r^{m^2}(x_1-x_3)}\right)\nnb
	& \qquad + \left(\dot u_s^{m^2}(\xi_3,\xi_1)+\dot u_s^{m^2}(\xi_3,\xi_2)\right)\left(1-e^{-\sigma_1\sigma_2\beta\int_{\epsilon^2}^{s} dr\,\dot c_r^{m^2}(x_1-x_2)}\right)\bigg]\nnb
	& \qquad \qquad \times e^{-\beta\sum_{1\leq i<j\leq 3}\sigma_i \sigma_j\int_{s}^t dr\, \dot c_r^{m^2}(x_i-x_j)}.
	\end{align} 
        We claim that we have the following estimate: for $t>s$,
        for any $\sigma_1,\sigma_2,\sigma_3$ and $x_1,x_2,x_3$ and for some universal constant $C>0$,
	\begin{multline}
          \sigma_1\sigma_2 \int_s^t dr\,\dot c_r^{m^2}(x_1-x_2)+\sigma_1\sigma_3 \int_s^t dr\,\dot c_r^{m^2}(x_1-x_3)+\sigma_2\sigma_3 \int_s^t dr\,\dot c_r^{m^2}(x_2-x_3)
          \\ 
          \geq -\frac{1}{4\pi}\log \frac{t}{s}-C.
	\label{e:covineq}
	\end{multline}
        Indeed, the worst case scenario is when
        $\sigma_1=\sigma_2\neq \sigma_3$ and $|x_1-x_2|\geq |x_1-x_3|,|x_2-x_3|$ (or the same with a permutation of indices).
        By the triangle inequality, at least one of $|x_1-x_3|$ and $|x_2-x_3|$ is greater than  $\frac{1}{2}|x_1-x_2|$. Thus we have from Lemma \ref{le:covest1},
	\begin{align}
	&\sigma_1\sigma_2 \int_s^t dr\, \dot c_r^{m^2}(x_1-x_2)+\sigma_1\sigma_3 \int_s^t dr\,\dot c_r^{m^2}(x_1-x_3)+\sigma_2\sigma_3 \int_s^t dr \,\dot c_r^{m^2}(x_2-x_3)\nnb
	&\qquad \geq  \inf_{x\in \R^2}\left(\int_s^t dr\,(\dot c_r^{m^2}(x)-\dot c_r^{m^2}(x/2))\right)-\frac{1}{4\pi}\int_s^t \frac{dr}{r}\nnb
	&\qquad = \inf_{x\in \R^2}\left( -\frac{1}{2\pi}\log \left(\frac{|x|}{\sqrt{t}}\wedge 1\right)+\frac{1}{2\pi}\log \left(\frac{|x|}{2\sqrt{t}}\wedge 1\right)\right)-\frac{1}{4\pi}\log \frac{t}{s}+O(1),
	\end{align}
	where the implied constant is universal. Going through the various cases ($|x|<\sqrt{t}$, $\sqrt{t}\leq |x|<2\sqrt{t}$, and $|x|\geq 2\sqrt{t}$), one readily checks that the infimum is $-\frac{1}{2\pi}\log 2$ and we have the bound \eqref{e:covineq}. 
	
	Now making use of Lemma \ref{le:covest2}, \eqref{e:covineq}, and \eqref{e:v1bound}, we find that for a constant $\widetilde C_\beta$ (depending only on $\beta$)
	\begin{equation}
	|\widetilde v_t^{3}(\xi_1,\xi_2,\xi_3|\epsilon,m)|\leq \widetilde C_\beta t^{-\frac{3\beta}{8\pi}}\int_{0}^t ds\left(\frac{t}{s}\right)^{\frac{\beta}{4\pi}} F_s(\xi_1,\xi_2,\xi_3)=:h_t^3(\xi_1,\xi_2,\xi_3),
	\end{equation}
	which is independent of $\epsilon,m$ as required.
        Finally, by Lemma \ref{le:covest2}, there is another constant $C_\beta$ depending only on $\beta$ such that
	\begin{equation}
	\|h_t^3\|_3\leq C_\beta t^{-\frac{3\beta}{8\pi}}\int_0^t ds\, \left(\frac{t}{s}\right)^{\frac{\beta}{4\pi}}s \leq C_\beta t^{2-\frac{3\beta}{8\pi}},
	\end{equation}
    which was precisely the claim.
\end{proof}

We now turn to $\widetilde v_t^{4}$.

\begin{lemma}\label{le:v4bound}
	For $\beta\in(0,6\pi)$ and $t>0$ there exists a function $h_t^4$, which is independent of $m$ and $\epsilon$ such that for $0<\epsilon^2<t<m^{-2}$, 
	\begin{equation}
	|\widetilde v_t^{4}(\xi_1,\xi_2,\xi_3,\xi_4|\epsilon,m)|\leq h_t^4(\xi_1,\xi_2,\xi_3,\xi_4)
      \end{equation}
      and $\|h_t^4\|_4\leq C_\beta t^{-1}(t^{1-\frac{\beta}{8\pi}})^4$ for a constant $C_\beta$ depending only on $\beta$.
\end{lemma}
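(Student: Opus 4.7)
The proof follows the strategy of Lemma~\ref{le:v3bound}. For $n=4$ the recursion \eqref{e:vndef} produces two qualitatively different kinds of terms: splits with $(|I_1|,|I_2|)=(1,3)$ (or $(3,1)$), giving products $\widetilde v_s^1\widetilde v_s^3$, and splits with $(|I_1|,|I_2|)=(2,2)$, giving products $\widetilde v_s^2\widetilde v_s^2$. After using the symmetry $(I_1,I_2)\leftrightarrow(I_2,I_1)$ and the prefactor $\frac12$ in \eqref{e:vndef}, there are effectively four $(1,3)$-type and three $(2,2)$-type terms to control.

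A central input I would establish first is the four-point analog of \eqref{e:covineq},
\begin{equation*}
\sum_{1\leq i<j\leq 4}\sigma_i\sigma_j\int_s^t dr\,\dot c_r^{m^2}(x_i-x_j)\geq -\frac{2}{4\pi}\log(t/s)-C,
\end{equation*}
which follows directly from the positivity of the Fourier transform of $\dot c_r^{m^2}$: since $\sum_{i,j=1}^4\sigma_i\sigma_j\dot c_r^{m^2}(x_i-x_j)=\int\frac{dp}{(2\pi)^2}\,\hat{\dot c}_r^{m^2}(p)\bigl|\sum_i\sigma_i e^{ip\cdot x_i}\bigr|^2\geq 0$, the off-diagonal part is bounded below by $-2\dot c_r^{m^2}(0)$, and integrating using Lemma~\ref{le:covest1} yields the claim. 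Combined with the diagonal self-energy contribution $\frac{\beta}{2\pi}\log(t/s)$ of $w_t^{m^2}-w_s^{m^2}$ for $n=4$, this gives the position-uniform estimate $e^{-(w_t^{m^2}-w_s^{m^2})}\leq C$, as the diagonal factor $(s/t)^{\beta/(2\pi)}$ and the off-diagonal factor $(t/s)^{\beta/(2\pi)}$ exactly cancel.

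For the $(2,2)$-splits I would substitute \eqref{e:v2expl} into both $\widetilde v_s^2$ factors, producing an overall prefactor $(\widetilde v_s^1)^4\asymp s^{-\beta/(2\pi)}$ together with two factors $\bigl(1-e^{-\beta\sigma_i\sigma_j\int_{\epsilon^2}^s\dot c_r^{m^2}(x_i-x_j)dr}\bigr)$. Bounding $\sum_{i\in I_1,j\in I_2}\dot u_s^{m^2}(\xi_i,\xi_j)$ multiplied by these two factors by $G_s$ from Lemma~\ref{le:covest3} (monotonically dominating the $\epsilon^2$-to-$s$ integrations in the $(1-e^{-\cdot})$ factors by the $\epsilon^2$-to-$t$ range appearing in the lemma's hypothesis), together with $\|G_s\|_4\leq C_\beta s^2$, gives an $s$-integral of the form $\int_0^t ds\,s^{-\beta/(2\pi)}\cdot s^2\cdot C=O(t^{3-\beta/(2\pi)})$. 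For the $(1,3)$-splits I would use the direct bounds $|\widetilde v_s^1|\leq C s^{-\beta/(8\pi)}$ and $|\widetilde v_s^3|\leq h_s^3$ from Lemma~\ref{le:v3bound}, with $\|h_s^3\|_3\leq C_\beta s^{2-3\beta/(8\pi)}$; since $\int d\xi_2\,|\dot u_s^{m^2}(\xi_1,\xi_2)|=O(1)$ by heat-kernel normalization and the $\xi_3,\xi_4$-integral of $h_s^3(\xi_2,\xi_3,\xi_4)$ is controlled by $\|h_s^3\|_3$ uniformly in $\xi_2$, this produces the same $s$-integral $\int_0^t ds\,s^{2-\beta/(2\pi)}=O(t^{3-\beta/(2\pi)})$.

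Summing the seven contributions gives the desired bound $\|h_t^4\|_4\leq C_\beta t^{3-\beta/(2\pi)}=C_\beta t^{-1}(t^{1-\beta/(8\pi)})^4$. The main obstacle is the sharpness of the four-point estimate: integrability of $\int_0^t ds\,s^{2-\beta/(2\pi)}$ at $s=0$ holds precisely when $\beta<6\pi$, leaving no room for any loss in the exponent arising from a weaker lower bound on the off-diagonal sum. The Fourier-positivity argument above gives the sharp constant $2$ (saturated by sign patterns such as $(+,+,-,-)$ with all four points clustered), and is what makes the proof work uniformly up to the endpoint $\beta=6\pi$.
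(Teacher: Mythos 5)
Your decomposition into $(1,3)$ and $(2,2)$ splits, the application of Lemma~\ref{le:covest3} with $\|G_s\|_4\lesssim s^2$ and the prefactor $(\widetilde v_s^1)^4\asymp s^{-\beta/2\pi}$ in the $(2,2)$ case, and the use of the $\widetilde v^1$, $\widetilde v^3$ bounds in the $(1,3)$ case all match the paper's proof and give the correct integral $\int_0^t ds\,s^{2-\beta/2\pi}$. The one place you diverge is the treatment of the exponential $e^{-(w_t^{m^2}-w_s^{m^2})}$: you establish a sharp four-point analogue of \eqref{e:covineq} and combine it with the diagonal self-energy to conclude $e^{-(w_t-w_s)}\leq C$, arguing that the sharp constant $2$ in the off-diagonal lower bound is ``what makes the proof work uniformly up to $\beta=6\pi$.'' That is not the case. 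Since $w_t^{m^2}-w_s^{m^2}=\tfrac{\beta}{2}\sum_{i,j=1}^4\sigma_i\sigma_j\int_s^t dr\,\dot c_r^{m^2}(x_i-x_j)$ is a quadratic form with a positive-semidefinite kernel, one has $w_t^{m^2}-w_s^{m^2}\geq 0$ outright, hence $e^{-(w_t-w_s)}\leq 1$, without decomposing into diagonal and off-diagonal parts. This is what the paper uses (``dropped by positive definiteness of $\dot c_r^{m^2}$''), and it is both simpler and gives a strictly better bound than your $\leq C$. Note that the three-point estimate \eqref{e:covineq} really \emph{is} needed in the proof of Lemma~\ref{le:v3bound}, but only because there the diagonal terms are factored out of the recursion into the explicit $(\widetilde v_t^1)^3$ prefactor, leaving a purely off-diagonal exponential; for $n=4$ the paper does not perform this factorization, so no analogue of \eqref{e:covineq} is required. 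The $\beta<6\pi$ threshold enters solely through the integrability of $s^{2-\beta/2\pi}$ at $s=0$, which in turn comes from $\|G_s\|_4\lesssim s^2$ (Lemma~\ref{le:covest3}) and $(\widetilde v_s^1)^4\lesssim s^{-\beta/2\pi}$—your extra four-point inequality, while correct, plays no role in that.
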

\begin{proof}
	We begin with the recursion \eqref{e:vndef}. We see that there are two types of contributions: either $|I_1|=|I_2|=2$ or $|I_1|,|I_2|\in\{1,3\}$ (with $|I_1|+|I_3|=4$). Let us consider the latter case first. Here we can use \eqref{e:v1bound} and Lemma \ref{le:v3bound} along with the remark that $w_t^{m^2}-w_s^{m^2}\geq 0$ (since $\dot c_r^{m^2}$ is a covariance), to get the simple upper bound 
	\begin{align}
          &\frac{1}{2}\Bigg|\sum_{\substack{I_1\dot \cup I_2=[4]\\|I_1|,|I_2|\neq 2}} \int_{\epsilon^2}^{t}ds\sum_{i\in I_1,j\in I_2}\dot u_s^{m^2}(\xi_i,\xi_j)\widetilde v_s^{|I_1|}(\xi_{I_1}|\epsilon,m)\widetilde v_s^{|I_2|}(\xi_{I_2}|\epsilon,m)
          \nnb & \qquad\qquad\qquad\qquad\qquad \times 
          e^{-(w_t^{m^2}(\xi_1,\dots,\xi_4)-w_s^{m^2}(\xi_1,\dots,\xi_4))} \Bigg|\nnb
	&\leq C_\beta \sum_{k=1}^4\int_0^t \frac{ds}{s}\sum_{l\neq k}e^{-\frac{|x_k-x_l|^2}{4s}} h_s^1(\xi_k) h_s^3(\xi_{[4]\setminus \{k\}})
	\end{align}
	which is the contribution to $h_t^4$ from the $|I_1|,|I_2|\neq 2$-case. Note that using \eqref{e:v1bound} and Lemma \ref{le:v3bound}, one can check readily that the $\|\cdot \|_4$-norm of this quantity is bounded by (for some constants $C_\beta,\widetilde C_\beta$ depending only on $\beta$)
	\begin{equation}
	C_\beta \int_0^t ds\, s^{-\frac{\beta}{8\pi}}s^{-1}s^{3(1-\frac{\beta}{8\pi})}\leq \widetilde{C}_\beta t^{-1}t^{4(1-\frac{\beta}{8\pi})}
	\end{equation}
	which is precisely of the required form (note that the integral here is convergent since $\beta<6\pi$).
	
    It remains to control the $|I_1|=|I_2|=2$-case. A typical term that one encounters in the sum is of the form 
	\begin{align}
	&\int_{\epsilon^2}^{t}ds(\dot u_s^{m^2}(\xi_1,\xi_3)+\dot u_s^{m^2}(\xi_2,\xi_3)+\dot u_s^{m^2}(\xi_1,\xi_4)+\dot u_s^{m^2}(\xi_2,\xi_4))e^{-2\beta (\int_{\epsilon^2}^{s} dr\,\dot c_r^{m^2}(0)+\frac{1}{4\pi}\log \epsilon^2)}\nnb
          &\quad \times \left(1-e^{-\sigma_1\sigma_2 \beta \int_{\epsilon^2}^s dr\,\dot c_r^{m^2}(x_1-x_2)}\right)\left(1-e^{-\sigma_3\sigma_4 \beta\int_{\epsilon^2}^s dr\,\dot c_r^{m^2}(x_3-x_4)}\right)
            \nnb &\quad \times 
                   e^{-\frac{\beta}{2}\sum_{i,j\in[4]}\sigma_i\sigma_j\int_{s}^t dr\, \dot c_r^{m^2}(x_i-x_j)}
                   .
	\end{align}
	The last exponential term can again be dropped by positive definiteness of $\dot c_r^{m^2}$, so using Lemma \ref{le:covest3} 
	and \eqref{e:v1bound}, we see that for some $C_\beta$ depending only on $\beta$, such terms can be bounded by 
	\begin{equation}
    C_\beta\int_0^t ds\, s^{-\frac{\beta}{2\pi}}G_s(\xi_1,\xi_2,\xi_3,\xi_4),
	\end{equation}
    where $G_s$ is as in Lemma \ref{le:covest3}. Summing over the other contributions shows that all of the $|I_1|=|I_2|$-terms can be bounded by such quantities. Combining this with the $|I_1|,|I_2|\neq 2$ case gives the definition of $h_t^4$. Moreover, we note from Lemma \ref{le:covest3} that 
    \begin{equation}
    	\int_0^t ds\, s^{-\frac{\beta}{2\pi}}\|G_s\|_4\leq C_\beta \int_0^t ds\, s^{2-\frac{\beta}{2\pi}}\leq \widetilde C_\beta t^{-1}t^{4(1-\frac{\beta}{8\pi})}
    \end{equation}
    for some constants $C_\beta,\widetilde C_\beta$ depending only on $\beta$. Again, $\beta<6\pi$ played an important role here. Combined with the estimate from the previous case, we see that $\|h_t^4\|_4\leq C_\beta t^{-1}t^{4(1-\frac{\beta}{8\pi})}$ as required. This concludes the proof.
\end{proof}

We turn now to the proof of the general case. 
\begin{proof}[Proof of Proposition \ref{pr:ybound}]
	   As mentioned already, the proof is by induction. For propagating the induction, we find it convenient to prove the claim in a slightly different form. More precisely, we will prove the existence of functions $h_t^n$ (independent of $\epsilon,m$) for which $|\widetilde v_t^{n}(\cdot|\epsilon,m)|\leq h_t^n$ and for some $C_\beta$ depending only on $\beta$ and some universal constant $C>0$
	   \begin{equation}
	   \|h_t^n\|_n\leq n^{n-2} t^{-1}C_\beta^{n-1}\left(Ct^{1-\frac{\beta}{8\pi}}\right)^n,
	   \label{e:nboundalt}
	   \end{equation}
	   which of course implies the claim (with a possibly different $C_\beta$).
           For $n=1$, \eqref{e:nboundalt} is \eqref{e:v1bound} and for $n=3$ and $n=4$, \eqref{e:nboundalt} is proved in Lemma \ref{le:v3bound} and Lemma \ref{le:v4bound}. Let us now as our induction hypothesis assume that for some $n\geq 5$, the estimate \eqref{e:nboundalt} holds for all $k\leq n-1$ with $k\neq 2$. As mentioned, this has been verified for $n=5$. To advance the induction, we plug the hypothesis into \eqref{e:vndef},
           and need to be slightly careful about the contributions from $|I_1|=2$ or $|I_2|=2$.

           Let us consider the terms in \eqref{e:vndef} with $|I_1| \neq 2$ and $|I_2| \neq 2$ first.
	   In \eqref{e:vndef}, it will be sufficient to just drop the $w_t^{m^2}-w_s^{m^2}$-term (which, as before, is allowed due to the positive definiteness of $\dot c_r^{m^2}$). Then one readily checks (from \eqref{e:vndef} and our induction hypothesis) that the $|I_1|,|I_2|\neq 2$-contribution can be bounded by 
	   \begin{equation}
	   h_t^{n,1}(\xi_1,\dots,\xi_n):=\frac{\beta}{8\pi}\sum_{\substack{I_1\dot \cup I_2=[n]\\|I_1|,|I_2|\neq 2}}\sum_{i\in I_1,j\in I_2}\int_0^t\frac{ds}{s}e^{-\frac{|x_i-x_j|^2}{4s}}h_s^{|I_1|}(\xi_{I_1})h_s^{|I_2|}(\xi_{I_2}).
	   \end{equation}
	   Note that this is indeed independent of $\epsilon,m$ as required.   
	   Using the fact that $\int_{\R^2} dx\,\frac{e^{-\frac{|x|^2}{4t}}}{4\pi t}=1$ and our induction hypothesis, we find for the norm of this the bound 
	   \begin{align}
	   \|h_t^{n,1}\|_n&\leq \frac{\beta}{2}\sum_{\substack{I_1\dot \cup I_2=[n]\\|I_1|,|I_2|\neq 2}}|I_1||I_2|\int_0^tds\|h_s^{|I_1|}\|_{|I_1|}\|h_s^{|I_2|}\|_{|I_2|}\nnb
	   &\leq \frac{\beta}{2}C_\beta^{n-2}C^n\sum_{\substack{I_1\dot \cup I_2=[n]\\|I_1|,|I_2|\neq 2}}|I_1|^{|I_1|-1}|I_2|^{|I_2|-1}\int_0^t ds\, s^{-2+n(1-\frac{\beta}{8\pi})}\nnb
	   &=\frac{\beta}{2}\frac{1}{-1+n(1-\frac{\beta}{8\pi})}C_\beta^{n-2}C^n t^{-1}t^{n(1-\frac{\beta}{8\pi})}\sum_{k=1}^{n-1}{n\choose k} k^{k-1}(n-k)^{n-k-1}\nnb
	   &=\frac{\beta}{2}\frac{2(n-1)}{-1+n(1-\frac{\beta}{8\pi})}n^{n-2} C_\beta^{n-2} t^{-1}\left(C t^{1-\frac{\beta}{8\pi}}\right)^n
       \end{align}
       where in the last equality we made use of the identity $\sum_{k=1}^{n-1}{n\choose k}k^{k-1}(n-k)^{n-k-1}=2(n-1)n^{n-2}$.
       (This identity has the following combinatorial interpretation. The number of trees on $[n]$ is $n^{n-2}$.
       Thus $2(n-1)n^{n-2}$ represents the number of trees on $[n]$ together with a choice of a directed edge.
       Such trees rooted by a directed edge can also be obtained by connected two disjoint vertex rooted trees
       with $k$ and $n-k$ vertices by an edge connecting their roots.)
       Now for $n\geq 5$ and $\beta\in(0,6\pi)$, $0\leq \frac{\beta}{2}\frac{2(n-1)}{n(1-\frac{\beta}{8\pi})-1}$ is bounded by a universal constant, so possibly increasing $C_\beta$ verifies that the bound \eqref{e:nboundalt} holds for the contribution coming from $|I_1|,|I_2|\neq 2$.
	   
	   Let us now turn to the case where $|I_1|=2$ or $|I_2|=2$. We again drop the $w_t^{m^2}-w_s^{m^2}$-term from the exponential by positive definiteness. In terms of the notation of Lemma \ref{le:covest2}, we find (using the lemma and \eqref{e:v2expl}) that the contribution from the $|I_1|=2$ or $|I_2|=2$ case can be bounded by  
	   \begin{align}
             &\widetilde C_\beta\sum_{1\leq a<b\leq n}\sum_{j\in [n]\setminus \{a,b\}}\int_0^t ds \left|\dot u_s^{m^2}(\xi_a,\xi_j)+\dot u_s^{m^2}(\xi_b,\xi_j)\right|h_s^{n-2}(\xi_{[n]\setminus \{a,b\}})
               \nnb &\qquad\qquad\qquad\qquad\qquad\qquad\qquad \times 
               s^{-\frac{\beta}{4\pi}}\left|1-e^{-\beta \sigma_a \sigma_b c_s^{m^2}(x_a-x_b)}\right|
               \nnb
             &\qquad \leq \sum_{1\leq a<b\leq n}\sum_{j\in [n]\setminus \{a,b\}}\int_0^t ds\,s^{-\frac{\beta}{4\pi}} F_s(\xi_a,\xi_b,\xi_j)h_s^{n-2}(\xi_{[n]\setminus \{a,b\}})
               \nnb &\qquad 
               =:h_t^{n,2}(\xi_1,\dots,\xi_n)
	   \end{align} 
	   for some constant $\widetilde C_\beta$ depending only on $\beta$.	   
	   For the norm of this, we readily find from Lemma \ref{le:covest2} and our induction hypothesis that (for some possibly different $\widetilde C_\beta$, still depending only on $\beta$)
	   \begin{align}
	   \|h_t^{n,2}\|_n & \leq \widetilde C_\beta C_\beta^{n-3} C^{n-2} n^2 (n-2) (n-2)^{n-4}\int_0^t ds\, s^{-\frac{\beta}{4\pi}} s^{(n-2)(1-\frac{\beta}{8\pi})}\nnb
	   &=\widetilde C_\beta C_\beta^{n-3}C^{n-2}n^{n-2}\frac{(n-2)}{1-\frac{\beta}{4\pi}+(n-2)(1-\frac{\beta}{8\pi})}t^{(n-2)(1-\frac{\beta}{8\pi})+1-\frac{\beta}{4\pi}}.
	   \end{align}
	   The ratio here is again bounded by a universal constant, so possibly increasing $C_\beta$ (to account for this universal constant and $\widetilde C_\beta$) then yields the bound we are after. 
	   
	   In particular, choosing $h_t^n=h_t^{n,1}+h_t^{n,2}$ gives the required function and concludes the proof.
\end{proof}

\section{The sine-Gordon model: the partition and correlation functions}
\label{sec:sg-expansion}

The goal of this section is to prove Theorem~\ref{thm:cf}, which is our main statement about the correlation functions of the sine-Gordon model.
As already suggested in the previous section, a central tool in our proof of Theorem~\ref{thm:cf} is a suitable generating function for the correlation functions. To reiterate, the generating function we consider is (as in \eqref{e:Zmicropot}) for $\zeta\in L_c^\infty(\R^2\times \{-1,1\})$ given by
\begin{align}
Z(\zeta|\epsilon,m) =
  Z(\beta,\zeta|\epsilon,m) = \avga{\exp\qa{
  - \int d\xi \, 
  \epsilon ^{-\beta/4\pi}\zeta(\xi)e^{i\sqrt{\beta}\sigma\varphi(x)}}}_{\GFF(\epsilon,m)}
\label{e:pfdef}
\end{align}
with $\xi=(x,\sigma)$ and $\int d\xi = \sum_{\sigma\in\{\pm 1\}} \int_{\R^2}dx$ as before.
Of course, $\zeta(\xi)=-\nz \mathbf 1_{\Lambda}(x)$ is admissible and $Z(\zeta|\epsilon,m)$ then reduces to the normalization constant
in \eqref{e:sgdef}.
In general, note that we allow complex valued functions $\zeta$,
and that $Z(\zeta|\epsilon,m)$ is then not necessarily a normalizing constant for a positive measure.
The purpose of introducing $Z(\zeta|\epsilon,m)$ is that by choosing $\zeta$ to depend on suitable external parameters, we can obtain 
(smeared) sine-Gordon correlation functions from logarithmic derivatives of $Z(\zeta|\epsilon,m)$ with respect to these parameters. Thus if we can control $Z(\zeta|\epsilon,m)$ in the $\epsilon,m\to 0$ limit, we can also control the correlation functions.

A significant part of our analysis will rely on properties of the free field correlation functions studied in Section~\ref{sec:massless-gff}.
Particularly important for us will be charge correlation functions. Their importance can be seen, for example, from the fact that since $\wick{e^{i\sqrt{\beta}\varphi(x)}}_\epsilon$ is a bounded random variable for any $\epsilon>0$  one finds (for more details, see Lemma \ref{le:entire1})
\begin{equation}
  Z(\zeta|\epsilon,m)=\sum_{k=0}^\infty \frac{(-1)^k}{k!}
  \int d\xi_1 \cdots d\xi_k \, 
  \zeta(\xi_1)\cdots \zeta(\xi_k)\avga{\prod_{j=1}^k \wick{e^{i\sqrt{\beta}\sigma_j\varphi(x_j)}}_\epsilon }_{\GFF(\epsilon,m)}
  .
\label{e:Zexpa}
\end{equation}
It turns out that for $\beta\geq 4\pi$
(so in particular for $\beta=4\pi$),
$Z(\zeta|\epsilon,m)$ does not converge as $\epsilon\to 0$.
Heuristic evidence for this can be seen from Lemma~\ref{le:GFFcharge}
combined with the expansion \eqref{e:Zexpa}: one expects to have a 
divergence already at order $k=2$ in the expansion since 
\begin{equation}
	\avga{\wick{e^{i\sqrt{\beta}\varphi(x)}}\wick{e^{-i\sqrt{\beta}\varphi(y)
	}}}_{\GFF}\propto |x-y|^{-\frac{\beta}{2\pi}}
\end{equation}
is not integrable for $\beta\geq 4\pi$. 
It turns out that for $\beta\in[4\pi,6\pi)$, this is in a sense the only 
type of divergence that occurs and a non-trivial limit 
can be obtained once $Z$ is multiplied by
an explicit counterterm. This counterterm and the limit theorem for the partition 
function are most conveniently expressed in terms of truncated free field correlation functions
 which we again recall from Section~\ref{sec:massless-gff}.

The counterterm is then defined as follows: for $\xi_1,\xi_2\in \R^2\times \{-1,1\}$ let 
\begin{equation}
A(\xi_1,\xi_2|\epsilon,m)
=\avga{\wick{e^{i\sqrt{\beta}\sigma_1\varphi(x_1)}}_\epsilon\wick{e^{i\sqrt{\beta}\sigma_2\varphi(x_2)}}_\epsilon}_{\GFF(\epsilon,m)}^T.
\label{e:ct}
\end{equation}
We then define our renormalized partition function as 
\begin{equation}
\mathcal Z(\zeta|\epsilon,m)
:=Z(\zeta|\epsilon,m)\exp\left[-\frac{1}{2}
  \int d\xi_1 \, d\xi_2 \, 
  \zeta(\xi_1)\zeta(\xi_2)A(\xi_1,\xi_2|\epsilon,m)\right]
\label{e:rpf}.
\end{equation}
It follows from Lemma \ref{le:GFFcharge} that 
\begin{equation}
\lim_{m\to 0}\lim_{\epsilon\to 0}A(\xi_1,\xi_2|\epsilon,m)=\delta_{\sigma_1+\sigma_2,0}e^{-\frac{\gamma\beta}{4\pi}}\left(\frac{2}{|x_1-x_2|}\right)^{\beta/2\pi},
\end{equation}
and since this is non-integrable for $\beta\geq 4\pi$, our counterterm 
at least has a chance to cure the divergence of the partition function.
This is indeed true, in that $\mathcal Z$ turns out to have a finite limit
for $\beta<6\pi$, and thus in particular for $\beta=4\pi$ which is the case we are interested in.
For $\beta \geq 6\pi$ further counterterms, which turn out to involve higher order truncated correlation functions, 
would be required, see \cite{MR649810,MR849210,MR1777310}.

Before stating our result about the convergence of $\mathcal Z(\zeta|\epsilon,m)$,
recall from Lemma~\ref{lem:gffintegrability1} that, while the truncated charge two-point function is not integrable,
all higher order charge correlation functions are integrable.
With this notation and fact in hand, we are in a position to state our main result about $\mathcal Z(\zeta|\epsilon,m)$.
For $\beta < 4\pi$ the conclusions also follow from \cite{MR0434278}, but our extension to
$\beta<6\pi$ (crucially including the free fermion point $\beta=4\pi$) relies on new ideas.
We prove this in Section~\ref{sec:pfconv} and then deduce Theorem~\ref{thm:cf} in Section~\ref{sec:cfconv}.

\begin{theorem}\label{th:main1}
	For $\beta\in (0,6\pi)$, 
	$m \in (0,\infty)$, and $\zeta\in L_c^\infty(\R^2\times \{-1,1\},\C)$ the following claims hold.
	\begin{enumerate}
        \item The limits
              \begin{equation}
                \mathcal Z(\zeta|m)=\lim_{\epsilon\to 0}\mathcal Z(\zeta|\epsilon,m), \qquad
                \mathcal Z(\zeta)=\lim_{m\to 0}\lim_{\epsilon\to 0}\mathcal Z(\zeta|\epsilon,m),
               \end{equation}		
		exist and are finite.
		\item The functions $z\mapsto \mathcal Z(z\zeta|m)$ and $z\mapsto \mathcal Z(z\zeta)$ are entire functions of $z\in \C$ and $\mathcal Z(z\zeta)=\mathcal Z(-z\zeta)$.
              \item If $\zeta(x,1)=\overline{\zeta(x,-1)}$ for almost all $x\in \R^2$,
                then $\mathcal Z(\zeta|m)>0$ and $\mathcal Z(\zeta)>0$. 
		\item Finally if $\zeta_\alpha\in L^\infty_c(\R^2\times \{-1,1\})$ depends on some complex parameters $\alpha\in \C^N$ and
                  $\zeta_{\alpha}(\cdot |\epsilon,m)\in L^\infty_c(\R^2\times \{-1,1\})$ depends also on $\epsilon,m>0$ and these complex parameters $\alpha$ in such a way that for some $K\subset \C^N$ compact
\begin{equation}
			\lim_{m\to 0}\limsup_{\epsilon\to 0}\sup_{\alpha\in K}\|\zeta_{\alpha}(\cdot |\epsilon,m)-\zeta_\alpha\|_{L^\infty(\R^2\times \{-1,1\})}=0,
\end{equation}		
		then 
\begin{equation}
	        \lim_{m\to 0}\limsup_{\epsilon\to 0}\left|\mathcal Z(\zeta_{\alpha}(\cdot|\epsilon,m)|\epsilon,m)-\mathcal Z(\zeta_\alpha)\right|=0
\end{equation}
        and the convergence is uniform in $\alpha\in K$.
        An analogous statement holds for $m \in (0,\infty)$ fixed.
	\end{enumerate}
\end{theorem}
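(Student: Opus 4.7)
The plan is to prove Theorem~\ref{th:main1} by writing $\log \mathcal{Z}(\zeta|\epsilon,m)$ as a cumulant expansion whose $k=2$ term is exactly the counterterm, and then controlling the remaining $k\neq 2$ terms uniformly in $\epsilon,m$ using the renormalization group estimates of Section~\ref{sec:renormpot}.

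To start, for $\epsilon,m>0$ and bounded compactly supported $\zeta$ the series \eqref{e:Zexpa} is absolutely convergent since $|\wick{e^{i\sqrt{\beta}\sigma\varphi}}_\epsilon| = \epsilon^{-\beta/4\pi}$. Applying Lemma~\ref{lem:cumulants} in a neighbourhood of $\zeta = 0$,
\begin{equation*}
\log Z(\zeta|\epsilon,m) = \sum_{k\geq 1}\frac{(-1)^k}{k!}\int d\xi_1\cdots d\xi_k\, \zeta(\xi_1)\cdots\zeta(\xi_k)\,\bigl\langle \prod_{j=1}^k \wick{e^{i\sqrt{\beta}\sigma_j\varphi}}_\epsilon\bigr\rangle^T_{\GFF(\epsilon,m)}.
\end{equation*}
Comparing with \eqref{e:ct}--\eqref{e:rpf} shows that the counterterm is exactly the $k=2$ contribution, so $\log \mathcal{Z}(\zeta|\epsilon,m)$ is the same series restricted to $k\neq 2$. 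The central task is then to prove, for every $k\geq 1$ with $k\neq 2$, a pointwise bound on the truncated charge correlation by an $\epsilon,m$-independent function $H_k$ with $\sum_k (\|\zeta\|_\infty^k/k!)\,\|H_k\|_k<\infty$ for every bounded compactly supported $\zeta$, together with a.e.\ convergence to a finite limit as $\epsilon,m\to 0$.

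To obtain such bounds I would use the decomposition \eqref{e:Zvt} and expansion \eqref{e:vtexpan} of $v_t$, for a scale $t$ chosen so that Corollary~\ref{cor:yukawa} applies. A second cumulant expansion of the Gaussian expectation over the remaining infrared scales $[t,\infty)$ yields a representation of each truncated Wick-exponential correlation $C^T_k(\cdot|\epsilon,m)$ as a sum of trees whose vertices carry the $\widetilde v_t^n$ and whose edges are propagators built from $c_\infty^{m^2}-c_t^{m^2}$; the Wick-exponential insertions have modulus one, so the infrared expectations that enter are uniformly bounded. Proposition~\ref{pr:ybound} then furnishes the tree-vertex bounds, and summing over trees gives the desired $H_k$. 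With these uniform bounds and pointwise convergence of each $C^T_k$ to its $\epsilon,m=0$ limit (from the identifications in Section~\ref{sec:massless}) in hand, dominated convergence applied to the restricted series yields items (i) and (iv). Item (ii) follows because $\mathcal{Z}(z\zeta|\epsilon,m)$ is entire in $z$ for each $\epsilon,m>0$ by the absolute convergence of \eqref{e:Zexpa}, and uniform bounds on compact $z$-sets transfer entirety to the limit; evenness of $\mathcal{Z}(z\zeta)$ in the $m\to 0$ limit follows from the charge-neutrality constraint in Lemma~\ref{le:GFFcharge}, which forces the odd-$k$ terms to vanish once the one-point function itself vanishes in the limit. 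Item (iii) is elementary: under $\zeta(x,1)=\overline{\zeta(x,-1)}$, the potential $v_0(\zeta,\varphi|\epsilon)$ is a real cosine-type expression, so $Z(\zeta|\epsilon,m)>0$, and $A(\xi_1,\xi_2|\epsilon,m)$ is real, so the counterterm factor is a positive real exponential.

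The main obstacle is the uniform bound step. Proposition~\ref{pr:ybound} degenerates as $t$ approaches $m^{-2}$ because its factor $t^{1-\beta/8\pi}$ grows for $\beta<8\pi$, so one cannot simply send $t$ to $m^{-2}$ in the recursion. A delicate balance must be struck, choosing $t$ of a bounded order depending on $\|\zeta\|_\infty$ and exploiting the smallness at large distances of the infrared propagator $c_\infty^{m^2}-c_t^{m^2}$ in the remaining Gaussian integration. The fact that for $k\geq 3$ the relevant singularities are integrable for $\beta<6\pi$---the range of Theorem~\ref{th:main1}---is what ultimately allows the $H_k$-estimates to close and the $k$-sum to converge for every bounded $\zeta$; for $\beta\geq 6\pi$ further counterterms would be required, as in \cite{MR649810,MR849210,MR1777310}.
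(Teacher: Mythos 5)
Your observation that the counterterm in \eqref{e:rpf} is precisely the $k=2$ term of the cumulant expansion of $\log Z$ is correct and a clean way to organize the renormalization; the paper never states it this plainly, although it is implicit in the definition \eqref{e:ct}. However, the route through $\log\mathcal Z$ has a structural obstruction that your plan does not resolve. The cumulant series of $\log\mathcal Z(\zeta|\epsilon,m)$ in powers of $\zeta$ has a \emph{finite} radius of convergence, controlled by the distance in the $z$-plane from the origin to the nearest zero of the entire function $z\mapsto\mathcal Z(z\zeta|\epsilon,m)$ (which is allowed to vanish for complex $z$). Thus the summability condition you ask for, $\sum_k(\|\zeta\|_\infty^k/k!)\,\|H_k\|_k<\infty$ \emph{for every} bounded compactly supported $\zeta$, cannot hold: the correct order of magnitude is $\|H_k\|_k \sim k!\,R^{-k}$ for some $R=R(\Lambda,\beta)$, and dominated convergence on the restricted series only gives convergence of $\mathcal Z(\zeta|\epsilon,m)$ for $\|\zeta\|_\infty < R$. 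You gesture at transferring entirety to the limit via ``uniform bounds on compact $z$-sets,'' which is the right idea (Vitali's/Montel's theorem: normality plus convergence near $z=0$ forces locally uniform convergence everywhere), but as written this step is doing the real work for all of items (i), (ii), and (iv), and it requires the uniform bounds on the partition function $\mathcal Z$ itself---Proposition~\ref{pr:pfbound}---not bounds on the truncated correlations $C_k^T$. The paper avoids this issue entirely by expanding $\mathcal Z$ (Lemma~\ref{le:entire1}), which is entire, so Cauchy's integral formula applied to $\mathcal Z(z\zeta|\epsilon,m)$ gives tail bounds on the Taylor coefficients $\mathcal M_n$ valid for all $n$ without any radius-of-convergence restriction; dominated convergence on $\Vsym$ (Lemmas~\ref{le:Vub} and~\ref{le:Vsymlim}) then gives convergence of each $\mathcal M_n$.

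There is a second gap in item (iii): under the reality condition $\zeta(x,1)=\overline{\zeta(x,-1)}$ it is indeed immediate that $\mathcal Z(\zeta|\epsilon,m)>0$ for every fixed $\epsilon,m>0$, but strict positivity of the \emph{limit} does not follow. It is a priori possible that $\mathcal Z(\zeta|\epsilon,m)\downarrow 0$. The paper's Proposition~\ref{pr:pfbound}~(iii) proves a uniform lower bound (via Jensen's inequality applied in the renormalized-potential representation), which is what is actually needed. Your sketch does not produce such a lower bound.

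Finally, the ``main obstacle'' you identify---producing the $\epsilon,m$-independent bounds $H_k$ on truncated charge correlations via a tree expansion in the infrared scales---is genuine and not carried out. The paper circumvents the tree expansion: it bounds the symmetrized kernel $\Vsym$ of $\mathcal M_n$ directly (Lemma~\ref{le:malt} and Lemma~\ref{le:Vub}) using the ultraviolet bounds of Proposition~\ref{pr:ybound} plus the trivial bound $|e^{i\theta}|\le 1$ and the Borell--TIS gradient bound (Lemma~\ref{le:grad}), and then gets the $n!\,\delta^n$ growth of $\|\mathcal M_n\|$ for free from Cauchy's formula applied to the uniformly bounded $\mathcal Z$. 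This sidesteps both the combinatorics of connected graphs and the finite-radius problem in one stroke.
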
 

As a preliminary remark, we note that by rescaling space it suffices to prove the statements for fixed $m>0$ in this theorem only for $m \in (0,1)$; we will henceforth assume this.

Before we turn to the actual proofs, we need to recall some basic facts about regularity and extrema of Gaussian processes.

\subsection{Preliminaries -- regularity and extrema of Gaussian processes}

In this section, we record some basic facts we need to know about the regularization of the GFF to a scale $\sqrt{t}$ which is of order one, namely we look at the Gaussian process with law $\nu^{\GFF(\sqrt{t},m)}$ -- in particular in the $m\to 0$ limit. Given \eqref{e:Zvt}, this will be useful to control the renormalized partition function. The main fact we will prove in this section is the following. 

\begin{lemma} \label{le:grad}
For $0<t<m^{-2}$, $\Lambda \subset \R^2$ compact, and $p>0$, we have 
\begin{equation}
	  \avga{e^{p\|\nabla \varphi\|_{L^\infty(\Lambda)}}}_{\GFF(\sqrt{t},m)}\leq C_{p,t,\Lambda}
\end{equation}
for some constant $C_{p,t,\Lambda}<\infty$ which is independent of $m$. 
\end{lemma}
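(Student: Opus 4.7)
The plan is to apply the Borell--TIS inequality (or alternatively Fernique's theorem) to the centered, almost-surely smooth Gaussian field $\nabla\varphi$ on the compact set $\Lambda$, with parameters controlled uniformly in $m$.

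First I would record the covariance structure. Under $\nu^{\GFF(\sqrt{t},m)}$ the field $\varphi$ is centered Gaussian with covariance $c_t^{m^2}(x-y)=\int_t^\infty ds\, \tfrac{1}{4\pi s}e^{-|x-y|^2/(4s)}e^{-m^2 s}$, so $\partial_i\varphi$ is centered Gaussian with covariance $-\partial_{x_i}^2 c_t^{m^2}(x-y)$. Differentiating under the integral, one checks that for every multiindex $\alpha$ of order $\geq 1$, $\partial^\alpha c_t^{m^2}(x)$ is bounded in absolute value uniformly in $x\in\R^2$ and in $m\in[0,1/\sqrt{t})$ by $|\partial^\alpha c_t^0(x)|$, which is finite because the UV cutoff $t$ renders all derivatives smooth and the Gaussian factor gives decay at infinity. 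In particular
\begin{equation}
\sup_{m^2<1/t}\sup_{x\in\R^2,\,i\in\{0,1\}} \avga{(\partial_i\varphi(x))^2}_{\GFF(\sqrt{t},m)} = \sup_{m^2<1/t}\frac{1}{8\pi}\int_t^\infty \frac{e^{-m^2 s}}{s^2}\,ds \leq \frac{1}{8\pi t}=:\sigma_t^2<\infty.
\end{equation}

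Second I would establish a modulus-of-continuity bound, uniform in $m$. Writing $\partial_i\varphi(x)-\partial_i\varphi(y)$ as a Gaussian variable and using a Taylor expansion of $\partial_i\partial_j c_t^{m^2}$ together with the uniform-in-$m$ bound on third derivatives, one obtains
\begin{equation}
\avga{|\partial_i\varphi(x)-\partial_i\varphi(y)|^2}_{\GFF(\sqrt{t},m)} \leq K_t|x-y|^2
\end{equation}
for some constant $K_t$ independent of $m$. By Dudley's entropy bound (or Kolmogorov--Chentsov plus a union bound over components), this gives a uniform bound $M_{t,\Lambda}:=\sup_{m^2<1/t}\avga{\|\nabla\varphi\|_{L^\infty(\Lambda)}}_{\GFF(\sqrt{t},m)}<\infty$.

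Third I would conclude via Borell--TIS. Applied to each real-valued Gaussian field $\partial_i\varphi$ on the compact set $\Lambda$, that inequality gives
\begin{equation}
\Pr\!\left(\|\partial_i\varphi\|_{L^\infty(\Lambda)}\geq M_{t,\Lambda}+u\right)\leq e^{-u^2/(2\sigma_t^2)}\qquad (u\geq 0),
\end{equation}
with constants independent of $m$. A union bound over $i\in\{0,1\}$ and integration $\int_0^\infty p e^{pu}\Pr(\|\nabla\varphi\|_{L^\infty(\Lambda)}\geq u)\,du$ then yields $\avga{e^{p\|\nabla\varphi\|_{L^\infty(\Lambda)}}}_{\GFF(\sqrt{t},m)}\leq C_{p,t,\Lambda}$ independently of $m$.

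The only real obstacle is the uniformity in $m$ as $m\downarrow 0$, since the full field $\varphi$ has diverging variance in this limit. This is circumvented by noting that the gradient field is insensitive to the constant mode: the derivatives of $c_t^{m^2}$ are monotonically dominated by their $m=0$ counterparts and remain finite and smooth there, so all the Gaussian-process constants above stay bounded as $m\downarrow 0$.
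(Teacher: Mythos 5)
Your proposal is correct and follows essentially the same route as the paper: bound the covariance (and its increments) of $\nabla\varphi$ uniformly in $m$, use Dudley's entropy bound for the mean of the supremum, apply the Borell--TIS inequality for the Gaussian tail, and integrate to obtain the exponential moment. The only minor imprecision is that Borell--TIS controls the one-sided supremum $\sup_\Lambda\partial_i\varphi$ rather than $\|\partial_i\varphi\|_{L^\infty(\Lambda)}=\sup_\Lambda|\partial_i\varphi|$ directly; the paper handles this by first using the symmetry $\partial_i\varphi\overset{d}{=}-\partial_i\varphi$ to reduce to the one-sided sup, whereas you would need an extra union bound over the sign (harmless, costing only a factor of $2$), which you should make explicit.
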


We will apply this estimate with $t>0$ fixed as in Corollary~\ref{cor:yukawa}.
Clearly, the constant $C_{p,t,\Lambda}$ must diverge as $t\to 0$ or $|\Lambda|\to\infty$
(as the limiting Gaussian free field is not differentiable as $t\to 0$ or bounded as $m\to  0$);
these divergences are  not important for our application.

First of all, using arguments based on Kolmogorov-Chentsov--type results (see e.g. \cite[Appendix B]{MR3339158}), one can check that the smoothness of $c_\infty^{m^2}-c_t^{m^2}$ (recall the notation \eqref{e:tcov}) implies that we can regard $\varphi$ as a smooth function, and $\nabla\varphi$ is a centered Gaussian process with covariance 
\begin{equation}
\avga{\partial_i \varphi(x)\partial_j\varphi(y)}_{\GFF(\sqrt{t},m)}=\int_t^\infty ds \,\frac{e^{-m^2 s}}{4\pi s}\left(\delta_{i,j}\frac{1}{2s}-\frac{(x_i-y_i)(x_j-y_j)}{4s^2}\right) e^{-\frac{|x-y|^2}{4s}}.
\label{e:gradcov}
\end{equation}

To estimate the exponential moments in Lemma \ref{le:grad}, we rely on two classical theorems about Gaussian processes. The first one is Dudley's theorem (see e.g. \cite[Theorem 1.3.3]{MR2319516}) which states that if for a centered real-valued Gaussian process $X$ on say a compact metric space $T$ we define a new (pseudo) metric by setting $d_X(t,s)=\sqrt{\EE[(X(t)-X(s))^2]}$, then 
\begin{equation}
\EE\left(\sup_{t\in T}X(t)\right)\leq C\int_0^\infty d\epsilon\,\sqrt{\log N_X(\epsilon)},
\label{e:dud}
\end{equation}
where $C$ is a universal constant, and $N_X(\epsilon)$ is the minimal number of (closed) $d_X$-radius $\epsilon$ balls required to cover $T$.

The second result we need is the Borell-TIS inequality (see e.g. \cite[Theorem 2.1.1]{MR2319516}), which states that in the same setting as Dudley's theorem, if $X$ is further assumed to be almost surely bounded on $T$, and if $\sigma_T^2:=\sup_{t\in T}\EE X(t)^2$, then for all $u>0$
\begin{equation}
\PP\left(\sup_{t\in T}X(t)-\EE\left[\sup_{t\in T}X(t)\right]>u\right)\leq e^{-\frac{u^2}{2\sigma_T^2}}.
\label{e:BTIS}
\end{equation}

With these tools, we can prove our claim about $\nabla \varphi$.

\begin{proof}[Proof of Lemma~\ref{le:grad}]
  First of all, we note that by a simple Cauchy-Schwarz argument, it is enough for us to prove the claim for $\|\partial_0\varphi\|_{L^\infty(\Lambda)}$ (or $\|\partial_1 \varphi\|_{L^\infty(\Lambda)}$ as they both have the same distribution) instead of $\|\nabla\varphi\|_{L^\infty(\Lambda)}$. Then noting that 
  \begin{align}
    \avga{e^{p\|\partial_0 \varphi\|_{L^{\infty}(\Lambda)}}}_{\GFF(\sqrt{t},m)}
    &\leq \avga{e^{p\sup_{x\in \Lambda}\partial_0 \varphi(x)}}_{\GFF(\sqrt{t},m)}+\avga{e^{p\sup_{x\in \Lambda}(-\partial_0 \varphi(x))}}_{\GFF(\sqrt{t},m)}\nnb
    &=2\avga{e^{p\sup_{x\in \Lambda}\partial_0 \varphi(x)}}_{\GFF(\sqrt{t},m)},
  \end{align}
  we see that is enough to consider only $\sup_{x\in \Lambda}\partial_0\varphi(x)$ instead of $\sup_{x\in \Lambda}|\partial_0\varphi(x)|$. 
  In this setup we can use Dudley's theorem and Borell--TIS.
  
  To apply Dudley's theorem, we note that 
  \begin{multline}
    d_{\partial_0\varphi}(x,y)^2=2\int_t^\infty ds \, \frac{e^{-m^2 s}}{8\pi s^2}\left(1-e^{-\frac{|x-y|^2}{4s}}\right)+2 (x_0-y_0)^2\int_t^\infty ds\,\frac{e^{-m^2s}}{16\pi s^3}e^{-\frac{|x-y|^2}{4s}}
    \\
    \leq C_t^2 |x-y|^2
  \end{multline}    
    for a constant $C_t$ independent of $m$. Thus we have $\{y\in \Lambda:d_{\varphi_1}(x,y)\leq \epsilon \}\supset \{y\in \Lambda: C_t|x-y|\leq \epsilon\}$. So the number of $d_{\partial_0\varphi}$-radius $\epsilon$ balls it takes to cover $\Lambda$ is less than the number of Euclidean radius $\epsilon/C_t$-balls it takes to cover $\Lambda$. It thus follows from Dudley's theorem, \eqref{e:dud}, that $\avga{\sup_{x\in \Lambda}\partial_0\varphi(x)}_{\GFF(\sqrt{t},m)}\leq \widetilde C_{t,\Lambda}$ for some constant $\widetilde C_{t,\Lambda}$ which is independent of $m$.
    
    Now $\sigma_{\Lambda}^2=\int_t^\infty ds\, \frac{e^{-m^2s}}{8\pi s^2}\leq \int_t^\infty \frac{ds}{8\pi s^2}=:\widehat C_t$. In particular, we have for say $u>2\widetilde C_{t,\Lambda}$
\begin{equation}
	    \frac{(u-\avga{\sup_{x\in \Lambda}\partial_0\varphi(x)}_{\GFF(\sqrt{t},m)})^2}{\sigma_{\Lambda}^2}\geq \frac{(u-\widetilde C_{t,\Lambda})^2}{\widehat C_t}\geq \frac{u^2}{4\widehat C_t}.
\end{equation}    
    Thus we find from the Borell-TIS inequality \eqref{e:BTIS} that, for $u>2\widetilde C_{t,\Lambda}$, 
    \begin{align}
    \nu^{\GFF(\sqrt{t},m)}\left(\sup_{x\in \Lambda}\partial_0\varphi(x)>u\right)\leq e^{-\frac{u^2}{4\widehat C_t}}
    \end{align}
    and 
    \begin{align}
    \avga{e^{p\sup_{x\in \Lambda}\partial_0 \varphi(x)}}_{\GFF(\sqrt{t},m)} &=\int_{\R} du \,p e^{pu}\nu ^{\GFF(\sqrt{t},m)}\left(\sup_{x\in \Lambda}\partial_0 \varphi(x)>u\right)\nnb
    &\leq \int_{-\infty}^{2\widetilde C_{t,\Lambda}} du\, pe^{pu}+\int_{2\widetilde C_{t,\Lambda}}^\infty du\, pe^{pu}e^{-\frac{u^2}{4\widehat C_t}},
    \end{align}
    which yields the desired claim for the exponential moments of $\sup_{x\in \Lambda}\partial_0\varphi(x)$ and by our preliminary considerations, also for $\|\nabla\varphi\|_{L^\infty(\Lambda)}$. This concludes the proof.
\end{proof}

\subsection{Uniform bounds for the renormalized partition function}

We are now ready to turn to analysis of the partition function.
We begin with the bounds for $\mathcal Z$ stated in the following proposition.
The estimate of item (ii) applies to uniformly small coupling constants $\zeta$ and is thus
a standard consequence of the expansion of the renormalized potential.
The estimates of items (i) and (iii) on the other hand apply to arbitrarily large $\zeta$
and make in addition use of the Gaussian concentration estimate of Lemma~\ref{le:grad}.

\begin{proposition} \label{pr:pfbound}
  	        Fix $\Lambda\subset \R^2$ compact and $\beta \in (0,6\pi)$.
  	        \begin{enumerate}
  	        	\item
  	        	For any fixed $M>0$,
  	         	
\begin{equation}
	  	        	\quad \sup_{\epsilon,m\in(0,1)}\sup_{\substack{\zeta \in L_c^\infty (\R^2\times \{-1,1\}): \\
  	        	\mathrm{supp}(\zeta(\cdot,\pm 1))\subset \Lambda,\\		
  	        	\|\zeta\|_{L^\infty (\R^2\times \{-1,1\})} \leq M}}\left|\mathcal Z(\zeta|\epsilon,m)\right|<\infty. 
\end{equation}  	        	
  	        	\item
  	        	There exists a $\delta=\delta_{\Lambda,\beta}>0$ independent of $\epsilon,m$ such that
\begin{equation}
	  	        	\inf_{\epsilon,m\in(0,1)}\inf_{\substack{\zeta \in L^\infty_c (\R^2\times \{-1,1\}):\\ \mathrm{supp}(\zeta(\cdot,\pm 1))\subset \Lambda, \\ \|\zeta\|_{L^\infty (\R^2\times \{-1,1\})} \leq \delta}}\left|\mathcal Z(\zeta|\epsilon,m)\right|>0.
\end{equation}

  	        	\item
  	        	For any fixed $M>0$ 
  	        	\begin{equation}
  	        	\inf_{\epsilon,m\in(0,1)}\inf_{\substack{\zeta\in L^\infty_c(\R^2\times \{-1,1\}):\\
  	        			\mathrm{supp}(\zeta(\cdot,\pm 1))\subset\Lambda,\\ \|\zeta\|_{L^\infty(\R^2\times \{-1,1\})}\leq M,\\\zeta(\cdot ,1)=\overline{\zeta(\cdot ,-1)}}} \mathcal Z(\zeta|\epsilon,m) > 0.
  	        	\end{equation}
  	        \end{enumerate}
\end{proposition}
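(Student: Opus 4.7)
The plan is to combine the Yukawa expansion of Corollary~\ref{cor:yukawa} with an explicit algebraic cancellation between the $n=2$ term of that expansion and the counterterm $\frac12\int\zeta\zeta A$. Given $M$, first choose $t=t(M,\beta)>0$ small enough that Corollary~\ref{cor:yukawa} applies uniformly in $\epsilon,m\in(0,1)$ and for all $\zeta$ with $\|\zeta\|_{L^\infty}\leq M$ supported in $\Lambda$. Write $v_t(\zeta,\varphi|\epsilon,m)=v_t^{(\neq 2)}+v_t^{(2)}$, separating off the $n=2$ contribution. By Proposition~\ref{pr:ybound} together with $n^n/n!\leq e^n$, the series defining $v_t^{(\neq 2)}$ is bounded uniformly in $\epsilon,m,\varphi$ by a constant $C_1=C_1(M,\Lambda,\beta,t)$.

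The key step for part (i) is an algebraic identity obtained by direct computation from \eqref{e:v2expl} and the Gaussian evaluation of $A$, namely
\begin{equation*}
\widetilde v_t^2(\xi_1,\xi_2|\epsilon,m)\avg{e^{i\sqrt\beta(\sigma_1\varphi(x_1)+\sigma_2\varphi(x_2))}}_{\GFF(\sqrt t,m)}+A(\xi_1,\xi_2|\epsilon,m)=\epsilon^{-\beta/2\pi}e^{-\beta\avg{\varphi(0)^2}_{\GFF(\epsilon,m)}}\bigl(e^{-\beta\sigma_1\sigma_2\avg{\varphi(x_1)\varphi(x_2)}_{\GFF(\sqrt t,m)}}-1\bigr).
\end{equation*}
The prefactor is of order $m^{\beta/2\pi}$ uniformly in $\epsilon$ (via the covariance estimates of Lemma~\ref{lem:cov}), and the remaining factor is dominated by $e^{\beta\avg{\varphi(0)^2}_{\GFF(\sqrt t,m)}}=O(m^{-\beta/2\pi}t^{-\beta/4\pi})$, so the right-hand side is bounded pointwise by $O(t^{-\beta/4\pi})$ uniformly in $\epsilon,m,\xi_1,\xi_2$. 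Writing
\begin{align*}
-v_t^{(2)}-\tfrac12\int\zeta\zeta A&=-\tfrac12\int\zeta(\xi_1)\zeta(\xi_2)\bigl[\widetilde v_t^2\avg{e^{i\sqrt\beta(\sigma_1\varphi_1+\sigma_2\varphi_2)}}_{\GFF(\sqrt t,m)}+A\bigr]d\xi_1 d\xi_2\\
&\quad-\tfrac12\int\zeta(\xi_1)\zeta(\xi_2)\widetilde v_t^2\bigl(e^{i\sqrt\beta(\sigma_1\varphi_1+\sigma_2\varphi_2)}-\avg{e^{i\sqrt\beta(\sigma_1\varphi_1+\sigma_2\varphi_2)}}_{\GFF(\sqrt t,m)}\bigr)d\xi_1 d\xi_2,
\end{align*}
the first piece is $\varphi$-independent and bounded by $C_2=C_2(M,\Lambda,\beta,t)$ via the identity. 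In the fluctuation piece, the near-diagonal $|x_1-x_2|^{-\beta/2\pi}$ singularity of $\widetilde v_t^2$ in the $\sigma_1\sigma_2=-1$ channel is tamed by a factor bounded by $\sqrt\beta|x_1-x_2|\|\nabla\varphi\|_{L^\infty(\Lambda)}$ plus a smaller deterministic mean-square correction, yielding an integrable $|x_1-x_2|^{1-\beta/2\pi}$ singularity for $\beta<6\pi$; for $\sigma_1\sigma_2=+1$, $\widetilde v_t^2$ is already uniformly bounded. Thus $|\mathcal Z|\leq e^{C_1+C_2}\avg{e^{C_3(1+\|\nabla\varphi\|_{L^\infty(\Lambda)})}}_{\GFF(\sqrt t,m)}$, which is finite uniformly in $\epsilon,m\in(0,1)$ by Lemma~\ref{le:grad}, proving (i).

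Parts (ii) and (iii) use the same decomposition. For (iii), the Hermitian condition $\zeta(\cdot,1)=\overline{\zeta(\cdot,-1)}$ makes $v_0$, $v_t$, and $\int\zeta\zeta A$ all real-valued, so $\mathcal Z$ is a positive real number; writing $\mathcal Z=\avg{e^{-v_t^{(\neq 2)}-v_t^{(2)}-\frac12\int\zeta\zeta A}}_{\GFF(\sqrt t,m)}$ and applying Jensen's inequality then yields the uniform lower bound $\mathcal Z\geq e^{-C_1-C_2}>0$ (the $\varphi$-averaged fluctuation piece vanishes exactly). For (ii), choosing $\delta$ sufficiently small makes $v_t^{(\neq 2)}=O(\delta)$ and each piece of $-v_t^{(2)}-\frac12\int\zeta\zeta A$ uniformly small (the fluctuation in $L^1(\GFF(\sqrt t,m))$ via Lemma~\ref{le:grad}), so by $|e^X-1|\leq|X|e^{|X|}$ one has $|\mathcal Z-1|<1/2$ uniformly in $\epsilon,m$, giving $|\mathcal Z|\geq 1/2>0$. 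The main obstacle is identifying and verifying the algebraic identity above and its uniform pointwise bound, which requires balancing the three scales $\epsilon$, $\sqrt t$, $m^{-1}$ simultaneously; the fluctuation estimate in turn hinges on the restriction $\beta<6\pi$, precisely the regime in which the reduced near-diagonal singularity $|x_1-x_2|^{1-\beta/2\pi}$ remains integrable in two dimensions.
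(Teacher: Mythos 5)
Your proof is correct and follows the same overall strategy as the paper: pass to the $\GFF(\sqrt t,m)$ representation via Corollary~\ref{cor:yukawa}, separate off the $n=2$ contribution of the renormalized potential, combine it with the counterterm, and control the $\varphi$-dependent fluctuation via $\|\nabla\varphi\|_{L^\infty(\Lambda)}$ and Lemma~\ref{le:grad}. The one genuine variation is the constant you subtract from the $n=2$ exponential: the paper subtracts $\delta_{n,2}=1$ and then invokes Lemma~\ref{le:simple} to show $|A+\widetilde v_t^2|$ is controlled by a (merely locally integrable) $\epsilon,m$-independent kernel, whereas you subtract the conditional expectation $\avg{e^{i\sqrt\beta(\sigma_1\varphi_1+\sigma_2\varphi_2)}}_{\GFF(\sqrt t,m)}$. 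Your algebraic identity
\[
\widetilde v_t^2\avg{e^{i\sqrt\beta(\sigma_1\varphi_1+\sigma_2\varphi_2)}}_{\GFF(\sqrt t,m)}+A
=\epsilon^{-\beta/2\pi}e^{-\beta\avg{\varphi(0)^2}_{\GFF(\epsilon,m)}}\bigl(e^{-\beta\sigma_1\sigma_2\avg{\varphi_1\varphi_2}_{\GFF(\sqrt t,m)}}-1\bigr)
\]
is indeed exact (the $\epsilon$- and $m$-dependent $c(x_1,x_2)$ pieces cancel completely), and it gives a \emph{uniform} pointwise bound $O(t^{-\beta/4\pi})$ rather than merely an $L^1_{\mathrm{loc}}$ bound — a slightly cleaner replacement for Lemma~\ref{le:simple}. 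Your fluctuation estimate, using $|e^{i\sqrt\beta(\varphi_1-\varphi_2)}-1|\leq\sqrt\beta\|\nabla\varphi\|_{L^\infty(\Lambda)}|x_1-x_2|$ and the deterministic $O(|x_1-x_2|^2/t)$ correction, correctly reproduces the $|x_1-x_2|^{1-\beta/2\pi}$ integrable singularity that pinpoints $\beta<6\pi$, exactly as in the paper. For part (ii) the paper instead uses Cauchy's integral formula together with part (i) to bound $|\tfrac{d}{dz}\mathcal Z(z\zeta)|$ near $z=0$; your direct estimate $|e^X-1|\leq|X|e^{|X|}$ combined with uniform smallness of $W$ in $L^1(\GFF(\sqrt t,m))$ accomplishes the same and is arguably more elementary, though the Cauchy argument is more robust since it requires only the uniform upper bound from (i) and not individual smallness of each piece.
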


Before we turn to the proof, we record a simple estimate that we will have use for in the proof and also later on. 

\begin{lemma}
     	\label{le:simple}
	For each $t>0$ and $\beta\in(0,6\pi)$, there exists a function $g_t\in L^1_{\mathrm{loc}}(\R^2\times \R^2)$ (namely for any compact $K\subset \R^2\times \R^2$, $\int_K dx\, dy\, |g_t(x,y)|<\infty$) which is independent of $\epsilon,m\in(0,1)$ such that for all $\xi_1,\xi_2\in \R^2\times \{-1,1\}$, with $A(\xi_1,\xi_2|\epsilon,m)$ from \eqref{e:ct} and $\widetilde v_t^2(\xi_1,\xi_2|\epsilon,m)$ from \eqref{e:v2expl},
	\begin{equation}
		|A(\xi_1,\xi_2|\epsilon,m)+\widetilde v_t^{2}(\xi_1,\xi_2|\epsilon,m)|\leq g_t(x_1,x_2).
	\end{equation}	
\end{lemma}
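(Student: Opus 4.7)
The plan is to exploit the observation that $A$ and $-\widetilde v_t^2$ have the same algebraic form but with different ``ultraviolet'' propagators, and to cancel the short-distance divergences by using the smoothness of the long-distance remainder. A direct Gaussian computation for the truncated GFF two-point function, together with \eqref{e:v2expl}, gives
\begin{equation*}
  A = e^{\phi+\psi}-e^{\phi}, \qquad -\widetilde v_t^2 = e^{\phi_t+\psi_t}-e^{\phi_t},
\end{equation*}
where $\phi=-\beta\widetilde c(0)-\tfrac{\beta}{2\pi}\log\epsilon$, $\psi=-\beta\sigma_1\sigma_2\widetilde c(x_1-x_2)$ with $\widetilde c(z)=\int_{\epsilon^2}^\infty\!\frac{e^{-m^2s-|z|^2/(4s)}}{4\pi s}\,ds$, and $\phi_t,\psi_t$ are the analogous expressions with $\widetilde c$ replaced by $c^{m^2}_{(\epsilon^2,t)}:=c^{m^2}_t-c^{m^2}_{\epsilon^2}$. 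Writing $c^{m^2}_{(t,\infty)}:=\widetilde c-c^{m^2}_{(\epsilon^2,t)}$, subtraction yields
\begin{equation*}
  A+\widetilde v_t^2 = \bigl(e^{\phi+\psi}-e^{\phi_t+\psi_t}\bigr)-\bigl(e^{\phi}-e^{\phi_t}\bigr),
\end{equation*}
which I would bound term by term.

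The three uniform ingredients I would assemble are: (a) explicit asymptotics of the incomplete gamma function show that, under $\epsilon^2<t<m^{-2}$ and $m\in(0,1)$, both $e^{\phi}$ and $e^{\phi_t}$ are bounded, so that $|e^{\phi}-e^{\phi_t}|\leq C_t$; (b) the elementary bound $1-e^{-y}\leq y$ applied to the smooth tail yields $0\leq c^{m^2}_{(t,\infty)}(0)-c^{m^2}_{(t,\infty)}(x)\leq |x|^2/(16\pi t)$ uniformly in $m\in(0,1)$; (c) Lemma~\ref{le:covest1} applied to both $c^{m^2}_t$ and $c^{m^2}_{\epsilon^2}$ gives the uniform asymptotics $c^{m^2}_{(\epsilon^2,t)}(y)=\tfrac{1}{2\pi}\log(\sqrt{t}/(|y|\vee\epsilon))+O(1)$ for $|y|\leq\sqrt{t}$.

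With these in hand I would split by the sign of $\sigma_1\sigma_2$. When $\sigma_1\sigma_2=+1$, one has $\psi,\psi_t\leq 0$, so $e^{\phi+\psi}\leq e^{\phi}$ and $e^{\phi_t+\psi_t}\leq e^{\phi_t}$, and the whole difference is dominated by $C_t$. When $\sigma_1\sigma_2=-1$ I would use the identity
\begin{equation*}
  e^{\phi+\psi}-e^{\phi_t+\psi_t}
  = e^{\phi_t+\psi_t}\Bigl(e^{-\beta(c^{m^2}_{(t,\infty)}(0)-c^{m^2}_{(t,\infty)}(x_1-x_2))}-1\Bigr).
\end{equation*}
By (b) the bracket is at most $\beta|x_1-x_2|^2/(16\pi t)$, and by (c) the prefactor satisfies $e^{\phi_t+\psi_t}\leq C(|x_1-x_2|\vee\epsilon)^{-\beta/(2\pi)}$ for $|x_1-x_2|\leq\sqrt{t}$. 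The trivial inequality $(|x|\vee\epsilon)^{-\beta/(2\pi)}|x|^2\leq |x|^{2-\beta/(2\pi)}$, valid for every $\epsilon>0$ and $\beta>0$, then collapses these two into an $\epsilon$-independent estimate $|A+\widetilde v_t^2|\leq C_t(1+|x_1-x_2|^{2-\beta/(2\pi)})$, which is locally integrable because $\beta<6\pi<8\pi$ gives $2-\beta/(2\pi)>-2$. The complementary regime $|x_1-x_2|>\sqrt{t}$ is handled by the same ingredients and gives a uniform bound.

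The main obstacle is the short-distance regime $|x_1-x_2|\lesssim\epsilon$ in the attractive case $\sigma_1\sigma_2=-1$ with $\beta\geq 4\pi$: both $A$ and $\widetilde v_t^2$ individually blow up like $\epsilon^{-\beta/(2\pi)}$, and the argument works only because the cancellation leaves a residue that gains a factor $|x_1-x_2|^2/t$ from the smoothness of $c^{m^2}_{(t,\infty)}$. The crucial point is to show, via Lemma~\ref{le:covest1}, that $e^{\phi_t+\psi_t}$ really scales as $(|x_1-x_2|\vee\epsilon)^{-\beta/(2\pi)}$ and not as the worse $\epsilon^{-\beta/(2\pi)}$; it is precisely the inequality $(|x|\vee\epsilon)^{-\beta/(2\pi)}|x|^2\leq|x|^{2-\beta/(2\pi)}$ that packages this cancellation into an $\epsilon$-free locally integrable majorant.
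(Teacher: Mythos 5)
Your overall strategy is the same as the paper's: rewrite $A+\widetilde v_t^2$ to isolate the ultraviolet cancellation, gain a factor $|x_1-x_2|^2/t$ from the smoothness of the heat-kernel tail at scales $\geq t$ via $1-e^{-y}\leq y$, bound the short-distance singularity of the prefactor using Lemma~\ref{le:covest1}, and conclude local integrability from $\beta<8\pi$. The only cosmetic difference is the algebraic bookkeeping: you carry $\phi,\psi,\phi_t,\psi_t$ and factor $e^{\phi_t+\psi_t}$, while the paper factors out the common $\int_{\epsilon^2}^1$ piece first and then splits the remainder into $\int_1^t$ and $\int_t^\infty$.

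There is, however, one genuine (small, fixable) gap in your ingredients (a) and (c): you explicitly assume $\epsilon^2<t<m^{-2}$, but the lemma must hold for \emph{all} $\epsilon,m\in(0,1)$ with $t>0$ fixed, and for $t>1$ there exist $m\in(0,1)$ with $t\geq m^{-2}$, so Lemma~\ref{le:covest1} applied directly to $c^{m^2}_t$ is not licensed. The paper works around this by only ever applying Lemma~\ref{le:covest1} at the \emph{fixed} scales $1$ and $\epsilon^2$, both of which satisfy the hypothesis $<m^{-2}$ automatically for $\epsilon,m\in(0,1)$; the scales $(1,t)$ and $(t,\infty)$ are then handled by elementary monotonicity and $1-e^{-y}\leq y$ without invoking Lemma~\ref{le:covest1} at scale $t$. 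Your ingredient (c) can be repaired by noting $c^{m^2}_{(\epsilon^2,t)}\leq c^{0}_{(\epsilon^2,t)}$ (since $e^{-m^2 s}\leq 1$) and running a direct $m=0$ computation, or by splitting at scale $1$ as the paper does; ingredient (a) already holds without the restriction via $0\leq\int_{\epsilon^2}^t(1-e^{-m^2 s})\frac{ds}{4\pi s}\leq \frac{m^2 t}{4\pi}$, but as written the claim of uniformity in $m$ is not justified.
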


\begin{proof}
  By the definitions in \eqref{e:ct} and \eqref{e:v2expl}, we have 
	\begin{align}
          &A(\xi_1,\xi_2|\epsilon,m)+\widetilde v_t^{2}(\xi_1,\xi_2|\epsilon,m)
            \nnb & 
          =e^{-\beta \int_{\epsilon^2}^{1}ds\, \frac{e^{-m^2s}-1}{4\pi s}-\beta \int_1^\infty ds\, \frac{e^{-m^2 s}}{4\pi s}}\left(e^{-\beta \sigma_1\sigma_2\int_{\epsilon^2}^{\infty} ds\,\dot c_s^{m^2}(x_1-x_2)}-1\right)
          \nnb &\qquad 
	-e^{-\beta \int_{\epsilon^2}^{1}ds\,\frac{e^{-m^2s}-1}{4\pi s}-\beta \int_1^t ds\,\frac{e^{-m^2 s}}{4\pi s}}\left(e^{-\beta \sigma_1\sigma_2\int_{\epsilon^2}^{t} ds\,\dot c_s^{m^2}(x_1-x_2)}-1\right).
	\end{align}
	Note first of all, that the $\int_{\epsilon^2}^{1}$-integral is common to both terms and bounded for $m<1$, so that we can ignore it.
        Moreover, the contribution of the $1$'s is also uniformly bounded in $m$ (for any fixed $t$), so that we can also ignore them.
        Finally if $\sigma_1=\sigma_2$, then also the $e^{-\beta\sigma_1\sigma_2\int (\cdots)}$-terms are uniformly bounded, so there is nothing to prove in this case. The remaining question is to prove the required estimate for 
	\begin{multline}
          \left|e^{-\beta \int_1^\infty ds \,\frac{e^{-m^2 s}}{4\pi s}}e^{\beta \int_{\epsilon^2}^{\infty} ds\, \dot c_s^{m^2}(x_1-x_2)}-e^{-\beta \int_1^t ds \,\frac{e^{-m^2 s}}{4\pi s}}e^{\beta \int_{\epsilon^2}^{t} ds\, \dot c_s^{m^2}(x_1-x_2)}\right|\\
          =e^{\beta \int_{\epsilon^2}^{1} ds \,\dot c_s^{m^2}(x_1-x_2)}e^{\beta \int_1^t ds\, \frac{e^{-m^2s}}{4\pi s }\left(e^{-\frac{|x_1-x_2|^2}{4s}}-1\right)}
          \left|1-e^{\beta\int_t^\infty ds\,\frac{e^{-m^2s}}{4\pi s }\left(e^{-\frac{|x_1-x_2|^2}{4s}}-1\right)}\right|.
	\end{multline}
	Using repeatedly the estimate $|1-e^{-x}|\leq x$ for $x>0$ and Lemma \ref{le:covest1}, along with some elementary considerations, we find that, for some universal constant $C$,
	\begin{multline}
          \left|e^{-\beta \int_1^\infty ds\, \frac{e^{-m^2 s}}{4\pi s}}e^{\beta \int_{\epsilon^2}^{\infty} ds\,\dot c_s^{m^2}(x_1-x_2)}-e^{-\beta \int_1^t ds \,\frac{e^{-m^2 s}}{4\pi s}}e^{\beta \int_{\epsilon^2}^{t} ds\, \dot c_s^{m^2}(x_1-x_2)}\right|\\
          \leq C|x_1-x_2|^2\left(|x_1-x_2|\wedge 1\right)^{-\frac{\beta}{2\pi}}e^{\frac{\beta}{4\pi}\int_{\min(1,t)}^{\max(1,t)}\frac{ds}{s}}\int_t^\infty\frac{ds}{s^2}. 
	\end{multline}
	Note that this function is locally integrable for $\beta<6\pi$ (in fact for $\beta<8\pi$), so we are done.
\end{proof}

With this in hand, we can turn to the proof of Proposition~\ref{pr:pfbound}.

\begin{proof}[Proof of Proposition \ref{pr:pfbound} (i)]
  For $M>0$ and $\beta\in(0,6\pi)$, let us choose
  $t=t_{M,\beta}\in (0,m^{-2})$ independent of $\epsilon$
  (so that Lemma~\ref{le:grad} is applicable)
  but small enough
  that  Corollary~\ref{cor:yukawa} is applicable:
  for $C_\beta$ as in Proposition~\ref{pr:ybound}, we assume that
  \begin{equation}
    eC_\beta Mt^{1-\frac{\beta}{8\pi}}\leq \frac{1}{2}.
    \label{e:tchoice}
    \end{equation}
    With this choice of $t$, we use   \eqref{e:Zvt} and Corollary~\ref{cor:yukawa} to write for $\zeta\in L^\infty_c(\R^2\times \{-1,1\})$ satisfying $\mathrm{supp}(\zeta(\cdot,\pm 1))\subset\Lambda$, $\|\zeta\|_{L^\infty(\R^2\times \{-1,1\})}\leq M$,
    \begin{align}
      & 
        \mathcal{Z}(\zeta|\epsilon,m)
        \nnb 
      &=\avga{e^{-\sum_{n=1}^\infty \frac{1}{n!}\int_{(\R^2\times \{-1,1\})^n}d\xi_1\cdots d\xi_n\, \zeta(\xi_1)\cdots \zeta(\xi_n)\widetilde v_t^{n}(\xi_1,\dots,\xi_n|\epsilon,m)\left(e^{i\sqrt{\beta}\sum_{j=1}^n \sigma_j \varphi(x_j)}-\delta_{n,2}\right)}}_{\GFF(\sqrt{t},m)}\notag \\
      &\qquad \times e^{-\frac{1}{2}\int_{(\R^2\times \{-1,1\})^2}d\xi_1\,d\xi_2\,\zeta(\xi_1)\zeta(\xi_2)(A(\xi_1,\xi_2|\epsilon,m)+\widetilde v_t^{2}(\xi_1,\xi_2)|\epsilon,m))}
        .
    \label{e:pftrep}
    \end{align}
    The second factor in \eqref{e:pftrep} is bounded (uniformly in $\epsilon,m,\zeta$) by Lemma \ref{le:simple}, our assumption that $\zeta(\cdot,\pm1)$ has support in $\Lambda$, and our assumption $\|\zeta\|_{L^\infty(\R^2\times \{-1,1\})}\leq M$, so
    we only need to consider the first factor. For this,
    Proposition \ref{pr:ybound} yields 
    \begin{align}
    &\avga{\left|e^{-\sum_{n=1}^\infty \frac{1}{n!}{\int_{(\R^2\times \{-1,1\})^n}d\xi_1\cdots d\xi_n}\, \zeta(\xi_1)\cdots \zeta(\xi_n)\widetilde v_t^{n}(\xi_1,\dots,\xi_n|\epsilon,m)\left(e^{i\sqrt{\beta}\sum_{j=1}^n \sigma_j \varphi(x_j)}-\delta_{n,2}\right)}\right|}_{\GFF(\sqrt{t},m)}\nnb
      &\leq e^{\sum_{n\neq 2} 2|\Lambda|\frac{M^n}{n!}n^{n-2}(C_\beta t^{1-\frac{\beta}{8\pi}})^nt^{-1}}
      \nnb
      &\qquad \qquad
        \times \avga{e^{\frac{M^2}{2}\int_{(\Lambda\times \{-1,1\})^2}d\xi_1\, d\xi_2\,|\widetilde v_t^{2}(\xi_1,\xi_2|\epsilon,m)|\left|e^{i\sqrt{\beta}\sum_{j=1}^2 \sigma_j \varphi(x_j)}-1\right|}}_{\GFF(\sqrt{t},m)}.
    \end{align}
    Since $n^n/n! \leq e^n$,
    by our choice of $t$ in \eqref{e:tchoice},
    the $n\neq 2$ sum in the first term on the right-hand side above depends only on $\beta,\Lambda,M$
    (in particular, it does not depend on $\epsilon,m,\zeta$), so it only remains to control the $n=2$ contribution, i.e., the second term on the right-hand side. For this,
    we note from \eqref{e:v2expl} that the $\sigma_1=\sigma_2$-contribution is uniformly bounded by a quantity independent of $\epsilon,m,\zeta$, so again using \eqref{e:v2expl} and the elementary estimate $|e^{i\sqrt{\beta}(\varphi(x_1)-\varphi(x_2))}-1|\leq \sqrt{\beta}\|\nabla \varphi\|_{L^\infty(\Lambda)}|x_1-x_2|$ for $x_1,x_2\in \Lambda$, we see that, for some constant $C=C_{\beta,\Lambda,M}$  (in particular, independent of $\epsilon,m,\zeta$),
    \begin{align}
      &\avga{\left|e^{-\sum_{n=1}^\infty \frac{1}{n!}{\int_{(\R^2\times \{-1,1\})^n}d\xi_1\cdots d\xi_n}\, \zeta(\xi_1)\cdots \zeta(\xi_n)\widetilde v_t^{n}(\xi_1,\dots,\xi_n|\epsilon,m)\left(e^{i\sqrt{\beta}\sum_{j=1}^n \sigma_j \varphi(x_j)}-\delta_{n,2}\right)}\right|}_{\GFF(\sqrt{t},m)}\nnb
      &\leq C 
        \avga{e^{\sqrt{\beta}M^2\|\nabla \varphi\|_{L^\infty(\Lambda)}e^{-\beta (\int_{\epsilon^2}^{t} dr\, \dot c_r^{m^2}(0)+\frac{1}{4\pi}\log \epsilon^2)}\int_{\Lambda^2}dx_1\,dx_2\, |e^{\beta c_t^{m^2}(x_1-x_2)}-1||x_1-x_2|}}_{\GFF(\sqrt{t},m)}
      .
    \end{align}
    From Lemma~\ref{le:covest1}, we see that $\int_{\Lambda^2}dx_1\, dx_2\, |e^{\beta c_t^{m^2}(x_1-x_2)}-1||x_1-x_2|$ can be bounded by a quantity depending only on $\Lambda,t,\beta$, while on the other hand, recalling \eqref{e:v1bound} (and that we chose in addition to \eqref{e:tchoice} that $t\leq m^{-2}$)
    \begin{align}
    &e^{-\beta\left(\int_{\epsilon^2}^{t}dr \,\dot c_r^{m^2}(0)+\frac{1}{4\pi}\log \epsilon^2\right)}\leq \widetilde C_\beta t^{-(1-\frac{\beta}{8\pi})}
    \end{align}
    for a constant $\widetilde C_\beta$ depending only on $\beta$. In summary, 
    \begin{align}
      &\avga{\left|e^{-\sum_{n=1}^\infty \frac{1}{n!}{\int_{(\R^2\times \{-1,1\})^n}d\xi_1\cdots d\xi_n}\, \zeta(\xi_1)\cdots \zeta(\xi_n)\widetilde v_t^{n}(\xi_1,\dots,\xi_n|\epsilon,m)\left(e^{i\sqrt{\beta}\sum_{j=1}^n \sigma_j \varphi(x_j)}-\delta_{n,2}\right)}\right|}_{\GFF(\sqrt{t},m)}\nnb
      &\leq C \avga{e^{p\|\nabla\varphi\|_{L^\infty(\Lambda)}}}_{\GFF(\sqrt{t},m)},
    \end{align}
    for some constants $C=C_{\beta,\Lambda,M}$ and $p=p_{\beta,\Lambda,M}$ independent of $\epsilon,m,\zeta$.
    Thus Lemma~\ref{le:grad} shows that the expectation part of \eqref{e:pftrep} is bounded by a quantity independent of $m,\epsilon,\zeta$.
\end{proof}

\begin{proof}[Proof of Proposition \ref{pr:pfbound} (ii)]
        Consider first $\zeta\in L^\infty_c(\R^2\times \{-1,1\})$ with $\|\zeta\|_{L^\infty(\R^2\times \{-1,1\})}=1$ (say) and let us look at the function $z\mapsto \mathcal Z(z\zeta|\epsilon,m)$. Since
        for $z=0$, $\mathcal Z(z\zeta|\epsilon,m)=1$, it will turn out to be sufficient to bound the derivative of $\mathcal Z(z\zeta|\epsilon,m)$ uniformly in some neighborhood of the origin -- then taking a small enough neighborhood (independent of $\epsilon,m$), we can bound the distance to zero in this neighborhood. Translating this into a statement about $\zeta$ with small enough $L^\infty$-norm will follow from a scaling argument.
        
        For the derivative, note that for $\epsilon,m>0$, the relevant random variables are deterministically bounded, so $\mathcal Z(z\zeta|\epsilon,m)$ is an entire function. Thus for any compact $K'\subset B(0,R)$ and $z\in K'$, we have by Cauchy's integral formula,
          \begin{equation}
          	\frac{d}{dz}\mathcal Z(z\zeta|\epsilon,m)
          =\frac{2}{2\pi i}\oint_{|w|=R} dw\, \frac{\mathcal Z(w\zeta|\epsilon,m)}{(w-z)^2}.
          \end{equation}
	By Proposition \ref{pr:pfbound} (i), we can bound the numerator uniformly in $\epsilon,m,\zeta$ (recall that we normalized $\|\zeta\|_{L^\infty(\R^2\times \{-1,1\})}=1$) and we can assume $|w-z|$ to be uniformly bounded from below, so we see that also the derivative is uniformly bounded in compact subsets. Thus we see that there exists some $\delta>0$ (independent of $\epsilon,m,\zeta$) for which we have 
\begin{equation}		\inf_{\epsilon,m\in(0,1)}\inf_{|z|<\delta}\inf_{\substack{\zeta\in L_c^\infty(\R^2\times \{-1,1\})\\ \mathrm{supp}(\zeta(\cdot,\pm1))\subset\Lambda\\
			\|\zeta\|_{L^\infty(\R^2\times \{-1,1\})}=1}}|\mathcal Z(z\zeta|\epsilon,m)|>0.
\end{equation}
	By scaling, we note that this can be translated into precisely the claim of the proposition.
\end{proof}

\begin{proof}[Proof of Proposition \ref{pr:pfbound} (iii)] 
        Note from \eqref{e:yukawa} that under the condition $\zeta(x,1)=\overline{\zeta(x,-1)}$, the renormalized potential $v_t(\zeta,\cdot|\epsilon,m)$ is real. Thus making the same choices as in the proof of Proposition~\ref{pr:pfbound} (i), we can write 
	\begin{align}
          & 
          \mathcal Z(\zeta|\epsilon,m)
           \nnb 
          &\geq \avga{e^{-\sum_{n=1}^\infty \frac{1}{n!} 
            \int d\xi_1 \cdots d\xi_n \, 
            \left|\zeta(\xi_1)\cdots \zeta(\xi_n)\widetilde v_t^{n}(\xi_1,\dots,\xi_n|\epsilon,m)\left(e^{i\sqrt{\beta}\sum_{j=1}^n \sigma_j \varphi(x_j)}-\delta_{n,2}\right)\right|}}_{\GFF(\sqrt{t},m)}\notag \\
          &\qquad \times e^{-             \int d\xi_1 \, d\xi_2 \, 
            \left|\zeta(\xi_1)\zeta(\xi_2)(A(\xi_1,\xi_2|\epsilon,m)+\widetilde v_t^{2}(\xi_1,\xi_2|\epsilon,m))\right|}.
	\end{align}
	We can now argue exactly as in the proof of Proposition~\ref{pr:pfbound} (i): the sum of the $n\neq 2$-terms is deterministically bounded uniformly in $\epsilon,m,\zeta$. Similarly, Lemma~\ref{le:simple} lets us deduce that that the $A+\widetilde v_t^{2}(\cdot|\epsilon,m)$ term can be bounded from below by a uniform constant. It remains to argue that 
\begin{equation}
	          \avga{e^{-\frac{1}{2}            \int d\xi_1 \, d\xi_2 \, 
              \left|\zeta(\xi_1)\zeta(\xi_2)\widetilde v_t^{2}(\xi_1,\xi_2|\epsilon,m)\left(e^{i\sqrt{\beta}\sum_{j=1}^2 \sigma_j \varphi(x_j)}-1\right)\right|}}_{\GFF(\sqrt{t},m)}>C>0
\end{equation}
	for some constant $C$ independent of $\epsilon,m,\zeta$. As in the proof of Proposition~\ref{pr:pfbound}, this follows from Lemma~\ref{le:grad}, though now combined with Jensen's inequality (used in the form $\E \frac{1}{X}\geq \frac{1}{\E X}$ for a positive random variable $X$). This concludes the proof.
\end{proof}

Before turning to the proof of Theorem \ref{th:main1}, we introduce some notation and make some preliminary remarks about the renormalized partition function as an analytic function. 

\subsection{Expansion of the renormalized partition function}
\label{sec:cZexpan}

Our proof of convergence of the renormalized partition function and entirety of the limit will go through analyzing the series expansion of $z\mapsto \mathcal Z(z\zeta|\epsilon,m)$ and making use of the estimates from Proposition~\ref{pr:ybound} and Proposition~\ref{pr:pfbound}. For this purpose, we introduce some notation for the series expansion of $z\mapsto \mathcal Z(z\zeta|\epsilon,m)$. We formulate this as the following lemma. 

\newcommand{\Vsym}{\mathcal{\widetilde M}}

\begin{lemma} 	\label{le:entire1}
	For fixed $\epsilon,m>0$ and $\zeta\in L^\infty_c(\R^2\times \{-1,1\})$, the function $z\mapsto \mathcal Z(z\zeta|\epsilon,m)$ is entire and we have 
        \begin{equation}
		\mathcal Z(z\zeta|\epsilon,m)=\sum_{n=0}^\infty \frac{z^n}{n!}\mathcal M_n(\zeta|\epsilon,m),
              \end{equation}
	where
        \begin{equation}
          \mathcal M_n(\zeta|\epsilon,m)
          =
          \int_{(\R^2\times \{-1,1\})^n}d\xi_1\cdots d\xi_n \, \zeta(\xi_1)\cdots \zeta(\xi_n) \, \Vsym(\xi_1,\dots,\xi_n|\epsilon,m)
        \end{equation}
        with (recall the definition of $A$ from \eqref{e:ct})
        \begin{multline}          \label{e:Vdef}
          \Vsym(\xi_1,\dots,\xi_n|\epsilon,m)
          =
          \frac{1}{n!}\sum_{\tau\in S_n}
          \Bigg[ \sum_{j=0}^{\lfloor \frac{n}{2}\rfloor}\frac{n!}{j!(n-2j)!}(-1)^{n-j}2^{-j}
          \\
          \times \left(\prod_{l=1}^jA(\xi_{\tau_{2l-1}},\xi_{\tau_{2l}}|\epsilon,m)\right) \avga{\prod_{l'=2j+1}^n\wick{e^{i\sqrt{\beta}\sigma_{\tau_{l'}}\varphi(x_{{\tau_{l'}}})}}_\epsilon}_{\GFF(\epsilon,m)}\Bigg]
          ,
        \end{multline}
        where $S_n$ denotes the group of permutations of the set $[n]$.
	Moreover, for each $\delta,M>0$ and $\Lambda\subset \R^2$ compact, there exists a constant $C(\delta,M,\Lambda)$ independent of $\epsilon,m,\zeta,n$ such that 
	\begin{equation}
	\sup_{\epsilon,m\in(0,1)}\sup_{\substack{\zeta\in L_c^\infty(\R^2\times \{-1,1\})\\\mathrm{supp}(\zeta(\cdot,\pm 1))\subset\Lambda,\\ \|\zeta\|_{L^\infty(\R^2\times \{-1,1\})}\leq M}}\left|\mathcal M_n(\zeta|\epsilon,m)\right|\leq C(\delta,M,\Lambda)\delta^n n!.
	\label{e:Mbound}
	\end{equation}
\end{lemma}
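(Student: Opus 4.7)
The plan is to carry out three steps: (1) establish entirety of $z \mapsto \mathcal{Z}(z\zeta|\epsilon,m)$ via a convergent power series; (2) identify the Taylor coefficients by taking the Cauchy product of the expansions of $Z(z\zeta|\epsilon,m)$ and the counterterm factor; and (3) derive the bound \eqref{e:Mbound} from Cauchy's integral formula combined with Proposition \ref{pr:pfbound}(i).

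For step (1), I would start from the observation that, for fixed $\epsilon,m>0$ and $\zeta$ supported in $\Lambda\times\{\pm 1\}$, one has $|\wick{e^{i\sqrt{\beta}\sigma\varphi(x)}}_\epsilon| = \epsilon^{-\beta/4\pi}$ deterministically, so $v_0(z\zeta,\varphi|\epsilon)$ is bounded uniformly in $\varphi$ by a constant (depending on $\epsilon,\zeta,\Lambda$) times $|z|$. Dominated convergence and Fubini then yield the entire expansion
\begin{equation}
Z(z\zeta|\epsilon,m) = \sum_{k=0}^\infty \frac{(-z)^k}{k!}\int d\xi_1\cdots d\xi_k\,\zeta(\xi_1)\cdots\zeta(\xi_k)\avga{\prod_{j=1}^k\wick{e^{i\sqrt{\beta}\sigma_j\varphi(x_j)}}_\epsilon}_{\GFF(\epsilon,m)}.
\end{equation}
The counterterm factor in \eqref{e:rpf} has the form $\exp(-\tfrac{z^2}{2}P)$ with $P := \int d\xi_1\,d\xi_2\,\zeta(\xi_1)\zeta(\xi_2) A(\xi_1,\xi_2|\epsilon,m)$, which is trivially entire in $z$. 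Their product $\mathcal{Z}(z\zeta|\epsilon,m)$ is therefore entire.

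For step (2), I would perform the Cauchy product. Expanding the counterterm factor as $\sum_{j\geq 0}\frac{(-z^2/2)^j}{j!}P^j$ and multiplying by the series above, the coefficient of $z^n$ comes from all pairs $(j,k)$ with $k+2j=n$. Collecting the signs $(-1)^{n-2j}(-1)^j=(-1)^{n-j}$, the combinatorial weight $\tfrac{1}{(n-2j)!\,j!}2^{-j}$, and the corresponding integrals produces a formula of the shape $\int\zeta(\xi_1)\cdots\zeta(\xi_n)\,\widetilde{V}(\xi_1,\ldots,\xi_n)\,d\xi_1\cdots d\xi_n$ with an asymmetric kernel $\widetilde{V}$ in which $\xi_1,\ldots,\xi_{2j}$ are paired in the $A$-factors and $\xi_{2j+1},\ldots,\xi_n$ enter the charge correlation function. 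Using the symmetry of the product $\zeta(\xi_1)\cdots\zeta(\xi_n)$ under permutations of its arguments, one may replace $\widetilde V$ by its full symmetrization $\tfrac{1}{n!}\sum_{\tau\in S_n}\widetilde V(\xi_{\tau_1},\ldots,\xi_{\tau_n})$; multiplying by $n!$ then yields exactly \eqref{e:Vdef}.

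The main step---and the only one requiring substantive input---is (3). Entirety together with Cauchy's integral formula gives, for any $R>0$,
\begin{equation}
|\mathcal{M}_n(\zeta|\epsilon,m)| = \left|\frac{n!}{2\pi i}\oint_{|z|=R}\frac{\mathcal{Z}(z\zeta|\epsilon,m)}{z^{n+1}}\,dz\right| \leq \frac{n!}{R^n}\sup_{|z|=R}|\mathcal{Z}(z\zeta|\epsilon,m)|.
\end{equation}
For $\zeta$ with $\|\zeta\|_{L^\infty}\leq M$ and support in $\Lambda\times\{\pm 1\}$, the function $z\zeta$ with $|z|=R$ has $L^\infty$-norm at most $RM$ and the same support, so Proposition \ref{pr:pfbound}(i) provides a bound $\widetilde C(RM,\Lambda)$ that is uniform in $\epsilon,m\in(0,1)$, in $\zeta$ in this class, and in $|z|=R$. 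Given $\delta>0$, the choice $R=1/\delta$ produces \eqref{e:Mbound} with $C(\delta,M,\Lambda):=\widetilde C(M/\delta,\Lambda)$. The nontrivial ingredient is the uniformity of Proposition \ref{pr:pfbound}(i) for arbitrarily large $RM$, i.e., for arbitrarily small $\delta$; this uniformity is precisely what the renormalised-potential analysis of Section \ref{sec:renormpot} was designed to deliver.
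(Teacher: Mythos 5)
Your proof is correct and follows essentially the same route as the paper's: deterministic boundedness of $\wick{e^{i\sqrt{\beta}\sigma\varphi}}_\epsilon$ to establish entirety, Cauchy product of the two exponential series to identify $\mathcal M_n$ (with the same sign and combinatorial bookkeeping leading to \eqref{e:Vdef} after symmetrization), and Cauchy's integral formula on $|z|=\delta^{-1}$ combined with the uniformity of Proposition~\ref{pr:pfbound}(i) to obtain \eqref{e:Mbound}.
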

\begin{proof}
    Let us recall from \eqref{e:pfdef} and \eqref{e:rpf} that 
    \begin{multline}
    \mathcal Z(z\zeta|\epsilon,m)=  \avga{
      e^{-z\int_{\R^2\times \{-1,1\}}d\xi \, \zeta(\xi)\wick{e^{i\sqrt{\beta}\sigma\varphi(x)}}_\epsilon}}_{\GFF(\epsilon,m)}
      \\
      \times e^{-\frac{z^2}{2} \int_{(\R^2\times \{-1,1\})^2}d\xi_1 d\xi_2 \, \zeta(\xi_1)\zeta(\xi_2)A(\xi_1,\xi_2|\epsilon,m)}. 
    \label{e:pfprod}
    \end{multline}
    As mentioned in the proof of Proposition \ref{pr:pfbound}, for each $\epsilon,m>0$, the expectation is an entire function of $z$ since 
    \begin{equation}
    	    \int_{\R^2\times \{-1,1\}}d\xi \, \zeta(\xi)\wick{e^{i\sqrt{\beta}\sigma\varphi(x)}}_\epsilon 
    \end{equation}
    is a bounded random variable. The second factor in \eqref{e:pfprod} is trivially an entire function of $z$ (for any fixed $\epsilon,m>0$). Thus we see that indeed $\mathcal Z(z\zeta|\epsilon,m)$ is entire.
    
    For the expansion coefficients, by series expanding both terms (and interchanging the order of summation/integration and expectation which is justified by the boundedness of the relevant random variables and a routine Fubini argument), we find 
    \begin{align}
      & 
        \mathcal Z(z\zeta|\epsilon,m)
        \nnb 
      &= \sum_{j=0}^\infty \frac{(-1)^jz^{2j}}{2^{j} j!}\int_{(\R^2\times \{-1,1\})^{2j}}d\xi_1\cdots d\xi_{2j} \, \zeta(\xi_1)\cdots \zeta(\xi_{2j}) \prod_{l=1}^{j}A(\xi_{2l-1},\xi_{2l}|\epsilon,m)\\
    &\quad \times\sum_{k=0}^\infty \frac{(-z)^{k}}{k!} \int_{(\R^2\times \{-1,1\})^{k}}d\xi_1'\cdots d\xi_{k}' \, \zeta(\xi_1')\cdots \zeta(\xi_k') \avga{\prod_{l'=1}^k\wick{e^{i\sqrt{\beta}\sigma_{l'}'\varphi(x_{l'}')}}_{\epsilon}}_{\GFF(\epsilon,m)}.\notag
    \end{align}
    The claim about the representation of $\mathcal M_n(\zeta|\epsilon,m)$ now follows by
    relabeling our integration variables (write $n=k+2j$ and $\xi_{l'}'=\xi_{2j+l'}$)
    and then symmetrizing in the arguments.
    
    Finally for the proof of the bound \eqref{e:Mbound}, note that as $\mathcal Z(z\zeta|\epsilon,m)$ is entire, Cauchy's integral formula implies that, for any $R>0$,
    \begin{equation}
    	\mathcal M_n(\zeta|\epsilon,m)=\frac{n!}{2\pi i}\oint_{|w|=R}dw \, \frac{\mathcal Z(w\zeta|\epsilon,m)}{w^{n+1}}.
    \end{equation}
    The claim now follows by choosing $R=\delta ^{-1}$ and (by Proposition \ref{pr:pfbound}) 
\begin{equation}
	    C(\delta,M,\Lambda)\geq \sup_{\epsilon,m\in (0,1)}\sup_{\substack{\zeta\in L_c^\infty(\R^2\times\{-1,1\}):\\ \mathrm{supp}(\zeta(\cdot,\pm1))\subset\Lambda,\\ \|\zeta\|_{L^\infty(\R^2\times \{-1,1\})}\leq \delta ^{-1}M}}|\mathcal Z(\zeta|\epsilon,m)|.
\end{equation}    
\end{proof}

In the course of our proof of convergence of $\mathcal Z(\zeta|\epsilon,m)$,
we will have use for an alternative representation for $\mathcal M_n(\zeta|\epsilon,m)$ in terms of the renormalized potential.
To control the kernel $\Vsym$ in terms of this alternative representation, 
we record the following simple fact. 

\begin{lemma}	\label{le:symker}
	For $\epsilon,m>0$, $\Vsym(\xi_1,\dots,\xi_n|\epsilon,m)$ is the unique continuous function of $\xi_1,\dots,\xi_n$ for which 
	\begin{align}
	&\frac{1}{n!}\frac{\partial}{\partial \delta_1}\cdots \frac{\partial}{\partial \delta_n}\mathcal M_n(\delta_1 f_1+\cdots +\delta_n f_n|\epsilon,m)\nnb
	&\qquad = \int_{(\R^2\times \{-1,1\})^{n}}d\xi_1\cdots d\xi_{n} \, f_1(\xi_1)\cdots f_n(\xi_n) \Vsym(\xi_1,\dots,\xi_n|\epsilon,m)
	\end{align}
	for all $f_1,\dots,f_n\in L^\infty_c(\R^2\times\{-1,1\})$.
\end{lemma}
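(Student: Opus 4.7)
The plan is to verify the identity by direct substitution and then appeal to a standard density argument for uniqueness.

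First, I would plug $\zeta = \delta_1 f_1 + \cdots + \delta_n f_n$ into the formula for $\mathcal M_n$ from Lemma~\ref{le:entire1} and expand the product:
\begin{equation*}
\zeta(\xi_1)\cdots\zeta(\xi_n) = \sum_{i_1,\dots,i_n=1}^{n} \delta_{i_1}\cdots\delta_{i_n}\, f_{i_1}(\xi_1)\cdots f_{i_n}(\xi_n).
\end{equation*}
Upon applying $\partial/\partial\delta_1 \cdots \partial/\partial\delta_n$, the only tuples $(i_1,\dots,i_n) \in [n]^n$ that survive are those that are permutations of $(1,\dots,n)$, and each contributes a coefficient of $1$. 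This yields
\begin{equation*}
\frac{\partial^n}{\partial\delta_1\cdots\partial\delta_n} \mathcal M_n\Bigl(\sum_{i=1}^n \delta_i f_i \,\Big|\,\epsilon,m\Bigr) = \sum_{\sigma \in S_n} \int d\xi_1 \cdots d\xi_n\, f_{\sigma(1)}(\xi_1) \cdots f_{\sigma(n)}(\xi_n)\, \Vsym(\xi_1,\dots,\xi_n|\epsilon,m),
\end{equation*}
and the interchange of derivative and integral is justified for each fixed $\epsilon,m>0$ by dominated convergence, since $\Vsym$ is locally bounded (as a finite sum of products of $A$ and free-field correlation functions) and the $f_i$ are in $L^\infty_c$.

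Next I would invoke symmetry of the kernel. The definition \eqref{e:Vdef} exhibits $\Vsym$ as an average over $S_n$ of an expression in which all of $\xi_1,\dots,\xi_n$ appear through the permutation $\tau$; composing $\tau$ with any permutation of the arguments simply reindexes the sum, so $\Vsym(\xi_1,\dots,\xi_n|\epsilon,m)$ is invariant under permutations of its arguments. Changing variables $\xi_j \mapsto \xi_{\sigma^{-1}(j)}$ in each summand shows that all $n!$ terms in the above sum are equal to $\int f_1(\xi_1)\cdots f_n(\xi_n)\,\Vsym(\xi_1,\dots,\xi_n|\epsilon,m)\, d\xi_1\cdots d\xi_n$. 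Dividing by $n!$ gives the claimed identity.

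For uniqueness, suppose $\widetilde V$ is another continuous function with the same property, and let $W = \Vsym - \widetilde V_{\mathrm{sym}}$ where $\widetilde V_{\mathrm{sym}}$ is the symmetrization of $\widetilde V$ (note that the integral identity only sees the symmetric part of the kernel, so we may restrict to symmetric candidates, and $\Vsym$ is already symmetric). Then $W$ is continuous, symmetric, and $\int f_1(\xi_1)\cdots f_n(\xi_n) W(\xi_1,\dots,\xi_n)\,d\xi_1 \cdots d\xi_n = 0$ for every choice of $f_i \in L^\infty_c$. Since products of indicators of small balls around arbitrary points $(\xi_1^0,\dots,\xi_n^0) \in (\R^2\times\{-1,1\})^n$ are admissible test functions, standard Lebesgue differentiation combined with continuity of $W$ forces $W \equiv 0$, proving uniqueness. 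No step here is delicate — the only point requiring care is making sure the mixed-derivative calculation produces the correct combinatorial factor, which is handled cleanly by the symmetry argument.
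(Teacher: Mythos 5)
Your proof is essentially correct and takes the same core approach as the paper: expand $\mathcal M_n(\sum_i \delta_i f_i)$, extract the coefficient of $\delta_1\cdots\delta_n$ (a sum over permutations), and relabel integration variables using the symmetry of $\Vsym$ to produce the $n!$ factor; then establish uniqueness by a density argument. The paper's uniqueness argument uses Fourier-series test functions on an arbitrary square, whereas you use indicators of small balls plus Lebesgue differentiation; both are standard and give the same conclusion, so this is not a substantive difference.

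One remark: the parenthetical claim that ``the integral identity only sees the symmetric part of the kernel'' is not correct as stated --- for distinct $f_1,\dots,f_n$ the multilinear form $\int \prod_i f_i(\xi_i)\, V(\xi_1,\dots,\xi_n)\,d\xi$ does detect the anti-symmetric part of $V$ --- and the detour through $\widetilde V_{\mathrm{sym}}$ is both unjustified (you never argue that $\widetilde V_{\mathrm{sym}}$ satisfies the identity) and unnecessary. Simply set $W = \Vsym - \widetilde V$ directly; the hypothesis gives $\int \prod_i f_i(\xi_i)\, W(\xi_1,\dots,\xi_n)\,d\xi = 0$ for all test tuples, and your indicator-function/Lebesgue-differentiation argument then applies verbatim to show $W\equiv 0$. (If one wants symmetry of any valid $\widetilde V$, this actually follows a posteriori from the fact that the left-hand side of the lemma is symmetric in $(f_1,\dots,f_n)$, but it is not needed for uniqueness.) With that cosmetic repair, the argument is fine.
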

\begin{proof}
        Uniqueness can be seen, for example, by choosing $f_i$ to be of the form 
\begin{equation}
		e^{2\pi i k_1 x_1/(2L)}e^{2\pi i k_2 x_2/(2L)}\mathbf 1\{|x_1|,|x_2|\leq L\}
\end{equation}
	in the $x$ variables and to be a Kronecker $\delta$-in the $\sigma$-variable. This shows that any two functions $\widetilde {\mathcal M}$ satisfying this relation have the same Fourier series in an arbitrary square, so by continuity they must be the same in this square. Since the square was arbitrary, they must be the same in all of $\R^2$.

	To see that $\Vsym$ actually satisfies this relation, write 
\begin{equation}
         \mathcal M_n\left(\left.\sum_{l=1}^n \delta_l f_l\right|\epsilon,m\right)
          = \int_{(\R^2\times \{-1,1\})^{n}}d\xi_1\cdots d\xi_{n}
          \prod_{j=1}^n \left(\sum_{l=1}^n \delta_l f_l(\xi_j)\right) \Vsym(\xi_1,\dots,\xi_n|\epsilon,m).
\end{equation} 
	Say the $\delta_n$-derivative can hit any of the $n$ terms in the $j$ product and produce a factor of $f_n(\xi_j)$ for some $j$. The $\delta_{n-1}$-derivative can hit any of the $n-1$ remaining terms in the $j$ product (and produce a factor of $f_{n-1}(\xi_{j'})$ for some $j'\neq j$) etc. We see that 
	\begin{align}
	&\frac{\partial}{\partial \delta_1}\cdots \frac{\partial}{\partial \delta_n}\mathcal M_n(\delta_1 f_1+\cdots +\delta_n f_n|\epsilon,m)\nnb
          &\qquad =\sum_{\tau \in S_n}\int_{(\R^2\times \{-1,1\})^{n}}d\xi_1\cdots d\xi_{n}\, f_1(\xi_{\tau_1})\cdots f_n(\xi_{\tau_n}) \Vsym(\xi_1,\dots,\xi_n|\epsilon,m)
          \nnb
	&\qquad = n!\int_{(\R^2\times \{-1,1\})^{n}}d\xi_1\cdots d\xi_{n} \, f_1(\xi_1)\cdots f_n(\xi_n)\Vsym(\xi_1,\dots,\xi_n|\epsilon,m),
	\end{align}
	where in the last step we simply relabeled our integration variables.
\end{proof}

As a final ingredient before we turn to proving convergence of the renormalized partition function, we will construct an integrable upper bound on $|\Vsym|$ by
representing $\mathcal M_n(\zeta|\epsilon,m)$ in terms of the renormalized partition function.
Before stating the bound,
we state the representation of $\mathcal M_n(\zeta|\epsilon,m)$ in terms of the renormalized potential $v_t(\zeta,\cdot|\epsilon,m)$.

\begin{lemma}
  Let $\beta\in(0,6\pi)$, let $\zeta\in L^\infty_c(\R^2\times \{-1,1\})$, and choose
  $t \in (\epsilon^2,m^{-2})$ as in Corollary~\ref{cor:yukawa}.
  Then $\mathcal M_n(\zeta|\epsilon,m)$ defined in Lemma~\ref{le:entire1} can be expressed as
	\begin{align}
          \mathcal M_n(\zeta|\epsilon,m)
          &=
            \int d\xi_1 \cdots d\xi_n \,
            \zeta(\xi_1)\cdots \zeta(\xi_n)\nnb
	&\quad \times \sum_{j=0}^{\lfloor \frac{n}{2}\rfloor}\frac{n!}{j!}2^{-j}(-1)^j\prod_{l=1}^j\left(A(\xi_{2l-1},\xi_{2l}|\epsilon,m)+\widetilde v_t^{2}(\xi_{2l-1},\xi_{2l}|\epsilon,m)\right)\nnb
          &\quad \times \sum_{k=1}^{n-2j}\frac{(-1)^{k}}{k!}\sum_{\substack{1\leq n_1,\dots,n_k\leq n-2j:\\n_1+\cdots +n_k=n-2j}}\frac{(n-2j)!}{n_1!\cdots n_k!}
          \nnb &\qquad\qquad \times 
          \prod_{l=1}^{k}\widetilde v_t^{n_l}(\xi_{2j+n_1+\dots+n_{l-1}+1},\dots,\xi_{2j+n_1+\dots+n_l}|\epsilon,m)\nnb
          &\qquad\qquad \times \avga{\prod_{l=1}^k \left(e^{i\sqrt{\beta}\sum_{p=2j+n_{1}+\dots+n_{l-1}+1}^{2j+n_1+\dots+n_l}\sigma_p \varphi(x_p)}-\delta_{n_l,2}\right)}_{\GFF(\sqrt{t},m)},
	\end{align}
	with the interpretation that if $n=2j$, then the $k$-sum equals one.
	\label{le:malt}
\end{lemma}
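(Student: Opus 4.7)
The plan is to insert $z\zeta$ in place of $\zeta$ in the factorization \eqref{e:pftrep} of $\mathcal{Z}(\zeta|\epsilon,m)$ furnished by Corollary~\ref{cor:yukawa}, expand the two resulting exponentials as power series in $z$, and then read off the coefficient of $z^n$, which by Lemma~\ref{le:entire1} equals $\mathcal{M}_n(\zeta|\epsilon,m)/n!$. Since $z\mapsto \mathcal{Z}(z\zeta|\epsilon,m)$ is entire and its Taylor coefficients at $0$ are uniquely determined, it suffices to carry out this expansion for $|z|$ small enough that condition \eqref{e:zetasmall} holds for $z\zeta$ in place of $\zeta$ and all series below converge absolutely.

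The counterterm factor $\exp\bigl(-\tfrac{z^2}{2}\int d\xi_1\,d\xi_2\,\zeta(\xi_1)\zeta(\xi_2)(A+\widetilde v_t^2)(\xi_1,\xi_2)\bigr)$ expands directly into $\sum_{j\geq 0}\frac{(-1)^j z^{2j}}{2^j j!}\int d\xi_1\cdots d\xi_{2j}\prod_{l=1}^{j}\zeta(\xi_{2l-1})\zeta(\xi_{2l})(A+\widetilde v_t^2)(\xi_{2l-1},\xi_{2l})$, producing the $j$-sum and the pair product appearing in the statement. For the second factor I would expand the outer exponential as $\sum_{k\geq 0}\frac{(-1)^k}{k!}(\,\cdot\,)^k$ and then apply the multinomial expansion to the $k$-th power of the inner series (indexed by $n\geq 1$), yielding all ordered $k$-tuples $(n_1,\dots,n_k)$ of positive integers, together with a factor $1/(n_1!\cdots n_k!)$ and a monomial $z^{n_1+\cdots+n_k}$. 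Interchanging the $z$-series, the $d\xi$ integrations, and the $\nu^{\GFF(\sqrt{t},m)}$ expectation is justified by absolute convergence, whose ingredients are the pointwise bounds $|\widetilde v_t^n|\leq h_t^n$ and the norm bounds of Proposition~\ref{pr:ybound}, the trivial estimate $|e^{i\sqrt\beta\sum_j\sigma_j\varphi(x_j)}-\delta_{n,2}|\leq 2$, and the compactness of $\mathrm{supp}(\zeta)$.

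Multiplying the two power series and collecting terms of total $z$-degree $n$ (forcing $2j+n_1+\cdots+n_k=n$) yields the nested double sum over $j$ and $k$. Relabelling variables so that the $l$-th factor $(A+\widetilde v_t^2)$ uses $(\xi_{2l-1},\xi_{2l})$ and the $l$-th factor $\widetilde v_t^{n_l}$ uses the consecutive block $\xi_{2j+n_1+\cdots+n_{l-1}+1},\dots,\xi_{2j+n_1+\cdots+n_l}$ places everything under a single integral $\int d\xi_1\cdots d\xi_n\,\zeta(\xi_1)\cdots\zeta(\xi_n)$, matching the structure of the right-hand side. Finally, multiplying through by $n!$ converts the coefficient of $z^n$ into $\mathcal{M}_n(\zeta|\epsilon,m)$ and produces the claimed $n!/(j!\,2^j)$ and multinomial prefactors; the $k=0$ term is absorbed into the $n=2j$ case (where the inner $k$-sum is interpreted as $1$), while the upper limit $k\leq n-2j$ reflects the constraint $n_l\geq 1$. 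The only real obstacle here is the combinatorial bookkeeping of factorials; all analytic input is a direct consequence of Proposition~\ref{pr:ybound} and Corollary~\ref{cor:yukawa}.
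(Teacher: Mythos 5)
Your proposal is correct and takes essentially the same route as the paper: factor $\mathcal{Z}(z\zeta|\epsilon,m)$ via Corollary~\ref{cor:yukawa} into the counterterm exponential times the $\GFF(\sqrt{t},m)$-expectation, expand both factors as power series in $z$ for $|z|$ small, interchange sums/integrals/expectation using the bounds of Proposition~\ref{pr:ybound}, and read off the coefficient of $z^n$. The only cosmetic difference is that the paper first argues the second factor $F_2$ is entire (as $\mathcal{Z}/F_1$ with $F_1$ entire and non-vanishing) and then uses Cauchy's integral formula with Fubini to justify the interchange, whereas you appeal directly to absolute convergence on a small disk together with uniqueness of Taylor coefficients of the entire function $\mathcal{Z}(z\zeta|\epsilon,m)$; these are interchangeable justifications of the same computation.
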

\begin{proof}
	From Corollary \ref{cor:yukawa}, we have (for our choice of $t$)
	\begin{equation}
          \mathcal Z(z\zeta|\epsilon,m) 	= F_1(z)F_2(z),
        \end{equation}
        where
        \begin{align}
          F_1(z)&:=e^{-\frac{z^2}{2}\int d\xi_1\,d\xi_2 \, \zeta(x_1)\zeta(x_2)(A(\xi_1,\xi_2|\epsilon,m)+\widetilde v_t^{2}(\xi_1,\xi_2|\epsilon,m))}\\
          F_2(z) &:=\avga{e^{-\sum_{n=1}^\infty \frac{z^n}{n!}\int d\xi_1 \cdots d\xi_n \, \zeta(\xi_1)\cdots \zeta(\xi_n)\widetilde v_t^{n}(\xi_1,\dots,\xi_n|\epsilon,m)(e^{i\sqrt{\beta}\sum_{j=1}^n \sigma_j \varphi(x_j)}-\delta_{n,2})}}_{\GFF(\sqrt{t},m)}.
	\end{align}
	Note that $F_1$ is entire and non-vanishing, so by Lemma \ref{le:entire1}, also $F_2$ is entire and we have for any fixed $R>0$
        \begin{equation}
          F_2^{(k)}(0)=\frac{k!}{2\pi i}\oint_{|z|=R} dz\, \frac{F_2(z)}{z^{k+1}}.
        \end{equation}
	By Fubini and the proof of Proposition \ref{pr:pfbound} (more precisely, Fubini is readily justified by controlling the $n\neq 2$ terms with Proposition \ref{pr:ybound} and the $n=2$ term with Lemma \ref{le:grad}), we find that if $R$ is chosen to satisfy \eqref{e:tchoice}, then
	\begin{multline}
	F_2^{(k)}(0)=\frac{k!}{2\pi i}\bigg\langle\oint_{|z|=R}\frac{dz}{z^{k+1}}\\
	\times  e^{-\sum_{n=1}^\infty \frac{z^n}{n!}\int_{(\R^2\times \{-1,1\})^{n}}d\xi_1\cdots d\xi_{n}\, \zeta(\xi_1)\cdots \zeta(\xi_n)\widetilde v_t^{n}(\xi_1,\dots,\xi_n|\epsilon,m)(e^{i\sqrt{\beta}\sum_{j=1}^n \sigma_j \varphi(x_j)}-\delta_{n,2})}\bigg\rangle_{\GFF(\sqrt{t},m)}.
	\end{multline}
	Moreover, again by Proposition \ref{pr:ybound}, the $z$-integrand is an entire (random) function and we find 
	\begin{align}
        &F_2^{(k)}(0)\nnb&=\avga{\left.\frac{d^k}{dz^k}\right|_{z=0}e^{-\sum_{n=1}^\infty \frac{z^n}{n!}\int_{(\R^2\times \{-1,1\})^{n}}d\xi_1\cdots d\xi_{n}\, \zeta(\xi_1)\cdots \zeta(\xi_n)\widetilde v_t^{n}(\xi_1,\dots,\xi_n|\epsilon,m)(e^{i\sqrt{\beta}\sum_{j=1}^n \sigma_j \varphi(x_j)}-\delta_{n,2})}}_{\GFF(\sqrt{t},m)}\nnb
	&=\sum_{l=1}^k\frac{(-1)^l}{l!}\sum_{\substack{1\leq n_1,\dots,n_l\leq k\\n_1+\cdots+n_l=k}}\frac{k!}{n_1!\cdots n_l!}\bigg\langle\prod_{l=1}^k \Bigg(\int_{(\R^2\times \{-1,1\})^{n_l}} d\xi_1\cdots d\xi_{n_l}\, \zeta(\xi_1)\cdots \zeta(\xi_{n_l})\nnb
	&\qquad \qquad\qquad\qquad \qquad \times \widetilde v_t^{n_l}(\xi_1,\dots,\xi_{n_l}|\epsilon,m)\left(e^{i\sqrt{\beta}\sum_{j=1}^{n_l}\sigma_j\varphi(x_j)}-\delta_{n_l,2}\right)\Bigg)\bigg\rangle_{\GFF(\sqrt{t},m)}
	\end{align}
	with the interpretation that if $k=0$, then $F_2^{(k)}(0)=1$.
	The claim now follows from noting that 
	\begin{equation}
		\mathcal M_n(\zeta|\epsilon,m)=\left.\frac{d^n}{dz^n}\right|_{z=0}F_1(z)F_2(z)
	\end{equation}
	and relabeling integration variables suitably.
\end{proof}

To conclude this section, we use this representation to prove that
$\Vsym(\xi_1,\dots,\xi_n|\epsilon,m)$
has an integrable upper bound which is independent of $\epsilon,m$ (allowing the use of dominated convergence).

\begin{lemma}
  	\label{le:Vub}
	For $\beta\in (0,6\pi)$, $n\geq 1$, and $t>0$ there exists a function $g_t\in L^1_{\mathrm{loc}}((\R^2\times \{-1,1\})^n)$, independent of $\epsilon,m$, such that for $0<\epsilon^2<t<m^{-2}$, 
	\begin{equation}
		|\Vsym(\xi_1,\dots,\xi_n|\epsilon,m)|\leq g_t(\xi_1,\dots,\xi_n).
	\end{equation}
\end{lemma}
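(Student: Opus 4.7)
The plan is to combine the explicit representation of $\mathcal{M}_n(\zeta|\epsilon,m)$ from Lemma~\ref{le:malt} with the uniqueness statement in Lemma~\ref{le:symker}. Since Lemma~\ref{le:malt} writes
\begin{equation*}
\mathcal{M}_n(\zeta|\epsilon,m) = \int d\xi_1 \cdots d\xi_n \, \zeta(\xi_1)\cdots \zeta(\xi_n) \, K_n(\xi_1,\dots,\xi_n|\epsilon,m)
\end{equation*}
for an explicit but non-symmetric kernel $K_n$, Lemma~\ref{le:symker} forces $\Vsym = \tfrac{1}{n!}\sum_{\tau \in S_n} K_n \circ \tau$. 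It thus suffices to produce a pointwise bound on $|K_n|$ by a function in $L^1_{\mathrm{loc}}$ which is independent of $\epsilon,m$; symmetrizing the bound then yields the desired $g_t$. By Lemma~\ref{le:malt}, $K_n$ itself is a finite sum, indexed by $j \in \{0,\dots,\lfloor n/2\rfloor\}$, $k$, and an ordered composition $(n_1,\dots,n_k)$ of $n-2j$, of products of three kinds of factors: pair factors $(A+\widetilde v_t^{2})(\xi_{2l-1},\xi_{2l})$ for $l=1,\dots,j$; block factors $\widetilde v_t^{n_l}(\xi_{I_l})$ for each block $I_l$; and a single GFF expectation
\begin{equation*}
\Bigl\langle \prod_{l=1}^k \bigl( e^{i\sqrt{\beta}\sum_{p\in I_l}\sigma_p \varphi(x_p)} - \delta_{n_l,2}\bigr) \Bigr\rangle_{\GFF(\sqrt{t},m)}.
\end{equation*}

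The pair factors are controlled by Lemma~\ref{le:simple}, giving $|(A+\widetilde v_t^{2})(\xi,\xi')| \leq g_t^{(2)}(x,x')$ with $g_t^{(2)} \in L^1_{\mathrm{loc}}(\R^2 \times \R^2)$ uniformly in $\epsilon,m$. For each block of size $n_l \neq 2$, Proposition~\ref{pr:ybound} dominates $|\widetilde v_t^{n_l}|$ by $h_t^{n_l}$, and the norm estimate $\|h_t^{n_l}\|_{n_l}<\infty$ immediately implies local integrability in all variables (use that $\sup$ dominates $\int$ over a compact in the first variable, and the defined sup-integral norm in the others). For blocks with $n_l = 2$, I would use the explicit formula \eqref{e:v2expl}: Lemma~\ref{le:covest1} gives the uniform-in-$\epsilon,m$ bound $|\widetilde v_t^{2}(\xi_1,\xi_2|\epsilon,m)| \leq C_t$ when $\sigma_1=\sigma_2$, and the singular bound $|\widetilde v_t^{2}(\xi_1,\xi_2|\epsilon,m)| \leq C_t\,|x_1-x_2|^{-\beta/(2\pi)}$ (for small separation; bounded for large separation) when $\sigma_1=-\sigma_2$.

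For the GFF expectation, pull the absolute value inside the integral and bound each factor individually: $|e^{i\theta}| = 1$ if $n_l \neq 2$; $|e^{i\theta}-1| \leq 2$ if $n_l = 2$ with $\sigma_p = \sigma_{p'}$; and $|e^{i\sqrt{\beta}\sigma_p(\varphi(x_p)-\varphi(x_{p'}))}-1| \leq \sqrt{\beta}\,\|\nabla\varphi\|_{L^\infty(K)}\,|x_p-x_{p'}|$ if $n_l = 2$ with $\sigma_p \neq \sigma_{p'}$, where $K$ is any compact containing the positions (the bound $g_t$ is required only in $L^1_{\mathrm{loc}}$, so it suffices to produce bounds valid on each fixed compact). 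The expectation is then dominated by $\prod_l|x_p - x_{p'}|$ over the opposite-sign $n_l=2$ blocks, multiplied by a moment of $\|\nabla\varphi\|_{L^\infty(K)}$ of order at most $n/2$, which is finite and independent of $m$ by Lemma~\ref{le:grad}. Assembling the three contributions, the bound on $|K_n|$, and hence on $|\Vsym|$ after symmetrization, becomes a finite sum of products of locally integrable functions, together with factors of the form $C_{t,K}\,|x_p-x_{p'}|^{1-\beta/(2\pi)}$ arising from the opposite-sign $n_l=2$ blocks. This last factor is locally integrable in $\R^2$ precisely when $\beta < 6\pi$, and this borderline case is the main (and essentially the only) obstacle: the $|x-x'|$ gain from the $(e^{i\cdot}-1)$-factor is exactly what is needed to tame the short-distance singularity of $\widetilde v_t^{2}$, and the argument degenerates for $\beta \geq 6\pi$ (matching the fact that further counterterms would be needed in that regime).
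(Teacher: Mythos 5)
Your proof is correct and takes essentially the same approach as the paper: start from the representation in Lemma~\ref{le:malt}, split the integrand into pair counterterms $(A+\widetilde v_t^{2})$, non-pair blocks $\widetilde v_t^{n_l}$ with $n_l\neq 2$, and the GFF expectation carrying the $n_l=2$ blocks, and control these respectively by Lemma~\ref{le:simple}, Proposition~\ref{pr:ybound}, and the combination of \eqref{e:v2expl}, Lemma~\ref{le:covest1}, the gradient bound $|e^{i\sqrt{\beta}(\varphi(x)-\varphi(y))}-1|\leq\sqrt{\beta}\|\nabla\varphi\|_{L^\infty}|x-y|$, and Lemma~\ref{le:grad}, with the key gain $|x-y|^{1-\beta/2\pi}\in L^1_{\mathrm{loc}}$ for $\beta<6\pi$. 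The only cosmetic differences are that you spell out the symmetrization step via Lemma~\ref{le:symker} and pull the deterministic $\widetilde v_t^{2}$ factors outside the GFF expectation rather than grouping them inside as in the paper's $G_3$.
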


\begin{proof}
        By Lemma~\ref{le:malt}, we see that $|\Vsym|$ can be bounded by sum of products of terms of the form 
\begin{gather}
		G_1(\widehat \xi_{1},\widehat\xi_{2}|\epsilon,m):=|A(\widehat \xi_{1},\widehat\xi_{2}|\epsilon,m)+\widetilde v_t^{2}(\widehat\xi_{1},\widehat\xi_{2}|\epsilon,m)|, 
                \\
	G_2(\xi'_{1},\dots,\xi'_{n'}|\epsilon,m):=|\widetilde v_t^{n'}(\xi'_{1},\dots,\xi'_{n'}|\epsilon,m)|
\end{gather}
with $2\neq n'\leq n$, and 
\begin{multline}
  G_3(\widetilde \xi_1,\dots,\widetilde \xi_{2k}|\epsilon,m)
  \\ 
  :=\avga{\prod_{l=1}^{k}\left|\widetilde v_t^{2}(\widetilde \xi_{2l-1},\widetilde \xi_{2l}|\epsilon,m)\left(e^{i\sqrt{\beta}(\widetilde\sigma_{2l-1}\varphi(\widetilde x_{2l-1})+\widetilde\sigma_{2l}\varphi(\widetilde x_{2l}))}-1\right) \right|}_{\GFF(\sqrt{t},m)}
    \end{multline}
    where we have written $\widehat \xi,\xi',\widetilde \xi$ to indicate that these variables are subsets (depending on the term in the sum) of the actual integration variables $\xi_i$ of $\Vsym$.
    Note that in each product of the $G_i$, the factors depend on disjoint sets of the $\xi_i$.
    Thus it is enough to prove the 
    corresponding integrability bounds for the $G_i$ terms separately.
	
    For $G_1$, this is simply Lemma \ref{le:simple}. For $G_2$, this follows from Proposition \ref{pr:ybound}.
    Finally, for $G_3$, we note that if we have $\widetilde \sigma_{2l-1}=\widetilde \sigma_{2l}$, we can bound $|e^{i\sqrt{\beta}(\widetilde\sigma_{2l-1}\varphi(\widetilde x_{2l-1})+\widetilde \sigma_{2l}\varphi(\widetilde x_{2l}))}-1|\leq 2$ and from \eqref{e:v2expl}, one readily checks the uniform bound
    \begin{equation}|\widetilde v_t^{2}(\widetilde \xi_{2l-1},\widetilde \xi_{2l}|\epsilon,m)|\leq e^{\beta \int_0^1 ds \,\frac{1-e^{-m^2 s}}{4\pi s}}\leq e^{\beta\int_0^1 ds\, \frac{1-e^{-s}}{4\pi s}}.
    \end{equation}
    It remains to control the quantities with $\widetilde \sigma_{2l-1}\neq \widetilde \sigma_{2l}$. For these we note (as in the proof of Proposition \ref{pr:pfbound}), that for $x_{2l-1},x_{2l}$ in a given compact set $\Lambda\subset \R^2$, we find from \eqref{e:v2expl}  that
	\begin{align}
	&\left|\widetilde v_t^{2}(\widetilde \xi_{2l-1},\widetilde \xi_{2l}|\epsilon,m)\left(e^{i\sqrt{\beta}(\varphi(\widetilde x_{2l-1})-\varphi(\widetilde x_{2l}))}-1\right) \right|\nnb
	&\quad\leq \sqrt{\beta}\|\nabla\varphi\|_{L^\infty(\Lambda)}|\widetilde x_{2l-1}-\widetilde x_{2l}|e^{\beta\int_0^1 ds\, \frac{1-e^{-s}}{4\pi s}} \left|e^{\beta c_t^{m^2}(\widetilde x_{2l-1}-\widetilde x_{2l})}-1\right|\nnb
	&\quad \leq \sqrt{\beta}\|\nabla\varphi\|_{L^\infty(\Lambda)}|\widetilde x_{2l-1}-\widetilde x_{2l}|e^{\beta\int_0^1 ds\, \frac{1-e^{-s}}{4\pi s}} \left(1+\left(\frac{|\widetilde x_{2l-1}-\widetilde x_{2l}|}{\sqrt{t}}\wedge 1\right)^{-\frac{\beta}{2\pi}}\right).
	\end{align} 
	
	By Lemma~\ref{le:grad}, arbitrary moments of $\|\nabla \varphi\|_{L^{\infty}(\Lambda)}$ under $\nu^{\GFF(\sqrt{t},m)}$ are bounded uniformly in $m$,
        so using that $|x-y|^{1-\frac{\beta}{2\pi}}$ is locally integrable (for $\beta<6\pi$), one gets a locally integrable upper bound which is independent of $\epsilon,m$ also for $G_3$. This concludes the proof.
\end{proof}

\subsection{Convergence of the renormalized partition function}
\label{sec:pfconv}

We now turn to the convergence of $\mathcal Z(\zeta|\epsilon,m)$ 
as $\epsilon,m\to 0$. 
With the uniform bounds from Section~\ref{sec:cZexpan} in place,
the main step is to show that
$\Vsym(\xi_1,\dots,\xi_n|\epsilon,m)$ has a pointwise limit as $\epsilon,m\to 0$.
This is the content of the following lemma. 
For $\xi_1, \dots, \xi_n \in \R \times \{\pm 1\}$ distinct, let
\begin{align} 
  & 
    \Vsym(\xi_1,\dots,\xi_n)
    \nnb 
  &=
    \frac{1}{n!}\sum_{\tau\in S_n}
    \sum_{j=0}^{\lfloor \frac{n}{2}\rfloor}\frac{n!}{j!(n-2j)!}(-1)^{n-j}2^{-j}\left(\prod_{l=1}^j\avga{\wick{e^{i\sqrt{\beta}{\sigma}_{\tau_{2l-1}}\varphi(x_{\tau_{2l-1}})}}\wick{e^{i\sqrt{\beta}{\sigma}_{\tau_{2l}}\varphi(x_{\tau_{2l}})}}}_{\GFF}\right)\notag\\
  &\qquad \times \avga{\prod_{l'=2j+1}^{n}\wick{e^{i\sqrt{\beta}{\sigma}_{\tau_{l'}}\varphi(x_{\tau_{l'}})}}}_{\GFF}\label{e:claim2}
\end{align}
and define $  \Vsym(\xi_1,\dots,\xi_n|m)$ analogously with $\GFF(m)$ instead of $\GFF$.

\begin{lemma}   	\label{le:Vsymlim}
  For $\xi_i\neq \xi_j$ for $i\neq j$, 
  \begin{align}
    \lim_{\epsilon\to 0}\Vsym(\xi_1,\dots,\xi_n|\epsilon,m)
    &=
      \Vsym(\xi_1,\dots,\xi_n|m),\\
    \lim_{m\to 0}\lim_{\epsilon\to 0}\Vsym(\xi_1,\dots,\xi_n|\epsilon,m)
    &=
      \Vsym(\xi_1,\dots,\xi_n).
  \end{align}
  Moreover, $\Vsym(\xi_1, \dots, \xi_n)=0$ if $n$ is odd.
\end{lemma}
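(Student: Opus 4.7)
The plan is to compute both limits termwise starting from the explicit formula \eqref{e:Vdef}. Since the sums over $\tau \in S_n$ and $0 \leq j \leq \lfloor n/2 \rfloor$ are finite, it suffices to establish pointwise convergence of the two building blocks: the untruncated charge correlation $\avga{\prod_{l'=2j+1}^{n} \wick{e^{i\sqrt{\beta}\sigma_{\tau_{l'}}\varphi(x_{\tau_{l'}})}}_\epsilon}_{\GFF(\epsilon,m)}$ and the truncated two-point function $A(\xi_i,\xi_j|\epsilon,m)$ defined in \eqref{e:ct}. Throughout the $x_{\tau_{l'}}$ are distinct, and the distinct-point hypothesis in Lemma~\ref{le:GFFcharge} applies.

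For the $\epsilon\to 0$ limit at fixed $m > 0$, Lemma~\ref{le:GFFcharge} gives the convergence of each $(n-2j)$-point untruncated correlation to its $\GFF(m)$ analog. Writing
\begin{equation*}
A(\xi_i,\xi_j|\epsilon,m) = \avga{\wick{e^{i\sqrt{\beta}\sigma_i\varphi(x_i)}}_\epsilon\wick{e^{i\sqrt{\beta}\sigma_j\varphi(x_j)}}_\epsilon}_{\GFF(\epsilon,m)} - \avga{\wick{e^{i\sqrt{\beta}\sigma_i\varphi(x_i)}}_\epsilon}_{\GFF(\epsilon,m)}\avga{\wick{e^{i\sqrt{\beta}\sigma_j\varphi(x_j)}}_\epsilon}_{\GFF(\epsilon,m)},
\end{equation*}
the same lemma ensures convergence of each factor and hence of $A$; call the limit $A(\xi_i,\xi_j|m)$, and define $\Vsym(\xi_1,\dots,\xi_n|m)$ by the analog of \eqref{e:Vdef}, giving the first convergence. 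For the subsequent $m \to 0$ limit, Lemma~\ref{le:GFFcharge} again provides convergence of the $(n-2j)$-point correlation to $\avga{\cdot}_{\GFF}$. A direct calculation using Lemma~\ref{lem:cov} shows
\begin{equation*}
\avga{\wick{e^{i\sqrt{\beta}\sigma\varphi(x)}}_\epsilon}_{\GFF(\epsilon,m)} = \epsilon^{-\beta/4\pi} e^{-\frac{\beta}{2}\avga{\varphi(x)^2}_{\GFF(\epsilon,m)}} \longrightarrow m^{\beta/4\pi} e^{\beta\gamma/(8\pi)} \quad (\epsilon\to 0),
\end{equation*}
which vanishes as $m \to 0$. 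Consequently the truncated $A(\xi_i,\xi_j|m)$ converges to the untruncated massless two-point function $\avga{\wick{e^{i\sqrt{\beta}\sigma_i\varphi(x_i)}}\wick{e^{i\sqrt{\beta}\sigma_j\varphi(x_j)}}}_{\GFF}$ that appears in \eqref{e:claim2}, and passing to the limit in the definition yields the second identity.

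For the final claim that $\Vsym(\xi_1,\dots,\xi_n) = 0$ when $n$ is odd, I combine two vanishing facts from Lemma~\ref{le:GFFcharge}: the factor $\avga{\prod_{l'=2j+1}^{n} \wick{e^{i\sqrt{\beta}\sigma_{\tau_{l'}}\varphi(x_{\tau_{l'}})}}}_{\GFF}$ vanishes unless $\sigma_{\tau_{2j+1}},\dots,\sigma_{\tau_n}$ contains equally many $+1$'s and $-1$'s, and the paired factor $\avga{\wick{e^{i\sqrt{\beta}\sigma_{\tau_{2l-1}}\varphi}}\wick{e^{i\sqrt{\beta}\sigma_{\tau_{2l}}\varphi}}}_{\GFF}$ vanishes unless $\sigma_{\tau_{2l-1}} = -\sigma_{\tau_{2l}}$. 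A surviving summand therefore forces $\sigma_1,\dots,\sigma_n$ to contain equally many $\pm 1$'s, which is impossible for $n$ odd. Nothing in this argument is substantial; the whole lemma is essentially bookkeeping on top of Lemma~\ref{le:GFFcharge}, with the only quantitative input being the explicit decay of the one-point function as $m \to 0$.
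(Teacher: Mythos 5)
Your proof is correct and follows essentially the same route as the paper's (which is extremely terse, amounting to ``follows immediately from \eqref{e:Vdef} and Lemma~\ref{le:GFFcharge}''). The one place where you add genuine value is your explicit reconciliation of the \emph{truncated} counterterm $A(\xi_1,\xi_2|\epsilon,m)$ appearing in \eqref{e:Vdef} with the \emph{untruncated} massless two-point function appearing in \eqref{e:claim2}: you observe that the one-point function $\avg{\wick{e^{i\sqrt{\beta}\sigma\varphi(x)}}}_{\GFF(m)} = m^{\beta/4\pi}e^{\beta\gamma/8\pi}$ vanishes as $m\to 0$ (but not for fixed $m>0$), so the truncated and untruncated two-point functions coincide only in the massless limit. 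This is precisely the reason the paper needs the untruncated version in \eqref{e:claim2}, and it forces your (correct) reading that $\Vsym(\cdot|m)$ should be defined with the truncated $A(\cdot|m)$ rather than literally substituting $\GFF(m)$ into \eqref{e:claim2}. The parity argument for $n$ odd is also the paper's argument, just unpacked into the two ``charge-neutrality'' constraints.
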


\begin{proof}
  The convergence follows immediately from the definition of $\Vsym(\cdot |\epsilon,m)$  in \eqref{e:Vdef}
  and Lemma~\ref{le:GFFcharge} for the charge correlation functions of the GFF.

  That $\Vsym(\xi_1, \dots, \xi_n)=0$ if $n$ is odd
    follows from the case $n\neq n'$ in Lemma~\ref{le:GFFcharge},
    i.e., the fact that the massless GFF charge correlation functions vanish for nonneutral charge.
\end{proof}

With this result in hand, we can prove Theorem~\ref{th:main1}.

\begin{proof}[Proof of Theorem~\ref{th:main1}]
  We only consider the statements for $\epsilon\to 0$ and $m\to 0$;
  the ones with $m>0$ fixed are completely analogous.
Let us begin by defining our limit candidate. By Lemmas~\ref{le:Vub}--\ref{le:Vsymlim},
with $\Vsym$ defined in \eqref{e:claim2}, we see that 
	\begin{equation}\label{e:Mnzetadef}
          \lim_{m\to 0}\lim_{\epsilon\to 0}\mathcal M_n(\zeta|\epsilon,m)
          =
          \int d\xi_1 \cdots d\xi_n \,
          \zeta(\xi_1)\cdots \zeta(\xi_n)\Vsym(\xi_1,\dots,\xi_n)=:\mathcal M_n(\zeta)
	\end{equation}
	and that this quantity is finite for all $\zeta\in L_c^\infty(\R^2\times \{-1,1\})$. In fact, this argument also shows that if $\zeta_{\alpha}(\cdot |\epsilon,m)\to \zeta_\alpha$ in $L^\infty(\R^2\times \{-1,1\})$ and uniformly in $\alpha$ in some compact set, then also $\mathcal M_n(\zeta_{\alpha}(\cdot|\epsilon,m)|\epsilon,m)\to \mathcal M_n(\zeta_\alpha)$ -- uniformly in $\alpha$.
    Moreover, from \eqref{e:Mbound}, we see that if $\alpha$ is in some fixed compact set $K\subset \C^N$, then for each $\delta>0$, $|\mathcal M_n(\zeta_\alpha)|\leq C(\delta,K) \delta ^n n!$. In particular, 
	\begin{equation} \label{e:Zcalexpans}
	\mathcal Z(z\zeta_\alpha):=\sum_{n=0}^\infty \frac{z^n}{n!}\mathcal M_n(\zeta_\alpha)
	\end{equation}
	defines an entire function and we have, for any fixed $R>0$,
\begin{equation}
		\sup_{|z|<R}\sup_{\alpha\in K}|\mathcal Z(z\zeta_\alpha)|<\infty.
\end{equation}
        Moreover, when $m=0$, $\mathcal Z(z\zeta_\alpha)$ is even since $\mathcal M_n(\zeta_\alpha) =0$ if $n$ is odd.

	Let us then turn to the convergence claim. Fix any compact set $K\subset \C^N$ as in the statement of item (iv).
        By Cauchy's integral formula, for any $A\geq 0$ and $\alpha\in K$,
    \begin{align}
      &  
        \left|\mathcal Z(\zeta_{\alpha}(\cdot|\epsilon,m)|\epsilon,m)-\mathcal Z(\zeta_\alpha)\right|
        \nnb 
        &\leq \sum_{n=0}^A \frac{1}{n!}|\mathcal M_n(\zeta_{\alpha}(\cdot|\epsilon,m)|\epsilon,m)-\mathcal M_n(\zeta_\alpha)|\nnb
        &\quad +\sum_{n=A+1}^\infty \left|\oint_{|w|=2} \frac{\mathcal Z(w\zeta_{\alpha}(\cdot|\epsilon,m)|\epsilon,m)}{w^{n+1}}\frac{dw}{2\pi i}\right|\nnb
	&\qquad +\sum_{n=A+1}^\infty \left|\oint_{|w|=2} \frac{\mathcal Z(w\zeta_\alpha)}{w^{n+1}}\frac{dw}{2\pi i}\right|\nnb
	&\leq \sum_{n=0}^A \frac{1}{n!}|\mathcal M_n(\zeta_{\alpha}(\cdot|\epsilon,m)|\epsilon,m)-\mathcal M_n(\zeta_\alpha)|\nnb
	&\qquad +\left(\sup_{|w|=2}|\mathcal Z(w\zeta_{\alpha}(\cdot|\epsilon,m)|\epsilon,m)|+\sup_{|w|=2}|\mathcal Z(w\zeta_\alpha)|\right)\left(\frac{1}{2}\right)^{A}.
	\end{align}
	Uniform convergence now follows readily from our uniform bounds for the partition functions as well as our remark that $\mathcal M_n(\zeta_{\alpha}(\cdot |\epsilon,m)|\epsilon,m)$ converges uniformly. This takes care of statements (i), (ii), and (iv) in Theorem \ref{th:main1}.
    The final positivity claim, claim (iii), follows from Proposition~\ref{pr:pfbound} (iii).
\end{proof}

\subsection{Proof of Lemma~\ref{lem:gffintegrability1}}

Before we go into the proof of Theorem~\ref{thm:cf}, we point out here that Lemma~\ref{le:Vub} can be used to give a proof of Lemma~\ref{lem:gffintegrability1}.

\begin{proof}[Proof of Lemma~\ref{lem:gffintegrability1}]
  Let us begin by noting from multilinearity of the truncated correlation functions as well as the definition \eqref{e:truncatedmoment}, we have for any $f_1,\dots,f_n\in L^\infty_c(\R^2\times \{-1,1\})$,
    \begin{align}
      \int_{(\R^2\times \{-1,1\})^n}&d\xi_1\cdots d\xi_n \, f_1(\xi_1)\cdots f_n(\xi_n)\avga{\prod_{k=1}^n\wick{e^{i\sqrt{\beta}\sigma_k\varphi(x_k)}}_\epsilon}_{\GFF(\epsilon,m)}^T
      \nnb
        &=\avga{\prod_{k=1}^n\wick{e^{i\sqrt{\beta}\sigma_k\varphi}}_\epsilon(f_k)}_{\GFF(\epsilon,m)}^T\nnb
    	&=\left.\frac{\partial^n}{\partial t_1\cdots \partial t_n}\right|_{t=0}\log \avga{e^{\sum_{k=1}^n t_k\wick{e^{i\sqrt{\beta}\sigma_k\varphi}}_\epsilon(f_k)}}_{\GFF(\epsilon,m)}\nnb
    	&=\left.\frac{\partial^n}{\partial t_1\cdots \partial t_n}\right|_{t=0}\log Z\left(\left.-t_1f_1\cdots - t_n f_n\right|\epsilon,m\right),
    \end{align}
    where for $f\in L_c^\infty(\R^2\times \{-1,1\})$, recall that we understand $\wick{e^{i\sqrt{\beta}\sigma\varphi}}_\epsilon(f)$ as shorthand notation for $\int_{\R^2\times \{-1,1\}}d\xi\,  f(\xi)\wick{e^{i\sqrt{\beta}\sigma\varphi(x)}}_\epsilon$.
    For $n\geq 3$, we can write this in terms of the renormalized partition function as
    \begin{align}
    	\int_{(\R^2\times \{-1,1\})^n}&d\xi_1\cdots d\xi_n \, f_1(\xi_1)\cdots f_n(\xi_n)\avga{\prod_{k=1}^n\wick{e^{i\sqrt{\beta}\sigma_k\varphi(x_k)}}_\epsilon}_{\GFF(\epsilon,m)}^T\nnb
    	&=\left.\frac{\partial^n}{\partial t_1\cdots \partial t_n}\right|_{t=0}\log \mathcal Z\left(\left.-t_1f_1\cdots - t_n f_n\right|\epsilon,m\right).
    \end{align}
    On the other hand, by using the expansion (with the notation \eqref{e:Mnzetadef}--\eqref{e:Zcalexpans})
    \begin{equation}
    	\mathcal Z(-t_1f_1-\cdots -t_n f_n|\epsilon,m)=\sum_{i=0}^\infty \frac{(-1)^i}{i!}\mathcal M_i(t_1f_1+\cdots +t_n f_n|\epsilon,m),
    \end{equation}
    we have (for small enough $|t_k|$) that
    \begin{align}
    	&\log \mathcal Z\left(\left.-t_1f_1\cdots - t_n f_n\right|\epsilon,m\right)\\\notag
    	&=\log \left(\sum_{i=0}^\infty \frac{(-1)^i}{i!}\int_{(\R^2\times \{-1,1\})^i}d\xi_1\cdots d\xi_i\prod_{k=1}^i(\sum_{l=1}^n t_lf_l(\xi_k)) \Vsym(\xi_1,\dots,\xi_i|\epsilon,m)\right).
    \end{align}
    Carrying out the $t$-derivatives, we see that 
    \begin{align}
    	\int_{(\R^2\times \{-1,1\})^n}&d\xi_1\cdots d\xi_n \, f_1(\xi_1)\cdots f_n(\xi_n)\avga{\prod_{k=1}^n\wick{e^{i\sqrt{\beta}\sigma_k\varphi(x_k)}}_\epsilon}_{\GFF(\epsilon,m)}^T\nnb
    	&=\int_{(\R^2\times \{-1,1\})^n} d\xi_1\cdots d\xi_n\, f_1(\xi_1)\cdots f_n(\xi_n) \mathcal P(\xi_1,\dots,\xi_n|\epsilon,m),
    \end{align}
    where $\mathcal P$ can be expressed in terms of $\Vsym$, i.e., for some constants $c_P$, we have 
    \begin{equation}
    	\mathcal P(\xi_1,\dots,\xi_n|\epsilon,m)=\sum_{P \in \mathfrak{P}_n} c_P \prod_{j}\Vsym((\xi_l)_{l\in P_j}|\epsilon,m).
    \end{equation}
    As the $f_k$ are arbitrary, this allows us to identify (for $n\geq 3$)
    \begin{equation}
    	\avga{\prod_{k=1}^n\wick{e^{i\sqrt{\beta}\sigma_k\varphi(x_k)}}_\epsilon}_{\GFF(\epsilon,m)}^T =\mathcal P(\xi_1,\dots,\xi_n|\epsilon,m),
    \end{equation}
    and the convergence and local integrability, for $n \geq 3$, follow immediately from  Lemma~\ref{le:Vsymlim} respectively  Lemma~\ref{le:Vub}.
    
    It remains to consider the $n=1$ and $n=2$ cases. The $n=1$ case is trivial,
    while for $n=2$ the statements are straightforward to check due to the assumptions
    that $K$ is supported away from the diagonal respectively that the test functions have disjoint support;
    we omit further details.
\end{proof}

\subsection{Analysis of the sine-Gordon correlation functions} \label{sec:cfconv}

We are finally in a position to prove our main result concerning the sine-Gordon correlation functions,
i.e., prove Theorem~\ref{thm:cf}.

In preparation of the proof,
one  readily checks that since we are dealing with Gaussian random variables and bounded random variables,
for any $\epsilon,m>0$ and $\Lambda\subset\R^2$ compact, the function 
\begin{align}
  \notag &(\mu_1,\dots,\mu_{n},\nu_1,\dots,\nu_{q},\eta_1,\dots,\eta_{q'},z)\mapsto \\
         &\log \avga{\exp\left[\sum_{{k}=1}^{n} \mu_{k} \wick{e^{i\sqrt{\beta}\sigma_{k}\varphi}}_\epsilon(f_{k})+\sum_{{j}=1}^{q} \nu_{j}\partial \varphi(g_{j})+\sum_{{j'}=1}^{q'} \eta_{j'} \bar \partial \varphi(h_{j'})\right] }_{\SG(\beta,z|\epsilon,m,\Lambda)}
\end{align}
is analytic in some neighborhood of the origin
(which may a priori depend on $\epsilon,m,\Lambda$) 
and the correlation function of interest is obtained from it by differentiating once with respect to each $\mu_{k},\nu_{j},\eta_{j'}$, and then setting these parameters to zero.
Our goal is to prove that (after suitable renormalization) this function is actually analytic in a larger domain, that does not depend on $\epsilon,m$, and that it converges uniformly in the relevant parameters. The limiting function will then automatically be analytic in the given domain, and also the relevant derivatives will converge. In particular, as we will eventually see, this will imply the convergence of the correlation functions, and that they are also analytic in $z$.

We begin by applying the Girsanov--Cameron--Martin theorem in a similar way as in Lemma~\ref{le:GFFmixed}.
We now need the following version for Gaussian fields on $\R^2$:
Let $\varphi$ be a smooth Gaussian field on $\R^2$ and let $Y$ be a Gaussian random variable measurable with respect to $(\varphi(x))_{x\in\R^d}$.
Then
\begin{equation}
  \E(F(\varphi)e^{Y-\E Y^2}) = \E(F(\varphi + \E(\varphi Y))),
\end{equation}
where $\E(\varphi Y)$ stands for the function $x\mapsto \E(\varphi(x)Y)$;
see, e.g., \cite[Theorem~2.8]{MR2244975} for a more general setting.
This implies that for real $z$
(the application for complex arguments in the exponential below is justified as in the proof of Lemma~\ref{le:GFFmixed})
\begin{align} \label{e:girsanov1}
  &\avga{\exp\left[\sum_{{k}=1}^{n} \mu_{k} \wick{e^{i\sqrt{\beta}\sigma_{k}\varphi}}_\epsilon(f_{k})+\sum_{{j}=1}^{q} \nu_{j}\partial \varphi(g_{j})+\sum_{{j'}=1}^{q'} \eta_{j'} \bar \partial \varphi(h_{j'})\right] }_{\SG(\beta,z|\epsilon,m,\Lambda)}\nnb
  &=\frac{1}{Z(z|\epsilon,m,\Lambda)} \Bigg\langle \exp\bigg[{-} \int d\xi\, \wick{e^{i\sqrt{\beta}{\sigma}\varphi(x)}}_\epsilon
    \zeta_{\mu,\nu,\eta,z,\Lambda}(\xi|\epsilon,m)\bigg] \Bigg\rangle_{\GFF(\epsilon,m)}
    \nnb
  &\qquad\qquad
    \times
    \exp\Bigg[\frac{1}{2}\avga{\Bigg(\sum_{{j}=1}^{q} \nu_{j}\partial \varphi(g_{j})+\sum_{{j'}=1}^{q'} \eta_{j'} \bar \partial \varphi(h_{j'})\Bigg)^2}_{\GFF(\epsilon,m)}\Bigg]
\end{align}
where we have introduced the notation (recall $\xi=(x,\sigma)$)
\begin{align} \label{e:zetamunudef}
  & 
    \zeta_{\mu,\nu,\eta,z,\Lambda}(\xi | \epsilon,m)
    \nnb & 
           =
  -\nz\mathbf 1_{\Lambda}(x)e^{i\sqrt{\beta}\sigma\sum_{{j}=1}^{q} \nu_{j} \avga{\varphi(x)\partial \varphi(g_{j})}_{\GFF(\epsilon,m)}+i\sqrt{\beta}\sigma\sum_{{j'}=1}^{q'} \eta_{j'} \avga{\varphi(x)\bar\partial \varphi(h_{j'})}_{\GFF(\epsilon,m)}}
           \nnb & 
                  -\sum_{{k}=1}^{n} \mu_{k}f_{k}(x)e^{i\sqrt{\beta}\sigma_{k} \sum_{{j}=1}^{q} \nu_{j} \avga{\varphi(x)\partial \varphi(g_{j})}_{\GFF(\epsilon,m)}+i\sqrt{\beta}\sigma_{k}\sum_{{j'}=1}^{q'} \eta_{j'} \avga{\varphi(x)\bar\partial \varphi(h_{j'})}_{\GFF(\epsilon,m)}}\delta_{\sigma,\sigma_{k}}.
\end{align}
In terms of the renormalized partition function \eqref{e:rpf}, we may write \eqref{e:girsanov1} as
\begin{align} \label{e:girsanov2}
  & 
    \frac{\mathcal Z(\zeta_{\mu,\nu,\eta,z,\Lambda}(\cdot|\epsilon,m)|\epsilon,m,\Lambda)}{\mathcal Z(z|\epsilon,m,\Lambda)}
    \nnb 
  &
    \times  
    \exp\left[\frac{1}{2}\int d\xi_1 \, d\xi_2 \,  \zeta_{\mu,\nu,\eta,z,\Lambda}(\xi_1|\epsilon,m)\zeta_{\mu,\nu,\eta,z,\Lambda}(\xi_2|\epsilon,m)A(\xi_1,\xi_2|\epsilon,m)\right]
    \nnb
  &\times \exp\left[-\frac{z^2}{2}\int d\xi_1 \, d\xi_2 \, \mathbf 1_\Lambda(x_1)\,\mathbf 1_\Lambda(x_2)\, A(\xi_1,\xi_2|\epsilon,m)
    \right]
    \nnb
  &\times
    \exp\left[\frac{1}{2}\avga{\left(\sum_{{j}=1}^{q} \nu_{j}\partial \varphi(g_{j})+\sum_{{j'}=1}^{q'} \eta_{j'} \bar \partial \varphi(h_{j'})\right)^2}_{\GFF(\epsilon,m)}\right].
\end{align}
To study the limit $\epsilon,m\to 0$,         
we define $\zeta_{\mu,\nu,\eta,z,\Lambda}(\xi)$
exactly as $\zeta_{\mu,\nu,\eta,z,\Lambda}(\xi|\epsilon,m)$ in \eqref{e:zetamunudef} with $\GFF(\epsilon,m)$ replaced by $\GFF$
and where 
$\avg{\varphi(x) \partial\varphi(g)}_{\GFF} =\int dy \, g(y) \avg{\varphi(x) \partial\varphi(y)}_{\GFF}$
is given by \eqref{e:cderiv1lim}.
Indeed, Lemma~\ref{lem:cov} implies that, as $\epsilon,m\to 0$, one has 
\begin{equation} \label{e:zetamunu-lim}
  \zeta_{\mu,\nu,\eta,z,\Lambda}(\xi|\epsilon,m) \to
  \zeta_{\mu,\nu,\eta,z,\Lambda}(\xi)
\end{equation}
uniformly on compact sets in $\xi$, and $\mu_{k},\nu_{j},\eta_{j'},z$ (where $\Lambda$ is fixed).
Moreover, since  $\zeta_{\mu,\nu,\eta,z,\Lambda}(\xi|\epsilon,m)$ has uniformly compact support in $\xi$,
the convergence is in fact uniform in $\xi \in \R^2\times \{\pm 1\}$.
We conclude from Theorem \ref{th:main1} item (iv)
and \eqref{e:zetamunu-lim} that
\begin{equation} \label{e:Zzetalim}
  \mathcal Z(\zeta_{\mu,\nu,\eta,z,\Lambda}(\cdot|\epsilon,m)|\epsilon,m)\to \mathcal Z(\zeta_{\mu,\nu,\eta,z,\Lambda}).
\end{equation}
Once again, from the fact that we are dealing with bounded random variables, one readily checks that $\mathcal Z(\zeta_{\mu,\nu,\eta,z,\Lambda}(\cdot|\epsilon,m)|\epsilon,m)$ extends to an entire function of $\mu_{k},\nu_{j},\eta_{j'},z$. Thus our uniform convergence implies that also $\mathcal Z(\zeta_{\mu,\nu,\eta,z,\Lambda})$ extends to an entire function of the variables.

We now consider the cases $n>2$ and $n=0,1,2$ of Theorem~\ref{thm:cf} items (i)--(iii) separately.
The arguments are all very similar.

\begin{proof}[Proof of Theorem~\ref{thm:cf}, (i)--(iii) for ${n}>2$]          
  For ${n}>2$, only $\mathcal Z(\zeta_{\mu,\nu,\eta,z,\Lambda}(\cdot|\epsilon,m)|\epsilon,m)$ in \eqref{e:girsanov2} plays a role
  for the correlation functions
 -- the other terms vanish when we take logarithmic derivatives and set the various parameters to zero. Indeed, 
 \begin{align}
   &\avga{\prod_{{k}=1}^{n}\wick{e^{i\sqrt{\beta}\sigma_{k}\varphi}}_{\epsilon}(f_{k})\prod_{{j}=1}^{q} \partial \varphi(g_{j})\prod_{{j'}=1}^{q'} \bar \partial \varphi(h_{j'})}_{\SG(\beta,z|\epsilon,m,\Lambda)}^T\nnb
   &\qquad =\prod_{{k}=1}^{n}\left.\frac{\partial}{\partial \mu_{k}}\right|_{\mu_{k}=0}\prod_{{j}=1}^{q}\left.\frac{\partial}{\partial \nu_{j}}\right|_{\nu_{j}=0}\prod_{{j'}=1}^{q'}\left.\frac{\partial}{\partial \eta_{j'}}\right|_{\eta_{j'}=0}\log \mathcal Z(\zeta_{\mu,\nu,\eta,z,\Lambda}(\cdot|\epsilon,m)|\epsilon,m).
 \end{align}	
 Now recall from \eqref{e:Zzetalim} that
 \begin{equation}
   \mathcal Z(\zeta_{\mu,\nu,\eta,z,\Lambda}(\cdot|\epsilon,m)|\epsilon,m)\to \mathcal Z(\zeta_{\mu,\nu,\eta,z,\Lambda})
 \end{equation}
 and that the right-hand side is entire in $\mu_{k},\nu_{j},\eta_{j'},z$.
 Moreover, by Theorem \ref{th:main1} item (iii), we know that $\mathcal Z(\zeta_{0,0,0,z,\Lambda})>0$ for $z\in \R$, so we see that there exists some complex neighborhood of the origin $N\subset \C$ and some neighborhood of the real axis $\R\subset N'\subset \C$ such that $\log \mathcal Z(\zeta_{\mu,\nu,\eta,z,\Lambda}(\cdot|\epsilon,m)|\epsilon,m)\to \log \mathcal Z(\zeta_{\mu,\nu,\eta,z,\Lambda})$ uniformly in $\mu_{k},\nu_{j},\eta_{j'}\in N$ and $z$ in a compact subset of $N'$, and that the limit is analytic in this domain. This implies that also the $\mu,\nu,\eta$ derivatives of this logarithm evaluated at zero converge and are analytic in $z\in N'$. 
 We have thus proven item (i) and item (ii) of Theorem~\ref{thm:cf} for ${n}>2$. Let us turn to item (iii).
 
 Again, since we know that  $\log \mathcal Z(\zeta_{\mu,\nu,\eta,z,\Lambda}(\cdot|\epsilon,m)|\epsilon,m)$ converges uniformly (and is analytic in a suitable domain), we know that also its derivatives converge. In particular, going back in our argument, our remaining task is to evaluate
 the $\epsilon\to 0$, $m\to 0$ limit of
 \begin{align}
   &
     \left.\frac{d^{l}}{dz^{l}}\right|_{z=0}\avga{\prod_{{k}=1}^{n} \wick{e^{i\sqrt{\beta}\sigma_{k}\varphi}}_{\epsilon}(f_{k})\prod_{{j}=1}^{q} \partial \varphi(g_{j})\prod_{{j'}=1}^{q'} \bar \partial \varphi(h_{j'})}_{\SG(\beta,z|\epsilon,m,\Lambda)}^T\\
   &=
     \sum_{\tau_1,\dots,\tau_{l}\in\{-1,1\}}\avga{\prod_{{k}=1}^{n} \wick{e^{i\sqrt{\beta}\sigma_{k}\varphi}}_{\epsilon}(f_{k}) \prod_{{j}=1}^{q} \partial \varphi(g_{j})\prod_{{j'}=1}^{q'} \bar \partial \varphi(h_{j'})\prod_{s=1}^{l}
     \wick{e^{i\sqrt{\beta}\tau_s \varphi}}_{\epsilon}(\mathbf 1_\Lambda)}_{\GFF(\epsilon,m)}^T\notag
 \end{align}
 for ${l}\geq 0$. The claim for item (iii) for ${n}>2$ now follows from Lemma \ref{lem:gffintegrability}.
\end{proof}

\begin{proof}[Proof of Theorem \ref{thm:cf}, (i)--(iii) for ${n}=2$] 
  For ${n}=2$, also the
  \begin{equation}
    \frac{1}{2}\int d\xi_1\, d\xi_2 \, \zeta_{\mu,\nu,\eta,z,\Lambda}(\xi_1|\epsilon,m)\zeta_{\mu,\nu,\eta,z,\Lambda}(\xi_2|\epsilon,m)A(\xi_1,\xi_2|\epsilon,m)
  \end{equation}
  term in \eqref{e:girsanov2} and the contribution from the Girsanov transform contribute.
  More precisely, one finds (recalling \eqref{e:ct}) that
  \begin{equation}
    \avga{\prod_{{k}=1}^2 \wick{e^{i\sqrt{\beta}\sigma_{k}\varphi}}_\epsilon(f_{k}) \prod_{{j}=1}^{q} \partial \varphi(g_{j})\prod_{{j'}=1}^{q'} \bar \partial \varphi(h_{j'})}_{\SG(\beta,z|\epsilon,m,\Lambda)}^T
  \end{equation}
  equals
  \begin{align}
  &\prod_{{k}=1}^2\left.\frac{\partial}{\partial \mu_{k}}\right|_{\mu_{k}=0}\prod_{{j}=1}^{q}\left.\frac{\partial}{\partial \nu_{j}}\right|_{\nu_{j}=0}\prod_{{j'}=1}^{q'}\left.\frac{\partial}{\partial \eta_{j'}}\right|_{\eta_{j'}=0}\log \mathcal Z(\zeta_{\mu,\nu,\eta,z,\Lambda}(\cdot|\epsilon,m)|\epsilon,m)\\
    & \quad +\int_{(\R^2)^2}dx_1dx_2\avga{\wick{e^{i\sqrt{\beta}\sigma_1\varphi(x_1)}}_{\epsilon}\wick{e^{i\sqrt{\beta}\sigma_2\varphi(x_2)}}_{\epsilon}}_{\GFF(\epsilon,m)}^Tf_1(x_1)f_2(x_2)\nnb
    &\qquad \qquad \times \prod_{{j}=1}^{q} \left(i\sqrt{\beta}\sigma_1\avga{\varphi(x_1)\partial \varphi(g_{j})}_{\GFF(\epsilon,m)}+i\sqrt{\beta}\sigma_2\avga{\varphi(x_2)\partial \varphi(g_{j})}_{\GFF(\epsilon,m)}\right) \nnb
    &\qquad \qquad \times \prod_{{j'}=1}^{q'} \left(i\sqrt{\beta}\sigma_1\avga{\varphi(x_1)\bar\partial \varphi(h_{j'})}_{\GFF(\epsilon,m)}+i\sqrt{\beta}\sigma_2\avga{\varphi(x_2)\bar \partial \varphi(h_{j'})}_{\GFF(\epsilon,m)}\right). \notag
  \end{align}
  The $\mathcal Z$-term again converges uniformly and gives rise to a function analytic in its parameters. 
  The remaining term on the other hand is readily seen {(as in the proof of Lemma \ref{lem:gffintegrability})} to converge {if $q+q'\geq 1$ or if $f_1,f_2$ have disjoint supports.}
  This reasoning proves items (i) and (ii) of Theorem \ref{thm:cf} for ${n}=2$,
  and item (iii) for ${n}=2$ is verified in the same manner as for ${n}>2$.
\end{proof}

Since ${n}=1$ is relevant also to item (iv), let us next consider $n=0$, and then conclude the proof with the case ${n}=1$.
      
\begin{proof}[Proof of Theorem \ref{thm:cf}, (i)--(iii) for ${n}=0$] 
  The proof for ${n}=0$ is again similar, but now with a further contribution from the third exponential in
  \eqref{e:girsanov2}:
  \begin{equation}
    \avga{\prod_{{j}=1}^{q} \partial \varphi(g_{j})\prod_{{j'}=1}^{q'} \bar \partial \varphi(h_{j'})}_{\SG(\beta,z|\epsilon,m,\Lambda)}^T
  \end{equation}
  equals
  \begin{align}
  &\prod_{{j}=1}^{q}\left.\frac{\partial}{\partial \nu_{j}}\right|_{\nu_{j}=0}\prod_{{j'}=1}^{q'}\left.\frac{\partial}{\partial \eta_{j'}}\right|_{\eta_{j'}=0}\log \mathcal Z(\zeta_{0,\nu,\eta,z,\Lambda}(\cdot|\epsilon,m)|\epsilon,m)\\
    &\quad +\frac{z^2}{2}\sum_{\tau_1,\tau_2\in \{-1,1\}}\int_{\Lambda^2}dx_1dx_2\avga{\wick{e^{i\sqrt{\beta}\tau_1\varphi(x_1)}}_{\epsilon}\wick{e^{i\sqrt{\beta}\tau_2\varphi(x_2)}}_{\epsilon}}_{\GFF(\epsilon,m)}^T\notag\\
    &\qquad \qquad \times \prod_{{j}=1}^{q} \left(i\sqrt{\beta}\tau_1\avga{\varphi(x_1)\partial \varphi(g_{j})}_{\GFF(\epsilon,m)}+i\sqrt{\beta}\tau_2\avga{\varphi(x_2)\partial \varphi(g_{j})}_{\GFF(\epsilon,m)}\right) \notag\\
    &\qquad \qquad \times \prod_{{j'}=1}^{q'} \left(i\sqrt{\beta}\tau_1\avga{\varphi(x_1)\bar\partial \varphi(h_{j'})}_{\GFF(\epsilon,m)}+i\sqrt{\beta}\tau_2\avga{\varphi(x_2)\bar \partial \varphi(h_{j'})}_{\GFF(\epsilon,m)}\right)\notag\\
    &\quad +\delta_{{q},1}\delta_{{q'},1}\avga{\partial \varphi(g_1)\bar{\partial}\varphi(h_1)}_{\GFF(\epsilon,m)}
      \notag\\
    &\quad + \delta_{{q},2}\delta_{{q'},0} \avga{\partial \varphi(g_1)\partial\varphi(g_2)}_{\GFF(\epsilon,m)} +  \delta_{{q},0}\delta_{{q'},2} \avga{\bar\partial \varphi(h_1)\bar{\partial}\varphi(h_2)}_{\GFF(\epsilon,m)}
      .\notag
  \end{align}
  The $\mathcal Z$-term can be treated as before and the last three terms {converge by Lemma \ref{le:gradint}}.
  For the $z^2$-term, we can argue exactly as in the proof of Lemma \ref{lem:gffintegrability} and conclude that also in this case, the correlation functions converge and define analytic functions of $z$.
  Thus we have proven items (i) and (ii) of Theorem \ref{thm:cf} in the case ${n}=0$. Item (iii) is verified in the same way as for ${n}>2$.
  
  Finally, to see that the limit of right-hand side is symmetric in $z$,
  note that $\zeta_{0,\nu,\eta,z}$ is proportional to $z$ (since ${n}=0$)    and that $\mathcal Z$ is even in $\zeta$ in the limit
  $\epsilon\to 0$ and $m\to 0$.
\end{proof}

\begin{proof}[Proof of Theorem \ref{thm:cf}, (i)--(iii) for ${n}=1$] 
  We finally consider ${n}=1$. Let us first look at the situation where ${q+q'}\geq 1$ where we have that
  \begin{align}
    \avga{\wick{e^{i\sqrt{\beta}\sigma_{1}\varphi}}_{\epsilon}(f_{1}) \prod_{{j}=1}^{q} \partial \varphi(g_{j})\prod_{{j'}=1}^{q'} \bar \partial \varphi(h_{j'})}_{\SG(\beta,z|\epsilon,m,\Lambda)}^T
  \end{align}
  equals
  \begin{align}
    &\left.\frac{\partial}{\partial \mu_1}\right|_{\mu_1=0}\prod_{{j}=1}^{q}\left.\frac{\partial}{\partial \nu_{j}}\right|_{\nu_{j}=0}\prod_{{j'}=1}^{q'}\left.\frac{\partial}{\partial \eta_{j'}}\right|_{\eta_{j'}=0}\log \mathcal Z(\zeta_{\mu,\nu,\eta,z,\Lambda}(\cdot|\epsilon,m)|\epsilon,m)\\
    &\quad +\sum_{\tau\in \{-1,1\}}\int_{\R^2\times \Lambda}dx_1\, dx_2\, \avga{\wick{e^{i\sqrt{\beta}\sigma_1\varphi(x_1)}}_{\epsilon}\wick{e^{i\sqrt{\beta}\tau\varphi(x_2)}}_{\epsilon}}_{\GFF(\epsilon,m)}^Tf_1(x_1)\nnb
    &\qquad \qquad \times \prod_{{j}=1}^{q} \left(i\sqrt{\beta}\sigma_1\avga{\varphi(x_1)\partial \varphi(g_{j})}_{\GFF(\epsilon,m)}+i\sqrt{\beta}\tau\avga{\varphi(x_2)\partial \varphi(g_{j})}_{\GFF(\epsilon,m)}\right) \nnb
    &\qquad \qquad \times \prod_{{j'}=1}^{q'} \left(i\sqrt{\beta}\sigma_1\avga{\varphi(x_1)\bar\partial \varphi(h_{j'})}_{\GFF(\epsilon,m)}+i\sqrt{\beta}\tau\avga{\varphi(x_2)\bar \partial \varphi(h_{j'})}_{\GFF(\epsilon,m)}\right). \notag
  \end{align}
  Finiteness and convergence of this quantity is again argued analogously as in the proof of Lemma~\ref{lem:gffintegrability}, so we have the proof of items (i) and (ii) also in the ${n}=1$ case. Item (iii) follows by the same argument as before. 
\end{proof}

\begin{proof}[Proof of Theorem \ref{thm:cf}, (iv)]
  The only thing that remains is thus item (iv). For this, we find with similar reasoning as before (recall we chose $f$ to be supported in $\Lambda$)
  \begin{multline}
    \avga{\wick{e^{i\sqrt{\beta}\sigma_1\varphi}}_\epsilon(f) }_{\SG(\beta,z|\epsilon,m,\Lambda)}^T
    =\left.\frac{\partial}{\partial \mu_1}\right|_{\mu_1=0}\log \mathcal Z(\zeta_{\mu_1,0,0,z,\Lambda}(\cdot|\epsilon,m)|\epsilon,m)
    \\ 
    +\sum_{\tau\in \{-1,1\}}\nz\int_{\Lambda^2}dx_1\, dx_2\, \avga{\wick{e^{i\sqrt{\beta}\sigma_1\varphi(x_1)}}_{\epsilon}\wick{e^{i\sqrt{\beta}\tau\varphi(x_2)}}_{\epsilon}}_{\GFF(\epsilon,m)}^Tf(x_1).
  \end{multline}
  The first term once again has a finite limit as $\epsilon,m\to 0$, but as we now prove, the second term blows up. For the second term, if $\tau=\sigma_1$, then everything is bounded, but for $\tau\neq \sigma_1$, the leading order behavior (in $\epsilon$) is given by (making use of asymptotics e.g. from Lemma \ref{lem:cov})
  \begin{align}
    &\int_{\Lambda^2}dx_1\, dx_2\, \epsilon ^{-\frac{\beta}{2\pi}}e^{-\beta \avga{\varphi(0)^2}_{\GFF(\epsilon,m)}} e^{\beta \avga{\varphi(x_1)\varphi(x_2)}_{\GFF(\epsilon,m)}}f(x_1)\\
    &=(1+o(1))\int_{\Lambda^2}dx_1\, dx_2 \, m^{\frac{\beta}{2\pi}}e^{\frac{\beta}{4\pi}\gamma}  e^{\beta \int_{\epsilon^2}^{\infty} ds \,\frac{e^{-m^2s}}{4\pi s}e^{-\frac{|x_1-x_2|^2}{4s}}}f(x_1)\nnb
    &=(1+o(1)+O(m^2))\int_{\Lambda^2}dx_1\, dx_2 \, e^{\beta \int_{\epsilon^2}^{1} ds \,\frac{e^{-m^2s}-1}{4\pi s}e^{-\frac{|x_1-x_2|^2}{4s}}+\beta \int_{\epsilon^2}^{1} ds \,\frac{1}{4\pi s}e^{-\frac{|x_1-x_2|^2}{4s}}}\nnb
    &\qquad \qquad \times e^{\beta\int_1^\infty ds\, \frac{e^{-m^2s}}{4\pi s}(e^{-\frac{|x_1-x_2|^2}{4s}}-1)}f(x_1),\notag
  \end{align}	
  from which we see that the leading order asymptotics (as $\epsilon,m\to 0$) are given by 
  \begin{align}
    &\int_{\Lambda^2}dx_1\, dx_2\,  e^{\beta \int_{\epsilon^2}^{1}ds \,\frac{1}{4\pi s}e^{-\frac{|x_1-x_2|^2}{4s}}+\beta\int_1^\infty ds\,\frac{1}{4\pi s}(e^{-\frac{|x_1-x_2|^2}{4s}}-1)}f(x_1)\\
    &=e^{-\frac{\gamma \beta}{4\pi}}\int_{\Lambda^2}dx_1\, dx_2\,  |x_1-x_2|^{-\frac{\beta}{2\pi}}e^{-\frac{\beta}{4\pi}\Gamma(0,\frac{|x_1-x_2|^2}{\epsilon^2})}f(x_1)\nnb
    &=e^{-\frac{\gamma \beta}{4\pi}}\int_{\Lambda}dx \, f(x)\int_{x-\epsilon u\in \Lambda}du\, \epsilon ^{2-\frac{\beta}{2\pi}} |u|^{-\frac{\beta}{2\pi}}e^{-\frac{\beta}{4\pi}\Gamma(0,|u|^2)}.\notag
  \end{align}
  For $\beta>4\pi$, $u\mapsto |u|^{-\frac{\beta}{2\pi}}e^{-\frac{\beta}{4\pi}\Gamma(0,|u|^2)}\in L^1(\R^2)$, and we see that 
  \begin{multline}
    e^{-\frac{\gamma \beta}{4\pi}}\int_{\Lambda}dx \, f(x)\int_{x-\epsilon u\in \Lambda}du\,  \epsilon^{2-\frac{\beta}{2\pi}} |u|^{-\frac{\beta}{2\pi}}e^{-\frac{\beta}{4\pi}\Gamma(0,|u|^2)}\ 
    \\
    =(1+o(1))\epsilon^{2-\frac{\beta}{2\pi}}2\pi e^{-\frac{\gamma \beta}{4\pi}}\int_{\Lambda}dx \, f(x)\int_0^\infty dr\, r^{-\frac{\beta}{2\pi}+1}e^{-\frac{\beta}{4\pi}\Gamma(0,r^2)},
  \end{multline}
  which also concludes the proof of item (iv) for $\beta>4\pi$. 
  
  For $\beta=4\pi$, on the other hand, we obtain a logarithmic singularity from the long range behavior of the $u$ integral and one finds for the relevant asymptotics
  \begin{equation} 
    e^{-\gamma}\int_{\Lambda}dx\, f(x)\int_{x-\epsilon u\in \Lambda}du\, |u|^{-2}e^{-\Gamma(0,|u|^2)}
    =(1+o(1))e^{-\gamma}2\pi \log \epsilon ^{-1}\int_{\Lambda}dx\, f(x).
  \end{equation}
  This concludes the proof of item (iv) for $\beta=4\pi$ as well, and also the proof of the theorem.
\end{proof}
      
\subsection{Existence of $\varphi$ field}

Finally, we prove Theorem~\ref{thm:sgphi}. Since the proof is essentially identical to that of Theorem~\ref{thm:cf},
we will be somewhat brief.

\begin{proof}[Proof of Theorem~\ref{thm:sgphi}]
We are interested in the function
\begin{equation}
  (w,z)\mapsto
  \avga{e^{w\varphi(f)} }_{\SG(\beta,z|\epsilon,m,\Lambda)}
\end{equation}
which we may again write using Girsanov's theorem as
\begin{align} \label{e:girsanov2-bis}
  & 
    \frac{\mathcal Z(\zeta_{w,z,\Lambda}(\cdot|\epsilon,m)|\epsilon,m,\Lambda)}{\mathcal Z(z|\epsilon,m,\Lambda)}
    \nnb  
  & \times 
    \exp\left[\frac{1}{2}\int d\xi_1 \, d\xi_2 \,  \zeta_{w,z,\Lambda}(\xi_1|\epsilon,m)\zeta_{w,z,\Lambda}(\xi_2|\epsilon,m)A(\xi_1,\xi_2|\epsilon,m)\right]
    \nnb
  &\times \exp\left[-\frac{z^2}{2}\int d\xi_1 \, d\xi_2 \, \mathbf 1_\Lambda(x_1)\,\mathbf 1_\Lambda(x_2)\, A(\xi_1,\xi_2|\epsilon,m)
    \right]
    \nnb
  &\times
    \exp\left[\frac{w^2}{2}\avga{\varphi(f)^2}_{\GFF(\epsilon,m)}\right]
\end{align}
where now
\begin{equation} \label{e:zetawdef-bis}
  \zeta_{w,z,\Lambda}(\xi | \epsilon,m)
  =
  -\nz\mathbf 1_{\Lambda}(x)e^{i\sqrt{\beta}\sigma w \avga{\varphi(x)\varphi(f)}_{\GFF(\epsilon,m)}}.
\end{equation}
We only consider the limit $\epsilon\to 0$ and $m\to 0$; the argument for $\epsilon\to 0$ with $m>0$ fixed is analogous.
Thus let $f \in L^\infty_c(\R^2)$ with $\int f\, dx =0$. 
Then $\avg{\varphi(f)^2}_{\GFF(\epsilon,m)}$ converges as $\epsilon,m\to 0$ by Lemma~\ref{lem:cov}.
Moreover, again using Lemma~\ref{lem:cov}, we have that
$\zeta_{w,z,\Lambda}(\xi | \epsilon,m) \to   \zeta_{w,z,\Lambda}(\xi)$ uniformly in $\xi \in \R^2 \times \{\pm 1\}$
and uniformly on compact sets of $z,w$, and thus $\mathcal Z(  \zeta_{w,z,\Lambda}(\cdot | \epsilon,m)|\epsilon,m) \to \mathcal Z(\zeta_{w,z,\Lambda})$
and the limit is entire in $z,w$. As in the proof of Theorem~\ref{thm:cf}, the same is true for the other terms.
\end{proof}

\section{Estimates for free fermions with finite volume mass}
\label{sec:ferm}

In this section, we prove Theorem~\ref{thm:ferm}.
Most of our work goes into the construction and analysis of the fundamental solution (Green's function)
of the Dirac operator with a finite volume mass term.
We state these estimates in Section~\ref{sec:Green-statement},
then deduce Theorem~\ref{thm:ferm} in Section~\ref{sec:ferm-pf},
and finally prove the estimates stated in Section~\ref{sec:Green-statement}
in the remainder of Section~\ref{sec:ferm}.

\subsection{Statement of estimates on the Green's function}
\label{sec:Green-statement}

Recall that we are considering $\Lambda_L=\{x=(x_0,x_1)\in \R^2: |x|\leq L\}$,
and that we identify $\R^2$ with $\C$. We are interested in the Dirac operator 
\begin{equation}
  D=D_{\mu,\Lambda_L}
  =\Dirac +\nmu\mathbf 1_{\Lambda_L} :=
  \begin{pmatrix}
    \nmu \mathbf 1_{\Lambda_L} & 2\bar \partial\\
    2\partial & \nmu \mathbf 1_{\Lambda_L}
  \end{pmatrix}
\end{equation}
where $\partial=\frac{1}{2}(-i\partial_0+\partial_1)$ and $\bar \partial=\frac{1}{2}(i\partial_0+\partial_1)$.
For each $y\in \mathrm{int}(\Lambda_L)=\{z\in \C:|z|<L\}$, we are looking for a continuous function $S_{\mu\mathbf 1_{\Lambda_L}}(\cdot,y):\C\setminus \{y\}\to \C^{2\times 2}$ such that 
\begin{equation}
DS_{\mu\mathbf 1_{\Lambda_L}}(\cdot,y)=\delta_{y} \qquad \text{and} \qquad \lim_{|x|\to \infty}S_{\mu\mathbf 1_{\Lambda_L}}(x,y)=0.
\label{e:Green}
\end{equation}
Our results for this function $S_{\mu \mathbf 1_{\Lambda_L}}$ are summarized in the following theorem.
In the statement, we also use $S_0$ and $S_\mu$ to denote the explicit infinite volume Dirac Green's function \eqref{e:Smu}:
\begin{align}
  \label{e:S0-bis}
  S_0(x,y) &=\frac{1}{2\pi}\begin{pmatrix}
    0 & 1/(\bar x-\bar y)\\
    1/(x-y) & 0
  \end{pmatrix},
  \\
  \label{e:Smu-bis}
  S_{\mu}(x,y) &= 
                 -\frac{1}{2\pi} \begin{pmatrix}
                   -\nmu K_0(|\mu||x-y|) & 2\bar\partial_x K_0(|\mu||x-y|)\\
                   2 \partial_xK_0(|\mu||x-y|) & -\nmu K_0(|\mu||x-y|)
                 \end{pmatrix}, \qquad (\nmu\neq 0),
\end{align}
where $K_0$ is the $0$'th modified Bessel function of the second kind.
It is well known that $S_0$ and $S_\mu$ really are the fundamental solutions of $i\Dirac$ and $i\Dirac +\nmu$ on $\R^2$
and it also follows from the well-known asymptotics of $K_0$ that $S_\mu(x,y) \to S_0(x,y)$ as $\mu \to 0$ when $x\neq y$.
For a matrix $S$, we will denote by $|S|$ a submultiplicative matrix norm of $S$.

\begin{theorem}
  For each $L \geq 1$, $y\in \mathrm{int}(\Lambda_L)$ and $\mu \in \R$, 
  there exists a continuous function $S_{\mu\mathbf 1_{\Lambda_L}}(\cdot,y):\C\setminus \{y\}\to \C^{2\times 2}$ that satisfies \eqref{e:Green}
  and has the following properties
  for some polynomial $P= P(L,|\mu|)$ in both variables (which does not depend on any of the arguments below).
	\begin{enumerate}
		\item For all $x,y\in \mathrm{int}(\Lambda_L)$, $x\neq y$, and $L\geq 1$, we have 
		(with $S_0$ as in \eqref{e:S0-bis})
		\begin{equation}
                  |S_{\mu \mathbf 1_{\Lambda_L}}(x,y) - S_0(x,y)| \leq P(L,|\mu|)(1+|\log |x-y||).
                 \label{e:Grbound}
                  \end{equation}
		\item 	For all $x\in\Lambda_L^{\mathsf c}$, $y\in \mathrm{int}(\Lambda_L)$, and $L\geq 1$, we have 
                  \begin{equation}
                  |S_{\mu\mathbf 1_{\Lambda_L}}(x,y)|\leq \frac{P(L,{|\mu|})}{L-|y|}.
              \end{equation}
            \item For each $x,y\in \mathrm{int}(\Lambda_L)$ with $x\neq y$, the function
              $\mu\mapsto S_{\mu\mathbf 1_{\Lambda_L}}(x,y)$
		has an analytic continuation into some $L$-dependent neighborhood of the real axis,
                this analytic continuation also satisfies the estimate \eqref{e:Grbound},
                and
		\begin{equation}
                  \lim_{\mu \to 0}S_{\mu \mathbf 1_{\Lambda_L}}(x,y)
                  =S_0(x,y) =\frac{1}{2\pi}\begin{pmatrix}
                    0 & 1/(\bar x-\bar y)\\
		1/(x-y) & 0
		\end{pmatrix}.
		\end{equation}
		
              \item For any $x,y\in \mathrm{int}(\Lambda_L)$ with $x\neq y$, for $\mu\in \R$ we have
		\begin{equation}
                  \partial_\mu S_{\mu\mathbf 1_{\Lambda_L}}(x,y)= { -}\int_{\Lambda_L}du\,S_{\mu\mathbf 1_{\Lambda_L}}(x,u)S_{\mu\mathbf 1_{\Lambda_L}}(u,y).
		\end{equation}
		
              \item For each fixed $\mu\in \R\setminus \{0\}$, uniformly on compact subsets of $x\neq y \in \R^2$, as $L\to \infty$, 
		\begin{equation}
                  S_{\mu\mathbf 1_{\Lambda_L}}(x,y)\to
                  S_{\mu}(x,y) = 
                  -\frac{1}{2\pi} \begin{pmatrix}
                    -\nmu K_0(|\mu||x-y|) & 2\bar\partial_x K_0(|\mu||x-y|)\\
		2 \partial_xK_0(|\mu||x-y|) & -\nmu K_0(|\mu||x-y|)
		\end{pmatrix}.
		\end{equation}
	\end{enumerate}
	\label{th:gestimate}
\end{theorem}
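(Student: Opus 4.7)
The starting point is the integral equation obtained by formally inverting $i\Dirac$ in $(i\Dirac + \mu\mathbf 1_{\Lambda_L})S = \delta_y I$, namely
\begin{equation*}
  S(x,y) = S_0(x,y) - \mu \int_{\Lambda_L} S_0(x,u) S(u,y)\, du, \qquad x \in \R^2.
\end{equation*}
Restricting to $x \in \Lambda_L$, this becomes a fixed-point problem $(I + \mu K_L) f = g$ for $f = S(\cdot, y)|_{\Lambda_L}$ and $g = S_0(\cdot, y)|_{\Lambda_L}$, where $K_L$ is the integral operator on $\Lambda_L$ with kernel $S_0(x,u)$. Since $|S_0(x,u)| \lesssim |x-u|^{-1}$ is locally in $L^p$ for $p<2$, my first step would be to show (by approximating $K_L$ in operator norm by bounded-kernel truncations) that $K_L$ is compact on $L^p(\Lambda_L;\C^{2\times 2})$ for every $p \in (2,\infty)$, and that $g \in L^p(\Lambda_L)$ for such $p$. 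Once $f$ is found on $\Lambda_L$, the integral formula itself extends $S(\cdot, y)$ to a continuous function on $\R^2 \setminus \{y\}$ vanishing at infinity.

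The hard part is proving that $I + \mu K_L$ is invertible on $L^p(\Lambda_L)$ for every $\mu \in \R$; this is what drives items (i)--(iii), with analyticity in a neighborhood of the real axis coming for free from the analytic Fredholm theorem. By the Fredholm alternative it suffices to rule out nontrivial $f \in L^p(\Lambda_L)$ with $(I + \mu K_L)f = 0$. Given such $f$, I would set $F(x) = -\mu \int_{\Lambda_L} S_0(x,u) f(u)\, du$; then $F$ continuously extends $f$ to $\R^2$, vanishes at infinity, and satisfies $(i\Dirac + \mu\mathbf 1_{\Lambda_L})F = 0$ distributionally since $i\Dirac S_0 = \delta$. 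Outside $\Lambda_L$, elliptic regularity forces $F_1$ to be antiholomorphic and $F_2$ to be holomorphic in $z = x_1+ix_0$; expanding in Fourier modes $e^{in\theta}$ on $\{|z|>L\}$ and invoking decay at infinity yields $F_1 = \sum_{n\geq 1}\alpha_n r^{-n}e^{in\theta}$ and $F_2 = \sum_{n\geq 1}\beta_n r^{-n}e^{-in\theta}$. Inside $\Lambda_L$, the Dirac system with constant mass $\mu$ couples the $n$-th angular mode of $F_1$ to the $(n-1)$-th of $F_2$ and separates into a modified Bessel equation whose solutions bounded at the origin are $F_1^{(n)}(r) = c_n I_{|n|}(|\mu|r)$, paired via the identity $I_\nu'(x) + \nu I_\nu(x)/x = I_{\nu-1}(x)$ with $F_2^{(n-1)}(r) = -c_n\,\mathrm{sgn}(\mu)\,I_{|n-1|}(|\mu|r)$. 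Matching each mode across $\partial \Lambda_L$ now gives, for every $n \in \Z$, a relation of the form $c_n\, I_k(|\mu|L) = 0$ for some non-negative integer $k$; since $I_k(|\mu|L) > 0$ for all $\mu \in \R \setminus \{0\}$, this forces $c_n = 0$ for every $n$ and hence $F \equiv 0$.

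With invertibility in hand, items (i) and (ii) follow from the representation $S = S_0 - \mu K_L S$ together with the operator-norm bound $\|(I+\mu K_L)^{-1}\| \leq P(L,|\mu|)$, which a Schur-test estimate shows is polynomial in $L$ and $|\mu|$: for (i), applying H\"older's inequality and the elementary estimate $\int_{\Lambda_L}|x-u|^{-1}|u-y|^{-1}\,du \lesssim 1 + |\log|x-y||$ gives the claimed pointwise bound; for (ii), the distance bound $|x-y| \geq L-|y|$ handles the $S_0$-piece and an analogous convolution estimate controls the $K_L$-correction. Item (iii) is immediate from the analytic Fredholm theorem applied to $(I+\mu K_L)^{-1}$, and the $\mu \to 0$ limit is trivial from the equation. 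Item (iv) follows from the resolvent identity $\partial_\mu(I+\mu K_L)^{-1} = -(I+\mu K_L)^{-1}K_L(I+\mu K_L)^{-1}$ applied to $S(\cdot,y) = (I+\mu K_L)^{-1}S_0(\cdot, y)$. Finally, for item (v), the infinite-volume Green's function $S_\mu$ satisfies the analogous equation $S_\mu(x,y) = S_0(x,y) - \mu\int_{\R^2}S_0(x,u)S_\mu(u,y)\,du$; subtracting it from the corresponding equation for $S_{\mu\mathbf 1_{\Lambda_L}}$ shows that the difference solves an integral equation whose inhomogeneous term is $\mu\int_{\Lambda_L^{\mathsf c}} S_0(x,u)S_\mu(u,y)\,du$, which vanishes uniformly on compacts as $L\to\infty$ by the exponential decay of $S_\mu$.
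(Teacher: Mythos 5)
Your route is genuinely different from the paper's. The paper constructs $S_{\mu\mathbf 1_{\Lambda_L}}$ explicitly as a finite sum of iterated Laplacian-Green's-function kernels $E_j,F_j$ plus a remainder $R_3$ given by an eigenfunction series, and reads off all the bounds from that construction; you instead set up the Lippmann--Schwinger equation $(I+\mu K_L)S(\cdot,y)=S_0(\cdot,y)$ and run a Fredholm argument. Your injectivity argument via matching Bessel modes across $\partial\Lambda_L$ is clean and essentially correct (the key point being that the inside modes of $F_2$ carry angular index $n-1\geq 0$ while the outside holomorphic expansion only has indices $\leq -1$, which forces $c_n I_{n-1}(|\mu|L)=0$).

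However there is a real gap exactly where the theorem's uniform bounds live. You assert that the operator-norm bound $\|(I+\mu K_L)^{-1}\|\leq P(L,|\mu|)$ follows from ``a Schur-test estimate'', but a Schur test only controls $\|K_L\|$, not $\|(I+\mu K_L)^{-1}\|$; Neumann series based on such an estimate converges only for $|\mu|\|K_L\|<1$, so it cannot give the polynomial bound for all real $\mu$, which is precisely what items (i), (ii), the complex-$\mu$ part of (iii), and (v) require. The input that would actually rescue your argument is that $K_L$ is anti-self-adjoint on $L^2(\Lambda_L;\C^2)$: since $S_0(x,u)^*=S_0(x,u)=-S_0(u,x)$, one has $K_L^*=-K_L$, hence $\mathrm{spec}(K_L)\subset i\R$ and $\|(I+\mu K_L)^{-1}\|_{L^2\to L^2}\leq 1$ for all $\mu\in\R$ (and $\leq C$ in an $L$-dependent complex strip); this is the analogue of the paper's factor $(1+\mu^2L^2/j_{n,k}^2)^{-1}\leq 1$. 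Even with that, passing from the $L^2$ bound to the pointwise estimate $|S-S_0|\lesssim P(L,|\mu|)(1+|\log|x-y||)$ is not the one-line H\"older argument you sketch, because $S_0(\cdot,y)\notin L^2$ and $S_0(x,\cdot)\notin L^2$; one needs to iterate the integral equation (e.g.\ $S-S_0=-\mu K_LS_0+\mu^2K_L^2S_0+\mu^2 K_L^2(S-S_0)$) and use that $K_L S_0(\cdot,y)$ has only a logarithmic singularity and $K_L^2$ is bounded from $L^2$ to $L^\infty$ on $\Lambda_L$. Finally, for item (v), your subtraction sets up $(I+\mu K_L)(S_{\mu\mathbf 1_{\Lambda_L}}-S_\mu)=\mu\int_{\Lambda_L^{\mathsf c}}S_0(\cdot,u)S_\mu(u,y)du$, and inverting costs a factor $P(L,|\mu|)$ that grows in $L$ — this is still fine because the right-hand side decays exponentially in $L$, but the paper's Lemma~\ref{le:res2} avoids this altogether by using $S_\mu$ (rather than $S_0$) as the reference kernel, giving a closed resolvent identity that needs no inversion.
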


\subsection{Proof of Theorem \ref{thm:ferm}}
\label{sec:ferm-pf}

For Theorem \ref{thm:ferm}, our function $S_{\mu\mathbf 1_{\Lambda_L}}$ is of course the Green's function of Theorem~\ref{th:gestimate}.
We will denote the components of the $2\times 2$ matrix $S_{\mu\mathbf 1_{\Lambda_L}}(x,y)$
by $S_{\mu\mathbf 1_{\Lambda_L};ij}(x,y)$ where $i,j\in \{1,2\}$.
We also recall the definition of the truncated correlation functions from  \eqref{e:truncated-muLambda},
as well as the truncated two-point functions with singularity subtracted from \eqref{e:trunc2ptsub}.
Let us begin with the proof of item (i) of Theorem \ref{thm:ferm}. We formulate this as the following lemma.

\begin{lemma}  \label{le:f1}
  For $n\geq 3$ and $f_1,\dots,f_n\in L_c^\infty(\Lambda_L)$,
  \begin{equation}
    \mu\mapsto \avga{\prod_{i=1}^n\bar\psi_{\alpha_{i}}\psi_{\beta_{i}}(f_{i})}_{\FF(\mu \mathbf 1_{\Lambda_L})}^T
  \end{equation}
  has an analytic continuation to an $L$-dependent neighborhood of $\R$ (with $S_{\mu\mathbf 1_{\Lambda_L}}=S_0$ for $\mu=0$).
  For $n=2$, the same holds if $f_1$ and $f_2$ have disjoint compact supports or if the truncated
    two-point function is replaced by \eqref{e:trunc2ptsub}.
\end{lemma}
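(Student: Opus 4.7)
The plan is to deduce the lemma directly from the definition \eqref{e:truncated-muLambda} of the smeared truncated correlation function together with Theorem~\ref{th:gestimate}. Recall that
\begin{equation}
\avga{\prod_{i=1}^n \bar\psi_{\alpha_i}\psi_{\beta_i}(f_i)}^T_{\FF(\mu \mathbf 1_{\Lambda_L})}
= (-1)^{n+1} \sum_\pi \int \prod_{i=1}^n f_i(x_i) \prod_{i=1}^n S_{\mu\mathbf 1_{\Lambda_L};\,\alpha_{\pi^i(1)}\beta_{\pi^{i+1}(1)}}(x_{\pi^i(1)},x_{\pi^{i+1}(1)})\, dx_1\cdots dx_n,
\end{equation}
which is a finite sum (over cyclic permutations) of integrals whose integrand, for each distinct configuration $x_1,\dots,x_n$, is the product of $n$ Green's function entries arranged in a cycle. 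The strategy has two parts: first prove that each such integrand is in $L^1$ (uniformly for $\mu$ in a compact subset of a suitable complex neighborhood of $\R$), and second deduce analyticity in $\mu$ from the pointwise analyticity of $\mu\mapsto S_{\mu\mathbf 1_{\Lambda_L}}(x,y)$ from Theorem~\ref{th:gestimate}(iii) via Morera's theorem and Fubini.

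For integrability, the key input is the decomposition $S_{\mu\mathbf 1_{\Lambda_L}}(x,y) = S_0(x,y) + R_\mu(x,y)$, where the remainder satisfies the uniform logarithmic bound $|R_\mu(x,y)| \leq P(L,|\mu|)(1+|\log|x-y||)$ by Theorem~\ref{th:gestimate}(i). Expanding the cyclic product as a sum of $2^n$ terms in which each factor is either $S_0$ or $R_\mu$, every factor has at worst a $1/|x-y|$ singularity on the diagonal (from $S_0$) or a locally bounded log singularity (from $R_\mu$). In a cycle of length $n\geq 3$, each vertex $x_i$ appears in exactly two factors, so integrating $x_i$ first produces at worst $\int_K |x_i-x_{\pi(i)}|^{-1}|x_i - x_{\pi^{-1}(i)}|^{-1}\, dx_i$ over a compact set $K$, which in two dimensions is bounded by $C(1+|\log|x_{\pi(i)}-x_{\pi^{-1}(i)}||)$. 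Peeling off integrals one at a time (each producing at most an integrable logarithm on the remaining diagonal distances), one obtains a finite bound that depends polynomially on $\|f_i\|_\infty$ and on $P(L,|\mu|)$. This bound is uniform for $\mu$ in a compact subset of the neighborhood of $\R$ given by Theorem~\ref{th:gestimate}(iii), since $P(L,|\mu|)$ is polynomial.

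Once the $L^1$-bound is established uniformly on compacta in the complex $\mu$-neighborhood, analyticity follows from Morera's theorem: for any triangle $\Delta$ contained in this neighborhood,
\begin{equation}
\oint_{\partial \Delta} d\mu \int \prod_i f_i(x_i) \prod_i S_{\mu\mathbf 1_{\Lambda_L};\,\alpha_{\pi^i(1)}\beta_{\pi^{i+1}(1)}}(x_{\pi^i(1)},x_{\pi^{i+1}(1)})\, dx_1\cdots dx_n = 0
\end{equation}
by Fubini (justified by the uniform $L^1$-bound) and the pointwise analyticity of each entry of $S_{\mu\mathbf 1_{\Lambda_L}}$ given by Theorem~\ref{th:gestimate}(iii), which makes the integrand an analytic function of $\mu$ for each fixed distinct $x_1,\dots,x_n$. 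That $S_{\mu\mathbf 1_{\Lambda_L}}=S_0$ at $\mu=0$ is the content of Theorem~\ref{th:gestimate}(iii).

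The $n=2$ case with disjoint supports is even simpler, since the diagonal is excluded and the integrand is bounded. For $n=2$ with overlapping supports and the subtracted quantity \eqref{e:trunc2ptsub}, write $S_{\mu\mathbf 1_{\Lambda_L}}=S_0+R_\mu$ and expand the product; the cancellation of the $S_0\cdot S_0$ term leaves cross terms $S_0\cdot R_\mu$ and $R_\mu\cdot S_0$, with integrable singularity $|x_1-x_2|^{-1}(1+|\log|x_1-x_2||)$, and a term $R_\mu\cdot R_\mu$ with integrable singularity $(1+|\log|x_1-x_2||)^2$. The same Morera argument then applies. I expect the main (if modest) obstacle to be bookkeeping the $L^1$-estimates along the cycle in a way that gives a clean uniform bound in $\mu$, but the logarithmic growth furnished by Theorem~\ref{th:gestimate}(i) is more than enough slack in two dimensions.
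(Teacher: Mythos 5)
Your proposal is correct and follows essentially the same route as the paper's own proof: both use the bound of Theorem~\ref{th:gestimate}(i) to reduce to a cyclic product of factors with at worst $1/|x-y|$ singularities, iterate the two-dimensional convolution estimate $\int_K |x-u|^{-1}|u-y|^{-1}\,du \leq C_K(1+|\log|x-y||)$ to obtain a uniform $L^1$ bound, and then conclude analyticity via Fubini and Morera together with Theorem~\ref{th:gestimate}(iii). The only cosmetic difference is that you decompose $S_{\mu\mathbf 1_{\Lambda_L}}=S_0+R_\mu$ and bound the two pieces separately, whereas the paper combines them into the single bound $|S_{\mu\mathbf 1_{\Lambda_L}}(x,y)|\leq P(L,|\mu|)/|x-y|$; and the paper explicitly notes that continuity in $\mu$ (needed before Morera applies) follows from dominated convergence, which you leave implicit but which your uniform $L^1$ estimate clearly provides.
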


\begin{proof}
From Theorem~\ref{th:gestimate} item~(i), which implies $|S_{\mu \mathbf 1_\Lambda}(x,y)| \leq P(L,|\mu|)/|x-y|$ for $x,y \in \Lambda_L$,
and the representation \eqref{e:truncated-muLambda}, we see that the smeared truncated correlation functions exist for all $\mu\in \R$ and $n \geq 3$ (the claim about what happens at $\mu=0$ following from Theorem \ref{th:gestimate} item (iii)). Here we used that for, any compact $K \subset \R^2$,
\begin{equation}
  \int_K du\, \frac{1}{|x-u||u-y|} \leq C_K (1+|\log |x-y||).
\end{equation}
For $n=2$, in the same way, the truncated two-point function with subtracted singularity  \eqref{e:trunc2ptsub} exists;
or, alternatively, if $f_1$ and $f_2$ have disjoint compact supports, the truncated two-point function also exists trivially.
Moreover, Theorem \ref{th:gestimate} item~(iii) (in particular, the analogue of \eqref{e:Grbound} for complex $\mu$) allows us to construct a candidate for the analytic continuation of the truncated correlation functions (with subtracted singularity for $n=2$).
More precisely, we define the candidate by the formula \eqref{e:truncated-muLambda} though now using the analytic continuation of the Green's function provided by Theorem~\ref{th:gestimate} item~(iii). Using Theorem~\ref{th:gestimate} item~(iii) (or more precisely, the bound analogous to \eqref{e:Grbound} for complex $\mu$), a routine dominated convergence argument shows that this candidate for the analytic continuation is continuous in $\mu$ (in this $L$-dependent neighborhood of the real axis). By Morera's theorem, it remains to prove that for any closed loop $\gamma$ (in our $L$-dependent neighborhood of the real axis), we have 
\begin{equation}
  \oint_{\gamma} d\mu\,\avga{\prod_{i=1}^n\bar\psi_{\alpha_{i}}\psi_{\beta_{i}}(f_{i})}_{\FF(\mu \mathbf 1_{\Lambda_L})}^T =0,
\end{equation}
where we have used the $\avga{\cdot}_{\FF(\mu\mathbf 1_{\Lambda_L})}$-notation for our candidate for the analytic continuation.

Now using the analogue of \eqref{e:Grbound} provided by Theorem \ref{th:gestimate} item~(iii), one can use Fubini to translate this into a contour integral over suitable products of $S_{\mu \mathbf 1_{\Lambda_L}}$ at distinct points. By Theorem~\ref{th:gestimate} item~(iii) and Cauchy's integral theorem, this contour integral vanishes, and we are done.
\end{proof}

We next turn to item (ii) of Theorem~\ref{thm:ferm} which we formulate as the following lemma.
\begin{lemma} \label{le:f2}
  For $l\geq 1$ and $n\geq 3$ and $f_1,\dots,f_n\in L_c^\infty(\Lambda_L)$,
  \begin{equation}   	\label{e:f2n}
    \left.\frac{d^{l}}{d\mu^{l}}\right|_{\mu=0} \avga{\prod_{i=1}^n \bar\psi_{\alpha_i}\psi_{\beta_i}(f_i)}^T_{\FF(\mu \mathbf 1_{\Lambda_L})}
    =
    \avga{\prod_{i=1}^n \bar\psi_{\alpha_i}\psi_{\beta_i}(f_i)\big(\bar\psi_{1}\psi_{1}(\mathbf 1_{\Lambda_L})+\bar\psi_{2}\psi_{2}(\mathbf 1_{\Lambda_L})\big)^{l}}^T_{\FF(0)}.
  \end{equation}
  For $n=2$, the same holds if $f_1$ and $f_2$ have disjoint compact supports or if the truncated
  two-point function on the left-hand side is replaced by \eqref{e:trunc2ptsub}.
\end{lemma}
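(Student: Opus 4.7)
The plan is to differentiate the cycle-sum formula \eqref{e:truncated-muLambda} in $\mu$ using Theorem~\ref{th:gestimate}~(iv), which gives
\[\partial_\mu S_{\mu\mathbf 1_{\Lambda_L};\alpha\beta}(x,y) = -\sum_{\gamma\in\{1,2\}}\int_{\Lambda_L} du\, S_{\mu\mathbf 1_{\Lambda_L};\alpha\gamma}(x,u)\, S_{\mu\mathbf 1_{\Lambda_L};\gamma\beta}(u,y).\]
Each application of $\partial_\mu$ to a factor $S_{\mu\mathbf 1_{\Lambda_L};\alpha\beta}(x,y)$ therefore ``splits'' the corresponding edge $(x,y)$ of the cycle into two edges $(x,u)$ and $(u,y)$, contributes a minus sign, and introduces a summation over a new component index $\gamma\in\{1,2\}$ at $u$; this is precisely the effect of inserting one factor $\bar\psi_1\psi_1(u)+\bar\psi_2\psi_2(u)$ at that position in the cycle. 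The interchange of $\partial_\mu$ with the integrations over the $x_i$ and the successively inserted $u_k$ is justified by dominated convergence, using the uniform local bound $|S_{\mu\mathbf 1_{\Lambda_L}}(x,y)|\leq |S_0(x,y)|+P(L,|\mu|)(1+|\log|x-y||)$ from Theorem~\ref{th:gestimate}~(i),(iii) together with the fact that in two dimensions the kernels $|x-y|^{-1}$ are locally $L^1$ and the resulting multi-variable integrals converge absolutely whenever the final cycle has at least three vertices (in particular for $n+l\geq 3$).

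Iterating $l$ times, the $l$-th derivative at $\mu=0$ is $(-1)^{n+l+1}$ times a sum over ordered ``derivation sequences'' rooted at a cyclic permutation $\pi$ of $[n]$: at the $k$-th step one selects one of the current $n+k-1$ edges to split and labels the inserted vertex $u_k$. For a fixed $\pi$ there are $n(n+1)\cdots(n+l-1)=(n+l-1)!/(n-1)!$ such sequences, so after also summing over the $(n-1)!$ starting cyclic permutations one obtains $(n+l-1)!$ terms, which is exactly the number of cyclic permutations of $[n+l]$. The combinatorial core of the proof, and the main obstacle, is the verification that the map from ``cyclic permutation on $[n]$ plus derivation sequence'' to cyclic permutations on $[n+l]$ is a bijection. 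The inverse is deterministic: given any cyclic permutation $\pi'$ on $[n+l]$, successively remove $u_l, u_{l-1}, \dots, u_1$ (each removal produces a well-defined cyclic permutation on the remaining vertices, and records which edge was split at the corresponding step); this uniquely recovers both the starting $\pi$ on $[n]$ and the derivation sequence.

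Putting the pieces together, the $l$-th derivative of the left-hand side of \eqref{e:f2n} at $\mu=0$ becomes
\[(-1)^{n+l+1}\sum_{\pi'}\int \prod_{i=1}^n f_i(x_i)\, dx_i \int_{\Lambda_L^l}\prod_{k=1}^l du_k \sum_{\gamma\in \{1,2\}^l} \prod_{e\in \pi'} S_{0;\cdot\cdot}(\cdot,\cdot),\]
with $\pi'$ ranging over cyclic permutations of $[n+l]$ on the vertex set $\{x_i\}\cup\{u_k\}$ and with component labels $\alpha_i,\beta_i$ attached to each $x_i$ and $\gamma_k,\gamma_k$ attached to each $u_k$. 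By \eqref{e:fermtruncdef-bis} at $\mu=0$ applied to these $n+l$ points, this equals $\sum_{\gamma\in\{1,2\}^l}\avga{\prod_i\bar\psi_{\alpha_i}\psi_{\beta_i}(f_i)\prod_k\bar\psi_{\gamma_k}\psi_{\gamma_k}(\mathbf 1_{\Lambda_L})}_{\FF(0)}^T$, which on collapsing the $\gamma$-sum is exactly the right-hand side of \eqref{e:f2n}. The $n=2$ case is handled identically: only the first, $\mu$-dependent, term in \eqref{e:trunc2ptsub} contributes when $l\geq 1$, and one starts the derivation from the ``$2$-cycle'' consisting of the two factors $S_{\mu\mathbf 1_{\Lambda_L};\alpha_1\beta_2}(x_1,x_2)\,S_{\mu\mathbf 1_{\Lambda_L};\alpha_2\beta_1}(x_2,x_1)$, after which the same combinatorial identification applies verbatim.
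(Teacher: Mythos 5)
Your proof is correct and takes essentially the same approach as the paper's: both start from the cycle-sum formula, differentiate using the resolvent identity of Theorem~\ref{th:gestimate}~(iv), and observe that splitting an edge of a cycle and inserting a new vertex with a summed component index reproduces the truncated correlation with one extra $\bar\psi_1\psi_1(\mathbf 1_{\Lambda_L})+\bar\psi_2\psi_2(\mathbf 1_{\Lambda_L})$ factor. The only cosmetic difference is that the paper does the $l=1$ case for general $\mu$ and concludes by induction on $l$, whereas you take all $l$ derivatives at once and verify the bijection between (cyclic permutation on $[n]$, derivation sequence) and cyclic permutations on $[n+l]$ directly — the two formulations are equivalent, and your removal-of-$u_l,\dots,u_1$ inverse map is exactly what makes the induction tick; also, the paper justifies the interchange of $\partial_\mu$ with integration via a Cauchy-integral/Fubini argument using the analyticity from Lemma~\ref{le:f1}, while you appeal to dominated convergence (for a complete DCT argument one should also bound $\partial_\mu S_{\mu\mathbf 1_{\Lambda_L}}$, which follows by combining items (i) and (iv) of Theorem~\ref{th:gestimate}).
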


Before the proof, let us just mention that this derivative is finite by the massless correspondence Corollary \ref{cor:massless-corr} and Lemma \ref{lem:gffintegrability}, which implies that the corresponding bosonic correlation functions are integrable, and thus these smeared correlation functions exist.

\begin{proof}
      First assume that $n \geq 3$.
	  We begin by noting that due to Theorem \ref{th:gestimate} item (iv) (interchanging the order of integration and differentiation follows from a routine Cauchy-integral formula/Fubini argument utilizing Lemma~\ref{le:f1} and Theorem~\ref{th:gestimate})
	  \begin{align}
	  &\frac{d}{d\mu}\avga{\prod_{i=1}^n \bar\psi_{\alpha_i}\psi_{\beta_i}(f_i)}^T_{\FF(\mu{\mathbf 1}_{\Lambda_L})}\nnb
	  &= (-1)^{n+2} \sum_{\pi \in C_n}
	  \sum_{j=1}^{n}
	  \sum_{\alpha_{n+1}=1}^2
	  \sum_{\beta_{n+1}=1}^2 \mathbf 1_{\alpha_{n+1}=\beta_{n+1}}
	   \int_{\Lambda_L^{n+1}}dx_1\cdots  dx_{n+1} \, f_1(x_1)\cdots f_n(x_n) 
	  \nnb
	  &\qquad \times S_{\mu{\mathbf 1}_{\Lambda_L};\alpha_{\pi^j(1)}\beta_{n+1}}(x_{\pi^j(1)},x_{n+1})
	  S_{\mu{\mathbf 1}_{\Lambda_L};\alpha_{n+1}\beta_{\pi^{j+1}(1)}}(x_{n+1},x_{\pi^{j+1}(1)})
            \nnb
            &\qquad\times \prod_{i:i\neq j} S_{\mu{\mathbf 1}_{\Lambda_L};\alpha_{\pi^i(1)}\beta_{\pi^{i+1}(1)}}(x_{\pi^i(1)},x_{\pi^{i+1}(1)})
	  .
	  \end{align}
	  Note that $(\pi,j) \in C_n \times [n]$
	  defines a cyclic permutation $\sigma \in C_{n+1}$  in terms of which
	  the right-hand is
	  \begin{multline}
	  (-1)^{n+2} \sum_{\sigma \in C_{n+1}}
	  \sum_{\alpha_{n+1}=1}^2
	  \sum_{\beta_{n+1}=1}^2 \mathbf 1_{\alpha_{n+1}=\beta_{n+1}}
	  \int_{\Lambda_L^{n+1}}dx_1\cdots  dx_{n+1} \, f_1(x_1)\cdots f_n(x_n) 
	  \\
	  \times \prod_{i=1}^{n+1} S_{\mu \mathbf 1_{\Lambda_L};\alpha_{\sigma^i(1)}\beta_{\sigma^{i+1}(1)}}(x_{\sigma^i(1)},x_{\sigma^{i+1}(1)})
	 	  .
	  \end{multline}
	  In particular, this implies (recalling \eqref{e:fermtruncdef}) that
	  \begin{align}
	  	  &\frac{d}{d\mu}\avga{\prod_{i=1}^n \bar\psi_{\alpha_i}\psi_{\beta_i}(f_i)}^T_{\FF(\mu \mathbf 1_{\Lambda_L})}\nnb
	  	  &=
                    \sum_{\alpha_{n+1}=1}^2
	  	  \sum_{\beta_{n+1}=1}^2 \mathbf 1_{\alpha_{n+1}=\beta_{n+1}}
	  	  \int_{\Lambda_L^{n+1}} dx_1\cdots dx_{n+1} \, f_1(x_1)\cdots f_n(x_i) 
	  	  \avga{\prod_{i=1}^{n+1} \bar\psi_{\alpha_i}\psi_{\beta_i}(x_i)}^T_{\FF(\mu \mathbf 1_{\Lambda_L})}\nnb
	  	  &=
                    \avga{\prod_{i=1}^{n} \bar\psi_{\alpha_i}\psi_{\beta_i}(f_i) \; (\bar\psi_1\psi_1({\mathbf 1}_\Lambda)+\bar\psi_2\psi_2({\mathbf 1}_\Lambda))}^T_{\FF(\mu \mathbf 1_{\Lambda_L})}.
	  \end{align}
	  Setting $\mu=0$ (note that this uses Theorem \ref{th:gestimate} item (iii)), we find that
	  \begin{equation}
            \left.\frac{d}{d\mu}\right|_{\mu=0} \avga{\prod_{i=1}^n \bar\psi_{\alpha_i}\psi_{\beta_i}(f_i)}^T_{\FF(\mu \mathbf 1_{\Lambda_L})}\\
	  =
            \avga{\prod_{i=1}^{n} \bar\psi_{\alpha_i}\psi_{\beta_i}(f_i) \; (\bar\psi_1\psi_1({\mathbf 1}_\Lambda)+\bar\psi_2\psi_2({\mathbf 1}_\Lambda))}^T_{\FF(0)}
	  \end{equation}
	  which is the claim when $l=1$. 
	  The case of general $l$ follows by induction.
          For $n=2$, assuming that the truncated two-point function is replaced by \eqref{e:trunc2ptsub} (or alternatively that $f_1$ and $f_2$
          have disjoint compact supports),
          we note that the argument is completely analoguous. The subtracted singularity ensures the integrability of the left-hand sides,
          but does not contribute to the derivatives.
\end{proof}

The final statement of Theorem~\ref{thm:ferm} is the following lemma.
Recall that, on the right hand side, the correlation functions are given by (smeared versions of) \eqref{e:fermtruncdef} now with the
propagator \eqref{e:Smu} (with infinite volume mass term).

\begin{lemma}
  \label{le:f3}
  For any $\mu\in \R$, $n\geq 3$, $f_1,\dots,f_n\in L_c^\infty(\R^2)$, as $L\to\infty$, 
  \begin{equation}
    \avga{\prod_{i=1}^n \bar\psi_{\alpha_i}\psi_{\beta_i}(f_i)}^T_{\FF(\mu \mathbf 1_{\Lambda_L})}\rightarrow \avga{\prod_{i=1}^n \bar\psi_{\alpha_i}\psi_{\beta_i}(f_i)}^T_{\FF(\mu)}.
  \end{equation}
  For $n=2$, the same holds if $f_1$ and $f_2$ have disjoint compact supports or if the truncated
    two-point function on the left-hand side is replaced by \eqref{e:trunc2ptsub}
    and analogously on the right-hand side.
\end{lemma}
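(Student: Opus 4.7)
The plan is to apply dominated convergence directly in the integral representation~\eqref{e:truncated-muLambda}. Since the case $\mu=0$ is trivial ($S_{0\cdot\mathbf 1_{\Lambda_L}}=S_0$ by Theorem~\ref{th:gestimate}~(iii)), assume $\mu\neq 0$. Fix a compact $K\subset\R^2$ containing the supports of all the $f_i$ and let $L$ be so large that $K\subset\mathrm{int}(\Lambda_L)$. Theorem~\ref{th:gestimate}~(v) gives pointwise convergence $S_{\mu\mathbf 1_{\Lambda_L}}(x,y)\to S_\mu(x,y)$ on compact subsets of $\{x\neq y\}$, but to handle the possible near-diagonal blow-up of the integrand I need a uniform-in-$L$ dominating bound on all of $K\times K$.

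To produce such a bound I would first upgrade Theorem~\ref{th:gestimate}~(v) to uniform convergence across the diagonal. The identities $(i\Dirac+\mu\mathbf 1_{\Lambda_L})S_{\mu\mathbf 1_{\Lambda_L}}=\delta$ and $(i\Dirac+\mu)S_\mu=\delta$ yield
\begin{equation*}
  (i\Dirac+\mu)(S_{\mu\mathbf 1_{\Lambda_L}}-S_\mu)=\mu\mathbf 1_{\Lambda_L^{\mathsf c}}S_{\mu\mathbf 1_{\Lambda_L}},
\end{equation*}
and convolving with $S_\mu$ gives the resolvent identity
\begin{equation*}
  (S_{\mu\mathbf 1_{\Lambda_L}}-S_\mu)(x,y)=\mu\int_{\Lambda_L^{\mathsf c}}du\,S_\mu(x,u)\,S_{\mu\mathbf 1_{\Lambda_L}}(u,y).
\end{equation*}
For $x,y\in K$, the infinite-volume kernel $S_\mu(x,u)$ is exponentially small in $|x-u|\geq L-\mathrm{diam}(K)$ by the large-argument asymptotics of $K_0,K_1$, while Theorem~\ref{th:gestimate}~(ii) gives $|S_{\mu\mathbf 1_{\Lambda_L}}(u,y)|\leq P(L,|\mu|)/(L-|y|)$ on $\Lambda_L^{\mathsf c}$. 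The exponential factor $e^{-c|\mu|L}$ defeats the polynomial $P(L,|\mu|)$, so $\sup_{x,y\in K}|S_{\mu\mathbf 1_{\Lambda_L}}(x,y)-S_\mu(x,y)|\to 0$ as $L\to\infty$.

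With this in hand, for $L$ large one has $|S_{\mu\mathbf 1_{\Lambda_L}}(x,y)|\leq|S_\mu(x,y)|+1$ on $K\times K$, and the short-distance asymptotics $K_0(r)\sim-\log r$, $K_1(r)\sim 1/r$ give $|S_\mu(x,y)|\leq C(1+|x-y|^{-1})$ on $K\times K$. For $n\geq 3$ the integrand in~\eqref{e:truncated-muLambda} is bounded by a product of $n$ such factors arranged along a cycle, which is integrable against bounded, compactly supported $f_i$ by standard power counting (any cluster of $k$ coalescing arguments picks up at most $k-1$ inverse-distance factors, integrable in the corresponding $2(k-1)$-dimensional slice). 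Dominated convergence then delivers the claim for $n\geq 3$. The $n=2$ case with disjoint supports is immediate; for $n=2$ with the subtracted two-point function~\eqref{e:trunc2ptsub}, I would write $ab-a'b'=(a-a')b+a'(b-b')$ with $a'=S_0$, observe that $|S_\mu-S_0|\leq C(1+|\log|x-y||)$ since the leading $1/(x-y)$ singularities of $S_0$ and $S_\mu$ cancel, and again invoke dominated convergence. The main obstacle is the upgrade from compact-subsets-of-$\{x\neq y\}$ convergence in Theorem~\ref{th:gestimate}~(v) to uniform convergence across the diagonal on $K\times K$; this is what forces the combined use of the resolvent identity and the decay estimate~(ii).
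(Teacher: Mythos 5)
Your proposal is correct and in essence follows the paper's route, which simply defers the claim to the uniform convergence in Theorem~\ref{th:gestimate}~(v) together with dominated convergence. The subtlety you flag — that item~(v) as stated only gives uniform convergence on compact subsets of $\{x\neq y\}$ — is real but is resolved exactly as you propose: the bound in the proof of Proposition~\ref{pr:Linfty} (which you reconstruct via the resolvent identity, already established as Lemma~\ref{le:res2}) depends only on $L$ and the compact set, not on $|x-y|$, so the uniform convergence in fact holds across the diagonal. Your rederivation of that resolvent identity and of the exponential-beats-polynomial estimate duplicates Lemma~\ref{le:res2} and the proof of Proposition~\ref{pr:Linfty}; given those, your dominated-convergence argument and power-counting of the cyclic singularity structure for $n\geq 3$ are sound, as is the telescoping and $\log$-singularity observation for the $n=2$ subtracted case.
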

    
\begin{proof}
  This is immediate from the uniform convergence of Theorem \ref{th:gestimate} item (v).
  (The modification for $n=2$ is again only used to guarantee integrability.)
\end{proof}

Combining these lemmas yields the proof of Theorem \ref{thm:ferm}, so we are done.

\subsection{Facts about the Laplacian Green's function and eigenfunctions in a disk}

Our proof of Theorem~\ref{th:gestimate} relies on relating $S_{\mu\mathbf 1_{\Lambda_L}}$ to the
the Green's function of the Laplacian in the disk as well as expansions in terms of the eigenfunctions of this Laplacian.
We begin by collecting some well known facts about these.
First, we recall that 
\begin{equation}
G_{\Lambda_L}(x,y)=\frac{1}{2\pi}\log \frac{1}{|x-y|}-\frac{1}{2\pi}\log \frac{1}{|L-\frac{\overline{x}y}{L}|}
\label{e:GF1}
\end{equation}
is the Green's function for the (positive) Laplacian with zero Dirichlet boundary conditions: 
\begin{align}
  -\Delta_x G_{\Lambda_L} (x,y)&=\delta_y(x) \quad \text{for} \quad x,y\in \mathrm{int}(\Lambda_L) \\
  G_{\Lambda_L}(x,y)&=0 \quad \text{for} \quad x\in \partial \Lambda_L, y\in \mathrm{int}(\Lambda_L).
\label{e:GF2}
\end{align}
In \eqref{e:GF1} we wrote $\overline{x}=x_1-ix_0$ for $x=x_1+ix_0$, while in \eqref{e:GF2} we wrote $\Delta_x$ for the Laplacian acting on the $x$ variable.
We also recall that the eigenfunctions of $-\Delta$ on $\Lambda_L$ (with zero boundary conditions) can be written explicitly in terms of Bessel functions and Fourier modes. More precisely, if for $n\geq 0$, $J_n$ is the $n$'th Bessel function of the first kind and for $k\geq 1$, $j_{n,k}$ is the $k$'th positive zero of $J_n$ (recall that $J_n(0)=0$ for $n>0$, so we do not count this zero), then for $n\in \Z$ and $k\geq 1$
\begin{equation}
e_{n,k}(x)=\frac{1}{\sqrt{\pi}L J_{|n|+1}(j_{|n|,k})}J_{|n|}(j_{|n|,k}\tfrac{r}{L})e^{in\theta}
\label{e:ef}
\end{equation}
are the eigenfunctions of $-\Delta$ on $\Lambda_L$ (with zero boundary conditions),
normalized so that they form an orthonormal basis of $L^2(\Lambda_L)$. Here we have written $x=re^{i\theta}$. 
In particular,
\begin{equation}
\int_{\Lambda_L} dx\, e_{n,k}(x)\overline{e_{m,l}(x)}=\delta_{n,m}\delta_{k,l}.
\label{e:on}
\end{equation}
To simplify notation, we set $j_{n,k} = j_{|n|,k}$ for $n<0$.
The eigenvalue associated to $e_{n,k}$ is then $\frac{j_{n,k}^2}{L^2}$:
\begin{equation}
-\Delta e_{n,k}=\frac{j_{n,k}^2}{L^2}e_{n,k}.
\label{e:ev}
\end{equation}
In terms of the eigenfunctions and eigenvalues, the Laplacian Green's function is
\begin{equation}
G_{\Lambda_L}(x,y)=\sum_{n\in \Z}\sum_{k=1}^\infty \frac{L^2}{j_{n,k}^2}e_{n,k}(x)\overline{e_{n,k}(y)},
\label{e:GF3}
\end{equation}
understood in the sense that for $g\in L^2(\Lambda_L)$, which we can write as $g=\sum_{n\in \Z}\sum_{k=1}^\infty g_{n,k}e_{n,k}$ (with convergence in $L^2(\Lambda_L)$ since $e_{n,k}$ form an orthonormal basis of $L^2(\Lambda_L)$), we have 
\begin{equation}
\int_{\Lambda_L} dy\, G_{\Lambda_L}(x,y)g(y)=\sum_{n\in \Z}\sum_{k=1}^\infty \frac{L^2}{j_{n,k}^2}g_{n,k}e_{n,k}(x),
\label{e:Gint}
\end{equation}
again with convergence in $L^2(\Lambda_L)$.

Since the Dirac Green's function is related to derivatives of the Laplacian Green's function, our construction of the Dirac Green's function also involves
another family of functions (which are also Laplacian eigenfunctions, but with different boundary conditions):
for $n,k\geq 1$, we define
\begin{equation}
f_{n,k}(x)=-2\frac{L}{j_{n-1,k}}\bar \partial e_{n-1,k}(x).
\label{e:ef2}
\end{equation}

The following lemma collects the properties of the $e_{n,k}$ and $f_{n,k}$ we need.
The stated estimates are not all optimal, but sufficient for our purposes.
We write
$\nabla^p g$ for the vector of all combinations of $p$ derivatives of $g$ and
$\|\nabla^p g\|_{L^\infty(K)}$ for the maximum of the $L^\infty(K)$ norm of all combinations of $p$ derivatives of $g$.

\begin{lemma}\label{lem:ebounds}
  For $n \in \Z$ and $k\geq 1$, the eigenvalues (up to the factor $L^2$) satisfy
  \begin{equation}\label{e:evbounds}
    j_{n,k}^2\geq n^2+\left(k-\tfrac{1}{4}\right)^2 \pi^2.
  \end{equation}
  The eigenfunctions satisfy (for some universal constant $C$)
  \begin{alignat}{3} \label{e:efbounds1}
    \|e_{n,k}\|_{L^\infty(\Lambda_L)}&\leq C \frac{j_{n,k}}{L}, &\qquad \|\nabla e_{n,k}\|_{L^\infty(\Lambda_L)}&\leq C \frac{j_{n,k}^2}{L^2},
    &\qquad &(n \in \Z, k \geq 1)
    ,
    \\
      \label{e:efbounds2}
    \|f_{n,k}\|_{L^\infty(\Lambda_L)} &\leq C \frac{j_{n-1,k}}{L}, &\qquad \|\nabla f_{n,k}\|_{L^\infty(\Lambda_L)}&\leq C \frac{j_{n-1,k}^3}{L^3}L
    ,&\qquad &(n,k \geq 1).
  \end{alignat}
  Moreover, for any $p,q\geq 0$, any compact $K \subset \mathrm{int}(\Lambda_L)$,
  and any $f \in C_c^\infty(\mathrm{int}(\Lambda_L))$,
  there are constants $C_{p,K,L}$ and $C_{p,q,f,L}$ such that 
  \begin{gather} \label{e:efboundsn}
    \|\nabla^p e_{n,k}\|_{L^\infty(K)}  + \|\nabla^p f_{n,k}\|_{L^\infty(K)} \leq C_{p,K,L} j_{n,k}^{1+p},
    \\
    \label{e:efdecay}
    \absa{\int dx\, f(x) \, \nabla^p e_{n,k}(x)}+
    \absa{\int dx\, f(x)\, \nabla^p f_{n,k}(x)}
    \leq C_{p,q,f,L} j_{n,k}^{-q}.
  \end{gather}
\end{lemma}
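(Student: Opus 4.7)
The plan is to derive each bound from the explicit Bessel representations \eqref{e:ef} and \eqref{e:ef2} together with classical facts about Bessel functions $J_m$ and their positive zeros $j_{m,k}$. I would handle the five estimates in the listed order, building on each one.

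For the eigenvalue bound \eqref{e:evbounds}, I would use the min-max (Courant--Fischer) characterisation after separating angular modes: $(j_{n,k}/L)^2$ is the $k$-th Dirichlet eigenvalue of the one-dimensional radial operator $B_n = -\partial_r^2 - r^{-1}\partial_r + n^2/r^2$ on $(0,L)$ with weight $r\,dr$. Writing $B_n = B_0 + n^2/r^2$ and using $\int_0^L (n^2/r^2)\, u^2 r\,dr \geq (n^2/L^2) \int_0^L u^2 r\,dr$, min-max applied subspace-by-subspace yields $j_{n,k}^2 \geq j_{0,k}^2 + n^2$, after which $j_{0,k} \geq (k - \tfrac14)\pi$ (a consequence of the alternating character of McMahon's asymptotic series for $j_{0,k}$) gives the stated bound.

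For the $L^\infty$ estimates \eqref{e:efbounds1}, the uniform bound $|J_{|n|}(x)| \leq 1$ for integer $|n| \geq 0$ and real $x$ reduces the task to a lower bound on the normalisation constant $|J_{|n|+1}(j_{|n|,k})|$. The classical uniform estimate $|J_{|n|+1}(j_{|n|,k})| \geq c\, j_{|n|,k}^{-1/2}$ (from the Hankel asymptotic expansion, combined with the identity $J_{n+1}(j_{n,k}) = -J_n'(j_{n,k})$) in fact yields $\|e_{n,k}\|_{L^\infty} \leq C j_{n,k}^{1/2}/L$, stronger than stated. The gradient estimate then follows from the Bessel recurrence $J_m'(x) = \tfrac12(J_{m-1}(x)-J_{m+1}(x))$ (giving the radial derivative an additional factor of order $j_{n,k}/L$) and the explicit $in$ prefactor for the angular derivative (the $|n|/r$ factor controlled by $|n| \leq j_{n,k}$ via \eqref{e:evbounds}, with smoothness of $J_{|n|}$ at $r=0$ absorbing the $1/r$). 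The bounds \eqref{e:efbounds2} on $f_{n,k}$ then follow immediately from \eqref{e:efbounds1} since \eqref{e:ef2} expresses $f_{n,k}$ as $(L/j_{n-1,k})$ times a first derivative of $e_{n-1,k}$; one more derivative handles $\nabla f_{n,k}$.

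The higher-derivative estimate \eqref{e:efboundsn} is obtained by iterating the Bessel recurrence: on any compact $K \subset \mathrm{int}(\Lambda_L)$ each additional derivative contributes at most a factor $j_{n,k}/L$, and the distance of $K$ to $\partial\Lambda_L$ is absorbed into $C_{p,K,L}$. For the rapid-decay estimate \eqref{e:efdecay}, the essential input is the eigenvalue equation $-\Delta e_{n,k} = (j_{n,k}/L)^2 e_{n,k}$ (and the analogue $-\Delta f_{n,k} = (j_{n-1,k}/L)^2 f_{n,k}$, which follows because $\bar\partial$ commutes with $\Delta$); since $f \in C_c^\infty(\mathrm{int}(\Lambda_L))$, integrating by parts $q$ times produces no boundary terms and gives
\[
  \int f\, \nabla^p e_{n,k}\, dx = (-1)^p \bigl(L^2/j_{n,k}^2\bigr)^q \int (\Delta^q \nabla^p f)\, e_{n,k}\, dx,
\]
which is bounded by $C_{p,q,f,L}\, j_{n,k}^{-2q}\|e_{n,k}\|_{L^1(\Lambda_L)}$; taking $q$ large enough yields any polynomial decay rate. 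The main obstacle is securing the uniform lower bound $|J_{|n|+1}(j_{|n|,k})| \geq c\, j_{|n|,k}^{-1/2}$ across all of the regimes $k \to \infty$ with $n$ fixed, $n \to \infty$ with $k$ fixed, and especially the transition region $j_{|n|,k} \sim |n|$ where the classical Hankel expansion degrades; I would handle this by invoking Olver-type uniform asymptotics across the transition layer.
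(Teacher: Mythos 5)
Your overall plan parallels the paper's (explicit Bessel representations plus classical facts), and your min-max derivation of \eqref{e:evbounds} is a nice self-contained alternative to the paper's citation to the Bessel literature, while your treatment of \eqref{e:efdecay} by iterating the eigenvalue equation is exactly the paper's argument. However, there are two genuine gaps.

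First, the uniform lower bound $|J_{|n|+1}(j_{|n|,k})| \geq c\, j_{|n|,k}^{-1/2}$ that you rely on (and invoke again when you claim the stronger $L^\infty$ bound $Cj_{n,k}^{1/2}/L$) is false. In the transition region $k$ fixed, $n\to\infty$, one has $j_{n,1}\sim n$ and, by the Airy/Olver uniform asymptotics you mention, $|J_{n+1}(j_{n,1})| = |J_n'(j_{n,1})| \asymp n^{-2/3}$, which is strictly smaller than $j_{n,1}^{-1/2}\asymp n^{-1/2}$. So invoking Olver asymptotics across the transition layer would disprove, not establish, your claimed bound. The correct route (and presumably what the papers cited in the proof provide) combines the Landau-type bound $\sup_x|J_n(x)|\lesssim n^{-1/3}$ with a sharper $n$-dependent lower bound on $|J_{n+1}(j_{n,k})|$; the stated bound $Cj_{n,k}/L$ is loose enough to accommodate the $n^{-2/3}$ behaviour, but your argument as written does not reach it. Second, your remark that ``one more derivative handles $\nabla f_{n,k}$'' hides a real difficulty: $\bar\partial f_{n,k} = -2\frac{L}{j_{n-1,k}}\bar\partial^2 e_{n-1,k}$ involves a \emph{second} derivative of $e_{n-1,k}$ up to the boundary of the disk, and no uniform second-derivative estimate is available from the cited eigenfunction bounds (which stop at first derivatives). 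This is why the paper handles the $\partial f_{n,k}$ component via $2\partial f_{n,k} = \frac{j_{n-1,k}}{L} e_{n-1,k}$ (Laplacian reduces the order) but must resort to a separate Green's function representation and integration by parts for $\bar\partial f_{n,k}$, yielding the weaker-than-naive exponent $j_{n-1,k}^3$ in \eqref{e:efbounds2}. Your sketch would need to fill that step. Finally, for \eqref{e:efboundsn} the paper cites interior elliptic regularity, which is cleaner than iterating the Bessel recurrence (the latter generates $1/r$ factors from the angular derivatives that require some care, even if the singularities ultimately cancel), but both routes do work on compact $K\subset\mathrm{int}(\Lambda_L)$.
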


\begin{proof}
  The bounds on the $j_{n,k}$ follow, for example, from \cite[Theorem 3]{MR255909} and the main result of \cite{MR442316}.
  The bounds  \eqref{e:efbounds1} on the eigenfunctions $e_{n,k}$ and their derivatives $\nabla e_{n,k}$ follow, for example, from
  \cite[Theorem~1]{MR1924468} and \cite[Corollary~1.1]{MR3031783} (as well as scaling by $L$).
  The claim for $\|f_{n,k}\|_{L^\infty(\Lambda_L)}$ in \eqref{e:efbounds2} follows directly from the definition of $f_{n,k}$ in \eqref{e:ef2}
  combined with the gradient estimate from \eqref{e:efbounds1}. 

  For the bound on the gradient of $f_{n,k}$ in \eqref{e:efbounds2}, we note that since
  \begin{equation} 
    2\partial f_{n,k}=-\frac{L}{j_{n-1,k}}\Delta e_{n-1,k}=\frac{j_{n-1,k}}{L}e_{n-1,k},
  \end{equation}
  by \eqref{e:efbounds1} we have $\|\partial f_{n,k}\|_{L^\infty(\Lambda_L)}\leq C \frac{j_{n-1,k}^2}{L^2}$, so
  that it suffices to control $\|\bar \partial f_{n,k}\|_{L^\infty(\Lambda_L)}$.	
  For this purpose, using the eigenfunction property \eqref{e:ev}, we see from
   \eqref{e:Gint} that 
  \begin{align}
	f_{n,k}(x)&=-2\frac{j_{n-1,k}}{L}\bar \partial \int_{\Lambda_L} dy\, G_{\Lambda_L}(x,y)e_{n-1,k}(y)\\
	&=\frac{j_{n-1,k}}{L}\int_{\Lambda_L} dy\, \left(\frac{1}{2\pi}\frac{1}{\bar x-\bar y}-\frac{1}{2\pi}\frac{y}{L}\frac{1}{\frac{\bar x y}{L}-L}\right)e_{n-1,k}(y)\nnb
	&=-\frac{j_{n-1,k}}{L}\int_{\Lambda_L}dy\, \left(\frac{1}{\pi}\bar \partial_y\log |x-y|+\frac{1}{\pi}\frac{y^2}{L^2}\partial_y\log |x-L^2 /\overline{y}|\right)e_{n-1,k}(y)\nnb
	&=\frac{j_{n-1,k}}{L}\int_{\Lambda_L}dy\,\frac{1}{\pi}\log |x-y|\bar\partial e_{n-1,k}(y)
   \nnb &\qquad\qquad 
   +\frac{1}{\pi L^2}\frac{j_{n-1,k}}{L}\int_{\Lambda_L}dy\,\log |x-\tfrac{L^2}{\overline{y}}|\partial (y^2 e_{n-1,k}(y)),\notag
  \end{align}
	where in the last step we integrated by parts and made use of the fact that $e_{n-1,k}$ vanishes on the boundary. Thus we find for some universal constant $C>0$
	\begin{align}
	|\bar \partial f_{n,k}(x)|&\leq C \frac{j_{n-1,k}}{L}\|\nabla e_{n-1,k}\|_{L^\infty(\Lambda_L)}\int_{\Lambda_L}dy\,\frac{1}{|x-y|}\\
	&\quad +C\frac{j_{n-1,k}}{L^3}(L\|e_{n-1,k}\|_{L^\infty(\Lambda_L)}+L^2\|\nabla e_{n-1,k}\|_{L^\infty(\Lambda_L)})\int_{\Lambda_L} dy\, \frac{1}{|x-\frac{L^2}{\overline{y}}|}.\notag
	\end{align}
	Since $x\in \Lambda_L$, for the first integral we readily get the bound
	\begin{equation}
	\int_{\Lambda_L}dy\,\frac{1}{|x-y|}\leq \int_{|x-y|\leq 2L} dy\,\frac{1}{|x-y|}\leq CL
	\end{equation}
	for a universal constant $C$. For the second integral, one finds on the other hand by rotational invariance that 
	\begin{equation}
	\int_{\Lambda_L}dy\,\frac{1}{|x-\frac{L^2}{\overline{y}}|}\leq \int_{\Lambda_L}dy\,\frac{1}{|L-\frac{L^2}{\overline{y}}|}=L\int_{|u|\leq 1}\frac{du}{|1-\frac{1}{\overline{u}}|}.
	\end{equation}
	The last integral here is simply some finite constant. Putting everything together and using \eqref{e:efbounds1}
        (and~\eqref{e:evbounds} to deduce that $j_{n-1,k}^2\leq j_{n-1,k}^3$), we see that for some universal constant $C>0$, 
	\begin{equation}
	\|\bar \partial f_{n,k}\|_{L^\infty(\Lambda_L)}\leq C L \frac{j_{n-1,k}^3}{L^3},
	\end{equation}
	which leads to the claim, as we discussed before.

        The bounds  \eqref{e:efboundsn}  on the higher derivatives in the interior are a standard consequence
        of elliptic regularity theory for the Laplace operator.
        For example, one may apply \cite[(4.19)]{MR1814364} iteratively.
  
        To see the decay of \eqref{e:efdecay}, by integrating by parts, it suffices to check this for $p=0$ and for $e_{n,k}$ only.
        In this case, that $e_{n,k}$ is a Laplace eigenfunction and integration by parts show that, for any $q$,
        \begin{equation}
          \absa{\int dx\, f(x) \, e_{n,k}(x)}
          = \left(\frac{L^2}{j_{n,k}^2}\right)^q \absa{\int dx\, (-\Delta)^q f(x) e_{n,k}(x)}
          \leq C \left(\frac{L^2}{j_{n,k}^2}\right)^{q-1/2} \|\Delta^q f\|_{L^\infty},
        \end{equation}
        which gives the claimed bound. 
\end{proof}

\subsection{The building blocks of the  Dirac Green's function -- I}

We next introduce the key building blocks of our construction of the Dirac Green's function with a finite volume mass term.
We begin with the following function which is the projection of the Laplacian Green's function to non-positive
Fourier modes related to the $x$-variable. More precisely, for $x,y\in \Lambda_{L}$, let 
\begin{equation}
E_1(x,y)=\sum_{n=0}^\infty \sum_{k=1}^\infty \frac{L^2}{j_{n,k}^2}e_{-n,k}(x)\overline{e_{-n,k}(y)},
\label{e:E1a}
\end{equation}
where convergence is understood in $L^2(\Lambda_L\times \Lambda_L)$.
We then  define inductively, for $j\geq 1$, the functions
\begin{align}
  \label{e:Ej}
  E_{j+1}(x,y)&=\int_{\Lambda_L}du\,G_{\Lambda_L}(x,u)E_j(u,y),
  \\
  \label{e:Fj}
F_j(x,y)&=4\bar \partial_x \partial_y \overline{E_{j+1}(x,y)}.
\end{align}
That the derivatives indeed exist is a consequence of the explicit formulas we will derive below.
These show that, for $y\in\Lambda_L$, the funtions $E_1(x,y)$ and $F_1(x,y)$ are defined pointwise for $x\neq y$,
and that 
$E_j(x,y)$ and $F_j(x,y)$ with $j>1$ are defined pointwise for all $x,y\in\Lambda_L$.
We also note that, by \eqref{e:GF3},
\begin{align}
  E_j(x,y)&=\sum_{n=0}^\infty \sum_{k=1}^\infty \left(\frac{L^2}{j_{n,k}^2}\right)^je_{-n,k}(x)\overline{e_{-n,k}(y)},
            \label{e:Ej2}
\end{align}
as an element of $L^2(\Lambda_L\times \Lambda_L)$. Based on the definition of $F_j$ and $f_{n,k}$, one then expects that also
\begin{align}
  \label{e:F2}
  F_j(x,y)&=\sum_{n=1}^\infty \sum_{k=1}^\infty \left(\frac{L^2}{j_{n-1,k}^2}\right)^jf_{n,k}(x)\overline{f_{n,k}(y)}.
\end{align}
This is indeed true, and we prove it in Lemma \ref{le:F2} (for $j\geq 3$).

We begin calculating $E_1$ and $F_1$ using \eqref{e:GF1} in the next lemma.

\begin{lemma} 
	\label{le:E1}
	For (almost every) $x,y\in \mathrm{int}(\Lambda_L)$ with $x\neq y$,
	\begin{align}
          \label{e:E1}
          E_1(x,y)&=\begin{cases}
            -\frac{1}{2\pi}\log |y|-\frac{1}{4\pi}\log \big(1-\frac{\bar x}{\bar y}\big)+\frac{1}{2\pi}\log L+\frac{1}{4\pi}\log \big(1-\frac{\bar x y}{L^2}\big), & |x|<|y|\\
            -\frac{1}{2\pi}\log |x|-\frac{1}{4\pi}\log \big(1-\frac{y}{x}\big)+\frac{1}{2\pi}\log L+\frac{1}{4\pi}\log \big(1-\frac{\bar x y}{L^2}\big), & |x|>|y|,
	\end{cases}
          \\
          \label{e:F1}
          F_1(x,y)&=\begin{cases}
          -\frac{1}{4\pi}\log \big(1-\frac{x}{y}\big), & |x|<|y|\\
          	-\frac{1}{4\pi}\log \big(1-\frac{\bar y}{\bar x}\big), & |x|>|y|,
	\end{cases}
      \end{align}
      where the branches of the logarithms are understood to be given by the series expansion of $\log(1+z)$ for $|z|<1$,
      and, for $|x|=|y|$ with $x\neq y$, $E_1$ and $F_1$ are defined by continuity.
      In particular,
      \begin{equation} \label{e:E1plusF1}
        E_1(x,y)+F_1(x,y) = -\frac{1}{2\pi} \log |x-y| + \frac{1}{2\pi} \log (L^2-\bar x y).
      \end{equation}
\end{lemma}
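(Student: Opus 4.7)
The plan is to compute $E_1$ and $F_1$ by projecting and summing the respective eigenfunction series explicitly, using the angular Fourier decomposition together with standard Bessel-function identities. The main external inputs are the closed form \eqref{e:GF1} for $G_{\Lambda_L}$ and the eigenfunction bounds of Lemma~\ref{lem:ebounds}.

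First I would write $x=re^{i\theta_x}$, $y=\rho e^{i\theta_y}$ and observe that, since $e_{-n,k}(x)\overline{e_{-n,k}(y)}$ contains the angular factor $e^{-in(\theta_x-\theta_y)}$, the series \eqref{e:E1a} defining $E_1$ is precisely the non-positive Fourier-mode projection (in $\theta_x-\theta_y$) of the full series \eqref{e:GF3} for $G_{\Lambda_L}$, with the zero mode retained once rather than twice. On the other hand, starting from \eqref{e:GF1} and using the identities $\log|x-y|=\log|y|+\tfrac12[\log(1-x/y)+\log(1-\bar x/\bar y)]$ for $|x|<|y|$ (and its analogue with $x,y$ swapped for $|x|>|y|$) together with $\log|L-\bar xy/L|=\log L+\tfrac12[\log(1-\bar xy/L^2)+\log(1-x\bar y/L^2)]$, and then expanding each $\log(1-w)=-\sum_{n\geq1}w^n/n$, one reads off the Fourier modes of $G_{\Lambda_L}$ directly: modes of index $+n$ come from $\log(1-x/y)$ and $\log(1-x\bar y/L^2)$, and modes of index $-n$ from their complex conjugates. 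Keeping only the non-positive modes gives \eqref{e:E1} in both cases; the case $|x|=|y|$ with $x\neq y$ is then handled by continuous extension from either side.

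For $F_1$, I would first compute $f_{n,k}$ in closed form from \eqref{e:ef2}, using $\bar\partial=\tfrac12 e^{i\theta}(\partial_r+(i/r)\partial_\theta)$ in polar coordinates together with the Bessel identity $J'_\nu(s)-(\nu/s)J_\nu(s)=-J_{\nu+1}(s)$. This yields
\[
  f_{n,k}(x)=\frac{1}{\sqrt{\pi}\,L\,J_n(j_{n-1,k})}\,J_n(j_{n-1,k}|x|/L)\,e^{in\theta_x}\qquad(n,k\geq 1).
\]
Using Lemma~\ref{lem:ebounds} to justify differentiating the series \eqref{e:Ej2} for $E_2$ term by term, this gives \eqref{e:F2} for $j=1$, i.e.\ $F_1(x,y)=\sum_{n\geq1}h_n(r,\rho)\,e^{in(\theta_x-\theta_y)}$ with radial coefficient $h_n(r,\rho)=\pi^{-1}\sum_{k\geq1}[j_{n-1,k}^2J_n(j_{n-1,k})^2]^{-1}J_n(j_{n-1,k}r/L)J_n(j_{n-1,k}\rho/L)$. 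The key observation is then that $\{J_n(j_{n-1,k}\cdot/L)\}_{k\geq1}$ is a complete orthogonal family in $L^2([0,L],r\,dr)$ for the Robin-type boundary condition $r\partial_r u+nu=0$ at $r=L$ (inherited from $J_{n-1}(j_{n-1,k})=0$ together with $J'_n=J_{n-1}-(n/\cdot)J_n$), with Lommel normalisation equal to $\tfrac12 J_n(j_{n-1,k})^2$. Consequently $h_n(\cdot,\rho)$ is exactly the Green's function of the ordinary operator $L_n u=-u''-u'/r+(n^2/r^2)u$ on $[0,L]$ with source $\rho^{-1}\delta(\cdot-\rho)$, boundedness at $r=0$, and the Robin condition at $r=L$. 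Since $L_n r^{\pm n}=0$ and the Robin condition selects $r^{-n}$, a standard Wronskian computation gives $h_n(r,\rho)=(4\pi n)^{-1}(\min(r,\rho)/\max(r,\rho))^{n}$. Summing the resulting angular series via $-\log(1-w)=\sum_{n\geq1}w^n/n$ (with $w=x/y$ for $|x|<|y|$ and $w=\bar y/\bar x$ for $|x|>|y|$, noting $(\rho/r)^n e^{in(\theta_x-\theta_y)}=(\bar y/\bar x)^n$) then yields \eqref{e:F1}.

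To obtain the final identity \eqref{e:E1plusF1}, one adds \eqref{e:E1} and \eqref{e:F1} and uses $\log(1-x/y)+\log(1-\bar x/\bar y)=2\log|1-x/y|$ (and its analogue for $|x|>|y|$) to combine the two near-diagonal logarithms into $-(2\pi)^{-1}\log|x-y|$; the remaining boundary pieces collect into a logarithm of $L^2-\bar x y$. The principal technical subtlety throughout is that the defining series converge only in $L^2$; I would handle this by truncating at $n,k\leq N$, carrying out the projection/differentiation/summation identifications at the level of these smooth truncations, and then passing to the limit with uniform convergence on compact subsets of $\mathrm{int}(\Lambda_L)^2\setminus\{x=y\}$ via the eigenfunction bounds \eqref{e:efboundsn}--\eqref{e:efdecay}.
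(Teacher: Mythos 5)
Your computation of $E_1$ is essentially the paper's argument in a slightly different dress: both read off the non-positive angular Fourier modes of $G_{\Lambda_L}$ from its explicit formula \eqref{e:GF1}, the paper phrasing this as a verification of orthogonality relations and you as a direct decomposition of the logarithms.

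For $F_1$, however, your route is genuinely different from the paper's, and it has a real gap. The paper never attempts to use the eigenfunction series \eqref{e:F2} at $j=1$ (Lemma~\ref{le:F2} is proved only for $j\geq 3$, precisely because only then does the series converge absolutely). Instead the paper writes $F_1(x,y) = 4\int_{\Lambda_L}du\,(\bar\partial_x G_{\Lambda_L}(x,u))(\partial_y\overline{E_1(u,y)})$, substitutes the closed forms of both factors, and evaluates the $u$-integral directly via geometric series; no eigenfunction sum is ever formed. Your plan, by contrast, is to apply $4\bar\partial_x\partial_y$ term by term to the $E_2$ series and then resum over $k$ via a Sturm--Liouville Green's function identity. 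The closed-form $f_{n,k}$, the identification of $h_n$ with the radial Green's function, and the Wronskian evaluation $h_n(r,\rho)=(4\pi n)^{-1}(r_</r_>)^n$ are all correct. The problem is the interchange of differentiation with the double sum: after two derivatives the general term of the $E_2$ series is $O(1)$ (each derivative costs one power of $j_{n,k}/L$, which exactly cancels the $j_{n,k}^{-4}$ decay), so the doubly-differentiated series does not converge absolutely, and the bounds \eqref{e:efboundsn}--\eqref{e:efdecay} you cite in the final paragraph give only polynomial growth $j_{n,k}^{1+p}$ and rapid decay when tested against $C_c^\infty$ functions---neither gives the uniform convergence on compacta off the diagonal that your truncation argument needs. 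What actually makes your computation work is not those bounds but the conditional, pointwise convergence (for $r\neq\rho$) of the Sturm--Liouville eigenfunction expansion of $h_n$, which is a separate classical fact not contained in Lemma~\ref{lem:ebounds} and which you would need to invoke or prove; even then, one must identify the resummed object with the $F_1$ that is actually \emph{defined} via $4\bar\partial_x\partial_y\overline{E_2}$, for instance by checking it reproduces $F_2$ under convolution with $G_{\Lambda_L}$. The trade-off: your approach is structurally illuminating (it explains the $n^{-1}(r_</r_>)^n$ form) and avoids doing the $u$-integral, whereas the paper's is analytically softer---everything reduces to absolutely convergent geometric series and no conditional spectral convergence is needed.
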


\begin{proof}
	Let us write $\widetilde E_1$ for the right hand side of the claim. Using \eqref{e:GF1}, we see that 
	\begin{equation}
	G_{\Lambda_L}(x,y)-\widetilde E_1(x,y)=\begin{cases}
	-\frac{1}{4\pi}\log\left(1-\frac{x}{y}\right)+\frac{1}{4\pi}\log\left(1-\frac{x\bar y}{L^2}\right), & |x|<|y|\\
	-\frac{1}{4\pi}\log\left(1-\frac{\bar y}{\bar x}\right)+\frac{1}{4\pi}\log\left(1-\frac{x\bar y}{L^2}\right), & |x|>|y|
	\end{cases}. 
	\end{equation}
	Going into polar coordinates, one can readily check from this
        (since there are only strictly positive Fourier modes when one expands the logarithms) that for $n\geq 0$ and $k\geq 1$
	\begin{equation}
	\int_{\Lambda_L}dx\,(G_{\Lambda_L}(x,y)-\widetilde E_1(x,y))\overline{e_{-n,k}(x)}=0.
	\end{equation}
	Similarly one finds in polar coordinates that for $n>0$ and $k\geq 1$ (again since there are only non-positive Fourier modes in the expansion of the logarithms)
	\begin{equation}
	\int_{\Lambda_L}dx\,\widetilde E_1(x,y)\overline{e_{n,k}(x)}=0.
	\end{equation}
	From these two facts, one finds immediately that $\widetilde E_1=E_1$.
      
        For the claim for $F_1$,
        note that from the identity for $E_1$,
        for $y,u\in \mathrm{int}(\Lambda_L)$,
	\begin{equation}
          \partial_y\overline{E_1(u,y)}=
          - \mathbf 1\{|u|<|y|\} \frac{1}{4\pi}\frac{1}{y-u}                                          
          =-\mathbf 1\{|u|<|y|\}\frac{1}{4\pi}\frac{1}{y}\sum_{j=0}^\infty \left(\frac{u}{y}\right)^j.
	\end{equation}
	Moreover, we have for $x,u\in \mathrm{int}(\Lambda_L)$,
	\begin{equation}
	\bar \partial_x G_{\Lambda_L}(x,u)=-\frac{1}{4\pi}\frac{1}{\bar x-\bar u}-\frac{u}{4\pi L^2}\frac{1}{1-\frac{\bar xu}{L^2}}.
	\end{equation}
	In polar coordinates, one readily checks that the second term on the right-hand side is orthogonal to $\partial_y\overline{E_1(u,y)}$ (when integrated over $u$). One then finds
	\begin{align}
	F_1(x,y)&=\int_{|u|<|y|}du\,\frac{1}{4\pi^2}\frac{1}{\bar x-\bar u}\frac{1}{y-u}\nnb
	&=\begin{cases}
	\frac{1}{\bar x y}\frac{1}{4\pi^2}\sum_{j,k=0}^\infty \frac{1}{\bar x^j y^k}\int_{|u|<|y|}du\,\bar u^j u^k, & |x|>|y|\\
	\frac{1}{\bar x y}\frac{1}{4\pi^2}\sum_{j,k=0}^\infty \frac{1}{\bar x^j y^k}\int_{|u|<|x|}du\,\bar u^j u^k
        \\\qquad 
        -\frac{1}{y}\frac{1}{4\pi^2}\sum_{j,k=0}^\infty \frac{\bar x^j}{y^k}\int_{|x|<|u|<|y|}du\,\bar u^{-j-1}u^k, & |x|<|y|
	\end{cases}\nnb
	&=\begin{cases}
	\frac{1}{4\pi}\sum_{j=0}^\infty \frac{1}{\bar x^{j+1}y^{j+1}}\frac{1}{j+1}|y|^{2j+2}, & |y|<|x|\\
	\frac{1}{4\pi}\sum_{j=0}^\infty \frac{1}{\bar x^{j+1}y^{j+1}}\frac{1}{j+1}|x|^{2j+2}, & |x|<|y|
	\end{cases}\nnb
	&=\begin{cases}
	-\frac{1}{4\pi}\log \big(1-\frac{\bar y}{\bar x}\big), & |y|<|x|\\
	-\frac{1}{4\pi}\log \big(1-\frac{x}{y}\big), & |y|>|x|
	\end{cases}.
	\end{align}
        
	The claim that the values of $E_1(x,y)$ and $F_1(x,y)$ for $|x|=|y|$ with $x\neq y$ are given by continuity follows
        by noting that \eqref{e:Ej} and \eqref{e:Fj} are continuous away from the diagonal.
        Finally, \eqref{e:E1plusF1} is a direct computation. This concludes the proof.
\end{proof}

\begin{lemma} \label{le:E12bound}
  There exists a polynomial $P=P(L)$ such that for $L\geq 1$ and all $x,y \in \mathrm{int}(\Lambda_L)$ with $x \neq y$,
  \begin{gather}
    \label{e:E1bound1}
    |E_1(x,y)|+|F_1(x,y)| \leq P(L)(1+|\log|x-y||),
    \\
    \label{e:E1bound2}
    |\partial_xE_1(x,y)|+|\partial_xF_1(x,y)| \leq \frac{P(L)}{|x-y|},
    \\
    \label{e:E2bound}
    |E_2(x,y)|+|F_2(x,y)|+ |\partial_xE_2(x,y)|+|\partial_xF_2(x,y)| \leq P(L).
  \end{gather}
\end{lemma}

\begin{proof}
  We begin with bounding $E_1$.
  By symmetry (up to complex conjugation), we can assume that $|x|<|y|$.
  We start from the elementary inequality
  \begin{equation}
    \left|-\frac{1}{2\pi}\log |y|-\frac{1}{4\pi}\log\left(1-\frac{\bar x}{\bar y}\right)\right|\leq
    C+C|\log |y|| + C |\log|x-y||.
  \end{equation}
  Since $|y|>|x|$, we have $|y|\geq \frac{1}{2}|y-x|$, so we conclude that for some (possibly different) $C>0$ that
  \begin{equation}
    \left|-\frac{1}{2\pi}\log |y|-\frac{1}{4\pi}\log\left(1-\frac{\bar x}{\bar y}\right)\right|\leq
    C+C\log L + C |\log|x-y||.
  \end{equation}
  Similarly,
  $|x|<|y|$ implies $|L^2-x\overline{y}|=|y||\frac{L^2}{|y|^2}y-x|\geq |y||y-x|\geq \frac{1}{2}|y-x|^2$, which leads to 
  \begin{equation}
    \left|\frac{1}{2\pi}\log L+\frac{1}{4\pi}\log\left(1-\frac{\bar xy}{L^2}\right)\right|\leq C+C |\log|L^2-\bar x y||
    \leq C+C\log L + C|\log|x-y||.
  \end{equation}
  Similar reasoning readily proves the analogous bound for $F_1$.
  For $\partial_x E_1$ and $\partial_x F_2$, we obtain the required bound by noting that these derivatives are explicitly given by $1/(4\pi(x-y))$ or $0$
  by differentiating \eqref{e:E1} and \eqref{e:F1}.
       
  In order to bound $E_2$, we start from the definition
  \begin{equation}
    E_2(x,y) = \int_{\Lambda_L}du\, G_{\Lambda_L}(x,u) E_1(u,y).
  \end{equation}
  Using the bound for $E_1$ and that $|G_{\Lambda_L}(x,u)| \leq C(1+|\log |x-u||)$,
  we readily see that $E_2$ is uniformly bounded by a polynomial in $L$.
  For the derivative,
  using that $|\partial_xG_{\Lambda_L}(x,u)| \leq P(L)/|x-u|$,
  it similarly follows from the above bound for $E_1$ that
  \begin{equation}
    |\partial_x E_2(x,y)| \leq P(L) \int_{\Lambda_L} du\,\frac{1}{|x-u|}  E_1(u,y)  \leq P(L), 
  \end{equation}
  where the two polynomials $P(L)$ can be different.
  Again the bounds for $F_2$ are similar.
\end{proof}
  
We next show that the $E_j$ and $F_j$ are bounded for $j \geq 3$.

\begin{lemma}	\label{le:F2}
  For $j\geq 3$, $E_j$ and $F_j$ are given by the series 
  \eqref{e:Ej2} and \eqref{e:F2}, which converge uniformly in $\Lambda_L\times \Lambda_L$.
\end{lemma}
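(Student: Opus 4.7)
The plan is to first establish uniform convergence of the series in \eqref{e:Ej2} for $j\geq 3$, then identify this uniform limit with $E_j$ as defined iteratively in \eqref{e:Ej}, and finally to deduce the analogous statement for $F_j$ by termwise differentiation of the conjugated $E_{j+1}$ series.

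For the uniform convergence of \eqref{e:Ej2}, I would combine the pointwise bound $\|e_{-n,k}\|_{L^\infty(\Lambda_L)} \leq C j_{n,k}/L$ and the eigenvalue lower bound $j_{n,k}^2 \geq n^2 + (k-\tfrac{1}{4})^2\pi^2$ from Lemma~\ref{lem:ebounds} to dominate the general term by $C^2 L^{2j-2}/j_{n,k}^{2j-2}$. For $j\geq 3$ this is bounded by a summable quantity of the form $C_L/(n^2+k^2)^2$, so the series converges absolutely and uniformly and defines a continuous function on $\Lambda_L\times\Lambda_L$.

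To identify this continuous limit with the iteratively defined $E_j$, I would observe first that $E_1$ is an element of $L^2(\Lambda_L\times\Lambda_L)$ with expansion \eqref{e:E1a} and that the integral operator with kernel $G_{\Lambda_L}$ acts on the eigenfunction $e_{-n,k}$ as multiplication by $L^2/j_{n,k}^2$ (by \eqref{e:Gint}). Applying this operator $j-1$ times, with Fubini justifying the interchange with the $L^2$ sum, yields that the iteratively defined $E_j$ has \eqref{e:Ej2} as its $L^2$ expansion. For $j\geq 3$, the uniform bound above also gives $L^2$ convergence, so the continuous sum of the series agrees with $E_j$ almost everywhere; a dominated convergence argument applied to the defining integral \eqref{e:Ej} (using log-integrability of $G_{\Lambda_L}$ and boundedness of $E_{j-1}$ supplied by Lemma~\ref{le:E12bound} and iteration) shows that the iteratively defined $E_j$ is itself continuous on $\Lambda_L\times\Lambda_L$ for $j\geq 2$, so the two continuous functions coincide everywhere.

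For $F_j$, I would differentiate the series for $\overline{E_{j+1}}$ termwise. Using $\overline{e_{-n,k}} = e_{n,k}$ and the definition \eqref{e:ef2} of $f_{n,k}$ (together with $\bar\partial e_{n,k} = -\tfrac{j_{n,k}}{2L} f_{n+1,k}$), one obtains the identities
\[
\bar\partial_x \overline{e_{-n,k}(x)} = -\tfrac{j_{n,k}}{2L} f_{n+1,k}(x), \qquad \partial_y e_{-n,k}(y) = -\tfrac{j_{n,k}}{2L}\,\overline{f_{n+1,k}(y)},
\]
so formal termwise application of $4\bar\partial_x\partial_y$ to \eqref{e:Ej2} (conjugated, with $j$ replaced by $j+1$), followed by the relabeling $m=n+1$, produces exactly the series \eqref{e:F2}. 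Using $\|f_{m,k}\|_{L^\infty(\Lambda_L)} \leq C j_{m-1,k}/L$, the terms of this series are bounded by $C^2 L^{2j-2}/j_{m-1,k}^{2j-2}$, summable for $j\geq 3$ by the same estimate as in the first paragraph. I would then justify the interchange of differentiation and summation by applying the standard theorem on differentiation of uniformly convergent series twice: first to $\bar\partial_x$ (the partially differentiated series has terms of size $L^{2j-1}/j_{n,k}^{2j-1}$, summable for $j\geq 2$) and then to $\partial_y$, where summability at exponent $2j-2$ first requires $j\geq 3$. The main obstacle I anticipate is the preceding identification step: passing between the $L^2$-theoretic picture natural for $E_1$ and the pointwise continuous picture needed to evaluate the iterated convolutions at arbitrary points. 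The final termwise-differentiation argument is essentially bookkeeping involving conjugates and the index shift $n\mapsto n+1$.
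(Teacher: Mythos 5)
Your proof is correct and takes essentially the same approach as the paper, which simply says the uniform convergence and termwise differentiability "follow immediately" from the eigenfunction/eigenvalue bounds \eqref{e:efbounds1}, \eqref{e:efbounds2}, \eqref{e:evbounds}. You fill in the details the paper omits, in particular the upgrade from the $L^2$ identification of $E_j$ with the series to a genuine pointwise one via continuity of both sides, and the two-stage application of the termwise-differentiation theorem with the correct summability thresholds $2j-1>2$ and $2j-2>2$.
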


\begin{proof}
  For $E_j$, we already saw that it agrees with the series in an $L^2$-sense.
    Let us now argue that the series converge uniformly and the $E_j$ series can be differentiated termwise (which implies that $F_j$ will be given by the corresponding series.) This follows immediately from applying the bounds \eqref{e:efbounds1}, \eqref{e:efbounds2}, and \eqref{e:evbounds} in the series representations \eqref{e:Ej2} and \eqref{e:F2}.
\end{proof}

Next we note that the $E_j$ and $F_j$ are smooth when tested against a smooth test function that is compactly supported in $\Lambda_L$.
        
\begin{lemma}
  For any $j \geq 1$ and $f\in C_c^\infty(\mathrm{int}(\Lambda_L))$, 
  \begin{equation} \label{e:fEjFfsmooth}
    y\mapsto \int_{\Lambda_L} dx\, f(x)\, E_j(x,y)\in C^\infty(\mathrm{int}(\Lambda_L)),\qquad
    y\mapsto \int_{\Lambda_L} dx\, f(x)\, F_j(x,y)\in C^\infty(\mathrm{int}(\Lambda_L)).
  \end{equation}
\end{lemma}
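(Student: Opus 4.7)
The plan is to use the Fourier--Bessel eigenfunction representations of $E_j$ and $F_j$ together with the rapid decay of the Fourier coefficients of test functions in $C_c^\infty(\mathrm{int}(\Lambda_L))$. Setting
\begin{equation*}
  g_j(y) := \int_{\Lambda_L} f(x)\, E_j(x,y)\, dx, \qquad \tilde g_j(y) := \int_{\Lambda_L} f(x)\, F_j(x,y)\, dx,
\end{equation*}
the target is to establish the pointwise identity
\begin{equation*}
  g_j(y)=\sum_{n=0}^{\infty}\sum_{k=1}^{\infty}\frac{L^{2j}}{j_{n,k}^{2j}}\Big(\int f(x)\,e_{-n,k}(x)\,dx\Big)\overline{e_{-n,k}(y)},
\end{equation*}
and the analogue for $\tilde g_j$ with $f_{n,k}$ in place of $e_{-n,k}$, and then to show that these series may be differentiated termwise any number of times on any compact $K\subset\mathrm{int}(\Lambda_L)$.

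For the last step, the key inputs are \eqref{e:efdecay}, giving $|\int f(x)\,e_{-n,k}(x)\,dx|\le C_q j_{n,k}^{-q}$ for every $q\ge 0$, together with \eqref{e:efboundsn} showing that $\|\nabla^p \overline{e_{-n,k}}\|_{L^\infty(K)}$ grows only polynomially in $j_{n,k}$, while \eqref{e:evbounds} controls the growth and clustering of the eigenvalues. Choosing $q$ sufficiently large relative to $p$ and $j$ then renders every $p$-th derivative of the series absolutely uniformly convergent on $K$, so the limit function lies in $C^\infty(\mathrm{int}(\Lambda_L))$.

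The main obstacle is to justify the identity above in the low-index cases $j=1,2$, where the kernel series \eqref{e:Ej2} is only guaranteed to converge in $L^2(\Lambda_L\times\Lambda_L)$. For $j\ge 3$, the identity follows directly from Fubini and the uniform convergence in Lemma~\ref{le:F2}. For $j=1,2$, applying Fubini to the $L^2$-convergent partial sums of the kernel identifies the right-hand side of the above display with $g_j$ in $L^2(\Lambda_L)$; since we have already argued this series converges to a continuous function of $y$, and $g_j$ itself is continuous on $\mathrm{int}(\Lambda_L)$ by dominated convergence applied to the bounds $|E_1(x,y)|\le P(L)(1+|\log|x-y||)$ and $|E_2(x,y)|\le P(L)$ from Lemma~\ref{le:E12bound}, two continuous functions agreeing almost everywhere must coincide pointwise.

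The analogous conclusion for $\tilde g_j$ with $j\ge 3$ follows in the same manner from \eqref{e:F2}. For $j=1,2$, we instead invoke the defining formula $F_j=4\bar\partial_x\partial_y\overline{E_{j+1}}$ and integrate by parts in $x$; this is legitimate since $f\in C_c^\infty(\mathrm{int}(\Lambda_L))$ has support away from $\partial\Lambda_L$ and $\partial_x E_{j+1}(x,y)$ is locally integrable in $x$ uniformly in $y$ by Lemma~\ref{le:E12bound}. This reduces the claim to the smoothness of $y\mapsto\int \overline{\bar\partial f(x)}\,E_{j+1}(x,y)\,dx$, which follows from the already-established result applied to the test function $\overline{\bar\partial f}=\partial\bar f\in C_c^\infty(\mathrm{int}(\Lambda_L))$ and index $j+1\ge 2$.
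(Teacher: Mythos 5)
Your proposal takes essentially the same route as the paper's own proof: expand $E_j$ and $F_j$ in the Fourier--Bessel eigenfunction basis, control each summand by combining the rapid decay \eqref{e:efdecay} of $\int f\,e_{-n,k}$ with the polynomial growth \eqref{e:efboundsn} of $\nabla^p e_{-n,k}$ and $\nabla^p f_{n,k}$ on interior compacts, and then differentiate the resulting series termwise using the eigenvalue lower bound \eqref{e:evbounds}. The paper's proof packages this into the single estimate \eqref{e:feebd} and stops there.

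The place where you go further is in justifying that the smeared kernels actually equal the corresponding eigenfunction series, especially for the low indices $j=1,2$, where the kernel series \eqref{e:Ej2} is only given in $L^2(\Lambda_L\times\Lambda_L)$ and \eqref{e:F2} is proved (Lemma~\ref{le:F2}) only for $j\geq 3$. Your $L^2$-Fubini argument together with continuity of $g_1,g_2$ from the pointwise bounds of Lemma~\ref{le:E12bound}, and the integration-by-parts reduction of $F_1,F_2$ to $E_2,E_3$ via $F_j=4\bar\partial_x\partial_y\overline{E_{j+1}}$, does supply a clean justification of this identification that the paper leaves implicit. The one step you state a bit loosely is the commutation of $\partial_y$ with the $x$-integral (the bound you cite from Lemma~\ref{le:E12bound} controls $\partial_x E_{j+1}$, not $\partial_y$), but this is handled by the symmetry $E_{j+1}(x,y)=\overline{E_{j+1}(y,x)}$ or by a standard dominated-convergence argument, so it is a presentational rather than substantive gap. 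Overall the proof is correct and somewhat more explicit than the paper's.
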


\begin{proof}
  By \eqref{e:efdecay} and \eqref{e:efboundsn},
  it follows that for any $p>0$, $f\in C_c^\infty(\mathrm{int}(\Lambda_L))$, and $K \subset \mathrm{int}(\Lambda_L)$,
  there are constants $C_{p,f,K,L}$ such that
  \begin{equation} \label{e:feebd}
    \sup_{y \in K} \absa{\nabla_y^p \int dx\, f(x) \, e_{-n,k}(x)\overline{e_{-n,k}(y)} } \leq C_{p,f,K,L} j_{n,k}^{-p},
  \end{equation}
  and analogously for such expressions with $e_{-n,k}$ replaced by $f_{n,k}$. 
  From this, the claim follows again by differentiating the series term by term.
\end{proof}

As a final property of the functions $E_j$ and $F_j$, we record the following recursion properties.

\begin{lemma}	\label{le:ints}
	For $j,k\geq 1$ and $x,y\in \mathrm{int}(\Lambda_L)$, we have
	\begin{align}
	E_{j+k}(x,y)&=\int_{\Lambda_L}du\, E_j(x,u)E_k(u,y),
        \\
	F_{j+k}(x,y)&=\int_{\Lambda_L}du\, F_j(x,u)F_k(u,y),
        \\
	\int_{\Lambda_L} du\, E_j(x,u)F_k(u,y)&=\int_{\Lambda_L} du\, F_k(x,u)E_j(u,y)=0.
	\end{align}
\end{lemma}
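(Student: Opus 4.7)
The plan is to reduce all three identities to angular Fourier orthogonality in $L^2(\Lambda_L)$. Decompose $L^2(\Lambda_L) = \bigoplus_{m \in \Z} H_m$, where $H_m$ consists of functions of the form $g(r)e^{im\theta}$ in polar coordinates. A short calculation using $\bar\partial = \tfrac12 e^{i\theta}(\partial_r + (i/r)\partial_\theta)$ together with the Bessel identity $J_n'(z) - (n/z)J_n(z) = -J_{n+1}(z)$ shows $f_{n,k}(re^{i\theta}) \propto J_n(j_{n-1,k}r/L)e^{in\theta}$, so $f_{n,k} \in H_n$ for $n \geq 1$, while $e_{-n,l} \in H_{-n}$ for $n \geq 0$ by inspection. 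One integration by parts in $\int f_{n,k}\overline{f_{m,l}}\, du$, using $\Delta \overline{e_{m-1,l}} = -(j_{m-1,l}/L)^2\overline{e_{m-1,l}}$ and the Dirichlet vanishing of $e_{m-1,l}$ on $\partial\Lambda_L$, also shows $\{f_{n,k}\}_{n,k \geq 1}$ is orthonormal. From the explicit form of $E_1$ in Lemma~\ref{le:E1}, expanding each logarithm in its respective regime $|x| < |y|$ or $|x| > |y|$, one reads off that $u \mapsto E_1(x,u)$ has only angular modes $m \geq 0$; an induction on $j$ via the recursion \eqref{e:Ej} propagates this to all $j \geq 1$. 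Similarly, $u \mapsto F_k(u,y)$ has only angular modes $m \geq 1$, directly from the formula for $F_1$ in Lemma~\ref{le:E1}, and from $F_k = 4\bar\partial_x\partial_y\overline{E_{k+1}}$ via the fact that $\bar\partial$ shifts angular modes up by $1$, for $k \geq 2$.

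Claim~3 follows instantly: the integrand $E_j(x,u)F_k(u,y)$ has only angular modes $m \geq 1$ in $u$, so the angular integral vanishes radius by radius and hence $\int_{\Lambda_L} E_j F_k\, du = 0$; the swapped order is identical. Claim~1 reduces to Claim~3 via the decomposition $G_{\Lambda_L} = E_1 + F_1$ from \eqref{e:E1plusF1}: the defining recursion \eqref{e:Ej} gives $E_{j+1}(x,y) = \int E_1(x,u)E_j(u,y)\, du + \int F_1(x,u)E_j(u,y)\, du$, the second term vanishes, and iteration in the other slot yields the general identity $E_{j+k} = \int E_j E_k\, du$.

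For Claim~2 in the case $j = k = 1$, write $F_1(x,u) = 4\bar\partial_x \partial_u \overline{E_2(x,u)}$ and $F_1(u,y) = 4\bar\partial_u \partial_y \overline{E_2(u,y)}$, pull the outer $\bar\partial_x, \partial_y$ outside the $u$-integral, and integrate by parts in $u$ to move $\partial_u$ off the first factor. The boundary term vanishes because $E_2$ vanishes on $\partial\Lambda_L$, which is inherited from $G_{\Lambda_L}$ and from $E_1$ itself (the latter a direct check from the formula in Lemma~\ref{le:E1}: on $|u|=L$ the two logarithmic contributions cancel). The composition $\partial_u\bar\partial_u = \tfrac14 \Delta_u$ applied to $\overline{E_2(u,y)}$ yields $-\tfrac14 \overline{E_1(u,y)}$ since $-\Delta_u E_2 = E_1$; the remaining $\int E_2 E_1\, du$ collapses to $E_3$ by Claim~1, and the outer derivatives recover $F_2(x,y)$. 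The general identity $\int F_j F_k\, du = F_{j+k}$ follows either by iterating this integration by parts, or for $j,k \geq 2$ directly from the series \eqref{e:F2} together with the $\{f_{n,p}\}$-orthonormality established above.

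The main technical obstacle is that in the low-index cases $j = 1$ or $k = 1$ the kernels $E_1, F_1$ carry logarithmic singularities on the diagonal and the series expansions \eqref{e:Ej2}--\eqref{e:F2} converge only in $L^2$. This is circumvented by using the explicit formulas of Lemma~\ref{le:E1} for the angular-mode analysis in those cases, and by routing Claim~2 through the smoother kernels $E_2, E_3$, which are continuous on $\Lambda_L \times \Lambda_L$ by Lemmas~\ref{le:E12bound} and~\ref{le:F2}, so that no differentiation of a singular kernel is ever required.
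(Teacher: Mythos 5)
Your overall strategy---angular Fourier decomposition for the orthogonality claims, and integration by parts via $F_j = 4\bar\partial_x\partial_y\overline{E_{j+1}}$ for the $F$-semigroup claim---is essentially the paper's approach, and your treatment of Claims~2 and~3 is correct (and somewhat more detailed than the paper, e.g.\ you establish orthonormality of $\{f_{n,k}\}$ explicitly and get the sign of the angular modes of $E_j(x,\cdot)$ right, which the paper's prose arguably garbles). However, your argument for Claim~1 contains a genuine error.

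You invoke ``the decomposition $G_{\Lambda_L} = E_1 + F_1$.'' This identity is \emph{false}, and it is not what \eqref{e:E1plusF1} says. From Lemma~\ref{le:E1} one computes (for $|x|<|y|$, and correcting what is evidently a coefficient typo in \eqref{e:E1plusF1})
\begin{equation*}
  E_1(x,y) + F_1(x,y) = -\tfrac{1}{2\pi}\log|x-y| + \tfrac{1}{4\pi}\log(L^2-\bar x y),
\end{equation*}
whereas
\begin{equation*}
  G_{\Lambda_L}(x,y) = -\tfrac{1}{2\pi}\log|x-y| + \tfrac{1}{2\pi}\log|L^2-\bar x y| - \tfrac{1}{2\pi}\log L.
\end{equation*}
These are different: $E_1+F_1$ is not even real (its imaginary part is $\tfrac{1}{4\pi}\arg(L^2-\bar x y)$), while $G_{\Lambda_L}$ is, and they do not agree on the modulus either. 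Structurally this is unsurprising: $E_1$ is the non-positive-$n$ part of the eigenfunction expansion of $G_{\Lambda_L}$ in the $e_{n,k}$ basis, but $F_1$ is built from the $\{f_{n,k}\}$, which span a \emph{different} subspace than the $\{e_{n,k}\}_{n\ge1}$ (the $f_{n,k}$ do not satisfy Dirichlet boundary conditions on $\partial\Lambda_L$, as your own formula $f_{n,k} \propto J_n(j_{n-1,k}r/L)e^{in\theta}$ shows), so $E_1+F_1$ cannot reproduce $G_{\Lambda_L}$. Consequently the equation ``$E_{j+1}(x,y) = \int E_1(x,u)E_j(u,y)\,du + \int F_1(x,u)E_j(u,y)\,du$'' is not the recursion \eqref{e:Ej}, and your reduction of Claim~1 to Claim~3 does not go through as written. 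The repair is easy---split $G_{\Lambda_L} = E_1 + (G_{\Lambda_L}-E_1)$ and note that $(G_{\Lambda_L}-E_1)(x,\cdot)$ has angular modes $\le -1$ in the second argument, so its pairing with $E_j(\cdot,y)$ (modes $\le 0$ in the first argument) vanishes by the same angular count---but the repaired term is $G_{\Lambda_L}-E_1$, not $F_1$. Alternatively, the paper's own argument for Claim~1 is more direct: it reads off $\int E_j(x,u)E_k(u,y)\,du = E_{j+k}(x,y)$ straight from the series representation \eqref{e:Ej2} and orthonormality of the $e_{-n,k}$, with no need to pass through $G_{\Lambda_L}$ at all.
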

\begin{proof}
  The claim for $E_{j+k}$ follows immediately from continuity and the representation \eqref{e:Ej2}
  which implies that the two functions are the same as elements of $L^2(\Lambda_L\times \Lambda_L)$
  (and thus in particular for almost every $x,y\in \mathrm{int}(\Lambda_L)$).
	
  For $F_{j+k}$, we integrate by parts (note that $\bar \partial_x \overline{E_{j+1}(x,u)}$ vanishes for $u\in \partial \Lambda_L$ by \eqref{e:Ej} and the fact that $E_j(v,u)$ vanishes for $u\in \partial \Lambda_L$ -- which follows e.g. from the explicit representation of $E_1$ from Lemma \ref{le:E1} and \eqref{e:Ej}) and find
  \begin{align}
    \int_{\Lambda_L} du\, F_{j}(x,u)F_k(u,y)
    &=4\bar \partial_x \partial_y\int_{\Lambda_L} du\, \overline{E_{j+1}(x,u)}(-\Delta_u)\overline{E_{k+1}(u,y)}
      \nnb
    &=4\bar\partial_x \partial_y \overline{E_{j+k+1}(x,y)}=F_{j+k}(x,y).
  \end{align}
  where we used the fact that $-\Delta_x E_{j+1}(x,y)=E_j(x,y)$ by \eqref{e:Ej} and the first claim of this lemma.
  
  For the final claim, the fact that first integral vanishes follows immediately from the remark that considering $E_j(x,u)$ and $F_k(u,y)$ in polar coordinates for $u$, $E_j$ has only non-positive Fourier modes while $F_k$ has only strictly positive Fourier modes, so the claim follows from Fourier orthogonality. The vanishing of the second integral follows by a similar argument.
\end{proof}

\subsection{The building blocks of the  Dirac Green's function -- II}

The functions $E_j$ and $F_j$ constructed above will turn out to be responsible for the singular behavior in our Dirac Green's function.
To understand the behavior in $\mu$, we introduce the following functions: for $m\geq 1$, let 
\begin{align}
  R_{m;11}(x,y;\mu,L)
  &=(-1)^{m} \sum_{n=0}^\infty \sum_{k=1}^\infty \frac{L^{2(m+1)}\nmu^{2m+1}}{(1+\frac{\mu^2 L^2}{j_{n,k}^2})j_{n,k}^{2(m+1)}}e_{-n,k}(x)\overline{e_{-n,k}(y)}\notag \\
  &\quad +(-1)^{m}\sum_{n=1}^\infty \sum_{k=1}^\infty \frac{L^{2(m+1)}\nmu^{2m+1}}{(1+\frac{\mu^2 L^2}{j_{n-1,k}^2})j_{n-1,k}^{2(m+1)}}f_{n,k}(x)\overline{f_{n,k}(y)}
\label{e:R11}
\end{align}
and 
\begin{align}
  \notag
  R_{m;21}(x,y;\mu,L)&=(-1)^{m+1} \sum_{n=0}^\infty \sum_{k=1}^\infty \frac{L^{2(m+1)}\nmu^{2m}}{(1+\frac{\mu^2 L^2}{j_{n,k}^2})j_{n,k}^{2(m+1)}}(2\partial e_{-n,k})(x)\overline{e_{-n,k}(y)}\\
&\qquad +(-1)^{m+1}\sum_{n=1}^\infty \sum_{k=1}^\infty \frac{L^{2(m+1)}\nmu^{2m}}{(1+\frac{\mu^2 L^2}{j_{n-1,k}^2})j_{n-1,k}^{2(m+1)}}(2\partial f_{n,k})(x)\overline{f_{n,k}(y)}.
\label{e:R21}
\end{align}
A priori, it may not be clear in what sense these series converge, but we now describe the basic facts we will need about these functions -- including regularity.
\begin{lemma} 	\label{le:Rbounds}
	For any $m\geq 3$ and $y\in \mathrm{int}(\Lambda_L)$, the functions $x\mapsto R_{m;11}(x,y)$ and $x\mapsto R_{m;21}(x,y)$ are continuously differentiable in $\mathrm{int}(\Lambda_L)$ and have the following properties:
	\begin{enumerate}
        \item $\frac{1}{\nmu}2\partial_x R_{m;11}(x,y)={-}R_{m;21}(x,y)$ for all $x,y\in \mathrm{int}(\Lambda_L)$.
        \item $-2\bar \partial_x R_{m;21}(x,y) {-}\nmu R_{m;11}(x,y)=(-1)^{m+1} \mu^{2m}(E_m(x,y)+F_m(x,y))$ for all $x,y\in \mathrm{int}(\Lambda_L)$, with $E_m$ as in \eqref{e:Ej} and $F_m$ as in \eqref{e:Fj}.
        \item There exists a polynomial $P_m=P_m(L,|\mu|)$, which does not depend on $x,y$, such that 
          \begin{equation}
            \sup_{x,y\in \mathrm{int}(\Lambda_L)}|R_{m}(x,y)|\leq P_m(L,|\mu|).
          \end{equation}
        \item
          For any $f\in C_c^\infty(\mathrm{int}(\Lambda_L))$, 
          $y\mapsto \int_{\Lambda_L} dx\, f(x)\, R_{m}(x,y)\in C^\infty(\mathrm{int}(\Lambda_L))$.
	\end{enumerate}
      \end{lemma}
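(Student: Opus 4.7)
The plan is to establish all four items simultaneously by showing that the defining series for $R_{m;11}$ and $R_{m;21}$ converge absolutely and uniformly together with their first derivatives, once $m\geq 3$. I would first combine the eigenvalue lower bound $j_{n,k}^2 \geq n^2 + (k-\tfrac14)^2\pi^2$ in \eqref{e:evbounds} with the eigenfunction bounds in \eqref{e:efbounds1}--\eqref{e:efbounds2} to estimate a single term of $R_{m;11}$ by a constant multiple of $L^{2m}|\mu|^{2m+1} j_{n,k}^{-2m}$ (the $e_{-n,k}$ contribution), and analogously $L^{2m-1}|\mu|^{2m+1} j_{n-1,k}^{-2m+1}$ on the $f_{n,k}$ side. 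For $R_{m;21}$ an extra derivative costs a factor $j/L$ from \eqref{e:efbounds1}, giving terms of size $L^{2m-1}|\mu|^{2m} j_{n,k}^{-2m+1}$ and $L^{2m-3}|\mu|^{2m} j_{n-1,k}^{-2m+2}$. The 2D Weyl asymptotics make $\sum_{n,k} j_{n,k}^{-\alpha}$ converge precisely when $\alpha>2$, which holds for every term under the standing assumption $m\geq 3$; the same count, with an additional gradient factor $j/L$, shows that termwise differentiation in $x$ also converges uniformly. This immediately yields item (iii), continuous differentiability in $x$, and the polynomial bound $P_m(L,|\mu|)$ by factoring out $L$'s and $|\mu|$'s.

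With term-by-term differentiation justified, item (i) becomes a direct inspection: applying $2\partial_x$ to $R_{m;11}$ replaces $e_{-n,k}(x)$ by $2\partial e_{-n,k}(x)$ and $f_{n,k}(x)$ by $2\partial f_{n,k}(x)$, and extracting the single factor of $\mu$ flips the prefactor sign $(-1)^m \to -(-1)^{m+1}$, which is exactly the defining series of $-R_{m;21}$. For item (ii), I would use that $4\bar\partial\partial = \Delta$, which gives $-2\bar\partial_x(2\partial e_{-n,k}) = \tfrac{j_{n,k}^2}{L^2} e_{-n,k}$ via the eigenvalue equation \eqref{e:ev}, and similarly $-2\bar\partial_x(2\partial f_{n,k}) = \tfrac{j_{n-1,k}^2}{L^2} f_{n,k}$ (using the definition \eqref{e:ef2} of $f_{n,k}$ as $\bar\partial e_{n-1,k}$ up to normalization). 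Adding $\mu R_{m;11}$ then produces in each summand the algebraic identity
\begin{equation}
\frac{L^{2m}\mu^{2m}}{j^{2m}(1+\mu^2L^2/j^2)} + \frac{L^{2(m+1)}\mu^{2m+2}}{j^{2(m+1)}(1+\mu^2L^2/j^2)}
= \frac{L^{2m}\mu^{2m}}{j^{2m}},
\end{equation}
which collapses the resolvent denominator and reproduces exactly the series representations of $E_m$ and $F_m$ given in \eqref{e:Ej2} and \eqref{e:F2} (which are valid for $j\geq 3$ by Lemma~\ref{le:F2}); this yields the right-hand side $(-1)^{m+1}\mu^{2m}(E_m+F_m)$.

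Finally, for the smoothness statement (iv), I would smear each term against $f\in C_c^\infty(\mathrm{int}(\Lambda_L))$ and apply the rapid-decay estimate \eqref{e:efdecay}: for every $q\geq 0$, $|\int f(x)\nabla^p e_{-n,k}(x)\,dx|$ and $|\int f(x)\nabla^p f_{n,k}(x)\,dx|$ are bounded by $C_{p,q,f,L}\, j_{n,k}^{-q}$. Meanwhile, by \eqref{e:efboundsn}, $|\nabla_y^p \overline{e_{-n,k}(y)}|$ and $|\nabla_y^p \overline{f_{n,k}(y)}|$ on any compact $K \subset \mathrm{int}(\Lambda_L)$ are bounded by $C_{p,K,L} j_{n,k}^{1+p}$. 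Combined with the prefactor decay $\sim j^{-2(m+1)}$, we get an arbitrary power of $j_{n,k}^{-1}$ for the smeared terms, guaranteeing uniform convergence on compacta of every $y$-derivative of $y\mapsto \int f(x) R_m(x,y)\,dx$.

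The only mildly delicate point is keeping track of the counting threshold $m\geq 3$: it is forced by the $f_{n,k}$ contribution to $R_{m;21}$, whose gradient bound $\|\nabla f_{n,k}\|_\infty \leq C j_{n-1,k}^3/L^2$ from \eqref{e:efbounds2} costs three powers of $j$, so absolute convergence of the differentiated series on this side needs $2(m+1)-1-3>2$. Once $m\geq 3$ is imposed everything else is bookkeeping, and I do not foresee a serious obstacle beyond organizing the four index regimes cleanly.
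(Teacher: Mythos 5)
Your proof is correct and follows the same strategy as the paper's: term-by-term estimates using the eigenvalue and eigenfunction bounds to establish uniform convergence of the series and their first derivatives, a direct term-by-term manipulation for item (i), the algebraic identity collapsing the resolvent denominator together with the eigenvalue equation for item (ii), and the rapid-decay estimates for item (iv). One small imprecision worth noting: the binding term in $\nabla R_{m;21}$ is actually the $e_{-n,k}$-side, where $2\partial e_{-n,k}\propto \frac{j_{n,k}}{L}\overline{f_{n+1,k}}$ already costs a factor $j$ before one even differentiates $f_{n+1,k}$, giving $j^{-2(m+1)+5}$ rather than your $j^{-2(m+1)+4}$; the threshold $m\geq 3$ is nonetheless the same (since $2m-3>2$ and $2m-2>2$ both round up to $m\geq 3$), so this does not affect the conclusion.
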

      
\begin{proof}
  For continuous differentiability, let us first consider $R_{m;11}$.
  Using \eqref{e:efbounds1} and  \eqref{e:efbounds2}, we find that,
  for some $C(L,\mu)>0$,
	\begin{align}
	&\sum_{n=0}^\infty \sum_{k=1}^\infty \frac{L^{2(m+1)}\mu^{2m}}{(1+\frac{\mu^2 L^2}{j_{n,k}^2})j_{n,k}^{2(m+1)}}\|\nabla e_{-n,k}\|_{L^\infty(\Lambda_L)}\|e_{-n,k}\|_{L^\infty(\Lambda_L)}\\
          &\qquad +\sum_{n=1}^\infty \sum_{k=1}^\infty \frac{L^{2(m+1)}\mu^{2m}}{(1+\frac{\mu^2 L^2}{j_{n-1,k}^2})j_{n-1,k}^{2(m+1)}}\|\nabla f_{n,k}\|_{L^\infty(\Lambda_L)} \|f_{n,k}\|_{L^\infty(\Lambda_L)}
            \notag \\ & 
        \leq C(L,\mu)\sum_{n=0}^\infty \sum_{k=1}^\infty j_{n,k}^{-2(m+1)+4}.\notag
	\end{align}
	By \eqref{e:evbounds}, this series is convergent for $m\geq 3$, so standard results concerning uniform convergent series (involving continuity and differentiability) yields continuous differentiability.
	For $R_{21}$, we point out that, by \eqref{e:ef2}
        and \eqref{e:ev}, 
	\begin{equation}
          2\partial e_{-n,k}=-\frac{j_{n,k}}{L}\overline{f_{n+1,k}},
          \qquad
	2\partial  f_{n,k}=-\frac{L}{j_{n-1,k}}\Delta e_{n-1,k}=\frac{j_{n-1,k}}{L}e_{n-1,k}.
	\end{equation}
	Thus the same argument making use of \eqref{e:efbounds1}, \eqref{e:efbounds2}, and \eqref{e:evbounds} implies the continuous differentiability. (Now we end up with the series $\sum_{n=0}^\infty \sum_{k=1}^\infty j_{n,k}^{-2(m+1)+5}$ which is still convergent for $m\geq 3$.)
	
	We now turn to statement (i). This follows immediately from our preceding argument for continuous differentiability as it allows us to differentiate term by term.
	
	For (ii), we note that again by our continuous differentiability argument, we can differentiate term by term. We find (using \eqref{e:ef2}) that 
	\begin{align}
	-2\bar \partial_x \notag R_{m;21}(x,y)&=(-1)^{m+1} \sum_{n=0}^\infty \sum_{k=1}^\infty \frac{L^{2(m+1)}\mu^{2m}}{(1+\frac{\mu^2 L^2}{j_{n,k}^2})j_{n,k}^{2(m+1)}}(-\Delta e_{-n,k})(x)\overline{e_{-n,k}(y)}\nnb
	&\qquad +(-1)^{m+1}\sum_{n=1}^\infty \sum_{k=1}^\infty \frac{L^{2(m+1)}\mu^{2m}}{(1+\frac{\mu^2 L^2}{j_{n-1,k}^2})j_{n-1,k}^{2(m+1)}}(-\Delta f_{n,k})(x)\overline{f_{n,k}(y)}\nnb
	&=(-1)^{m+1} \sum_{n=0}^\infty \sum_{k=1}^\infty \frac{L^{2m}\mu^{2m}}{(1+\frac{\mu^2 L^2}{j_{n,k}^2})j_{n,k}^{2m}}e_{-n,k}(x)\overline{e_{-n,k}(y)}\nnb
	&\qquad +(-1)^{m+1}\sum_{n=1}^\infty \sum_{k=1}^\infty \frac{L^{2m}\mu^{2m}}{(1+\frac{\mu^2 L^2}{j_{n-1,k}^2})j_{n-1,k}^{2m}}f_{n,k}(x)\overline{f_{n,k}(y)}
	\end{align}
	and 
	\begin{align}
	&-2\bar \partial_x \notag R_{m;21}(x,y){-}\nmu R_{m;11}(x,y)\nnb
	&\quad =(-1)^{m+1} \sum_{n=0}^\infty \sum_{k=1}^\infty \frac{L^{2(m+1)}\mu^{2m}}{(1+\frac{\mu^2 L^2}{j_{n,k}^2})j_{n,k}^{2(m+1)}}\left(\mu^2+\frac{j_{n,k}^2}{L^2}\right)e_{-n,k}(x)\overline{e_{-n,k}(y)}\nnb
	&\qquad +(-1)^{m+1}\sum_{n=1}^\infty \sum_{k=1}^\infty \frac{L^{2(m+1)}\mu^{2m}}{(1+\frac{\mu^2 L^2}{j_{n-1,k}^2})j_{n-1,k}^{2(m+1)}}\left(\mu^2+\frac{j_{n-1,k}^2}{L^2}\right)f_{n,k}(x)\overline{f_{n,k}(y)}\nnb
	&\quad =(-1)^{m+1} \mu^{2m}\sum_{n=0}^\infty \sum_{k=1}^\infty \left(\frac{L^2}{j_{n,k}^2}\right)^m e_{-n,k}(x)\overline{e_{-n,k}(y)}\nnb
	&\qquad +(-1)^{m+1} \mu^{2m}\sum_{n=1}^\infty \sum_{k=1}^\infty \left(\frac{L^2}{j_{n-1,k}^2}\right)^m f_{n,k}(x)\overline{f_{n,k}(y)}.
	\end{align}
	Recalling \eqref{e:Ej2} and Lemma \ref{le:F2}, this concludes the proof of claim (ii).
	
	For (iii), we will prove the claim for $R_{m;21}$ -- the proof for $R_{m;11}$ being similar.
        Using \eqref{e:efbounds1} and \eqref{e:efbounds2}, we have for some constant $C>0$ (independent of $m,\mu,L$) that
\begin{equation}
		\|R_{m;21}\|_{L^\infty(\Lambda_L\times \Lambda_L)}\leq C L^{2m-1}\mu^{2m}\sum_{n=0}^\infty \sum_{k=1}^\infty j_{n,k}^{-2m+2}	
\end{equation}
	and the claim follows from \eqref{e:evbounds}.

          Finally, to prove (iv), let $f\in C_c^\infty(\Lambda_L)$.
          Then \eqref{e:feebd} holds and an analogous bound holds with $e_{-n,k}$ replaced by $f_{n,k}$ or derivatives of these.
          Substituting this into the definition of $R_{m;ij}$, we see that all $y$-derivatives of
          series that defines $\int \, dx f(x) R_{m;ij}(x,y)$ converge, and thus that
          $\int \, dx f(x) R_{m;ij}(x,y)$ is smooth in $y$.

        This concludes the proof.
\end{proof}

\subsection{The proof of Theorem~\ref{th:gestimate}}

In this section, we first define our Dirac Green's function, then prove it satisfies the bounds stated in items (i) and (ii) of Theorem~\ref{th:gestimate}, then prove item (iii) of the theorem, namely analyticity in a neighborhood of the real axis, item (iv) of the theorem -- a kind of resolvent identity -- and finally prove convergence as $L\to\infty$. We split this section into parts where these tasks are carried out. 

\subsubsection{Constructing the Green's function}

We begin by defining the function that will be our $S_{\mu\mathbf 1_{\Lambda_L}}$ in Theorem~\ref{th:gestimate}, and then prove that it satisfies \eqref{e:Green}.
For this definition, recall first the key building blocks $E_j$, $F_j$, $R_{m;11}$ and $R_{m;12}$ from \eqref{e:Ej}, \eqref{e:Fj}, \eqref{e:R11}, and \eqref{e:R21}.

\begin{definition} 	\label{d:Sy}
	For $\mu\in \R$, $L\geq 1$, and $y\in \mathrm{int}(\Lambda_L)$ and $x\in \Lambda_L$, let 
	\begin{equation}\begin{split}
          S_{\mu\mathbf 1_{\Lambda_L};11}(x,y)
          &=\sum_{l=0}^{2}(-1)^{l} \nmu^{2l+1}(E_{l+1}(x,y)+F_{l+1}(x,y))+R_{3;11}(x,y),
            \\
            S_{\mu\mathbf 1_{\Lambda_L};21}(x,y)
          &=2\partial_x \sum_{l=0}^{2}(-1)^{l+1} \mu^{2l}(E_{l+1}(x,y)+F_{l+1}(x,y))+R_{3;21}(x,y),
            \\
          S_{\mu\mathbf 1_{\Lambda_L};12}(x,y)
          &=\overline{S_{\mu\mathbf 1_{\Lambda_L};21}(x,y)},
          \\
          S_{\mu \mathbf 1_{\Lambda_L};22}(x,y)&=\overline{S_{\mu \mathbf 1_{\Lambda_L};11}(x,y)},
        \end{split}\end{equation}
        and for $y\in \mathrm{int}(\Lambda_L)$ but $x=re^{i\theta}\in \Lambda_L^{\mathsf c}$, define 
	\begin{equation}
	S_{\mu \mathbf 1_{\Lambda_L}}(x,y)=\frac{r^2-L^2}{2\pi}\int_0^{2\pi} d\phi\,\frac{1}{r^2+L^2-2rL \cos(\theta-\phi)}S_{\mu \mathbf 1_{\Lambda_L}}(Le^{i\phi},y).
	\end{equation}
\end{definition}

In particular, note that for $x \in \Lambda_L^{\mathsf c}$, 
$S_{\mu \mathbf 1_{\Lambda_L}}(x,y)$ is
the harmonic extension of $S_{\mu \mathbf 1_\Lambda}(\cdot, y)|_{\partial \Lambda_L}$.
We now show that this is indeed a Green's function for the problem we are considering. 

\begin{proposition}
	$S_{\mu \mathbf 1_{\Lambda_L}}$ defined in Definition \ref{d:Sy} satisfies \eqref{e:Green}.
	\label{pr:Greens}
\end{proposition}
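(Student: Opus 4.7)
The plan is to split the verification of $DS_{\mu\mathbf 1_{\Lambda_L}} = \delta\,\mathbf 1$ according to whether $x$ lies in $\mathrm{int}(\Lambda_L)$ or in $\Lambda_L^{\mathsf c}$. Because of the conjugation symmetry built into Definition~\ref{d:Sy} (with $S_{22} = \overline{S_{11}}$ and $S_{12} = \overline{S_{21}}$), the $(1,2)$ and $(2,2)$ matrix entries of $DS = \delta\,\mathbf 1$ follow by complex conjugation from the $(2,1)$ and $(1,1)$ entries, so it suffices to verify those two. Vanishing at infinity and continuity across $\partial\Lambda_L$ will be handled together with the exterior calculation.

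For $x \in \mathrm{int}(\Lambda_L)$ (so $\mathbf 1_{\Lambda_L}(x)=1$), the $(2,1)$ entry $2\partial_x S_{11} + \mu S_{21} = 0$ is immediate: the sums $\sum_{l=0}^{2}(-1)^l\mu^{2l+1}\,2\partial_x(E_{l+1}+F_{l+1})$ and $\sum_{l=0}^{2}(-1)^{l+1}\mu^{2l+1}\,2\partial_x(E_{l+1}+F_{l+1})$ cancel termwise, leaving $2\partial_x R_{3;11} + \mu R_{3;21}$, which is zero by Lemma~\ref{le:Rbounds}(i) at $m=3$. For the $(1,1)$ entry I would use $2\bar\partial_x\cdot 2\partial_x = \Delta_x$ together with the Laplacian relations $-\Delta_x(E_1+F_1) = \delta(x-y)$ (read off from the decomposition \eqref{e:E1plusF1}, since $\log(L^2 -\bar x y)$ is anti-holomorphic in $x$) and $-\Delta_x(E_{l+1}+F_{l+1}) = E_l + F_l$ for $l\geq 1$ (Lemma~\ref{le:F2} with the recursion \eqref{e:Ej} and the analogous identity for $F_{l+1}$ that follows from the series \eqref{e:F2} and $-\Delta f_{n,k}= (j_{n-1,k}/L)^2 f_{n,k}$). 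After the telescoping, the computation collapses to
\begin{equation*}
\mu S_{11} + 2\bar\partial_x S_{21} \;=\; \delta(x-y) + \mu^6(E_3+F_3) + \mu R_{3;11} + 2\bar\partial_x R_{3;21},
\end{equation*}
and the last three terms cancel by Lemma~\ref{le:Rbounds}(ii) at $m=3$.

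For $x\in\Lambda_L^{\mathsf c}$, where $D = i\slashed\partial$, one must show that $S_{21}(\cdot,y)$ is holomorphic and $S_{11}(\cdot,y)$ antiholomorphic on $\Lambda_L^{\mathsf c}$; this is the delicate part, and reduces to a Fourier-mode analysis on $\partial\Lambda_L$. Parametrising $x = Le^{i\phi}$, the identities $2\partial e_{-n,k} = -\frac{j_{n,k}}{L}\,\overline{f_{n+1,k}}$ (whose boundary value is a nonzero multiple of $e^{-i(n+1)\phi}$) and $2\partial f_{n,k} = \frac{j_{n-1,k}}{L}\,e_{n-1,k}$ (which vanishes on $\partial\Lambda_L$) show that every contribution to $S_{21}|_{\partial\Lambda_L}$—from the $2\partial_x(E_{l+1}+F_{l+1})$ terms and from $R_{3;21}$—carries only strictly negative angular Fourier modes; the mirror computation, using $e_{-n,k}|_{\partial\Lambda_L}=0$ and $f_{n,k}|_{\partial\Lambda_L}\propto e^{in\phi}$, gives only strictly positive modes for $S_{11}|_{\partial\Lambda_L}$. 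Since for $n\geq 1$, $e^{-in\phi}$ is the boundary value of the holomorphic $(L/x)^n$ and $e^{in\phi}$ that of the anti-holomorphic $(L/\bar x)^n$—each of which vanishes at infinity and coincides with its own exterior Poisson extension—the Poisson integral in Definition~\ref{d:Sy} reproduces precisely these functions. Hence $\bar\partial_x S_{21} = 0$ and $\partial_x S_{11} = 0$ on $\Lambda_L^{\mathsf c}$, the absence of a zero mode gives decay at infinity, and continuity across $\partial\Lambda_L$ follows from standard Poisson-integral boundary behaviour. The main obstacle is thus the Fourier bookkeeping: ensuring that every piece of $S_{11}$ and $S_{21}$ restricted to the boundary really lands only in strictly positive, respectively strictly negative, modes, so that the exterior harmonic extensions become actually (anti)holomorphic and not merely harmonic.
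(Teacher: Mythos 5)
Your proposal follows essentially the same route as the paper: cancellation of the polynomial sums plus Lemma~\ref{le:Rbounds}(i) for the $(2,1)$ entry, telescoping via the Laplacian recursion ending with $-\Delta_x(E_1+F_1)=\delta_y(x)$ for the $(1,1)$ entry, and the Fourier-mode bookkeeping on $\partial\Lambda_L$ to show the exterior harmonic extensions are in fact (anti)holomorphic with no zero mode, hence decay at infinity. One small inaccuracy: you justify $-\Delta_x F_{l+1}=F_l$ via the eigenfunction series \eqref{e:F2}, but Lemma~\ref{le:F2} only guarantees that representation for $j\geq 3$, whereas the telescoping needs $l=1,2$; the clean route (implicit in the paper) is to use the definition \eqref{e:Fj} directly, $F_{l+1}=4\bar\partial_x\partial_y\overline{E_{l+2}}$, together with $-\Delta_x E_{l+2}=E_{l+1}$ from \eqref{e:Ej}, giving $-\Delta_x F_{l+1}=4\bar\partial_x\partial_y\overline{E_{l+1}}=F_l$ for all $l\geq 1$. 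The paper also phrases the final assembly as a weak identity \eqref{e:goal} tested against $f\in C_c^\infty(\R^2)$ and integrates by parts across $\partial\Lambda_L$, using the continuity you mention to kill the boundary terms; your pointwise version captures the same content.
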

\begin{proof}
  Our goal is to show that $S_{\mu\mathbf 1_{\Lambda_L}}(\cdot,y)$ as defined in Definition \ref{d:Sy}
  vanishes at infinity and that for each $f\in C_c^\infty(\R^2)$ and $y\in \mathrm{int}(\Lambda_L)$,
	\begin{multline}	\label{e:goal}
          \nmu \int_{\Lambda_L} dx\, S_{\mu \mathbf 1_{\Lambda_L}}(x,y)f(x)-2\int_{\R^2} dx\,\begin{pmatrix}
	\bar \partial f(x)S_{\mu \mathbf 1_{\Lambda_L};21}(x,y) & \bar \partial f(x)S_{\mu \mathbf 1_{\Lambda_L};22}(x,y)\\
	\partial f(x) S_{\mu \mathbf 1_{\Lambda_L};11}(x,y) & \partial f(x) S_{\mu \mathbf 1_{\Lambda_L};12}(x,y)
      \end{pmatrix}
      \\ 
      =\begin{pmatrix}
	f(y) & 0\\
	0 & f(y)
	\end{pmatrix}.
	\end{multline}
        Writing out $DS_{\mu \mathbf 1_L}$ from \eqref{e:Green} explicitly,
        we have for $x,y\in \mathrm{int}(\Lambda_L)$,
	\begin{multline}
          DS_{\mu \mathbf 1_L}(x,y)
          \\ 
          =\begin{pmatrix}
	2\bar \partial S_{\mu\mathbf 1_{\Lambda_L};21}(x,y) +\nmu S_{\mu\mathbf 1_{\Lambda_L};11}(x,y) & 2\bar \partial \overline{S_{\mu\mathbf 1_{\Lambda_L};11}(x,y)}+\nmu \overline{S_{\mu\mathbf 1_{\Lambda_L};21}(x,y)}\\
	2\partial S_{\mu\mathbf 1_{\Lambda_L};11}(x,y)+\nmu S_{\mu\mathbf 1_{\Lambda_L};21}(x,y) & 2\partial  \overline{S_{\mu\mathbf 1_{\Lambda_L};21}(x,y)}+\nmu \overline{S_{\mu\mathbf 1_{\Lambda_L};11}(x,y)}
	\end{pmatrix}.
	\end{multline}
	We can thus focus on the first column. Let us first consider the $21$-entry.
        Using Definition~\ref{d:Sy} and Lemma~\ref{le:Rbounds}, we find immediately that 
	\begin{equation}
	2\partial_x S_{\mu\mathbf 1_{\Lambda_L};11}(x,y)+\nmu S_{\mu\mathbf 1_{\Lambda_L};21}(x,y)=0 \qquad \text{for} \qquad x\in \mathrm{int}(\Lambda_L).
	\label{e:odin}
	\end{equation}
	For the $11$-entry, we have similarly using Definition~\ref{d:Sy} and Lemma~\ref{le:Rbounds},
	\begin{align}
          2\bar\partial_x S_{\mu\mathbf 1_{\Lambda_L};21}(x,y)+\nmu S_{\mu\mathbf 1_{\Lambda_L};11}(x,y)
          &=\sum_{l=0}^2(-1)^l \mu^{2l}(\mu^2-\Delta_x)(E_{l+1}(x,y)+F_{l+1}(x,y))\nnb
	&\qquad +2\bar\partial_x R_{3;21}(x,y)-\mu R_{3;11}(x,y)\nnb
	&=\sum_{l=0}^2(-1)^l \mu^{2l}(\mu^2-\Delta_x)(E_{l+1}(x,y)+F_{l+1}(x,y))\nnb
	&\qquad -\mu^6 (E_3(x,y)+F_3(x,y)).
	\end{align}
	Note that for $l\geq 1$, we have from \eqref{e:Ej} and \eqref{e:Fj}
	\begin{equation}
	-\Delta_x(E_{l+1}(x,y)+F_{l+1}(x,y))=E_l(x,y)+F_l(x,y).
	\end{equation}
	Thus we see that there are cancellations in the sum and we have in fact
	\begin{equation}
	2\bar\partial_x S_{\mu\mathbf 1_{\Lambda_L};21}(x,y)+\nmu S_{\mu\mathbf 1_{\Lambda_L};11}(x,y)=-\Delta_x(E_1(x,y)+F_1(x,y)).
	\end{equation}
	By Lemma~\ref{le:E1}, we see that 
	\begin{equation}
	E_1(x,y)+F_1(x,y)=\frac{1}{2\pi}\log \frac{1}{|x-y|}+\frac{1}{2\pi}\log L +\frac{1}{4\pi}\log\left(1-\frac{\bar xy}{L^2}\right).
	\end{equation}
	As the latter term here is harmonic (or actually anti-analytic) in $\mathrm{int}(\Lambda_L)$, we have 
	\begin{align}
          2\bar\partial_x S_{\mu\mathbf 1_{\Lambda_L};21}(x,y)+\nmu S_{\mu\mathbf 1_{\Lambda_L};11}(x,y)
          &=-\Delta_x(E_1(x,y)+F_1(x,y))
            \nnb
          &=-\Delta_x \frac{1}{2\pi}\log \frac{1}{|x-y|}=\delta_y(x).
	\label{e:diagin}
	\end{align}
	
	Let us consider now the case $x\in\Lambda_L ^{\mathsf c}$.  Note that for $x\in \Lambda_L^{\mathsf c}$ and $y\in \mathrm{int}(\Lambda_L)$,
        to prove \eqref{e:Green} we need to show that
	\begin{equation}
	\begin{pmatrix}
	2\bar \partial_x S_{\mu\mathbf 1_{\Lambda_L};21}(x,y) & 2\bar\partial_x \overline{S_{\mu\mathbf 1_{\Lambda_L};11}(x,y)}\\
	2\partial_x S_{\mu\mathbf 1_{\Lambda_L};11}(x,y) & 2\partial_x\overline{S_{\mu\mathbf 1_{\Lambda_L};21}(x,y)}
	\end{pmatrix}=0
	\end{equation}
	and that $x\mapsto S_{\mu\mathbf 1_{\Lambda_L}}(x,y)$ vanishes at infinity.
        For $x\in \Lambda_L^{\mathsf c}$, we note that $S_{\mu\mathbf 1_{\Lambda_L}}(x,y)$ is defined as the harmonic extension of $S_{\mu\mathbf 1_{\Lambda_L}}(\cdot ,y)|_{\partial \Lambda_L}$ (where the boundary values are understood as being given by a limit from the interior of $\Lambda_L$).
        Thus our goal is equivalent to $S_{\mu\mathbf 1_{\Lambda_L};21}(\cdot,y)|_{\partial \Lambda_L}$ having only (strictly) negative Fourier modes and $S_{\mu\mathbf 1_{\Lambda_L};11}(\cdot,y)|_{\partial \Lambda_L}$ only (strictly) positive Fourier modes. 
	
	Recalling that $e_{-n,k}$ vanishes on $\partial \Lambda_L$ while $f_{n,k}(L e^{i\theta})$ is proportional to $e^{in\theta}$, we see from \eqref{e:R11}, that $R_{3;11}|_{\partial\Lambda_L}$ has only positive Fourier modes. Similarly $E_j(\cdot,y)$ vanishes on $\partial \Lambda_L$ while $F_j(\cdot,y)|_{\partial \Lambda_L}$ has only positive Fourier modes (this follows from \eqref{e:Fj} since $E_{j+1}$ has only negative Fourier modes), so we see indeed that $S_{\mu\mathbf 1_{\Lambda_L};11}(\cdot,y)|_{\partial \Lambda_L}$ has only positive Fourier modes. Thus $S_{\mu\mathbf 1_{\Lambda_L};11}(\cdot,y)$ is of the correct form. The argument for $S_{\mu\mathbf 1_{\Lambda_L};21}(\cdot,y)$ is similar, but makes use of the fact that $\partial e_{-n+1,k}\propto \overline{f_{n,k}}$ and $\partial f_{n,k}\propto \Delta e_{n-1,k}\propto e_{n-1,k}$ -- we omit the details. We conclude that, for $x\in \Lambda_L^{\mathsf c}$ and $y\in \mathrm{int}(\Lambda_L)$,
	\begin{equation}
	DS_{\mu\mathbf 1_{\Lambda_L}}(x,y)=0.
	\label{e:out}
	\end{equation}
	
	The claim \eqref{e:goal} now follows by splitting the integral over $\R^2$ into that over $\Lambda_L$ and $\Lambda_L ^{\mathsf c}$, integrating by parts -- the boundary terms cancel due to continuity of $x\mapsto S_{\mu\mathbf 1_{\Lambda_L}}(x,y)$ across the boundary -- and combining \eqref{e:odin}, \eqref{e:diagin}, and \eqref{e:out}. Vanishing at infinity follows from the fact that the entries of $S_{\mu \mathbf 1_{\Lambda_L}}(\cdot,y)|_{\partial \Lambda_L}$ had only strictly positive or negative Fourier modes (and the corresponding entries were given by either antiholomorphic or holomorphic continuation of these boundary values) so $S_{\mu \mathbf 1_{\Lambda_L}}(x,y)$ decays at worst like $|x|^{-1}$ as $x\to \infty$.
\end{proof}

We now turn to proving that $S_{\mu\mathbf 1_{\Lambda_L}}$ satisfies the bounds we are after. 

\subsubsection{Proof of Theorem~\ref{th:gestimate} item (i)--(ii): Bounds on the Green's function}

The goal of this section is to prove the following proposition which is precisely item (i) and item (ii) of Theorem \ref{th:gestimate}.

\begin{proposition} 	\label{pr:Gbounds}
  For $L\geq 1$ and $\mu\in \R$, we have for some polynomial $P=P(L,|\mu|)$, which is independent of $x\in \C$ and $y\in \mathrm{int}(\Lambda_L)$,
  the estimates
      \begin{align}
        |S_{\mu \mathbf 1_{\Lambda_L}}(x,y) - S_0(x,y)| &\leq P(L,|\mu|)(1+|\log |x-y||) \qquad \text{for $x \in \Lambda_L$,}
        \\
	|S_{\mu\mathbf 1_{\Lambda_L}}(x,y)| &\leq
	\frac{P(L,|\mu|)}{L-|y|} \qquad \text{for $x\in \Lambda_L^{\mathsf c}$}.
    \end{align}
\end{proposition}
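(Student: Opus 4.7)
The plan is to prove the two bounds separately, using Definition~\ref{d:Sy} and the estimates already collected for the building blocks $E_j$, $F_j$, and $R_{m;ij}$.

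For the first bound (the case $x\in\Lambda_L$), I would read off from Definition~\ref{d:Sy} that
\begin{align*}
S_{\mu\mathbf 1_{\Lambda_L};11}(x,y) &= \mu(E_1+F_1)(x,y) - \mu^3(E_2+F_2)(x,y) + \mu^5(E_3+F_3)(x,y) + R_{3;11}(x,y),\\
S_{\mu\mathbf 1_{\Lambda_L};21}(x,y) &= -2\partial_x(E_1+F_1)(x,y) + 2\mu^2\partial_x(E_2+F_2)(x,y) - 2\mu^4\partial_x(E_3+F_3)(x,y) + R_{3;21}(x,y).
\end{align*}
The point is that by \eqref{e:E1plusF1} of Lemma~\ref{le:E1},
\begin{equation*}
E_1(x,y)+F_1(x,y) = -\frac{1}{2\pi}\log|x-y| + \frac{1}{2\pi}\log(L^2-\bar x y),
\end{equation*}
and the second summand is antiholomorphic in $x$, so $2\partial_x$ annihilates it while $2\partial_x$ applied to the first summand yields exactly $\frac{1}{2\pi(x-y)}=S_{0;21}(x,y)$. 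Hence the singular pieces in the $(2,1)$-entry of $S_{\mu\mathbf 1_{\Lambda_L}}$ cancel against $S_0$ exactly. The remaining contributions to $S_{\mu\mathbf 1_{\Lambda_L}} - S_0$ are: the (logarithmic) $\mu(E_1+F_1)$ in the $(1,1)$-entry, which by \eqref{e:E1bound} of Lemma~\ref{le:E12bound} is bounded by $|\mu|P(L)(1+|\log|x-y||)$; the terms involving $E_2,F_2,E_3,F_3$ and their first derivatives, which are uniformly bounded by $P(L,|\mu|)$ using Lemma~\ref{le:E12bound} together with the termwise-differentiation argument used in Lemma~\ref{le:F2} (the series \eqref{e:Ej2}, \eqref{e:F2} for $j=3$ converge absolutely even after one derivative thanks to \eqref{e:efbounds1}--\eqref{e:efbounds2} and \eqref{e:evbounds}); and $R_{3;ij}$, bounded by Lemma~\ref{le:Rbounds}(iii). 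The $(1,2)$ and $(2,2)$ entries are handled by complex conjugation. Summing these estimates yields the claim.

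For the second bound (the case $x\in\Lambda_L^{\mathsf c}$), I would invoke the Poisson-type representation in Definition~\ref{d:Sy}. A standard residue computation gives
\begin{equation*}
\frac{r^2-L^2}{2\pi}\int_0^{2\pi}\frac{d\phi}{r^2+L^2-2rL\cos(\theta-\phi)}=1 \qquad (r>L),
\end{equation*}
and the integrand is positive, so
\begin{equation*}
|S_{\mu\mathbf 1_{\Lambda_L}}(x,y)|\leq \sup_{\phi\in[0,2\pi)}|S_{\mu\mathbf 1_{\Lambda_L}}(Le^{i\phi},y)|.
\end{equation*}
For $x\in\partial\Lambda_L$ the continuity of $S_{\mu\mathbf 1_{\Lambda_L}}(\cdot,y)$ (noted in Proposition~\ref{pr:Greens}) lets me take the boundary limit of the interior bound just proved: since $|Le^{i\phi}-y|\geq L-|y|$, we get $|S_0(Le^{i\phi},y)|\leq 1/(2\pi(L-|y|))$ and $P(L,|\mu|)(1+|\log|Le^{i\phi}-y||)\leq P(L,|\mu|)(1+|\log(L-|y|)|)$. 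Since the function $t\mapsto t(1+|\log t|)$ is bounded on $(0,2L]$ by a polynomial in $L$, we may absorb the logarithmic term into $\tilde P(L,|\mu|)/(L-|y|)$, giving the desired estimate on $\partial\Lambda_L$ and hence on $\Lambda_L^{\mathsf c}$.

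The main obstacle is really bookkeeping rather than substantive analysis: one has to identify the precise cancellation between the $l=0$ term of $S_{\mu\mathbf 1_{\Lambda_L};21}$ and $S_{0;21}$, and confirm that all remaining terms (in particular $\partial_x E_3,\partial_x F_3$ and $R_{3;ij}$) really are bounded uniformly by a polynomial in $L$ and $|\mu|$. The latter is the only place where one has to revisit the series representations: one checks that termwise differentiation of \eqref{e:Ej2}, \eqref{e:F2} for $j=3$, and of \eqref{e:R11}, \eqref{e:R21} for $m=3$, still yields a series whose terms decay like $j_{n,k}^{-s}$ with $s>2$, and hence convergent by \eqref{e:evbounds}.
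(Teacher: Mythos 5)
Your argument is correct and follows the same route as the paper: observe via \eqref{e:E1plusF1} that the $l=0$ term of the $(2,1)$-entry equals $S_{0;21}$ exactly, bound the remaining interior pieces through Lemma~\ref{le:E12bound} and Lemma~\ref{le:Rbounds}~(iii), and pass to $\Lambda_L^{\mathsf c}$ by harmonic extension (your explicit Poisson-kernel computation is what the paper's one-line appeal to the maximum principle encodes). Two small slips worth noting: first, $2\partial_x$ applied to $-\frac{1}{2\pi}\log|x-y|$ gives $-\frac{1}{2\pi(x-y)}$ rather than $+\frac{1}{2\pi(x-y)}$, and the cancellation is still correct only because the $l=0$ term of the $(2,1)$-entry in Definition~\ref{d:Sy} carries the prefactor $(-1)^{l+1}=-1$; second, the termwise bound on $\partial_x F_3$ via \eqref{e:efbounds2} leads to the marginally divergent series $\sum j_{n,k}^{-2}$, so it is cleaner to bound $\partial_x F_3$ through $F_3(x,y)=\int_{\Lambda_L}F_2(x,u)F_1(u,y)\,du$ (Lemma~\ref{le:ints}) together with the estimates of Lemma~\ref{le:E12bound}.
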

  
\begin{proof}
  For $x,y \in \mathrm{int}(\Lambda_L)$, it follows from that Lemma~\ref{le:E12bound} that
    \begin{gather}
      |E_1(x,y)|+|F_1(x,y)| \leq P(L)(1+|\log|x-y||),
      \\ 
      |\partial_x E_1(x,y)|+|\partial_x F_1(x,y)|
      \leq \frac{P(L)}{|x-y|},
      \\
      |E_2(x,y)|+|F_2(x,y)| +
      |\partial_x E_2(x,y)|+|\partial_x F_2(x,y)|
      \leq P(L),
    \end{gather}
    and note that $S_0$ is given by
    \begin{equation}
      S_{0;11}(x,y) = S_{0;22}(x,y) = 0, \qquad S_{0;21}(x,y) = -2\partial_x (E_1(x,y)+F_1(x,y)),
    \end{equation}
    and complex conjugation for the $12$-entry.
    Hence the singular $\partial_x(E_1+F_1)$ term in the definition of $S_{\mu \mathbf 1_\Lambda;21}$ in Definition~\ref{d:Sy}
    is canceled by $S_{0;21}$ (and analogously for the $21$-entry).
    The above bounds on the $E_1,F_1,E_2,F_2$ 
    together with the bounds from Lemma~\ref{le:Rbounds} for $R_3$ 
    thus readily imply the required bounds for $S_{\mu\mathbf 1_{\Lambda_L}}(x,y)$ when $x,y \in \mathrm{int}(\Lambda_L)$.

    For $x\in \Lambda_L ^{\mathsf c}$, the claim follows from the maximum principle.
    Indeed, $S_{\mathbf 1_{\Lambda_L}}(\cdot,y)$ is harmonic in $\Lambda_L^{\mathsf c}$, vanishes at infinity, and is continuous up to the boundary.
\end{proof}

\subsubsection{Proof of Theorem~\ref{th:gestimate} item (iii): Analyticity in \texorpdfstring{$\mu$}{mu}}

The goal of this section is to prove item (iii) of Theorem \ref{th:gestimate}, which is implied by the following proposition.

\begin{proposition} 	\label{pr:analy}
	For $x,y\in \mathrm{int}(\Lambda_L)$ with $x\neq y$, 
	the function $\mu\mapsto S_{\mu \mathbf 1_{\Lambda_L}}(x,y)$
	has an analytic continuation into some $L$-dependent neighborhood of the real axis.
        In this neighborhood, the estimate \eqref{e:Grbound} continuous to hold. 
        Moreover, in an $L$-dependent neighborhood of the origin, we have 
	\begin{align}
          S_{\mu\mathbf 1_{\Lambda_L};11}(x,y)&=\sum_{l=0}^\infty (-1)^{ l}\nmu^{2l+1}(E_{l+1}(x,y)+F_{l+1}(x,y))
                                             \\
          S_{\mu\mathbf 1_{\Lambda_L};21}(x,y)&=\sum_{l=0}^\infty (-1)^{l+1}\mu^{2l}2\partial_x(E_{l+1}(x,y)+F_{l+1}(x,y))
	\end{align}
	where $E_{l+1}$ and $F_{l+1}$ are as in \eqref{e:Ej} and \eqref{e:Fj}.
	In particular, 
	\begin{equation}
	\lim_{\mu \to 0}S_{\mu \mathbf 1_{\Lambda_L}}(x,y)=\frac{1}{2\pi}\begin{pmatrix}
	0 & 1/(\bar x-\bar y)\\
	1/(x-y) & 0
	\end{pmatrix}.
	\end{equation}
\end{proposition}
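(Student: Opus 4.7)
My plan is to exploit the decomposition in Definition~\ref{d:Sy}: the sum $\sum_{l=0}^{2}(-1)^l \mu^{2l+1}(E_{l+1}+F_{l+1})$ (respectively its $\partial_x$-version) is already a polynomial in $\mu$ with coefficients independent of $\mu$, hence entire in $\mu$; all $\mu$-dependence that might fail to be analytic is confined to $R_{3;11}$ and $R_{3;21}$, which are only rational in $\mu$ through the factors $\bigl(1+\mu^{2}L^{2}/j_{n,k}^{2}\bigr)^{-1}$. These factors have poles precisely at $\mu=\pm i\,j_{n,k}/L$, all of which lie on the imaginary axis. The first step will therefore be to identify the admissible complex neighborhood of $\R$ as any open set $U_L\subset\C$ whose closure avoids the discrete set $\{\pm i\,j_{n,k}/L\}_{n,k}$; since $j_{0,1}/L$ is the pole closest to $\R$, this gives an $L$-dependent strip.

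Next, I will prove analyticity of $R_{3;11}$ and $R_{3;21}$ in $U_L$ by a Weierstrass argument. Each term of the defining series is entire except at its pole; for $\mu$ in a compact $K\subset U_L$, one has $|1+\mu^{2}L^{2}/j_{n,k}^{2}|\geq c_K>0$ uniformly in $(n,k)$ (since this modulus tends to $1$ as $j_{n,k}\to\infty$, only finitely many terms can be small, and each of those is bounded below on $K$). Combining this with the pointwise bounds $\|e_{-n,k}\|_\infty,\|f_{n,k}\|_\infty \leq C j_{n,k}/L$ from Lemma~\ref{lem:ebounds} and $\|\nabla e_{-n,k}\|_\infty,\|\nabla f_{n,k}\|_\infty \leq C j_{n,k}^2/L^2$, together with the eigenvalue bound \eqref{e:evbounds}, shows that both series converge absolutely and uniformly on $K$. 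The limit is therefore analytic in $U_L$, and the same estimates upgrade the bound of Proposition~\ref{pr:Gbounds} to complex $\mu$ with a polynomial in $L$ and $|\mu|$ whose coefficients depend on the chosen compact (replacing the polynomial $P_m$ from Lemma~\ref{le:Rbounds} by a slightly larger one accounting for $c_K^{-1}$).

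For the power series expansion, I will work in the disk $|\mu|L<j_{0,1}$ where $|\mu^2L^2/j_{n,k}^2|<1$ uniformly. Expanding the geometric series
\begin{equation}
\frac{1}{1+\mu^{2}L^{2}/j_{n,k}^{2}}=\sum_{p=0}^{\infty}(-1)^{p}\Bigl(\frac{\mu L}{j_{n,k}}\Bigr)^{2p}
\end{equation}
and inserting into \eqref{e:R11}, the resulting double series is absolutely convergent (again by \eqref{e:evbounds} and Lemma~\ref{lem:ebounds}), so Fubini lets me swap $(n,k)$ with $p$. Each inner sum is exactly the representation \eqref{e:Ej2} of $E_{p+4}$ plus the representation of $F_{p+4}$ from Lemma~\ref{le:F2}, both valid because $p+4\geq 4\geq 3$. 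Reindexing $l=p+3$ yields $R_{3;11}(x,y)=\sum_{l\geq 3}(-1)^{l}\mu^{2l+1}(E_{l+1}+F_{l+1})$, and splicing this onto the polynomial part of Definition~\ref{d:Sy} gives the stated expansion for $S_{\mu\mathbf{1}_{\Lambda_L};11}$. The expansion for $S_{\mu\mathbf{1}_{\Lambda_L};21}$ is obtained identically, noting that the $\partial_x$ can be brought inside the termwise-convergent series by the uniform gradient bounds of Lemma~\ref{lem:ebounds}. Taking $\mu\to 0$ in these expansions, only the $l=0$ term survives for the $21$-entry, which by \eqref{e:E1plusF1} equals $-2\partial_x\bigl(-\tfrac{1}{2\pi}\log|x-y|+\text{anti-holomorphic in }x\bigr)=\tfrac{1}{2\pi(x-y)}$, while every term vanishes for the $11$-entry.

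The main technical delicacy will be the justification of interchanging $(n,k)$-summation with $p$-summation (equivalently, summation with $\mu$-differentiation when reading off the power series): this must be done with explicit majorants, because the series \eqref{e:R11} converges only through $j_{n,k}^{-2(m+1)}$ decay whereas the expanded series is indexed by both $(n,k)$ and $p$. The bound one needs is $\sum_{p,n,k}|\mu|^{2p+7}L^{2p+8}j_{n,k}^{-2p-8}\|e_{-n,k}\|_\infty^2<\infty$, which after applying Lemma~\ref{lem:ebounds} reduces to $\sum_{n,k}j_{n,k}^{-6}\sum_p(|\mu|L/j_{0,1})^{2p}<\infty$; this is exactly where the radius of convergence $|\mu|<j_{0,1}/L$ enters, and where \eqref{e:evbounds} ensures the $(n,k)$-sum converges with room to spare. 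All subsequent steps are routine once this interchange is in place.
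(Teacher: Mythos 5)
Your proposal is correct and takes essentially the same approach as the paper: you confine all non-polynomial $\mu$-dependence to $R_{3;11},R_{3;21}$, bound $\lvert 1+\mu^2L^2/j_{n,k}^2\rvert$ away from $0$ in a neighborhood of $\R$, deduce analyticity from locally uniform convergence, obtain the power series by geometric-series expansion with a Fubini interchange, and read off the $\mu\to 0$ limit from \eqref{e:E1plusF1}. The only differences are cosmetic: the paper fixes the explicit strip $|\mathrm{Im}\,\mu|<\tfrac{3\pi}{8L}$ and derives the lower bound $|1+\mu^2L^2/j_{n,k}^2|\geq 3/4$ directly from \eqref{e:evbounds} (then invokes Morera), whereas you use a compactness argument for the lower bound (implicitly Weierstrass), and you spell out the Fubini majorant that the paper merely gestures at.
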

\begin{proof}
	By Definition~\ref{d:Sy}, to prove analyticity, it is enough to prove analyticity of $R_{11}$ and $R_{21}$. For this purpose, consider $\mu\in \C$ with $|\mathrm{Im}(\mu)|<\frac{1}{2}\frac{3\pi}{4L}$. For such $\mu$ we have $\frac{(\mathrm{Im} (\mu))^2 L^2}{j_{n,k}^2}<\frac{1}{4}$  by \eqref{e:evbounds} and
	\begin{equation}
	\left|1+\frac{\mu^2 L^2}{j_{n,k}^2}\right|^2=\left(1+\frac{((\mathrm{Re} (\mu))^2-(\mathrm{Im} (\mu))^2)L^2}{j_{n,k}^2}\right)^2+\left(\frac{2(\mathrm{Re}(\mu))(\mathrm{Im} (\mu))L^2}{j_{n,k}^2}\right)^2\geq \frac{9}{16}. 
	\end{equation}	
	Thus retracing our proof of Lemma \ref{le:Rbounds}, we can check that the series defining $R_{11}$ and $R_{21}$ converge uniformly in $\mu$ in such a complex strip. It then follows, for example, by Morera's theorem that $R_{11}$ and $R_{21}$ are analytic functions in $\mu$ on such a strip. 
	
        For the analogue of \eqref{e:Grbound}, we note that the proof of Proposition \ref{pr:Gbounds} works in this setting as well, and we recover our bounds.
	
	The expansion in terms of $E_l$ and $F_l$ in a neighborhood of the origin follows readily from similar arguments and the definition of $R_{11}$ and $R_{21}$ along with \eqref{e:Ej2} and Lemma \ref{le:F2}. 
	
	For the claim about the $\mu\to 0$ limit, we see from the expansions that $\lim_{\mu \to 0}S_{\mu\mathbf 1_{\Lambda_L};11}=0$ while  
	\begin{equation}
	\lim_{\mu \to 0}S_{\mu\mathbf 1_{\Lambda_L};21}(x,y)=-2\partial_x (E_1(x,y)+F_1(x,y))
	\end{equation}
	and the claim for the $21$-entry follows from \eqref{e:E1plusF1}. 
        The claim for the $12$- and $22$-entries follows simply by complex conjugation (recalling Definition~\ref{d:Sy}).
\end{proof}

Our next goal is to establish a type of resolvent identity for $S_{\mathbf 1_{\Lambda_L}}$.

\subsubsection{Proof of Theorem~\ref{th:gestimate} item (iv): A resolvent identity}

The goal of this section is to prove item (iv) in Theorem \ref{th:gestimate}, namely the following result. 

\begin{proposition}
  \label{pr:res}
  For any $L\geq 1$ and $x,y\in \Lambda_L$ with $x\neq y$, we have for $\mu\in \R$,
  \begin{equation}
    \partial_\mu S_{\mu\mathbf 1_{\Lambda_L};ij}(x,y)={-}\sum_{k\in \{1,2\}}\int_{\Lambda_L} du\, S_{\mu\mathbf 1_{\Lambda_L};ik}(x,u)S_{\mu\mathbf 1_{\Lambda_L};kj}(u,y).
  \end{equation}
\end{proposition}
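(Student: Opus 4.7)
My plan is to prove the identity via the standard resolvent identity, exploiting the linear $\mu$-dependence of $D_\mu := i\Dirac + \mu \mathbf 1_{\Lambda_L}$.

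\emph{Step 1 (finite-difference identity).} For fixed $\mu \in \R$, $y \in \mathrm{int}(\Lambda_L)$, and small $\delta \in \R$, I will combine the distributional identities $D_{\mu+\delta} S_{\mu+\delta}(\cdot, y) = \delta_y I$ and $D_\mu S_\mu(\cdot, y) = \delta_y I$ from Proposition~\ref{pr:Greens}, with the elementary relation $D_{\mu+\delta} - D_\mu = \delta\, \mathbf 1_{\Lambda_L}$, to obtain
\begin{equation*}
D_\mu \bigl( S_{\mu+\delta}(\cdot, y) - S_\mu(\cdot, y) \bigr) = -\delta\, \mathbf 1_{\Lambda_L}(\cdot)\, S_{\mu+\delta}(\cdot, y).
\end{equation*}
Inverting $D_\mu$ by convolution with its Green's function $S_\mu$, this should upgrade to the integral identity
\begin{equation*}
S_{\mu+\delta}(x, y) - S_\mu(x, y) = -\delta \int_{\Lambda_L} du\, S_\mu(x, u)\, S_{\mu+\delta}(u, y). \tag{$\ast$}
\end{equation*}
To justify $(\ast)$, I will let $w$ be the difference of its two sides and, via a Fubini-type argument based on $D_x S_\mu(x, u) = \delta_u(x) I$, together with the decay/continuity from Proposition~\ref{pr:Gbounds} and the boundary behavior recorded in the proof of Proposition~\ref{pr:Greens}, verify that $D_\mu w(\cdot, y) = 0$ on all of $\R^2$ and that $w(\cdot, y) \to 0$ at infinity.

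\emph{Step 2 (uniqueness).} The next step is to show that any such $w$ vanishes identically. Outside $\Lambda_L$ the equation $D_\mu w = i\Dirac w = 0$ decouples and forces each matrix component to be holomorphic or antiholomorphic in $x$; decay at infinity then makes these components vanish on $\Lambda_L^{\mathsf c}$. Continuity (established already in the construction of $S_{\mu \mathbf 1_{\Lambda_L}}$) gives $w = 0$ on $\partial \Lambda_L$. Inside $\Lambda_L$, I will apply $(-i\Dirac + \mu)$ to $D_\mu w = 0$ and use the factorization $(-i\Dirac + \mu)(i\Dirac + \mu) = -\Delta + \mu^2$ to obtain $(-\Delta + \mu^2) w = 0$ on $\mathrm{int}(\Lambda_L)$ with zero Dirichlet data, which admits only the trivial solution.

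\emph{Step 3 (pass $\delta \to 0$).} Dividing $(\ast)$ by $\delta$ and sending $\delta \to 0$ will give the proposition. By Proposition~\ref{pr:analy} the map $\mu \mapsto S_{\mu \mathbf 1_{\Lambda_L}}(x, y)$ is analytic on a complex neighborhood of the real axis, and $\partial_\mu S_\mu(x,y)$ is the pointwise derivative there; the uniform bound \eqref{e:Grbound} provides an integrable dominating envelope of the form $P(L,|\mu|)(1 + |\log|x - u||)(1 + |\log|u - y||)$ on compacta, so dominated convergence applies and yields the matrix identity $\partial_\mu S_\mu(x, y) = -\int_{\Lambda_L} du\, S_\mu(x, u)\, S_\mu(u, y)$. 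Reading off the $(i, j)$-entry is the statement of Proposition~\ref{pr:res}.

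The most delicate point will be the uniqueness argument in Step 2, in particular justifying the factorization $(-i\Dirac + \mu)(i\Dirac + \mu) = -\Delta + \mu^2$ applied to $w$ in $\mathrm{int}(\Lambda_L)$, which requires enough regularity of $w$. I expect this to follow either from the explicit smooth structure of $S_\mu$ built out of $E_j$, $F_j$, and $R_{m;ij}$ (Lemmas~\ref{le:F2} and \ref{le:Rbounds}), which makes $w$ manifestly smooth away from the diagonal, or from a standard elliptic-regularity bootstrap applied to the first-order equation $D_\mu w = 0$ inside $\Lambda_L$.
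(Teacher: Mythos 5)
Your strategy — form the finite-difference resolvent identity $D_\mu(S_{\mu+\delta}-S_\mu) = -\delta\,\mathbf 1_{\Lambda_L} S_{\mu+\delta}$, invert $D_\mu$ against $S_\mu$, and appeal to uniqueness — is genuinely different from the paper's. The paper does not invoke any uniqueness of the Green's function at all: it differentiates the explicit power series from Proposition~\ref{pr:analy} and matches coefficients with the convolution integral term-by-term via the algebra of $E_j$, $F_j$ in Lemma~\ref{le:ints}, and then extends by analyticity. Your Steps 1 and 3 are sound in outline (the needed dominated-convergence estimate is exactly Proposition~\ref{pr:Gbounds}, and the Fubini/integration-by-parts manipulation you describe is essentially what the paper does in Lemma~\ref{le:res2}). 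The advantage of your route, if it worked, would be to avoid the explicit bookkeeping of $E_j$, $F_j$, $R_{m;ij}$.

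The gap is in Step 2. You claim that on $\Lambda_L^{\mathsf c}$ the components of $w$ are holomorphic or antiholomorphic and that ``decay at infinity then makes these components vanish.'' This is false: a holomorphic function on $\{|z|>L\}$ which tends to $0$ at infinity is an arbitrary Laurent tail $\sum_{n\geq 1} a_n z^{-n}$ (e.g.\ $1/z$) and need not vanish. Consequently you cannot conclude $w=0$ on $\partial\Lambda_L$, and the zero-Dirichlet argument in the interior cannot get started. Uniqueness of the Green's function with the vanishing-at-infinity condition is in fact true for $\mu\neq 0$, but it is not a Liouville-type consequence of decay alone; it must come from matching the interior and exterior solutions across $\partial\Lambda_L$. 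Concretely, decomposing into angular Fourier modes, the interior Dirac system forces pairs $(a_n,b_{n-1})\propto (I_n(\mu r), -I_{n-1}(\mu r))$, whereas the exterior holomorphicity forces one of the two to vanish identically in each mode; since $I_{k}(\mu L)\neq 0$, continuity kills every coefficient. That is effectively the same information that the paper encodes in its $E_j$/$F_j$/$R_m$ mode decomposition, so carrying out your Step 2 correctly would not be materially shorter than the paper's direct computation. You should also note that the paper never states or proves this uniqueness, so it cannot be cited; it would have to be proved from scratch.
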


\begin{proof}
	By the definition of $S_{\mu\mathbf 1_{\Lambda_L};22}(x,y)$ and $S_{\mu\mathbf 1_{\Lambda_L};12}(x,y)$ from
	Definition \ref{d:Sy}, it is sufficient to prove the claim for $S_{\mu\mathbf 1_{\Lambda_L};11}(x,y)$ and $S_{\mu\mathbf 1_{\Lambda_L};21}(x,y)$. Moreover, as one readily checks from Proposition \ref{pr:analy} that both sides are analytic functions of $\mu$ in a neighborhood of the real axis, it is enough for us to verify the claim for $\mu$ in the neighborhood of the origin where we can use the series expansion of Proposition \ref{pr:analy}. Our key tool in the proof will be Lemma \ref{le:ints}.
	
	Let us begin with $S_{\mu\mathbf 1_{\Lambda_L};11}$. Using the expansion of Proposition \ref{pr:analy}, we have 
	\begin{equation}
	\partial_\mu S_{\mu\mathbf 1_{\Lambda_L};11}(x,y)=\sum_{l=0}^\infty (2l+1)(-1)^{l} \mu^{2l}(E_{l+1}(x,y)+F_{l+1}(x,y)).
	\end{equation}
	Also, using again Proposition \ref{pr:analy} for the expansions and Lemma \ref{le:ints} to calculate the integrals,
	\begin{align}
          &\sum_{k\in \{1,2\}}\int_{\Lambda_L} du\, S_{\mu\mathbf 1_{\Lambda_L};1k}(x,u)S_{\mu\mathbf 1_{\Lambda_L};k1}(u,y)\\
          &=\sum_{l,m=0}^\infty(-1)^{l+m}\mu^{2l+2m+2}\int_{\Lambda_L}du\,(E_{l+1}(x,u)+F_{l+1}(x,u))(E_{m+1}(u,y)+F_{m+1}(u,y))\nnb
          &\quad \sum_{l,m=0}^\infty(-1)^{l+m}\mu^{2l+2m}\int_{\Lambda_L}du\,(2\bar\partial_x \overline{E_{l+1}(x,u)}+2\bar \partial_x\overline{F_{l+1}(x,u)})
            \nnb  &\qquad\qquad\qquad\qquad\qquad\qquad\qquad\qquad\times 
                    (2\partial_u E_{m+1}(u,y)+2\partial_u F_{m+1}(u,y)). \notag
	\end{align}
	The integrals without the derivatives can be evaluated immediately from Lemma \ref{le:ints}. For the derivative terms, note that 
        \begin{equation}
	\begin{gathered}
            \bar \partial_x \overline{F_{l+1}(x,u)}=\bar\partial_x 4\partial_x\bar \partial_u E_{l+2}(x,u)=-\bar\partial_u E_{l+1}(x,u),
            \\
            \partial_u F_{m+1}(u,y)=-\partial _y\overline{E_{m+1}(u,y)}.
        \end{gathered}
        \end{equation}
	Thus, integrating by parts and recalling that $E_j$ vanishes on $\partial \Lambda_L$ (with respect to either variable), and using Lemma \ref{le:ints}, we find 
	\begin{align}
	&\int_{\Lambda_L}du\,(2\bar\partial_x\overline{E_{l+1}(x,u)}+2\bar\partial_x\overline{F_{l+1}(x,u)})(2\partial_u E_{m+1}(u,y)+2\partial_u F_{m+1}(u,y))\\
	&=-\int_{\Lambda_L}du\,F_l(x,u)E_{m+1}(u,y)-4\bar \partial_x\partial_y\int_{\Lambda_L}du\,\overline{E_{l+1}(x,u)}\overline{E_{m+1}(u,y)}\nnb
          &\qquad -\int_{\Lambda_L}du\,E_{l+1}(x,u)E_m(u,y)-\int_{\Lambda_L}du\,E_{l+1}(x,u)F_m(u,y)
            \nnb &  
            =-F_{l+m+1}(x,y)-E_{l+m+1}(x,y). \notag
	\end{align}
	We conclude that 
	\begin{align}
	\sum_{k\in \{1,2\}}&\int_{\Lambda_L}du\,S_{\mu\mathbf 1_{\Lambda_L};1k}(x,u)S_{\mu\mathbf 1_{\Lambda_L};k1}(u,y)\\ 
	&=\sum_{l,m=0}^\infty(-1)^{l+m}\mu^{2l+2m+2}(E_{l+m+2}(x,y)+F_{l+m+2}(x,y))\nnb
	&\qquad -\sum_{l,m=0}^\infty(-1)^{l+m}\mu^{2l+2m}(E_{l+m+1}(x,y)+F_{l+m+1}(x,y))\nnb
	&=\sum_{n=0}^\infty \left[\sum_{l,m=0}^\infty \mathbf 1\{l+m+1=n\}\right](-1)^{n+1}\mu^{2n}(E_{n+1}(x,y)+F_{n+1}(x,y))\nnb
	&\qquad +\sum_{n=0}^\infty \left[\sum_{l,m=0}^\infty \mathbf 1\{l+m=n\}\right](-1)^{n+1} \mu^{2n}(E_{n+1}(x,y)+F_{n+1}(x,y))\notag
	\end{align}
	Noting that 
	\begin{equation}
	\sum_{l,m=0}^\infty \mathbf 1\{l+m+1=n\}+\sum_{l,m=0}^\infty \mathbf 1\{l+m=n\}=n+(n+1)=2n+1,
	\end{equation}
	we see that 
	\begin{equation}
          {-}\sum_{k\in \{1,2\}}\int_{\Lambda_L}du\,S_{\mu\mathbf 1_{\Lambda_L};1k}(x,u)S_{\mu\mathbf 1_{\Lambda_L};k1}(u,y)
          =	\partial_\mu S_{\mu\mathbf 1_{\Lambda_L};11}(x,y)
	\end{equation}
	as was required. 
	
	We now turn to the $21$-entry. For this, we begin with the remark (from Proposition \ref{pr:analy}) that
	\begin{equation}
	\partial_\mu S_{\mu\mathbf 1_{\Lambda_L};21}(x,y)=\sum_{l=1}^\infty 2l(-1)^{l+1} \mu^{2l-1}2\partial_x(E_{l+1}(x,y)+F_{l+1}(x,y)).
	\end{equation}
	On the other hand, we have 
	\begin{align}
	&\sum_{k\in \{1,2\}}\int_{\Lambda_L}du\,S_{\mu\mathbf 1_{\Lambda_L};2k}(x,u)S_{\mu\mathbf 1_{\Lambda_L};k1}(u,y)\\
	&=\sum_{l,m=0}^\infty (-1)^{l+m+1}\nmu^{2l+2m+1}\int_{\Lambda_L}du\,2\partial_x(E_{l+1}(x,u)+F_{l+1}(x,u))(E_{m+1}(u,y)+F_{m+1}(u,y))\nnb
          & 
            +\sum_{l,m=0}^\infty (-1)^{l+m+1}\nmu^{2l+2m+1}\int_{\Lambda_L}du\,(\overline{E_{l+1}(x,u)}+\overline{F_{l+1}(x,u)})2\partial_u(E_{m+1}(u,y)+F_{m+1}(u,y)). \notag
	\end{align}
	The first integrals can again be evaluated directly by taking the $x$-derivative outside from under the integral and using Lemma \ref{le:ints}. For the second integrals, we treat various terms in different ways: the $\overline{E}$-$E$ term we integrate by parts and note as before that  
	\begin{equation}
	-\partial_u\overline{E_{l+1}(x,u)}=\partial_x F_{l+1}(x,u),
	\end{equation}
	which by Lemma~\ref{le:ints} leads to a term which integrates to zero. 
	
	In the $\overline{F}$-$E$ term we write $\overline{F_{l+1}(x,u)}=4\partial_x\bar\partial_u E_{l+2}(x,u)$ and integrate by parts the $u$-derivative which (by Lemma \ref{le:ints}) leads to 
	\begin{equation}
	2\partial_x\int_{\Lambda_L}du\, E_{l+2}(x,u)(-\Delta_u)E_{m+1}(u,y)=2\partial_xE_{l+m+2}(x,y).
	\end{equation}
	
	For the $\overline{E}$-$F$ and $\overline{F}$-$F$ terms we note that $2\partial_uF_{m+1}(u,y)=-2\partial_y\overline{E_{m+1}(u,y)}$. Thus by Lemma \ref{le:ints} (and a similar argument as before utilizing the definition of $F_j$), the $\overline{F}$-$F$ term integrates to zero while 
	\begin{equation}
	\int_{\Lambda_L}du\,\overline{E_{l+1}(x,u)}2\partial_u F_{m+1}(u,y)=-2\partial_y\overline{E_{l+m+2}(x,y)}=2\partial_xF_{l+m+2}(x,y).
	\end{equation}
	
	Putting everything together, we conclude that both types of integrals have the same total contribution and 
	\begin{align}
	&\sum_{k\in \{1,2\}}\int_{\Lambda_L}du\,S_{\mu\mathbf 1_{\Lambda_L};2k}(x,u)S_{\mu\mathbf 1_{\Lambda_L};k1}(u,y)\\
	&\quad =2\sum_{l,m=0}^\infty (-1)^{l+m +1}\nmu^{2l+2m+1}2\partial_x(E_{l+m+2}(x,y)+F_{l+m+2}(x,y))\nnb
	&\quad =2\sum_{n=0}^\infty \left[\sum_{l,m=0}^\infty \mathbf 1\{l+m=n\}\right](-1)^{n}\mu^{2n+1}2\partial_x(E_{n+2}(x,y)+F_{n+2}(x,y))\nnb
	&\quad =\sum_{n=0}^\infty 2(n+1)(-1)^{n +1}\nmu^{2n+1}2\partial_x(E_{n+2}(x,y)+F_{n+2}(x,y))\nnb
	&\quad =\sum_{n=1}^\infty 2n (-1)^{n}\nmu^{2n-1}2\partial_x(E_{n+1}(x,y)+F_{n+1}(x,y)),\notag
	\end{align}
	which is precisely of the desired form and we are thus done.
\end{proof}

Finally we turn to convergence as $L\to\infty$.

\subsubsection{Proof of Theorem~\ref{th:gestimate} item (v): the \texorpdfstring{$L\to\infty$}{infinite volume} limit}

In this section, we prove item (v) of Theorem~\ref{th:gestimate}. We state this separately as the following proposition.

\begin{proposition}	\label{pr:Linfty}
	For $\mu\neq 0$, as $L\to\infty$, 
	\begin{equation}
	S_{\mu\mathbf 1_{\Lambda_L}}(x,y)\to -\frac{1}{2\pi} \begin{pmatrix}
          -\nmu K_0(|\mu||x-y|) & 2\bar\partial_x K_0(|\mu||x-y|)\\
	2 \partial_xK_0(|\mu||x-y|) & -\nmu K_0(|\mu||x-y|)
	\end{pmatrix}=:S_\mu(x,y)
	\end{equation}
	uniformly in compact subsets of $\{(x,y)\in \C^2:x\neq y\}$.
\end{proposition}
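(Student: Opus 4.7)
The plan is to derive the resolvent identity
\[
S_{\mu\mathbf 1_{\Lambda_L}}(x,y) - S_\mu(x,y) = \mu \int_{\Lambda_L^c} du\, S_\mu(x,u)\, S_{\mu\mathbf 1_{\Lambda_L}}(u,y),
\]
and then to show that the right-hand side vanishes as $L\to\infty$ uniformly on compact subsets of $\{(x,y)\in \C^2: x\neq y\}$. The identity itself follows formally from writing $i\Dirac + \mu\mathbf 1_{\Lambda_L} = (i\Dirac + \mu) - \mu\mathbf 1_{\Lambda_L^c}$, so that the defining equation for $S_{\mu\mathbf 1_{\Lambda_L}}$ rearranges to $(i\Dirac + \mu) S_{\mu\mathbf 1_{\Lambda_L}}(\cdot,y) = \delta_y + \mu\mathbf 1_{\Lambda_L^c} S_{\mu\mathbf 1_{\Lambda_L}}(\cdot,y)$, and convolving in $x$ with the fundamental solution $S_\mu$ of $i\Dirac + \mu$.

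To justify this identity rigorously, I would test both sides against $f\in C_c^\infty(\R^2,\C^2)$, using the weak formulation of $(i\Dirac + \mu\mathbf 1_{\Lambda_L}) S_{\mu\mathbf 1_{\Lambda_L}}(\cdot,y) = \delta_y$ already established in the proof of Proposition~\ref{pr:Greens} together with the analogous weak formulation for $S_\mu$ and Fubini's theorem. The integrability needed to apply Fubini is supplied by the $O(1/|u|)$ decay of $S_{\mu\mathbf 1_{\Lambda_L}}(\cdot,y)$ as $|u|\to\infty$ (as observed at the end of the proof of Proposition~\ref{pr:Greens}) together with the exponential decay of $S_\mu$ coming from the standard asymptotics $K_0(r), |K_0'(r)| = O(r^{-1/2}e^{-r})$ as $r\to\infty$.

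Having established the identity, I would bound the correction term by splitting the $u$-integral into the annulus $\{L \leq |u| \leq 2L\}$ and the far region $\{|u| \geq 2L\}$. For $x$ in a fixed compact subset of $\C\setminus\{y\}$ and $L$ large enough that $|x| < L/2$, the exponential decay of Bessel functions yields $|S_\mu(x,u)| \leq C_{\mu,x}\, e^{-|\mu||u|/2}$ for $|u|\geq L$. On the annulus, Theorem~\ref{th:gestimate}(ii) supplies the polynomial bound $|S_{\mu\mathbf 1_{\Lambda_L}}(u,y)| \leq P(L,|\mu|)/(L-|y|)$; combined with the $O(L^2)$ area of the annulus and the exponential decay of $S_\mu$, this portion contributes $O(e^{-c|\mu| L})$ for some $c>0$. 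On the far region, I would use the Poisson representation together with the observation made in the proof of Proposition~\ref{pr:Greens} that the boundary values on $\partial \Lambda_L$ of each component of $S_{\mu\mathbf 1_{\Lambda_L}}$ consist of strictly positive or strictly negative angular Fourier modes only; hence their harmonic extensions to $\Lambda_L^c$ have no monopole term and decay like $L/|u|$, so $|S_{\mu\mathbf 1_{\Lambda_L}}(u,y)| \leq P'(L,|\mu|,|y|)\, L/|u|$, and the exponential decay of $S_\mu$ again dominates to yield an $O(e^{-c|\mu| L})$ contribution.

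The main technical obstacle will be the rigorous justification of the resolvent identity; once this is in place, the bounds above are elementary consequences of the exponential asymptotics of the modified Bessel functions and the explicit Poisson formula outside the disk. Since none of the estimates depend on the particular $(x,y)$ beyond their lying in a fixed compact subset of $\{x\neq y\}$ with $|y|$ bounded, the convergence is uniform on such sets, which is the claim of the proposition.
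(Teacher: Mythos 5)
Your proof takes essentially the same route as the paper: derive the resolvent identity $S_{\mu\mathbf 1_{\Lambda_L}}(x,y) - S_\mu(x,y) = \mu\int_{\Lambda_L^{\mathsf c}} S_\mu(x,u)\, S_{\mu\mathbf 1_{\Lambda_L}}(u,y)\,du$ by a weak-formulation/integration-by-parts argument (the paper's Lemma~\ref{le:res2}, which tests against a pair of disjointly supported $f,g\in C_c^\infty(\mathrm{int}(\Lambda_L))$ to avoid the diagonal), then bound the correction by the exponential decay of $S_\mu$ against the polynomial-in-$L$ bound on $S_{\mu\mathbf 1_{\Lambda_L}}$. Your split of the $u$-integral into an annulus and a far region is unnecessary: Theorem~\ref{th:gestimate}(ii) gives $|S_{\mu\mathbf 1_{\Lambda_L}}(u,y)|\leq P(L,|\mu|)/(L-|y|)$ uniformly on all of $\Lambda_L^{\mathsf c}$, and $\int_{\Lambda_L^{\mathsf c}} e^{-|\mu||x-u|}\,du \leq e^{-\alpha|\mu|L}$ already dominates any polynomial in $L$, which is exactly how the paper concludes.
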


For the proof of Proposition~\ref{pr:Linfty}, we will need the following result which can also be interpreted as a resolvent identity.

\begin{lemma} 	\label{le:res2}
	For $x,y\in \mathrm{int}(\Lambda_L)$, $x \neq y$,
	\begin{equation}\label{e:res2}
	S_{\mu\mathbf 1_{\Lambda_L}}(x,y)-S_\mu(x,y)= \nmu\int_{\Lambda_L ^{\mathsf c}}du\,S_\mu(x,u)S_{\mu\mathbf 1_{\Lambda_L}}(u,y).
	\end{equation}
\end{lemma}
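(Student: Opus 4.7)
The plan is to treat \eqref{e:res2} as a resolvent identity for the perturbation $i\slashed\partial + \mu\mathbf 1_{\Lambda_L} = (i\slashed\partial+\mu) - \mu\mathbf 1_{\Lambda_L^{\mathsf c}}$ of the full-mass Dirac operator $i\slashed\partial + \mu$. Formally, the manipulation $S_{\mu\mathbf 1_{\Lambda_L}} = S_\mu + S_\mu\cdot(\mu\mathbf 1_{\Lambda_L^{\mathsf c}})\cdot S_{\mu\mathbf 1_{\Lambda_L}}$ yields \eqref{e:res2}. To make this rigorous, set $T(x,y) := S_{\mu\mathbf 1_{\Lambda_L}}(x,y) - S_\mu(x,y)$ and let $R(x,y)$ denote the right-hand side of \eqref{e:res2}. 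I will show that $T$ and $R$ are well defined, satisfy the same elliptic equation in $x$ distributionally, and vanish at infinity, which forces $T=R$. (For $\mu=0$ both sides are trivially equal since $S_{0\cdot\mathbf 1_{\Lambda_L}}=S_0$, so assume $\mu\neq 0$.)

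First I would verify that $R$ is well defined: by the asymptotics of $K_0$ the kernel $S_\mu(x,u)$ decays exponentially in $|x-u|$, while $S_{\mu\mathbf 1_{\Lambda_L}}(u,y)$ is uniformly bounded on $\Lambda_L^{\mathsf c}$ by Proposition~\ref{pr:Gbounds} and in fact decays like $1/|u|$ at infinity, since by the proof of Proposition~\ref{pr:Greens} its boundary data on $\partial\Lambda_L$ has only strictly positive or strictly negative Fourier modes and hence vanishing zeroth mode. Absolute convergence is then immediate. Next, I would compute $(i\slashed\partial_x+\mu)$ applied to both $T$ and $R$ in $\mathcal D'(\R^2)$. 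The weak form \eqref{e:goal} of \eqref{e:Green} proved in Proposition~\ref{pr:Greens} rearranges to
\begin{equation*}
(i\slashed\partial_x+\mu)\,S_{\mu\mathbf 1_{\Lambda_L}}(x,y) \;=\; \delta_y(x)\,\mathbf 1 \;+\; \mu\mathbf 1_{\Lambda_L^{\mathsf c}}(x)\,S_{\mu\mathbf 1_{\Lambda_L}}(x,y),
\end{equation*}
while $(i\slashed\partial_x+\mu)S_\mu(\cdot,y)=\delta_y\,\mathbf 1$; subtracting yields $(i\slashed\partial+\mu)T=\mu\mathbf 1_{\Lambda_L^{\mathsf c}} S_{\mu\mathbf 1_{\Lambda_L}}$. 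Testing $R$ against $f\in C_c^\infty(\R^2)$, applying Fubini (legitimate by the integrability above), and using $\langle(i\slashed\partial_x+\mu)S_\mu(x,u),f(x)\rangle=f(u)$ reduces the left-hand side to $\mu\int_{\Lambda_L^{\mathsf c}}f(u)\,S_{\mu\mathbf 1_{\Lambda_L}}(u,y)\,du$, i.e.\ $(i\slashed\partial+\mu)R=\mu\mathbf 1_{\Lambda_L^{\mathsf c}} S_{\mu\mathbf 1_{\Lambda_L}}$ as well. Hence $(i\slashed\partial+\mu)(T-R)=0$ on $\R^2$.

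To conclude, $i\slashed\partial+\mu$ is elliptic with Fourier symbol of determinant $|\xi|^2+\mu^2>0$, so elliptic regularity makes $T-R$ smooth; composing with the companion operator $-i\slashed\partial+\mu$ gives $(-\Delta+\mu^2)(T-R)=0$. Both $T$ and $R$ tend to zero as $|x|\to\infty$ (for $T$ from the infinity conditions on both Green's functions; for $R$ by dominated convergence using exponential decay of $S_\mu(x,u)$ in $|x-u|$ together with the $1/|u|$-decay of $S_{\mu\mathbf 1_{\Lambda_L}}(u,y)$), so $T-R$ is bounded, hence tempered, and the Fourier transform of $(-\Delta+\mu^2)(T-R)=0$ forces $(|\xi|^2+\mu^2)\widehat{(T-R)}=0$; therefore $T-R\equiv 0$. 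The main technical obstacle is the distributional identity for $(i\slashed\partial_x+\mu)S_{\mu\mathbf 1_{\Lambda_L}}$ across the interface $\partial\Lambda_L$, where $S_{\mu\mathbf 1_{\Lambda_L}}(\cdot,y)$ is only $C^0$ with a jump in normal derivatives; this is precisely what the weak formulation \eqref{e:goal} encodes, and it absorbs the interface jump with no surface contribution.
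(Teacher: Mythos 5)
Your proof is correct in substance, but it takes a genuinely different route from the paper's. The paper proves \eqref{e:res2} by a direct computation: it tests both sides against a pair $f,g\in C_c^\infty(\mathrm{int}(\Lambda_L))$ with disjoint supports, verifies the required smoothness of $u\mapsto \int f(x)S_{\mu\mathbf 1_{\Lambda_L}}(x,u)\,dx$ and of the corresponding $S_\mu$-smear (via Lemma~\ref{le:Rbounds} (iv) and \eqref{e:fEjFfsmooth}), and then performs an explicit chain of integrations by parts in the $u$-variable, using that $(i\slashed\partial_x+\mu)S_\mu(x,u)=\delta(x-u)$, that $S_\mu$ is a function of $x-u$ so $(-i\slashed\partial_u+\mu)S_\mu(x,u)=\delta(x-u)$, and that $(i\slashed\partial_u+\mu\mathbf 1_{\Lambda_L})S_{\mu\mathbf 1_{\Lambda_L}}(u,y)=\delta(u-y)$. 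The terms recombine algebraically to give $\mu\mathbf 1_{\Lambda_L^{\mathsf c}}$, with no appeal to a uniqueness theorem. Your argument instead shows $T:=S_{\mu\mathbf 1_{\Lambda_L}}-S_\mu$ and $R$ both solve $(i\slashed\partial+\mu)u=\mu\mathbf 1_{\Lambda_L^{\mathsf c}}S_{\mu\mathbf 1_{\Lambda_L}}(\cdot,y)$ and vanish at infinity, and then closes via ellipticity of $i\slashed\partial+\mu$ and Fourier-uniqueness for $-\Delta+\mu^2$. This is a clean PDE-theoretic approach; it relies on the same input facts (the weak form \eqref{e:goal}, the decay from Proposition~\ref{pr:Gbounds}, the $1/|u|$-decay of $S_{\mu\mathbf 1_{\Lambda_L}}$ on $\Lambda_L^{\mathsf c}$) but replaces the paper's explicit algebra with an abstract uniqueness principle; the paper's version is more self-contained, yours perhaps a bit more conceptual. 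One point to tighten: you assert that $T-R$ is bounded. Near $x=y$ both $T$ and $R$ are a priori only $O(\log|x-y|)$ (Proposition~\ref{pr:Gbounds} and the $K_0$-asymptotics give $|S_{\mu\mathbf 1_{\Lambda_L}}-S_0|$ and $|S_\mu-S_0|$ only up to log corrections), so boundedness needs a separate cancellation argument. But this is not essential for your conclusion: a locally integrable function with decay at infinity is already tempered, and the Fourier step only needs temperedness, so the argument survives once the overly strong boundedness claim is weakened to local integrability plus decay.
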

\begin{proof}
  It is of course sufficient for us to prove that for any $f,g\in C_c^\infty (\mathrm{int}(\Lambda_L))$ with disjoint supports,
	\begin{multline}
	\label{e:goalmu}
        \int_{\Lambda_L\times \Lambda_L}dx\,dy\,f(x)g(y)[S_{\mu\mathbf 1_{\Lambda_L}}(x,y)-S_\mu(x,y)]
        \\
          =\nmu\int_{\Lambda_L\times \Lambda_L}dx\,dy\,f(x)g(y)\int_{\Lambda_L ^{\mathsf c}} du\, S_\mu(x,u)S_{\mu\mathbf 1_{\Lambda_L}}(u,y).
	\end{multline}
	Using the disjointness of the supports of $f$ and $g$, Proposition \ref{pr:Gbounds} for $S_{\mu \mathbf 1_{\Lambda_L}}$,
        and  routine asymptotics of Bessel functions for $S_\mu$,
        we see that the integrands here are $L^1$-functions. By Fubini, we can thus perform the integrals in any order we wish.
	We now claim that, on the left-hand side of \eqref{e:goalmu}, 
	\begin{equation}
          \label{e:claim}
          \begin{gathered}
            y\mapsto g(y)\int_{\Lambda_L}dx\,f(x)S_{\mu\mathbf 1_{\Lambda_L}}(x,y)\,  \in C_c^\infty(\mathrm{int}(\Lambda_L)),
        \\
        x\mapsto f(x)\int_{\Lambda_L}dy\,g(y)S_\mu(x,y)\, \in C_c^\infty(\mathrm{int}(\Lambda_L)).
      \end{gathered}
    \end{equation}
    The fact that these functions have compact support follows from $f$ and $g$ having compact support. The smoothness of the $S_{\mu\mathbf 1_{\Lambda_L}}$-term follows from
    Lemma~\ref{le:Rbounds} item (iv) and \eqref{e:fEjFfsmooth}.
    The smoothness of the $S_\mu$-term
    follows immediately from the explicit expression of $S_\mu$ which is smooth off the diagonal. 
        
    By definition of the Green's functions,
    we have 
    $(i\slashed \partial_x+\nmu)S_\mu(x,u)=\delta(x-u)$ and
    $(i\slashed \partial_x+\nmu \mathbf 1_{\Lambda_L}(x))S_{\mu\mathbf 1_{\Lambda_L}}(x,u)=\delta(x-u)$. Since $S_\mu(x,u)$ is a function of $x-u$, also $(-i\slashed \partial_u+\nmu)S_\mu(x,u)=\delta(x-u)$.
    Thus the above smoothness (and integration by parts) implies that
    \begin{align}
      &\int_{\Lambda_L \times \Lambda_L}dx\, dy \, f(x)g(y)[S_{\mu\mathbf 1_{\Lambda_L}}(x,y)-S_\mu(x,y)]
                                         \nnb
      &=\int_{\R^2} du\, \left[(-i\slashed \partial_u-\mu)  \int_{\Lambda_L} dx\, f(x)\, S_\mu(x,u) \right] \left[\int_{\Lambda_L}dy \, g(y)S_{\mu\mathbf 1_{\Lambda_L}}(u,y)\right]\nnb
      &\quad -\int_{\R^2} du\, \left[\int_{\Lambda_L} dx\, f(x)\,S_{\mu}(x,u)\right]  \left[(i\slashed \partial_u+\nmu\mathbf 1_{\Lambda_L}(u))\int_{\Lambda_L}dy\, g(y) \, S_{\mu \mathbf 1_{\Lambda_L}}(u,y)\right]
        \nnb
      &=\int_{\R^2} du \, \left[ \int_{\Lambda_L} dx \, f(x)\, S_\mu(x,u) \right] \left[(i\slashed \partial_u+\nmu) \int_{\Lambda_L}dy \, g(y)S_{\mu\mathbf 1_{\Lambda_L}}(u,y)\right]\nnb
      &\quad -\int_{\R^2} du\, \left[\int_{\Lambda_L} dx\, f(x)\,S_{\mu}(x,u)\right]  \left[(i\slashed \partial_u+\nmu\mathbf 1_{\Lambda_L}(u))\int_{\Lambda_L}dy\, g(y) \, S_{\mu \mathbf 1_{\Lambda_L}}(u,y)\right]
        \nnb
      &=\int_{\R^2} du \left[\int_{\Lambda_L} dx\, f(x)\, S_\mu(x,u)\right](\nmu-\nmu \mathbf 1_{\Lambda_L}(u)) \left[\int_{\Lambda_L}dy \, g(y)S_{\mu\mathbf 1_{\Lambda_L}}(u,y)\right]
    \end{align}
    which is the right-hand side of \eqref{e:res2}.
\end{proof}
      
We now turn to the proof of the final claim of Theorem \ref{th:gestimate}.

\begin{proof}[Proof of Proposition \ref{pr:Linfty}]
	We can assume that $L$ is so large that $x,y\in \mathrm{int}(\Lambda_L)$. By Lemma \ref{le:res2},
	\begin{equation}
	S_{\mu\mathbf 1_{\Lambda_L}}(x,y)-S_\mu(x,y)=\nmu\int_{\Lambda_L ^{\mathsf c}} du\, S_\mu(x,u)S_{\mu\mathbf 1_{\Lambda_L}}(u,y).
	\end{equation}
	Using that for any fixed $a>0$, the Bessel function $K_0$ satisfies, for $|x|\geq a$, $|K_0(|\mu||x|)|\leq C_ae^{-|\mu| |x|}$ for some constant $C_a$ (independent of $\mu,x$) and a similar bound for $\partial K_0(|\mu||x|)$, we find from Proposition~\ref{pr:Gbounds} that for some polynomial $P=P(L,|\mu|)$, 
	\begin{equation}
          |S_\mu(x,y)-S_{\mu\mathbf 1_{\Lambda_L}}(x,y)|\leq \frac{P(L,|\mu|)}{L-|y|}\int_{\Lambda_L^{\mathsf c}} du\, e^{-|\mu||x-u|}.
	\end{equation}
	As we take $x,y$ in a fixed compact subset $B$ of $\C$, $|\int_{\Lambda_L^{\mathsf c}}e^{-|\mu||x-u|}\, du|\leq e^{-\alpha |\mu| L}$ uniformly in $x\in B$ for some $\alpha>0$ depending only on $B$. We thus deduce that given a fixed compact subset $K\subset \{(x,y)\in \C^2:x\neq y\}$ (independent of $L$) and $\mu\neq 0$, 
	\begin{equation}
	\lim_{L\to\infty}\sup_{(x,y)\in K}	|S_\mu(x,y)-S_{\mu\mathbf 1_{\Lambda_L}}(x,y)|=0,
	\end{equation}
	which was the claim.
\end{proof}

Putting together the propositions from this section also concludes our proof of Theorem~\ref{th:gestimate}, and thus that of Theorem~\ref{thm:ferm}.

\appendix
\section{Truncated and free fermion correlations}
\label{app:ferm}

In this appendix, we collect some well-known properties of truncated correlations (joint cumulants)
and free fermion correlations.

\subsection{Truncated correlations}

For arbitrary random variables $A_i$, the truncated correlations are defined by
\begin{equation} \label{e:truncatedmoment}
  \avg{A_1 \cdots A_{n}}^T
  =
  \ddp{^{n}}{t_1\dots \partial t_n}\bigg|_{t=0}
  \log
  \avg{ e^{\sum_{i=1}^n t_i A_i}}
\end{equation}
when the right-hand side exists.
For $N \in \N$ and $t=(t_1,\dots, t_N)$,
it is often convenient to define the tilted measure with expectation $\avg{\cdot}_t$ by
\begin{equation} \label{e:avgt}
    \avg{F}_t = \frac{\avg{F e^{tA}}}{\avg{e^{tA}}}, \qquad e^{tA} = e^{\sum_{i=1}^N t_i A_i},
\end{equation}
when these expressions exist.
For $1 \leq n \leq N-1$, it then follows from \eqref{e:truncatedmoment} that
\begin{equation}  \label{e:avgtnplus1}
    \avg{A_{1} \cdots A_{n+1}}_t^T
    = \ddp{}{t_{n+1}} \avg{A_1 \cdots A_n}_t^T.
\end{equation}

The next lemma shows that the definition \eqref{e:truncatedmoment} is consistent with \eqref{e:truncated}.

\begin{lemma} \label{lem:cumulants}
  Assume that $A_1, \dots, A_n$ are  random variables.
  Then 
  \begin{equation} \label{e:cumulantslogexp}
    \avg{A_1 \cdots A_{n}}^T
    =
    \avg{A_1 \cdots A_n} - \sum_{P \in \mathfrak{P}_n} \prod_j \avg{\prod_{i \in P_j} A_i}^T
  \end{equation}
  assuming all expectations exists.
\end{lemma}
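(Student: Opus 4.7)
The plan is to pass between the moment generating function $Z(t) := \langle e^{\sum_{i=1}^n t_i A_i}\rangle$ and its logarithm $W(t) := \log Z(t)$ via a partition identity, and then separate the one-block partition on one side to obtain \eqref{e:cumulantslogexp}. By the definition \eqref{e:truncatedmoment} we have $\partial_{t_1}\cdots\partial_{t_n} W|_{t=0} = \langle A_1\cdots A_n\rangle^T$ and, by differentiating under the expectation, $\partial_{t_1}\cdots\partial_{t_n} Z|_{t=0} = \langle A_1\cdots A_n\rangle$. The existence of the truncated correlations on the right-hand side of \eqref{e:cumulantslogexp} of order at most $n$ ensures that $W$ is smooth enough in a neighborhood of $0$ to justify all derivatives below.

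The key combinatorial identity is that for every $n \geq 1$ and every choice of indices,
\begin{equation} \label{e:cumplan1}
  \partial_{t_1}\cdots\partial_{t_n} Z
  = Z \sum_{P \in \bar{\mathfrak{P}}_n} \prod_{B \in P} \partial^B W,
\end{equation}
where the sum is over \emph{all} set partitions $\bar{\mathfrak{P}}_n$ of $[n]$ (including the trivial one-block partition) and $\partial^B W := \prod_{i \in B} \partial_{t_i} W$ in the mixed sense of higher derivatives. I would prove \eqref{e:cumplan1} by induction on $n$. The base case $n=1$ is the identity $\partial_{t_1} Z = Z \, \partial_{t_1} W$, which follows from $Z = e^W$. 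For the inductive step, one applies $\partial_{t_{n+1}}$ to the right-hand side of \eqref{e:cumplan1} using the product rule: the derivative either differentiates the prefactor $Z$, producing an extra factor $\partial_{t_{n+1}} W$ which places $\{n+1\}$ as a singleton block, or it acts on one of the factors $\partial^B W$, enlarging the block $B$ by $\{n+1\}$. Summing over both possibilities for each partition $P$ of $[n]$ gives precisely all partitions of $[n+1]$, which closes the induction.

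Evaluating \eqref{e:cumplan1} at $t=0$ (where $Z(0) = 1$) and recalling the identifications of derivatives with correlation functions gives
\begin{equation}
  \langle A_1 \cdots A_n\rangle
  = \sum_{P \in \bar{\mathfrak{P}}_n} \prod_{B \in P} \avga{\prod_{i \in B} A_i}^T.
\end{equation}
Isolating the trivial one-block partition $P = \{[n]\}$, whose contribution is $\langle A_1\cdots A_n\rangle^T$, and writing $\mathfrak{P}_n = \bar{\mathfrak{P}}_n \setminus \{\{[n]\}\}$ for the proper partitions, yields \eqref{e:cumulantslogexp} upon rearrangement. The only real obstacle is carrying out the inductive bookkeeping in \eqref{e:cumplan1} cleanly; this is standard (a multivariate Faà di Bruno formula applied to the exponential), and once it is in hand, the remainder of the proof is a direct identification.
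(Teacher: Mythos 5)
Your proof is correct and uses essentially the same mechanism as the paper: an induction on $n$ that differentiates a $t$-dependent identity by $\partial_{t_{n+1}}$ and tracks how the product rule either creates a new singleton block $\{n+1\}$ or enlarges an existing block. The only presentational difference is that you induct on the full moment--cumulant relation $\partial_{t_1}\cdots\partial_{t_n} Z = Z\sum_{P}\prod_{B\in P}\partial^B W$ over all partitions and isolate the one-block term at the end, whereas the paper isolates it from the start and inducts directly on \eqref{e:cumulantslogexp} in the tilted measure $\avg{\cdot}_t$ (which plays the same role as working at a general point $t$).
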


\begin{proof}
  It suffices to show the claim with $\avg{\cdot}$ replaced by $\avg{\cdot}_t$ where $t=(t_1,\dots,t_N)$ and $n \leq N$.
  This is clear for $n=1$.
  To advance the induction, note that
  \begin{align}
    \avg{A_{1} \cdots A_{n+1}}_t^T
    &= \ddp{}{t_{n+1}} \avg{A_1 \cdots A_n}_t^T
      \nnb
    &= \ddp{}{t_{n+1}}\qa{ \avg{A_1 \cdots A_n}_t - \sum_{P \in \mathfrak{P}_n} \prod_j \avg{\prod_{i \in P_j} A_i}^T_t}
      \nnb
    &= \avg{A_1 \cdots A_{n+1}}_t-\avg{A_1 \cdots A_{n}}_t\avg{A_{n+1}}_t
      \nnb & \qquad\qquad 
      - \sum_{P \in \mathfrak{P}_n} \sum_k
      \avg{\prod_{i \in P_k \cup \{n+1\}} A_i}^T_t
      \prod_{j \neq k} \avg{\prod_{i \in P_j} A_i}^T_t
      \nnb
    &= \avg{A_1 \cdots A_{n+1}}_t
      - \sum_{P \in \mathfrak{P}_{n+1}} 
      \prod_{j} \avg{\prod_{i \in P_j} A_i}^T_t
  \end{align}
  as needed.
\end{proof}

\subsection{Grassmann integrals}

Let $\wedge^{2N}$ be the exterior algebra (Grassmann algebra)
on $2N$ generators $\bar\psi_1,\psi_1, \dots, \bar\psi_N,\psi_N$ over $\C$.
The bars only have notational meaning here and for notational simplicity we drop the $\wedge$ from the product notation, e.g.,
$\bar\psi_i \wedge \psi_j \equiv \bar\psi_i\psi_j$.
Thus elements $F \in \wedge^{2N}$ are noncommutative polynomials in the generators
of degree at most $2N$.
An element $F \in \wedge^{2N}$ is called even if it is a linear combination of
even monomials (i.e., ones with an even number of factors of the generators).
Let $\partial_{\bar\psi_j}$ and $\partial_{\psi_j}$ be the antiderivations on $\wedge^{2N}$ defined by
\begin{equation}
  \partial_{\bar\psi_j} (\bar\psi_jF) = F, \qquad \partial_{\bar\psi_j} F = 0
\end{equation}
for any (noncommutative) monomial $F \in \wedge^{2N}$ that does not contain a factor $\bar\psi_j$,
and analogously for the $\partial_{\psi_j}$.
For any $F \in \wedge^{2N}$ the Grassmann integral of $F$ is then defined by
\begin{equation}
  \int d_\psi d_{\bar\psi} \, F
  := \partial_\psi \partial_{\bar\psi} \, F
  := \partial_{\psi_N} \partial_{\bar\psi_N} \cdots \partial_{\psi_1} \partial_{\bar\psi_1}\, F.
\end{equation}
Note that the right-hand side is a scalar.
For any even elements $A_1, \dots, A_n$ of $\wedge^{2N}$ and any smooth function
$g \in C^\infty(\R^{n})$, we define an element $g(A_1, \dots, A_n) \in \wedge^{2N}$
by the truncation of the formal Taylor expansion of $g$ of at order $2N$.
For example, using the above definitions,
we write, for any $N \times N$ matrix $M$,
\begin{equation}
  e^{-\psi M \bar\psi}
  = \sum_{n=0}^N \frac{(-1)^n}{n!} \pa{\sum_{i,j=1}^N \psi_i M_{ij} \bar\psi_j}^n,
\end{equation}
and then have
\begin{equation}
  \int d_\psi d_{\bar\psi} \, e^{-\psi M \bar\psi}
  = \frac{(-1)^N}{N!} \int d_\psi d_{\bar\psi} \, (\psi M \bar\psi)^N
  = \frac{(-1)^N}{N!} \partial_\psi \partial_{\bar\psi} \, (\psi M \bar\psi)^N
  = \det M,
\end{equation}
by the anticommutativity of the generators and the definition of the determinant.

The following lemma is a variant of Wick's theorem for Grassmann integrals.

\begin{lemma} \label{le:fcor}
Let $K$ be an invertible $N\times N$ matrix. Then
\begin{equation} \label{e:fcor}
  \det K \int d_\psi d_{\bar\psi} \,\prod_{i=1}^n \bar\psi_{\alpha_i}\psi_{\beta_i} e^{ -\psi K^{-1} \bar\psi}
  = \det(K_{\alpha_i\beta_j})_{i,j=1}^n
  .
\end{equation}
\end{lemma}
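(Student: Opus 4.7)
The plan is to reduce the claim to the standard Grassmann Gaussian integral identity via source terms, and then to extract the insertions $\bar\psi_{\alpha_i}\psi_{\beta_i}$ by differentiating with respect to these sources.

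First, I would introduce Grassmann sources $\eta = (\eta_1,\dots,\eta_N)$ and $\bar\eta = (\bar\eta_1,\dots,\bar\eta_N)$ that anticommute with each other and with all $\psi_j,\bar\psi_j$, and define the generating function
\begin{equation}
  Z(\bar\eta,\eta) := \int d_\psi d_{\bar\psi}\, e^{-\psi K^{-1}\bar\psi + \bar\eta\psi + \bar\psi\eta}.
\end{equation}
The key computation is the Grassmann analogue of ``completing the square'': because the antiderivations $\partial_{\psi_i},\partial_{\bar\psi_i}$ are translation invariant in the sense that $\int d_\psi d_{\bar\psi}\, F(\psi+a,\bar\psi+b) = \int d_\psi d_{\bar\psi}\, F(\psi,\bar\psi)$ for any even $a,b$ built from the auxiliary generators, the shift $\psi \mapsto \psi + \bar\eta K$, $\bar\psi \mapsto \bar\psi + K\eta$ eliminates the linear coupling and yields
\begin{equation}
  Z(\bar\eta,\eta) = e^{\bar\eta K \eta} \int d_\psi d_{\bar\psi}\, e^{-\psi K^{-1}\bar\psi} = \frac{e^{\bar\eta K \eta}}{\det K},
\end{equation}
using the standard identity $\int d_\psi d_{\bar\psi}\, e^{-\psi K^{-1}\bar\psi} = \det(K^{-1})$, which follows directly from the definition of the Grassmann integral by expanding the exponential and identifying the top-degree term with the Leibniz formula for the determinant.

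Next, I would recover the insertions. A direct inspection of how $\partial_{\bar\eta}$ and $\partial_\eta$ act on $e^{\bar\eta\psi + \bar\psi\eta}$ (using that $\eta_j,\bar\eta_j$ anticommute with $\psi_j,\bar\psi_j$) shows that each pair $\bar\psi_{\alpha}\psi_{\beta}$ inside the integrand is produced, up to an overall sign that is independent of $K$, by applying $\partial_{\eta_\alpha}\partial_{\bar\eta_\beta}$ at $\eta=\bar\eta=0$. Iterating, one obtains
\begin{equation}
  \int d_\psi d_{\bar\psi}\, \prod_{i=1}^n \bar\psi_{\alpha_i}\psi_{\beta_i}\, e^{-\psi K^{-1}\bar\psi}
  = \pm \, \partial_{\eta_{\alpha_1}}\partial_{\bar\eta_{\beta_1}} \cdots \partial_{\eta_{\alpha_n}}\partial_{\bar\eta_{\beta_n}}\, Z(\bar\eta,\eta)\Big|_{\eta=\bar\eta=0}.
\end{equation}

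Finally, substituting $Z = (\det K)^{-1} e^{\bar\eta K \eta}$ and computing the $2n$ Grassmann derivatives of $e^{\bar\eta K \eta} = \prod_{k=1}^N(1 + \sum_{i,j}\bar\eta_i K_{ij}\eta_j/\cdots)$, only the term of degree $n$ in each group of variables contributes, and the Leibniz expansion of $(\bar\eta K \eta)^n/n!$ reproduces exactly $\sum_{\pi \in S_n}\mathrm{sgn}(\pi)\prod_i K_{\alpha_i \beta_{\pi(i)}} = \det(K_{\alpha_i\beta_j})_{i,j=1}^n$ after canceling the $n!$ against the choice of which factor of $\bar\eta K \eta$ is differentiated by which pair $(\partial_{\eta_{\alpha_i}},\partial_{\bar\eta_{\beta_i}})$. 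The overall sign produced by the anticommutations in the two steps turns out to cancel, giving \eqref{e:fcor} after multiplying by $\det K$.

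The main obstacle is purely bookkeeping: tracking the signs that arise from moving $\partial_{\eta_{\alpha_i}}\partial_{\bar\eta_{\beta_i}}$ past each other and past the monomials $\bar\eta_i K_{ij}\eta_j$ in the expansion of $e^{\bar\eta K\eta}$. I would handle this by first verifying the sign for $n=1$ by a direct computation and then arguing by induction on $n$, noting that bringing the pair $(\partial_{\eta_{\alpha_{n+1}}},\partial_{\bar\eta_{\beta_{n+1}}})$ to the left of the others introduces only factors of $(-1)^{2n} = 1$ because each Grassmann derivative anticommutes with every other Grassmann derivative and the pair contains an even number of them.
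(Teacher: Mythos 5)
The paper proves Lemma~\ref{le:fcor} by fermionic Gaussian integration by parts: it establishes the identity $\int d_\psi d_{\bar\psi}\,\bar\psi_i F\,e^{-\psi K^{-1}\bar\psi} = \sum_j K_{ij}\int d_\psi d_{\bar\psi}\,(\partial_{\psi_j}F)\,e^{-\psi K^{-1}\bar\psi}$, packages it into a Laplacian $\Delta_K$, and iterates to obtain the determinant. Your route is genuinely different: you introduce Grassmann source variables $\eta,\bar\eta$, complete the square to compute the generating function $Z(\bar\eta,\eta)$, and then extract the insertions by differentiating in the sources. Both are standard and both work; the paper's approach avoids enlarging the algebra by adding sources and gets the Wick structure directly, whereas yours recovers the determinant by expanding $e^{\bar\eta K\eta}$, which is a clean way to see the $S_n$-sum. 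What your approach buys is generality and familiarity; what the paper's approach buys is that it stays entirely within $\wedge^{2N}$ and requires no translation-invariance lemma for Grassmann integrals.

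Two points should be tightened before this could stand as a proof. First, with the conventions you set up ($e^{-\psi K^{-1}\bar\psi}$, sources $\bar\eta\psi + \bar\psi\eta$), the completed-square formula you wrote has the wrong sign: a direct check for $N=1$, $K=k$, gives $Z = k^{-1} - \bar\eta\eta$, whereas $e^{\bar\eta k\eta}/k = k^{-1}+\bar\eta\eta$; so the correct identity is $Z = e^{-\bar\eta K\eta}/\det K$. Since the lemma is an exact equality (not an equality up to sign), this cannot be waved away. Second, and related: the claim that ``the overall sign produced by the anticommutations in the two steps turns out to cancel'' is asserted, not derived, and that cancellation is precisely where the content of the argument lives. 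Your proposed remedy — verify $n=1$ explicitly and then induct on $n$, using that each pair of source derivatives is even and hence commutes past the others — is the right fix and would close the gap, but as written the proposal defers exactly the part that needs to be done.
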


\begin{remark} \label{rk:KS}
  Note that the Grassmann integral representation of the determinant, \eqref{e:fcor},
  can be used in the context of \eqref{e:fermdef} and \eqref{e:fermdef-bis}. For finitely many points $x_1, \dots, x_n, y_1, \dots, y_n$ one may indeed
apply this lemma to the matrix defined by
$K_{ij} = S_{\alpha_i\beta_j}(x_i,y_j)$
for $i\neq j$ and $K_{ii}=C$ for a sufficiently large constant $C$ such that $K$ is invertible.
\end{remark}

\begin{proof}
For an invertible $N\times N$ matrix $K$, the fermionic Gaussian integration by parts formula holds:
\begin{equation} \label{e:ibp}
  \int d_\psi d_{\bar\psi} \, \bar\psi_i F \, e^{-\psi K^{-1} \bar\psi} 
  =
  \sum_{j} K_{ij}
  \int d_\psi d_{\bar\psi} \, (\partial_{\psi_j}F) \, e^{-\psi K^{-1} \bar\psi} 
  .
\end{equation}
Indeed, it follows from the definitions that
\begin{equation}
  \partial_{\psi_j} e^{- \psi K^{-1} \bar\psi} 
  = - \sum_{i} (K^{-1})_{ji} \bar\psi_i  e^{-\psi K^{-1} \bar\psi} 
  ,
\end{equation}
and hence
\begin{equation}
  \bar\psi_i  e^{-\psi K^{-1} \bar\psi} 
  = - \sum_{j} K_{ij}    \partial_{\psi_j} e^{-\psi K^{-1} \bar\psi} 
  .
\end{equation}
Note that we may assume that $F$ is odd in \eqref{e:ibp} as otherwise both sides are $0$.
Therefore
\begin{equation}
  \int d_\psi d_{\bar\psi} \, \bar\psi_i F \, e^{-\psi K^{-1} \bar\psi} 
  =
  - \int d_\psi d_{\bar\psi} \, F \bar\psi_i\, e^{-\psi K^{-1} \bar\psi} 
  =
  \sum_{j} K_{ij} \int d_\psi d_{\bar\psi} \, F    \partial_{\psi_j} e^{-\psi K^{-1} \bar\psi} 
  .
\end{equation}
The claim now follows from the fact that, since $F$ is odd, for any $G$ one has
\begin{equation}
  0=\int d_\psi d_{\bar\psi} \, \partial_{\psi_j}(FG) =
  \int d_\psi d_{\bar\psi} \,\qa{ (\partial_{\psi_j}F)G - F(\partial_{\psi_j}G) }.
\end{equation}
Note that  for any monomial $F \in \wedge^{2N}$ with $n$ factors of $\bar\psi_1,\dots, \bar\psi_N$ we have $F = \frac{1}{n}\sum_i \bar\psi_i \partial_{\bar\psi_i}F$.
Thus if $F$ has degree $2n$ then
\begin{align}
  \int d_\psi d_{\bar\psi} \, F \, e^{-\psi K^{-1} \bar\psi} 
  &=
  \frac{1}{n} \sum_{i,j} K_{ij}
  \int d_\psi d_{\bar\psi} \, (\partial_{\psi_j}\partial_{\bar\psi_i} F) \, e^{-\psi K^{-1} \bar\psi} 
  \nnb 
  &=
  \frac{1}{n}
  \int d_\psi d_{\bar\psi} \, (\Delta_K F) \, e^{-\psi K^{-1} \bar\psi} 
\end{align}
where
\begin{equation}
   \Delta_K F = \sum_{i,j} K_{ij}
  \partial_{\psi_j}\partial_{\bar\psi_i} F.
\end{equation}
Iterating this, for $F$ of degree $2n$ thus
\begin{equation}
  \int d_\psi d_{\bar\psi} \, F \, e^{- \psi K^{-1} \bar\psi} 
  =
  \frac{1}{n!} \Delta_K^n F   \int d_\psi d_{\bar\psi} \, e^{-\psi K^{-1} \bar\psi} 
  .
\end{equation}
In particular,
\begin{equation}
  \det K \int d_\psi d_{\bar\psi} \, \bar\psi_i \psi_j \, e^{-\psi K^{-1} \bar\psi} 
  = K_{ij}
\end{equation}
and repeated application gives
\begin{equation}
  \det K \int d_\psi d_{\bar\psi} \,\prod_{i=1}^n \bar\psi_{\alpha_i}\psi_{\beta_i} e^{-\psi K^{-1} \bar\psi} 
  = \det(K_{\alpha_i\beta_j})_{i,j=1}^n
\end{equation}
as claimed.
\end{proof}

Given an invertible $N\times N$ matrix $K$, we now write
\begin{align} \label{e:Ffermdef}
  \avg{F} = \det K \int d_\psi d_{\bar\psi} \, e^{- \psi K^{-1} \bar\psi} 
  F.
\end{align}
From this representation, it is also easy to deduce the following properties of the fermionic correlation functions. Using Remark~\ref{rk:KS}, we make use of the properties in Section \ref{sec:massless-ferm}.

\begin{lemma} \label{le:fcor2}
For any $\sigma \in S_n$,
\begin{equation} \label{e:fermantisym}
  \avga{\prod_{k=1}^n \bar\psi_{i_k}\psi_{j_k}}
  =
  (-1)^\sigma \avga{\prod_{k=1}^n \bar\psi_{i_k}\psi_{j_{\sigma(k)}}}
\end{equation}
Moreover, if $F, G \in \wedge^{2N}$ are monomials such that
for every factor $\bar\psi_i$ in $F$ and every factor $\psi_j$ in $G$ one has $K_{ij}=0$
and for every factor $\psi_i$ in $F$ and every factor $\bar\psi_j$ in $G$ one also has $K_{ij} = 0$ then
\begin{equation} \label{e:fermfactorize}
  \avg{FG} = \avg{F} \avg{G}.
\end{equation}
\end{lemma}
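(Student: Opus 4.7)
The strategy for both parts is to reduce everything to Lemma~\ref{le:fcor} and recognize the resulting determinants.

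For part (1), I would apply Lemma~\ref{le:fcor} directly to the right-hand side: the product $\prod_{k=1}^n \bar\psi_{i_k}\psi_{j_{\sigma(k)}}$ already has the paired form required by that lemma with $(\alpha_k,\beta_k)=(i_k,j_{\sigma(k)})$, so
\begin{equation*}
\avga{\prod_{k=1}^n \bar\psi_{i_k}\psi_{j_{\sigma(k)}}}
=\det\bigl(K_{i_k,j_{\sigma(l)}}\bigr)_{k,l=1}^n
=(-1)^{\sigma}\det(K_{i_k,j_l})_{k,l=1}^n
=(-1)^{\sigma}\avga{\prod_{k=1}^n \bar\psi_{i_k}\psi_{j_k}},
\end{equation*}
where the second equality is the standard effect of a column permutation on a determinant.

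For part (2), the plan is to rearrange $F$ and $G$ into the paired form of Lemma~\ref{le:fcor} and then exhibit the resulting determinant as a block triangular one. First, write
\begin{equation*}
F=\pm\prod_{k=1}^{p}\bar\psi_{i_k}\psi_{j_k},\qquad G=\pm\prod_{m=1}^{r}\bar\psi_{a_m}\psi_{b_m},
\end{equation*}
using anticommutativity of the generators (which is the $N$-variable analogue of part~(1)); this can only be done when the number of $\bar\psi$-factors equals the number of $\psi$-factors in each of $F$ and $G$. I will treat the balanced case $(p,q)=(p,p)$ and $(r,s)=(r,r)$ first, then return to the unbalanced cases. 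In the balanced case, the same rearrangement applied to $FG$ gives, up to the same overall signs that appear in $F$ and $G$ separately,
\begin{equation*}
FG=\pm\prod_{\ell=1}^{p+r}\bar\psi_{\alpha_\ell}\psi_{\beta_\ell},\qquad (\alpha_\ell)=(i_1,\dots,i_p,a_1,\dots,a_r),\ (\beta_\ell)=(j_1,\dots,j_p,b_1,\dots,b_r).
\end{equation*}
Applying Lemma~\ref{le:fcor} gives
\begin{equation*}
\avg{FG}=\pm\det\begin{pmatrix} (K_{i_k,j_l})_{k,l=1}^{p} & (K_{i_k,b_l})_{k,l} \\ (K_{a_k,j_l})_{k,l} & (K_{a_k,b_l})_{k,l=1}^{r}\end{pmatrix},
\end{equation*}
and the two hypotheses on $K$ say exactly that the two off-diagonal blocks vanish: the first condition kills the upper-right block (rows indexed by $I_F^{\bar\psi}$, columns by $I_G^{\psi}$), and the second kills the lower-left block (rows by $I_G^{\bar\psi}$, columns by $I_F^{\psi}$). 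The determinant of this block-diagonal matrix factorizes as $\det(K_{i_k,j_l})\det(K_{a_k,b_l})$, and Lemma~\ref{le:fcor} applied separately to $F$ and $G$ identifies this with $\avg{F}\avg{G}$, with matching signs.

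It remains to handle the unbalanced case, where $F$ (say) has $p$ factors $\bar\psi$ and $q\neq p$ factors $\psi$. Here $\avg{F}=0$ by a counting/parity argument: expanding $e^{-\psi K^{-1}\bar\psi}$ produces only monomials with equal numbers of $\bar\psi$'s and $\psi$'s, so the Grassmann integral $\int d_\psi d_{\bar\psi}(\cdots)$ forces $\avg{F}$ to vanish whenever $p\neq q$; the same argument applied to $FG$ shows $\avg{FG}=0$ unless $p+r=q+s$. The remaining, genuinely nontrivial subcase is $p\neq q$ with $p+r=q+s$, where $\avg{F}\avg{G}=0$ but a priori $\avg{FG}$ might be nonzero. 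I would rule this out by the same block-decoupling argument: after rearranging $FG$ into paired form and using Wick's theorem (Lemma~\ref{le:fcor}), each surviving term in the Pfaffian/determinant expansion pairs some $\bar\psi$ in $F$ with a $\psi$ in $G$ (or vice versa), hence picks up a factor from a vanishing off-block of $K$; the decoupling hypotheses are exactly what forces all such cross-pairings to contribute zero, leaving no surviving terms.

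The main obstacle is bookkeeping: signs picked up when rearranging $F$, $G$, and $FG$ into standard paired form using part~(1) must be shown to cancel on the two sides of the identity, and the matching of indices in the off-diagonal blocks to the two stated hypotheses on $K$ needs to be read off carefully (in particular, to make sure both block-directions are killed, one must use both hypotheses). Once this sign- and index-tracking is done, the algebra reduces entirely to the observation that a block-diagonal determinant is the product of the block determinants.
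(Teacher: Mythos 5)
Your reconstruction is correct. The paper itself does not prove Lemma~\ref{le:fcor2}; it only remarks that these properties are ``easy to deduce'' from the Grassmann integral representation of Lemma~\ref{le:fcor}, so what you have written is a genuine filling-in and it uses the natural approach. For part~(1), reading off the column permutation on the determinant from Lemma~\ref{le:fcor} is valid; an equivalent route is to permute the $\psi_{j_k}$'s inside the monomial $\prod_k\bar\psi_{i_k}\psi_{j_{\sigma(k)}}$ at the level of the Grassmann algebra (each $\bar\psi_{i_k}\psi_{j_k}$ is even, so the quadratic blocks commute, and reordering the $\psi_j$'s contributes exactly $(-1)^\sigma$) and only then take $\avg{\cdot}$; either way gives \eqref{e:fermantisym}. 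For part~(2), the block-determinant computation is exactly right in the balanced case, and your pigeonhole observation correctly closes the remaining degenerate case where $F$ and $G$ are individually unbalanced but $FG$ is balanced: every permutation term in the determinant expansion of $\avg{FG}$ must then contain a cross-pairing, which the hypotheses kill, so $\avg{FG}=0=\avg{F}\avg{G}$.

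One point you should make explicit rather than absorb silently: as literally written, the second hypothesis ``for every factor $\psi_i$ in $F$ and every factor $\bar\psi_j$ in $G$ one has $K_{ij}=0$'' zeroes out $K_{j_l,a_k}$, which is the \emph{transpose} of the lower-left block $(K_{a_k,j_l})$ that your argument needs to vanish. You are in effect reading the hypothesis as ``$K_{ji}=0$'' (equivalently: the contraction $\avg{\bar\psi_a\psi_j}$ between a $\bar\psi$ in $G$ and a $\psi$ in $F$ vanishes), which is what the proof actually requires. In the paper's sole application of \eqref{e:fermfactorize} one has $K=S$ with $S_{11}=S_{22}=0$ in both orders of the spatial arguments, so the two readings coincide; but your write-up should say which reading it uses.
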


As in \eqref{e:truncatedmoment}, for even elements $A_i \in \wedge^{2N}$ the truncated correlations are defined by
\begin{equation} \label{e:truncatedmoment-ferm}
  \avg{A_1 \cdots A_n}^T
  =
  \ddp{^{n}}{t_1\dots \partial t_n}\Big|_{t=0}
  \log
  \avg{e^{\sum_{i=1}^n t_i A_i}}.
\end{equation}
The next lemma gives equivalent characterizations of the truncated correlation functions.

\begin{lemma}   \label{le:fcor3}
  Assume that $A_1, \dots, A_n$ are even elements of $\wedge^{2N}$. Then
  \begin{equation} \label{e:fermcumulants1}
    \avg{A_{1} \cdots A_{n}}^T
    =
    \avg{A_1 \cdots A_n} - \sum_{P \in \mathfrak{P}_n} \prod_j \avg{\prod_{i \in P_j} A_i}^T
    .
  \end{equation}
  Moreover, if the $A_i$ are of the form $A_i = \bar\psi_{\alpha_i} \psi_{\beta_i}$, then
  \begin{equation} \label{e:fermcumulants2}
    \avg{A_1 \cdots A_n}^T = (-1)^{n+1} \sum_{\pi \in C_n} \prod_{i=1}^n K_{\alpha_{\pi^{i}(1)}\beta_{\pi^{i+1}(1)}}.
  \end{equation}
\end{lemma}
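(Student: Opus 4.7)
The plan is to prove the two identities essentially independently: \eqref{e:fermcumulants1} is a purely formal consequence of the definition \eqref{e:truncatedmoment-ferm}, while \eqref{e:fermcumulants2} requires an explicit computation using the Gaussian Grassmann integration formula from Lemma~\ref{le:fcor}.

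For \eqref{e:fermcumulants1}, I would mimic the proof of Lemma~\ref{lem:cumulants}. Since the $A_i$ are even elements of $\wedge^{2N}$, they commute with every element of $\wedge^{2N}$, and the formal exponential $e^{\sum_i t_i A_i}$ is well-defined as an element of $\wedge^{2N}[[t_1,\dots,t_n]]$ (in fact a polynomial). Moreover, at $t=0$ we have $\avg{1}=1$, so $\avg{e^{\sum_i t_i A_i}}$ is a polynomial in $t$ which equals $1$ at the origin and hence is invertible as a formal power series. Thus the tilted expectation $\avg{F}_t := \avg{F e^{tA}}/\avg{e^{tA}}$ makes sense formally, and one verifies from \eqref{e:truncatedmoment-ferm} that \eqref{e:avgtnplus1} still holds. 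The induction on $n$ in the proof of Lemma~\ref{lem:cumulants} then carries over verbatim.

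For \eqref{e:fermcumulants2}, I would first evaluate $\avg{e^{\sum_i t_i A_i}}$ explicitly. Writing $A_i = \bar\psi_{\alpha_i}\psi_{\beta_i} = -\psi_{\beta_i}\bar\psi_{\alpha_i}$, the exponent combines with the Gaussian weight to give
\begin{equation}
\psi K^{-1}\bar\psi - \sum_k t_k \bar\psi_{\alpha_k}\psi_{\beta_k}
= \psi(K^{-1}+T(t))\bar\psi,
\end{equation}
where $T(t)$ is the matrix with entries $T(t)_{ji} = \sum_{k:(\beta_k,\alpha_k)=(j,i)} t_k$, depending linearly on $t$. Applying Lemma~\ref{le:fcor} (with trivial insertions) to $K' = (K^{-1}+T(t))^{-1}$, for $t$ in a neighborhood of the origin where $K^{-1}+T(t)$ is invertible, we obtain
\begin{equation}
\avga{e^{\sum_i t_i A_i}} = \det K \cdot \det(K^{-1}+T(t)) = \det(I + KT(t)).
\end{equation}
Hence $\log \avg{e^{\sum t_i A_i}} = \operatorname{tr}\log(I+KT(t)) = -\sum_{m\geq 1} \frac{(-1)^m}{m}\operatorname{tr}((KT(t))^m)$.

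Since $T(t)$ is linear in $t$, taking $\partial_{t_1}\cdots\partial_{t_n}|_{t=0}$ picks out only the $m=n$ term, giving
\begin{equation}
\avg{A_1\cdots A_n}^T = \tfrac{(-1)^{n+1}}{n}\partial_{t_1}\cdots\partial_{t_n}\big|_{t=0}\operatorname{tr}((KT(t))^n).
\end{equation}
A direct expansion shows that $\operatorname{tr}((KT(t))^n) = \sum_{k_1,\dots,k_n} t_{k_1}\cdots t_{k_n} \prod_{l=1}^n K_{\alpha_{k_l},\beta_{k_{l+1}}}$ (indices mod $n$), so the mixed derivative at $t=0$ becomes
\begin{equation}
\sum_{\sigma\in S_n}\prod_{l=1}^n K_{\alpha_{\sigma(l)},\beta_{\sigma(l+1)}}.
\end{equation}

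The final step — which is the combinatorial heart of the argument — is to observe that the product in the last display depends only on the cyclic order of $(\sigma(1),\dots,\sigma(n))$, and that each cyclic permutation $\pi\in C_n$ (represented through $c_i = \pi^i(1)$) corresponds to exactly $n$ linear orderings with the same product. Thus the sum over $S_n$ equals $n \sum_{\pi\in C_n}\prod_{i=1}^n K_{\alpha_{\pi^i(1)},\beta_{\pi^{i+1}(1)}}$, and dividing by $n$ yields precisely \eqref{e:fermcumulants2}. The bookkeeping at this last step is the main place where one has to be careful, but it is elementary once one identifies cyclic permutations with cyclic sequences starting at $1$.
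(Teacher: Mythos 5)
Your proof is correct, and for the second identity \eqref{e:fermcumulants2} it follows a genuinely different route from the paper's.

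For \eqref{e:fermcumulants1} there is nothing to compare: both you and the paper simply carry the induction from Lemma~\ref{lem:cumulants} over to the Grassmann setting, once one has checked (as you do carefully) that the tilted expectation $\avg{\cdot}_t$ and the recursion \eqref{e:avgtnplus1} make sense formally for even elements.

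For \eqref{e:fermcumulants2} the paper never evaluates $\avg{e^{\sum_i t_i A_i}}$ in closed form. It proves the cyclic formula by induction on $n$: from the base case $\avg{\bar\psi_{\alpha_1}\psi_{\beta_1}} = K_{\alpha_1\beta_1}$, it introduces the deformed propagator $K(t) = (\mathbf 1 + \sum_i t_i K\mathbf 1_{\beta_i\alpha_i})^{-1}K$, shows that tilting by $e^{\sum t_iA_i}$ is the same as replacing $K$ by $K(t)$, computes $\partial_{t_j}K(t) = -K(t)\mathbf 1_{\beta_j\alpha_j}K(t)$, and then advances the induction via \eqref{e:avgtnplus1}. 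Your argument is instead a closed-form computation: $\avg{e^{\sum t_iA_i}} = \det(I+KT(t))$ with $T(t) = \sum_i t_i\mathbf 1_{\beta_i\alpha_i}$, then $\log\det = \operatorname{tr}\log$, extract the degree-$n$ coefficient (only $m=n$ survives since $T$ is linear in $t$), and reduce the sum over $S_n$ to a sum over $C_n$ by the $n$-to-$1$ overcounting of cyclic orderings by linear ones. Both are standard proofs of a fermionic linked-cluster formula and are comparable in length. Your route makes the appearance of cyclic permutations completely transparent, since they arise directly from the trace; the paper's route avoids the formal $\operatorname{tr}\log$ expansion altogether and keeps both halves of the lemma in a single inductive framework, at the cost of having to guess the deformation $K(t)$ and verify its derivative. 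One small remark: your identity $\avg{e^{\sum t_iA_i}} = \det(I+KT(t))$ actually holds as a polynomial identity in $t$ (both sides are polynomials, and the Gaussian Grassmann integral $\int d_\psi d_{\bar\psi}\, e^{-\psi M\bar\psi} = \det M$ needs no invertibility of $M$), so the restriction to a neighborhood of the origin is only needed when you take the logarithm, which is exactly where it should be.
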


\begin{proof}
  The proof of \eqref{e:fermcumulants1}
  is identical to that of Lemma~\ref{lem:cumulants}.
  To see \eqref{e:fermcumulants2},
  we assume by induction that the identity holds for every invertible matrix $K$.
  If $n=1$, this claim is
  \begin{equation}
    \avg{A_1} = \avg{\bar\psi_{\alpha_1}\psi_{\beta_1}} = K_{\alpha_1 \beta_1} 
  \end{equation}
  which is true by Lemma~\ref{le:fcor}.
  To advance the induction,   for $t$ sufficiently small, set
  $K(t) = ({\mathbf 1}+\sum_i t_i K {\bf 1}_{\beta_i\alpha_i})^{-1}K$
  where ${\bf 1}_{ij}$ is the matrix with value $1$ in entry $ij$ and $0$ in all other entries,
  and define $\avg{\cdot}_t$ as in \eqref{e:Ffermdef} with $K(t)$ instead of $K$.
  Since
  \begin{equation}
    K(t)^{-1}= K^{-1}(\mathbf 1+\sum_i t_i K \mathbf 1_{\beta_i\alpha_i}) = K^{-1} + \sum_i t_i \mathbf 1_{\beta_i\alpha_i}
  \end{equation}
  this definition is consistent with $\avg{\cdot}_t$ is defined as in \eqref{e:avgt}, i.e.,
  \begin{equation}
    \avg{F}_t
    = \frac{\avg{F e^{-\sum t_i \psi_{\beta_i}\bar\psi_{\alpha_i}}}}{\avg{e^{-\sum t_i {\psi_{\beta_i}\bar\psi_{\alpha_i}}}}}
    = \frac{\avg{F e^{\sum t_i \bar\psi_{\alpha_i}\psi_{\beta_i}}}}{\avg{e^{\sum t_i {\bar\psi_{\alpha_i}\psi_{\beta_i}}}}}.
  \end{equation}
  Also note that
  \begin{equation} \label{e:dK}
    \ddp{}{t_j} K(t)  = -K(t) {\bf 1}_{\beta_j\alpha_j} K(t)
  \end{equation}
  as follows from
  \begin{equation}
    \ddp{}{t_j} (\mathbf 1 + \sum_i t_i K \mathbf 1_{\beta_i\alpha_i})^{-1}
    =
    -(\mathbf 1 + \sum_i t_i K \mathbf 1_{\beta_i\alpha_i})^{-1}
    K\mathbf 1_{\beta_j\alpha_j}
    (\mathbf 1 + \sum_i t_i K \mathbf 1_{\beta_i\alpha_i})^{-1}.
  \end{equation}
  By the induction hypothesis, now
  \begin{equation}
    \avg{A_1 \cdots A_n}^T_t = (-1)^{n+1} \sum_{\pi \in C_n} \prod_{i=1}^n K_{\alpha_{\pi^{i}(1)}\beta_{\pi^{i+1}(1)}}(t),
  \end{equation}
  and the claim follows from \eqref{e:avgtnplus1} and \eqref{e:dK}.
\end{proof}

\section*{Errata}

The published version of the paper has a sign error in the bosonization identity in which $-i\bar\partial\varphi$ should have been $+i\bar\partial\varphi$.
The signs are corrected in this arXiv version.
The error occured in the second case of Lemma~\ref{lem:fermtruncid}.
Precisely, the corrections compared to the published version are:
\begin{itemize}
\item Replacement of $-i\bar\partial\varphi$ by $+i\bar\partial\varphi$ in (1.12), (1.13), (1.40), (2.70), and (2.72).
\item Change of sign of $g$  in (1.47) and (1.50) for the conjectured Coleman correspondence.
\item The previous arXiv version also had an incorrect sign in (1.15) and in front of the first term on the right-hand side of (1.21); the second term in (1.21) was correct.
	These sign errors were already been corrected in the published version, but the corresponding changes in the equations mentioned above were overlooked.
\end{itemize}

\begin{ack}
We thank J.~Fr\"ohlich and V.~Mastropietro for helpful correspondence,
and also a referee for useful comments that improved the presentation.
\end{ack}

\begin{funding}
RB gratefully acknowledges funding from the European Research Council (ERC)
under the European Union's Horizon 2020 research and innovation programme
(grant agreement 851682 SPINRG).
CW wishes to thank the Academy of Finland for support through the grant 308123.
\end{funding}

\bibliography{all}
\bibliographystyle{plain}

\end{document}